\theoremstyle{plain}
\newtheorem{theorem}{Theorem}[section]
\newtheorem{lemma}[theorem]{Lemma}
\newtheorem{proposition}[theorem]{Proposition}
\newtheorem{corollary}[theorem]{Corollary}
\newtheorem{definition}[theorem]{Definition}
\theoremstyle{definition}
\newtheorem*{definition*}{Definition}
\newtheorem{example}[theorem]{Example}
\newtheorem{remark}[theorem]{Remark}
\theoremstyle{remark}
\newcommand{\Z}{\mathbb{Z}}
\newcommand{\C}{\mathbb{C}}
\newcommand{\End}{\operatorname{End}}
\definecolor{line}{rgb}{0,0,1} 
\definecolor{diskin}{gray}{0.8} 
\def\ltileb{
\psset{unit=0.8cm}
\begin{pspicture}[shift=-0.35](0,0)(1,1)
		\psframe[fillstyle=solid,fillcolor=diskin](0,0)(1,1)
		\psarc[linecolor=blue,linewidth=1.2pt](0,1){0.5}{270}{0}	
		\psarc[linecolor=blue,linewidth=1.2pt ](1,0){0.5}{90}{180}	
\end{pspicture} 
}
\def\ltilea{
\psset{unit=0.8cm}
\begin{pspicture}[shift=-0.35](0,0)(1,1)
		\psframe[fillstyle=solid,fillcolor=diskin](0,0)(1,1)
		\psarc[linecolor=blue,linewidth=1.2pt](0,0){0.5}{0}{90}	
		\psarc[linecolor=blue,linewidth=1.2pt](1,1){0.5}{180}{270}	
\end{pspicture} 
}
\def\cross{{
\psset{unit=0.25cm}
\begin{pspicture}[shift=-0.5](-1,-1)(1,1)
			\pscircle[fillstyle=solid,fillcolor=diskin,linewidth=0.8pt,linestyle=dotted](0,0){1}
			\degrees[8]
			\psline[linecolor= line,linewidth=1pt](1;1)(1;5)
			\psline[linecolor= blue,linewidth=1pt](1;3)(0.25;3)
			\psline[linecolor= blue,linewidth=1pt](1;7)(0.25;7)	
		\end{pspicture}
		\psset{unit=0.5cm}}
}
\def\id{{	
\psset{unit=0.25cm}
\begin{pspicture}[shift=-0.5](-1,-1)(1,1)
			\pscircle[fillstyle=solid,fillcolor=diskin,linewidth=0.8pt,linestyle=dotted](0,0){1}
			\degrees[8]
			\pscurve[linecolor= line,linewidth=1.2pt](1;5)(0.9;5)(0.4;4)(0.9;3)(1;3)
			\pscurve[linecolor= line,linewidth=1.2pt](1;1)(0.9;1)(0.4;0)(0.9;7)(1;7)
		\end{pspicture} 
		\psset{unit=0.5cm}
		}}
\numberwithin{equation}{section}
\title{Towers of solutions of qKZ equations and their applications to loop
models}
\date{\today}
\author{K. Al Qasimi}
\address{KdV Institute for Mathematics, University of Amsterdam, Science Park 105-107, 1098 XG Amsterdam, The Netherlands.}
\email{Kayed.AlQasimi@gmail.com}
\author{B. Nienhuis}
\address{Institute of Physics, University of Amsterdam, Science Park 904, 1098 XG Amsterdam, 
The Netherlands, currently: Instituut-Lorentz for Theoretical Physics, P.O. Box 9506, NL-2300 RA Leiden, The Netherlands.}
\email{Nienhuis@iLorentz.org}
\author{J.V. Stokman}
\address{KdV Institute for Mathematics, University of Amsterdam, Science Park 105-107, 1098 XG Amsterdam, The Netherlands.}
\email{j.v.stokman@uva.nl}
\begin{document}
\begin{abstract}
Cherednik's type A quantum affine Knizhnik-Zamolodchikov (qKZ) equations form  a consistent system of linear $q$-difference equations for $V_n$-valued meromorphic functions on a complex $n$-torus, with 
$V_n$ a module over the $\textup{GL}_n$-type extended affine Hecke algebra $\mathcal{H}_n$.
The family $(\mathcal{H}_n)_{n\geq 0}$ of extended affine Hecke algebras forms a tower of algebras, with the associated algebra morphisms $\mathcal{H}_n\rightarrow\mathcal{H}_{n+1}$ in the Hecke algebra descending of arc insertion at the affine braid group level.  In this paper we consider 
qKZ towers $(f^{(n)})_{n\geq 0}$ of solutions, which consist of 
twisted-symmetric polynomial solutions $f^{(n)}$ ($n\geq 0$) of the qKZ equations that are compatible
with the tower structure on $(\mathcal{H}_n)_{n\geq 0}$. The compatibility is encoded by so-called braid recursion relations: $f^{(n+1)}(z_1,\ldots,z_{n},0)$ is required to coincide up to a quasi-constant factor with the push-forward of $f^{(n)}(z_1,\ldots,z_{n})$ by an intertwiner $\mu_{n}: V_{n}\rightarrow V_{n+1}$ of $\mathcal{H}_{n}$-modules, where $V_{n+1}$ is considered as an $\mathcal{H}_{n}$-module through the tower structure on $(\mathcal{H}_n)_{n\geq 0}$. 

We associate to the dense loop model on the half-infinite cylinder with nonzero loop weights a qKZ tower $(f^{(n)})_{n\geq 0}$ of solutions. The solutions
$f^{(n)}$ are constructed from specialised dual non-symmetric Macdonald polynomials with specialised parameters using the Cherednik-Matsuo correspondence. In the special case that the extended affine Hecke algebra parameter is a third root of unity, $f^{(n)}$ coincides with the (suitably normalized) ground state of the inhomogeneous dense O(1) loop model on the half-infinite cylinder with circumference $n$.
\end{abstract}
\maketitle

\setcounter{tocdepth}{2}

\section{Introduction}

Quantum Knizhnik-Zamolodchikov (qKZ) equations are consistent systems of linear $q$-dif\-fe\-rence equations that naturally arise in the context of representation theory of quantum affine algebras \cite{Frenkel:1992aa} and affine Hecke algebras \cite{Cherednik:1992aa}. They appear as
consistency equations for form factors and correlation functions of various integrable models (see e.g., \cite{Smirnov:1992aa,Jimbo:1995aa} for the first examples). In this paper we focus on Cherednik's qKZ equations associated to the
$\textup{GL}_n$-type extended affine Hecke algebra $\mathcal{H}_n$. This case relates to integrable one-dimensional lattice models with quasi-periodic boundary conditions, with the integrability governed by the extended affine Hecke algebra $\mathcal{H}_n$. Important examples, also in the context of the present paper, are the XXZ spin-$\frac{1}{2}$ chain and the dense loop model.

The collection $(\mathcal{H}_n)_{n\geq 0}$ of extended affine Hecke algebras forms a tower of algebras with respect to algebra morphisms $\mathcal{H}_n\rightarrow\mathcal{H}_{n+1}$ that arise as descendants of arc insertion morphisms $\mathcal{B}_n\rightarrow\mathcal{B}_{n+1}$ for the groups $\mathcal{B}_n$ of affine $n$-braids, cf. \cite{Al-Qasimi:2017aa, Al-Harbat:2015aa, Gainutdinov:2016aa}. In this paper, we study families $(f^{(n)})_{n\geq 0}$ of solutions $f^{(n)}$ of qKZ equations taking values in $\mathcal{H}_n$-modules $V_n$ that are naturally compatible to the tower structure. 

It leads us to introducing the notion of a tower $(f^{(n)})_{n\geq 0}$ of solutions of qKZ equations. The constituents $f^{(n)}$ of the tower are polynomials in $n$ complex variables $z_1,\ldots,z_n$, taking values in a finite dimensional $\mathcal{H}_n$-module $V_n$. 
They are twisted-symmetric solutions of Cherednik's qKZ equations interrelated by the so-called {\it braid recursion relations},
meaning that $f^{(n+1)}(z_1,\ldots,z_{n},0)$ coincides with the push-forward of $f^{(n)}(z_1,\ldots,z_{n})$ by an $\mathcal{H}_n$-intertwiner $\mu_{n}: V_{n}\rightarrow V_{n+1}$ up to a quasi-constant factor, where $V_{n+1}$ is regarded as an $\mathcal{H}_{n}$-module through the tower structure of $(\mathcal{H}_n)_{n\geq 0}$.
In the terminology of \cite{Al-Qasimi:2017aa}, the collection $\{(V_n,\mu_n)\}_{n\geq 0}$ of $\mathcal{H}_n$-modules $V_n$ and $\mathcal{H}_n$-intertwiners $\mu_n: V_n\rightarrow V_{n+1}$ is a tower of extended affine Hecke algebra modules. {}From this perspective, towers of solutions of qKZ equations are naturally associated with towers of extended affine Hecke algebra modules. The braid recursion relations are then determined by the module tower up to the quasi-constant factors.

In \cite{Al-Qasimi:2017aa}, the first and third authors constructed a family of module towers, called {\it link pattern towers}, which depends on a twist parameter $v$. The link pattern tower actually descends to a tower of extended affine Temperley-Lieb algebra modules. The representations $V_n$ are realized on spaces of link patterns on the punctured disc, which alternatively can be interpreted as the quantum state spaces for the dense O($\tau$) loop models on the half-infinite cylinder (with $n$ the circumference of the cylinder). The intertwiners $\mu_n$ in the link pattern tower are constructed skein theoretically (for even $n$ this goes back to 
\cite{Di-Francesco:2006aa}), and are in fact closely related to arc insertion morphisms in a relative version of the Roger-Yang \cite{Roger:2014aa} skein module in the presence of a pole (see \cite[Rem. 8.11]{Al-Qasimi:2017aa}). In this paper, we construct towers $(f^{(n)})_{n\geq 0}$ of solutions of qKZ equations relative to the link pattern tower with twist parameter one, and describe the corresponding quasi-constant factors in the braid recursion relations explicitly. We consider two cases. 

We show that the (suitably normalized) ground states $f^{(n)}$ of the inhomogeneous dense O(1) loop model on the half-infinite cylinder with circumference $n$ form a tower of solutions relative to the link pattern tower. In this case, the associated affine Hecke algebra parameter is a third root of unity. 
This generalizes results from \cite{Di-Francesco:2006aa}, where the braid recursion relations relating $f^{(2k+1)}(z_1,\ldots,z_{2k},0)$ to $f^{(2k)}(z_1,\ldots,z_{2k})$ were derived under the implicit additional assumption that a unique normalized ground state for the inhomogeneous dense O(1) loop model exists when one of the rapidities is set equal to zero (the latter is not guaranteed, since the transfer operator is no longer stochastic when one of the rapidities is set equal to zero).  In an upcoming paper \cite{Al-Qasimi:2019aa} the full set of braid recursion relations for the ground states is used to derive explicit formulas for various observables of the dense O(1) loop model on the infinite cylinder.

We generalize this example by constructing a tower of solutions $(f^{(n)})_{n\geq 0}$ 
for twist parameter one and for all values of the affine Hecke algebra parameter for which the loop
weights of the associated dense loop model are nonzero. In this case, the constituents $f^{(n)}$ are constructed using the Cherednik-Matsuo correspondence \cite{Kasatani:2007ab,Stokman:2011aa}. The Cherednik-Matsuo correspondence, relating solutions of qKZ equations to common eigenfunctions of Cherednik's commuting $Y$-operators, can be applied in the present context since the link pattern modules are principal series modules, as we shall show in Theorem \ref{principalTHM}. It leads to the construction of the constituents $f^{(n)}$ of the tower
in terms of nonsymmetric Macdonald polynomials. Subtle issues arise here since the two parameters of the associated double affine Hecke algebra satisfy an algebraic relation that breaks down the semisimplicity of the $Y$-operators. We resort to Kasatani's \cite{Kasatani:2005aa} work to deal with these issues. See \cite{Kasatani:2007aa} for an alternative approach to construct polynomial twisted-symmetric solutions $f^{(n)}$ of the qKZ equations 
using Kazhdan-Lusztig bases.

In both towers the constituent $f^{(n)}$ is a nonzero twisted-symmetric homogeneous polynomial solution of the qKZ equations of total degree $\frac{1}{2}n(n-1)$. In fact, this property characterizes $f^{(n)}$ up to a nonzero scalar multiple, a result that plays a crucial role in establishing the explicit braid recursion relations. In particular it allows us to prove the braid recursion relations for the suitably normalized ground states of the inhomogeneous dense O(1) loop models without addressing the issue of the existence of a unique normalized ground state when the rapidities are outside the stochastic regime.

The content of the paper is as follows. 
In Section 2 we recall the definitions of extended affine Hecke algebras and qKZ equations, and introduce the notion of a qKZ tower of solutions.
In Section 3 we recall from \cite{Al-Qasimi:2017aa} the definition of the link pattern tower.
In Section \ref{subsec-linkqkz} we determine necessary conditions for the existence of nonzero twisted-symmetric homogeneous polynomial solutions $f^{(n)}$ of total degree $\frac{1}{2}n(n-1)$ of the qKZ equations with values in the link pattern modules. We show that the existence implies that 
$(f^{(n)})_{n\geq 0}$ forms a tower of solutions relative to the link pattern tower, and we explicitly write down the corresponding braid recursion relations.
The construction of the tower of solutions when the Hecke algebra parameter is a third root of unity is discussed in Section 5. The general case is discussed in Section 6.
We derive a dual version of the braid recursion relations in Section \ref{dual}.
Lastly, in Appendix \ref{uniqueproof} we discuss uniqueness properties for various classes of twisted-symmetric solutions to qKZ equations,
some of which were considered before in \cite{Di-Francesco:2006aa, Kasatani:2007aa, De-Gier:2010aa}.

\subsection{Acknowledgments}
We thank Eric Opdam for a comment leading to the precise conditions on the affine Hecke algebra 
parameter for which the main theorem of the paper (Theorem \ref{mainTHM}) holds true.
The work by Kayed Al Qasimi is supported by the Ministry of Education of the United Arab Emirates  under scholarship number 201366644. 
Diagrams were coded using PSTricks.

\section{Towers of solutions of qKZ equations} 
In this section we begin by recalling the extended affine Hecke algebra, the qKZ equations and introduce what we call a qKZ tower of solutions.
The extended affine Hecke algebra can be defined using two different presentations.
We make use of both presentations as one is more convenient for defining qKZ equations, while the other is more suitable for relating the algebra to the extended affine Temperley-Lieb algebra.

\subsection{Extended affine Hecke algebras}

Let $t^{\frac{1}{4}}\in\mathbb{C}^*$.
\begin{definition}\label{defH}
Let $n\geq 3$. The extended affine Hecke algebra $\mathcal{H}_n=\mathcal{H}_n(t^{\frac{1}{2}})$ of type $\widehat{A}_{n-1}$ is the complex associative algebra with generators $T_i$ ($i\in\mathbb{Z}/n\mathbb{Z}$) and $\rho,\rho^{-1}$ and defining relations
\begin{equation}\label{hecke2}
	\begin{split}		
	 &(T_i - t^{-\frac{1}{2}}) ( T_i +t^{\frac{1}{2}}) =0, \\
	 &T_iT_j =T_jT_i \hspace{3cm}(i-j\not\equiv \pm 1),\\
	&T_i T_{i+1}T_i = T_{i+1} T_{i}T_{i+1},\\
	&\rho T_{i} = T_{i+1} \rho,\\
	&\rho\rho^{-1} =1 = \rho^{-1}\rho,
	\end{split}
\end{equation}
where the indices are taken modulo $n$. For $n=2$ the extended affine Hecke algebra 
$\mathcal{H}_2=\mathcal{H}_2(t^{\frac{1}{2}})$ is the algebra generated by $T_0,T_1,\rho^{\pm 1}$ with defining relations \eqref{hecke2} but with the third relation omitted. For $n=1$ we set $\mathcal{H}_1:=\mathbb{C}[\rho,\rho^{-1}]$ to be the algebra of Laurent polynomials in one variable $\rho$, and for $n=0$ we set $\mathcal{H}_0:=\mathbb{C}[X]$, the polynomial algebra in the variable $X$.
\end{definition}
Note that $T_i$ is invertible with inverse $T_i^{-1}=T_i -t^{-\frac{1}{2}} + t^{\frac{1}{2}}$.
For $n\geq 1$ the element $\rho^n\in\mathcal{H}_n$ is central. 

For $n\geq 2$ the {\it affine Hecke algebra} $\mathcal{H}_n^a=\mathcal{H}_n^a(t^{\frac{1}{2}})$ of type $\widehat{A}_{n-1}$ is the subalgebra of $\mathcal{H}_n$ generated by $T_i$ ($i\in\mathbb{Z}/n\mathbb{Z}$). For $n\geq 3$ the first three relations of \eqref{hecke2} are the defining relations of $\mathcal{H}_n^a$ in terms of these generators (for $n=2$ the first two relations are the defining relations).  Furthermore, $\mathcal{H}_n$ is isomorphic to the crossed
product algebra $\mathbb{Z}\ltimes\mathcal{H}_n^a$,  where $m \in\mathbb{Z}$ acts on 
$\mathcal{H}^a_n$
by the algebra automorphism $T_i\mapsto T_{i+m}$ (with the indices modulo $n$). Equivalently,
$m\in\mathbb{Z}$ acts by restricting the inner automorphism $h\mapsto \rho^mh\rho^{-m}$ of $\mathcal{H}_n$ to $\mathcal{H}_n^a$. 
For $n\geq 2$ the (finite) Hecke algebra of type $A_{n-1}$ is the subalgebra $\mathcal{H}^0_n$ of $\mathcal{H}^a_n$ generated by $T_1,\ldots, T_{n-1}$. The defining relations of $\mathcal{H}_n^0$
in terms of the generators $T_1,\ldots,T_{n-1}$ are given again by the first three relations of
\eqref{hecke2}, restricted to those indices that they make sense.

Bernstein and Zelevinsky \cite{Lusztig:1989aa} obtained the following alternative presentation of 
the extended affine Hecke algebra (see also \cite{Haines:2002aa} for a detailed discussion).

\begin{theorem}
Let $n\geq 2$ and define $Y_j\in\mathcal{H}_n$ for $j=1,\ldots,n$ by
\[
Y_j:=T_{j-1}^{-1}T_{j-2}^{-1}\cdots T_1^{-1}\rho T_{n-1}\cdots T_{j+1}T_j.
\]
Then, $\mathcal{H}_n$ is generated by $T_1,\ldots,T_{n-1},Y_1^{\pm 1},\ldots,Y_n^{\pm 1}$.
The defining relations of $\mathcal{H}_n$ in terms of these generators are given by
\begin{align} \label{hecke1}
	&(T_i - t^{-\frac{1}{2}}) ( T_i +t^{\frac{1}{2}}) =0\qquad &&(1\leq i<n),\nonumber \\
	&T_i T_{i+1}T_i = T_{i+1} T_{i}T_{i+1}\qquad  &&(1\leq i<n-1),\nonumber \\
	&T_iT_j =T_jT_i &&(1\leq i,j<n: |i-j|>1),\nonumber\\
	&T_iY_{i+1}T_i = Y_{i} &&(1\leq i<n),\\
	&T_iY_j=Y_jT_i &&(1\leq i<n, 1\leq j\leq n: j\neq i,i+1),\nonumber \\
	&Y_i Y_j =Y_jY_i && (1\leq i,j\leq n),\nonumber \\
	&Y_iY^{-1}_i =1 = Y^{-1}_iY_i && (1\leq i\leq n). \nonumber
\end{align}
\end{theorem}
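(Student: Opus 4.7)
The plan is to prove the presentation in three movements: (i) check that the elements $Y_1,\ldots,Y_n$ defined by the displayed formula, together with $T_1,\ldots,T_{n-1}$, satisfy all of the relations listed in \eqref{hecke1} inside $\mathcal{H}_n$; (ii) verify that these elements generate $\mathcal{H}_n$; and (iii) conclude that no additional relations are needed by a Poincar\'e--Birkhoff--Witt dimension argument.

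For (i), the quadratic, braid and distant-commutation relations among the $T_i$ are inherited verbatim from \eqref{hecke2}. The relation $T_iY_{i+1}T_i=Y_i$ is cleanest: expanding the middle $Y_{i+1}$ by definition, the leading $T_i$ meets a $T_i^{-1}$ after one application of the braid relation and cancels, leaving exactly $Y_i$. The commutation $T_iY_j=Y_jT_i$ for $j\neq i,i+1$ is obtained by pushing $T_i$ through the defining product of $Y_j$: each time $T_i$ meets a $T_k$ with $|i-k|>1$ it commutes freely, and when it passes $\rho$ it is converted into $T_{i+1}$ via $\rho T_i=T_{i+1}\rho$; careful bookkeeping shows the net effect is the identity precisely when $j$ avoids $i$ and $i+1$. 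The invertibility of each $Y_j$ is automatic since $\rho$ and the $T_k$ are all invertible in $\mathcal{H}_n$.

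The main obstacle is the mutual commutativity $Y_iY_j=Y_jY_i$. I would handle it in two steps. First, verify the special case $Y_1Y_2\cdots Y_n=\rho^n$ by telescoping: adjacent $T$-factors cancel and the resulting $\rho$'s collect through $\rho T_k=T_{k+1}\rho$, yielding a central power of $\rho$. Second, use the already-verified relation $Y_i=T_iY_{i+1}T_i$ together with the $T$--$Y$ commutation relations to reduce an arbitrary commutator $[Y_i,Y_j]$ by conjugation to a strictly smaller instance and ultimately to the telescoping identity. A quicker alternative is to realise $\mathcal{H}_n$ on Lusztig's Demazure--Lusztig polynomial representation, in which the $Y_j$ act by commuting multiplication operators; commutativity then follows in one stroke. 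For (ii), the definition of $Y_1$ reads $Y_1=\rho T_{n-1}T_{n-2}\cdots T_1$, whence
\[
\rho=Y_1\,T_1^{-1}T_2^{-1}\cdots T_{n-1}^{-1}
\]
lies in the subalgebra generated by the $T_i$ and $Y_1^{\pm 1}$; combined with the $T_i$ themselves this recovers all generators of the original presentation \eqref{hecke2}.

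For (iii), let $\widetilde{\mathcal{H}}_n$ denote the abstract algebra defined by the relations \eqref{hecke1} and let $\varphi\colon\widetilde{\mathcal{H}}_n\to\mathcal{H}_n$ send abstract generators to the concrete elements of the same name. Step (ii) makes $\varphi$ surjective. Using only the relations \eqref{hecke1} one shows, by a standard straightening argument, that $\widetilde{\mathcal{H}}_n$ is spanned as a left $\mathbb{C}[Y_1^{\pm 1},\ldots,Y_n^{\pm 1}]$-module by $\{T_w\}_{w\in S_n}$, where $T_w=T_{i_1}\cdots T_{i_\ell}$ for any reduced expression of $w\in S_n$. On the other hand, combining the crossed-product description $\mathcal{H}_n\cong\mathbb{Z}\ltimes\mathcal{H}_n^a$ with the classical PBW basis of $\mathcal{H}_n^a$ produces a basis $\{T_wY^{\lambda}\mid w\in S_n,\ \lambda\in\mathbb{Z}^n\}$ of $\mathcal{H}_n$. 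Thus $\varphi$ sends a spanning set bijectively onto a basis, hence is an isomorphism. The technical heart of the argument is the commutativity of the $Y_j$ together with the straightening in $\widetilde{\mathcal{H}}_n$ needed to establish the $\mathbb{C}[Y^{\pm 1}]$-spanning set of size $|S_n|$; the rest is formal.
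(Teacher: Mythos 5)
The paper itself gives no proof of this theorem: it is stated as a known result, attributed to Bernstein--Zelevinsky via the citations to Lusztig and to Haines--Pettet, so your proposal must be judged on its own merits rather than compared to the paper's argument.

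Your three-step architecture --- verify the relations \eqref{hecke1} inside $\mathcal{H}_n$, verify generation via $\rho=Y_1T_1^{-1}\cdots T_{n-1}^{-1}$, and compare a straightening spanning set of the abstract algebra against the PBW basis of $\mathcal{H}_n\cong\mathbb{Z}\ltimes\mathcal{H}_n^a$ --- is the standard and correct strategy, and your steps (ii) and (iii) are sound as sketched. The gap is in the commutativity $Y_iY_j=Y_jY_i$, which you yourself flag as the main obstacle. Your reduction step is correct: since $Y_i=T_iY_{i+1}T_i$ and $T_iY_j=Y_jT_i$ for $j\neq i,i+1$, one gets $[Y_i,Y_j]=T_i[Y_{i+1},Y_j]T_i$, so all commutators of the $Y$'s vanish if and only if a single adjacent one, say $[Y_1,Y_2]$, does. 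But $[Y_1,Y_2]=0$ is \emph{not} a consequence of the telescoping identity $Y_1\cdots Y_n=\rho^n$. Already for $n=3$, writing everything out, one finds that $[Y_1,Y_2]=0$ is equivalent to the braid relation $T_0T_1T_0=T_1T_0T_1$, where $T_0=\rho T_{n-1}\rho^{-1}$; the single product relation $Y_1\cdots Y_n=\rho^n$ does not encode this. So the base case must be settled by a direct calculation in $\mathcal{H}_n$ using the $T_0$-braid relation from \eqref{hecke2}; the telescoping identity and the reduction do not suffice on their own.

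Your "quicker alternative" via a representation in which the $Y_j$ act as commuting multiplication operators also needs care to avoid circularity: the usual construction of such a representation (on $\mathcal{H}_n/\sum_i\mathcal{H}_n(T_i-t^{-1/2})$, or on nonsymmetric Macdonald-type polynomial spaces) already presupposes the commutativity being proved. To make this route rigorous one must build a Demazure--Lusztig-type representation of $\mathcal{H}_n$ from scratch using only the presentation \eqref{hecke2}, establish its faithfulness via the Iwahori--Matsumoto basis of $\mathcal{H}_n^a$, and then verify by explicit computation that $Y_j$ acts as multiplication by the $j$-th variable. Finally, a minor inaccuracy: the identity $T_iY_{i+1}T_i=Y_i$ requires no braid relation at all --- the leading $T_i$ cancels directly with the initial $T_i^{-1}$ in the definition of $Y_{i+1}$, and the trailing $T_i$ becomes the last factor of $Y_i$.
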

Note that $\rho\in\mathcal{H}_n$ can be expressed as
\[
\rho=T_1T_2\cdots T_{n-1}Y_n
\]
with respect to the Bernstein-Zelevinsky presentation of $\mathcal{H}_n$. Let $\mathcal{A}_n$
be the commutative subalgebra of $\mathcal{H}_n$ generated by $Y_1^{\pm 1},\ldots,Y_n^{\pm 1}$.

More can be said about the structure of $\mathcal{H}_n$ in terms of the Bernstein-Zelevinsky presentation (see \cite{Lusztig:1989aa} and \cite{Haines:2002aa}).
Let $f\in \mathbb{C}[\mathbf{z}^{\pm 1}]:=\mathbb{C}[z_1^{\pm 1},\ldots,z_n^{\pm 1}]$ be a Laurent polynomial in $n$ variables $z_1,\ldots,z_n$. Let $f=\sum_{\alpha\in\mathbb{Z}^n}c_\alpha \mathbf{z}^\alpha$ ($c_\alpha\in\mathbb{C}$) be its expansion in monomials
$\mathbf{z}^\alpha:=z_1^{\alpha_1}\cdots  z_n^{\alpha_n}$. Then, we write
$f(Y):=\sum_{\alpha\in\mathbb{Z}^n}c_\alpha Y^\alpha\in\mathcal{A}_n$, where $Y^\alpha:=Y_1^{\alpha_1}\cdots Y_n^{\alpha_n}$. The map
$f\mapsto f(Y)$ defines an isomorphism $\mathbb{C}[\mathbf{z}^{\pm 1}]\overset{\sim}{\longrightarrow} \mathcal{A}_n$ of commutative algebras. In addition, the multiplication map
\[
\mathcal{H}_n^0\otimes \mathcal{A}_n\rightarrow \mathcal{H}_n,\qquad
h\otimes f(Y)\mapsto hf(Y),
\]
is a linear isomorphism.

In \cite[\S 8]{Al-Qasimi:2017aa} it was shown that there exists a unique unit preserving algebra map
$\nu_n: \mathcal{H}_n\rightarrow \mathcal{H}_{n+1}$ satisfying for $n\geq 2$,
\begin{equation}\label{nun}
\begin{split}
&\nu_n(T_i)=T_i,\qquad i=1,\ldots,n-1,\\
&\nu_n(T_0)=T_nT_{0}T_n^{-1},\\
&\nu_n(\rho)=t^{-\frac{1}{4}}\rho T_n^{-1},
\end{split}
\end{equation}
satisfying $\nu_1(\rho)=t^{-\frac{1}{4}}\rho T_1^{-1}$ for $n=1$, 
and satisfying $\nu_0(X)=t^{\frac{1}{4}}\rho+t^{-\frac{1}{4}}\rho^{-1}$ for $n=0$. The $\nu_n$ was obtained in  \cite[\S 8]{Al-Qasimi:2017aa} as the Hecke algebra descent of an algebra homomorphism
$\mathbb{C}[\mathcal{B}_n]\rightarrow\mathbb{C}[\mathcal{B}_{n+1}]$, with $\mathcal{B}_n$ the extended affine braid group on $n$ strands, defined topologically by inserting an extra braid going underneath all the other braids it meets. 
At the end of this section, we require the algebra maps $\nu_n$ in constructing towers of $\mathcal{H}_n$-modules and qKZ towers of solutions.

\subsection{qKZ equations}
We consider Cherednik's  \cite{Cherednik:1992aa,Cherednik:1994aa} qKZ equations of type $\textup{GL}_n$.
We will follow closely \cite{Stokman:2011aa}, and we will restrict attention to twisted-symmetric solutions of qKZ equations. The notations $(m,k,\xi)$ in \cite[\S 4.3]{Stokman:2011aa} correspond to our $(n,-t^{\frac{1}{2}},\rho)$. The qKZ equations depend on an additional parameter $q$, which we for the moment take to be an arbitrary nonzero complex number.

Recall that for $n\geq 1$ and $t^{\frac{1}{2}}=1$, the extended affine Hecke algebra $\mathcal{H}_n(1)$ is isomorphic to the group algebra $\mathbb{C}[W_n]$ of the the extended affine symmetric group $W_n\simeq S_n\ltimes\mathbb{Z}^n$. Writing $s_i$ ($i\in\mathbb{Z}/n\mathbb{Z}$) and $\rho$ for the (Coxeter type) generators of $W_n$, acting on $\mathbb{C}[\mathbf{z}^{\pm 1}]$ and 
$\mathbb{C}(\mathbf{z}):=\mathbb{C}(z_1,\ldots,z_n)$ by 
\begin{equation}\label{actionWn}
\begin{split}
(s_if)(\mathbf{z})&:=f(\ldots,z_{i+1},z_i,\ldots)\qquad\qquad\quad (1\leq i<n),\\
(s_0f)(\mathbf{z})&:=f(qz_n,z_2,\ldots,z_{n-1},q^{-1}z_1),\\
(\rho f)(\mathbf{z})&:=f(z_2,\ldots,z_n,q^{-1}z_1),
\end{split}
\end{equation}
cf. Definition \ref{defH}. 
Note that the $W_n$-action on $\mathbb{C}[\mathbf{z}^{\pm 1}]$ is by graded algebra automorphisms, with the grading defined by the total degree. In addition, $W_n$ preserves the polynomial algebra
$\mathbb{C}[\mathbf{z}]:=\mathbb{C}[z_1,\ldots,z_n]$.

Define for $n\geq 1$ and $i\in\mathbb{Z}/n\mathbb{Z}$,
$$\widetilde{R}_i(x) := \frac{xT^{-1}_i - T_i}{t^{\frac{1}{2}}-t^{-\frac{1}{2}} x}, $$
which we view as rational $\mathcal{H}_n(t^{\frac{1}{2}})$-valued function in $x$. The key point
in the construction of qKZ equations is the fact that for any $\mathcal{H}_n(t^{\frac{1}{2}})$-module $V_n$ with representation map $\sigma_n: \mathcal{H}_n(t^{\frac{1}{2}})\rightarrow\textup{End}(V_n)$ and for $q\in\mathbb{C}^*$, the formulas
\begin{equation}\label{actionnabla}
\begin{split}
\bigl(\nabla(s_i)f\bigr)(\mathbf{z})&:=\sigma_n(\widetilde{R}_i(z_{i+1}/z_i))(s_if)(\mathbf{z})\qquad
1\leq i<n,\\
\bigl(\nabla(s_0)f\bigr)(\mathbf{z})&:=\sigma(\widetilde{R}_0(z_1/qz_n))(s_0f)(\mathbf{z}),\\
\bigl(\nabla(\rho)f\bigr)(\mathbf{z})&:=\sigma(\rho)(\rho f)(\mathbf{z}),
\end{split}
\end{equation}
define a left $W_n$-action on the space $V_n(\mathbf{z}):=\mathbb{C}(\mathbf{z})\otimes V_n$
of $V_n$-valued rational functions in $z_1,\ldots,z_n$, where the $W_n$-action in the right-hand
side is the action on the variables as given by \eqref{actionWn}. For $n=0$, we simply take $\nabla=\sigma_0$ acting on $V_0$. The fact that \eqref{actionnabla} defines a $W_n$-action is a consequence of the following identities for 
the $R$-operators $\widetilde{R}_i(x)$,
\begin{equation}\label{ybe}
\begin{split}
	\widetilde{R}_i(x)\widetilde{R}_{i+1}(xy) \widetilde{R}_i(y) &= \widetilde{R}_{i+1}(y)\widetilde{R}_i(xy)\widetilde{R}_{i+1}(x), \\
	\widetilde{R}_i(x)\widetilde{R}_j(y) &= \widetilde{R}_j(y)\widetilde{R}_i(x)\qquad\qquad i-j\not\equiv \pm 1, \\
	\widetilde{R}_i(x)\widetilde{R}_i(x^{-1})&=1, \\
	\rho \widetilde{R}_i(x) &= \widetilde{R}_{i+1}(x) \rho
	\end{split}
\end{equation}
with the indices taken modulo $n$.
The first equation is the Yang-Baxter equation \cite[Vol. 5]{b:2006aa} in braid form.

Note that in \eqref{actionWn} and \eqref{actionnabla} the action of $s_0$ is determined by the
action of $s_i$ ($1\leq i<n$) and of $\rho$, and hence does not have to be specified. We will often 
omit the explicit formula for the action of $s_0$ in the remainder of the paper.  Following
\cite{Stokman:2011aa} we call  the subspace $V_n(\mathbf{z})^{\nabla(W_n)}$ of $\nabla(W_n)$-invariant elements in $V_n(\mathbf{z})$ the space of twisted-symmetric
solutions of the qKZ equations on $V_n$. 
We need a more refined class of qKZ solutions, defined as follows.

\begin{definition}\label{Soldef}
Let $q\in\mathbb{C}^*$ and $c\in\mathbb{C}$. Fix a $\mathcal{H}_n(t^{\frac{1}{2}})$-module $V_n$ with representation map $\sigma_n: \mathcal{H}_n(t^{\frac{1}{2}}) \rightarrow \End(V_n)$. For $n\geq 2$ write $\textup{Sol}_n(V_n;q,c)\subseteq V_n[\mathbf{z}]$ for the $V_n$-valued polynomials $f\in V_n[\mathbf{z}]$ in the variables $z_1,\ldots,z_n$ satisfying
\begin{equation}\label{qkz}
	\begin{split}
	\sigma_{n}(\widetilde{R}_i(z_{i+1}/z_i))f(\ldots,z_{i+1},z_i,\ldots)&=f(\mathbf{z})
	\qquad (1\leq i<n),\\
	\sigma_n(\rho) f(z_2 ,\ldots,z_{n},q^{-1} z_1)&=cf(\mathbf{z}) .
\end{split}
\end{equation}
For $n=1$ we write $\textup{Sol}_1(V_1;q,c)$ for the $V_1$-valued polynomials $f\in V_1[z]$
in the single variable $z$ satisfying the $q$-difference equation $\sigma_1(\rho)f(q^{-1}z)=c\,f(z)$.
Finally, for $n=0$ write $\textup{Sol}_0(V_0;q,c)\subseteq V_0$ for the eigenspace of $\sigma_0(X)\in\textup{End}(V_0)$
with eigenvalue $c$.
\end{definition}
If $n\geq 1$ and $\textup{Sol}_n(V_n;q,c)\not=\{0\}$, then necessarily $c\in\mathbb{C}^*$. In this
case 
\[
\textup{Sol}_n(V_n;q,c)=V_n^{(c)}(\mathbf{z})^{\nabla(W_n)}\cap
V_n^{(c)}[\mathbf{z}],
\]
with $V_n^{(c)}$ denoting the vector space $V_n$ endowed with the twisted action $\sigma_n^{c}: \mathcal{H}_n\rightarrow\textup{End}(V_n)$ defined by
$\sigma_n^{c}(T_i):=\sigma_n(T_i)$ for $i\in\mathbb{Z}/n\mathbb{Z}$ and
$\sigma_n^{c}(\rho):=c^{-1}\sigma_n(\rho)$.  We call $c$ a {\it twist parameter}.

For $n\geq 2$ let  $\pi_{n}^{t^{\frac{1}{2}},q}: \mathcal{H}_n(t^{\frac{1}{2}}) \rightarrow \End (\C[\mathbf{z}^{\pm 1}])$  be Cherednik's \cite{Cherednik:2005aa} basic representation, defined by
\begin{align*}
	\pi_{n}^{t^{\frac{1}{2}},q} (T_i) &:= -t^{\frac{1}{2}}+\left( \frac{ t^{\frac{1}{2}}z_i - t^{-\frac{1}{2}} z_{i+1}}{z_{i+1} - z_i } \right) (s_i -1)\qquad (1\leq i<n),\\ 
	\pi_n^{t^{\frac{1}{2}},q}(T_0) &:= -t^{\frac{1}{2}}+\left( \frac{ t^{\frac{1}{2}}qz_n - t^{-\frac{1}{2}}  z_{1}}{z_{1} - qz_n } \right) (s_0 -1),\\
	\pi_n^{t^{\frac{1}{2}},q}(\rho) &:=\rho
\end{align*}
(see \cite[Thm. 3.1]{Stokman:2011aa} with $(m,k_i,\xi)$ replaced by $(n,-t^{\frac{1}{2}},\rho)$ and specializing to type $A$ as in \cite[\S 4.3]{Stokman:2011aa}). For $n=1$ we define the basic representation $\pi_1^{t^{\frac{1}{2}},q}: \mathcal{H}_1(t^{\frac{1}{2}})\rightarrow\textup{End}
\bigl(\mathbb{C}[\mathbf{z}^{\pm 1}]\bigr)$ by $\pi_1^{t^{\frac{1}{2}},q}(\rho):=\rho$.
Note that $\mathbb{C}[\mathbf{z}]$ is a 
$\pi_n^{t^{\frac{1}{2}},q}(\mathcal{H}_n)$-submodule of $\mathbb{C}[\mathbf{z}^{\pm 1}]$.

By \cite[Prop. 3.10]{Stokman:2011aa} (see also
\cite[\S 4.1]{Pasquier:2006aa} and \cite{Kasatani:2007ab})
we have for $n\geq 1$ and $c\in\mathbb{C}^*$
the following alternative description of $\textup{Sol}_n(V_n;q,c)$,
\[
\textup{Sol}_n(V_n;q,c)=\left\{f\in V_n[\mathbf{z}] \,\,\, | \,\,\, \pi_n^{t^{-\frac{1}{2}},q}(h)f=
\sigma^{c}_n(J(h))f\qquad \forall\, h\in\mathcal{H}_n(t^{-\frac{1}{2}})\right\},
\]
where $J: \mathcal{H}_n(t^{-\frac{1}{2}})\rightarrow\mathcal{H}_n(t^{\frac{1}{2}})$
is the unique anti-algebra isomorphism satisfying $J(T_i):=T_i^{-1}$ ($i\in\mathbb{Z}/n\mathbb{Z}$)
and $J(\rho):=\rho^{-1}$. Here the basic representation $\pi_n^{t^{-\frac{1}{2}},q}$ acts on the first tensor component of $V_n[\mathbf{z}]=\mathbb{C}[\mathbf{z}]\otimes V_n$. More concretely,
	\begin{align}
		\text{Sol}_n(V_n;q,c)=
		\left \{f \in V_n[\textbf{z}] \,\,\,\, \middle | \,\,\,\,\,
		\begin{matrix}\pi^{t^{-\frac{1}{2}},q}_{n} (T_i) f = \sigma_{n} (T_i^{-1}) f\quad
		(1\leq i<n)  \\ \pi_{n}^{t^{-\frac{1}{2}},q}(\rho)f =c\sigma_{n}(\rho^{-1})f
		\end{matrix}
		 \right  \}
	\end{align}
where one needs to be well aware that the action on the variables through the basic representation
is with respect to the extended affine Hecke algebra $\mathcal{H}_n(t^{-\frac{1}{2}})$ and the action on  $V_n$ through $\sigma_n$ is with respect to the extended affine Hecke algebra
$\mathcal{H}_n(t^{\frac{1}{2}})$.

Before we can conclude this section with the introduction of the notion of a qKZ tower of solutions we need to establish some notation.
Let $A$ be a complex associative algebra and write $\mathcal{C}_A$ for the category of left $A$-modules. 
Write $\textup{Hom}_A(M,N)$ for the space of morphisms $M\rightarrow N$ in $\mathcal{C}_A$, which we will call intertwiners.
Suppose that $\eta: A\rightarrow B$ is a (unit preserving) morphism of $\mathbb{C}$-algebras, then we write
$\textup{Ind}^{\eta}: \mathcal{C}_A\rightarrow\mathcal{C}_B$ and $\textup{Res}^{\eta}: \mathcal{C}_B\rightarrow\mathcal{C}_A$ for the corresponding induction and restriction functor. 
Concretely,
if $M$ is a left $A$-module then
\[
\textup{Ind}^{\eta}(M):=B\otimes_AM
\]
with $B$ viewed as a right $A$-module by $b\cdot a:=b\eta(a)$ for $b\in B$ and $a\in A$.
If $N$ is a left $B$-module then $\textup{Res}^{\eta}(N)$ is the complex vector space $N$, viewed
as an $A$-module by $a\cdot n:=\eta(a)n$ for $a\in A$ and $n\in N$.

For a left $\mathcal{H}_{n+1}$-module $V_{n+1}$ we use the
shorthand notation $V^{\nu_n}_{n+1}$ for the left $\mathcal{H}_n$-module $\textup{Res}^{\nu_n} (V_{n+1})$. The following lemma introduces the concept of 
the module lift of a qKZ solution.

\begin{lemma} \label{liftedsol}
Let $n\geq 0$. Let $V_n$ be a left $\mathcal{H}_n(t^{\frac{1}{2}})$-module and $V_{n+1}$ a left $\mathcal{H}_{n+1}(t^{\frac{1}{2}})$-module, with representation maps $\sigma_n$ and $\sigma_{n+1}$ respectively.
Let $\mu_n \in \textup{Hom}_{\mathcal{H}_n}(V_n,V^{\nu_n}_{n+1})$ be an intertwiner.
Extend $\mu_n$  to a $\C[\textbf{z}]$-linear map 
$V_n[\mathbf{z}]\rightarrow V_{n+1}^{\nu_n}[\mathbf{z}]$, which we still denote by $\mu_n$. Then, its restriction to 
$\textup{Sol}_n(V_n;q,c_n)$ is a linear map
\[
\mu_n: \textup{Sol}_n(V_n;q,c_n)\rightarrow \textup{Sol}_n(V_{n+1}^{\nu_n};q,c_n).
\]
\end{lemma}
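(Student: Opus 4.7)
The plan is to verify directly that $\mu_n(f)$ satisfies each of the defining relations of $\textup{Sol}_n(V_{n+1}^{\nu_n};q,c_n)$ by combining three facts: (i) as a $\C[\mathbf{z}]$-linear extension, $\mu_n$ commutes with any operator that acts only on the $\mathbf{z}$-variables, in particular with the substitutions $(\ldots,z_{i+1},z_i,\ldots)$ and $(z_2,\ldots,z_n,q^{-1}z_1)$; (ii) $\mu_n\colon V_n\to V_{n+1}^{\nu_n}$ is an $\mathcal{H}_n$-intertwiner, so $\sigma_{n+1}(\nu_n(h))\mu_n(v)=\mu_n(\sigma_n(h)v)$ for every $h\in\mathcal{H}_n$ and $v\in V_n$; (iii) for $1\leq i<n$ one has $\nu_n(T_i)=T_i$, so the $R$-operator acting on $V_{n+1}^{\nu_n}$ through the $\mathcal{H}_n$-structure coincides with $\sigma_{n+1}(\widetilde{R}_i(z_{i+1}/z_i))$. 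Since $\widetilde{R}_i(z_{i+1}/z_i)$ is a finite sum of terms of the form $a(\mathbf{z})\otimes T_i^{\pm 1}$ with $a(\mathbf{z})\in\C(\mathbf{z})$, the combination of (i) and (ii) still applies after clearing the denominator $t^{\frac{1}{2}}-t^{-\frac{1}{2}}z_{i+1}/z_i$.

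Concretely, for $n\geq 2$ and $f\in\textup{Sol}_n(V_n;q,c_n)$, I would compute
\begin{align*}
\sigma_{n+1}\bigl(\widetilde{R}_i(z_{i+1}/z_i)\bigr)\mu_n(f)(\ldots,z_{i+1},z_i,\ldots)
&=\sigma_{n+1}\bigl(\nu_n(\widetilde{R}_i(z_{i+1}/z_i))\bigr)\mu_n\bigl(f(\ldots,z_{i+1},z_i,\ldots)\bigr)\\
&=\mu_n\bigl(\sigma_n(\widetilde{R}_i(z_{i+1}/z_i))f(\ldots,z_{i+1},z_i,\ldots)\bigr)\\
&=\mu_n(f)(\mathbf{z}),
\end{align*}
where the first equality uses (i) and (iii), the second uses (ii), and the third uses the qKZ braid relation for $f$. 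The analogous three-step computation for $\rho$ gives
$$
\sigma_{n+1}(\nu_n(\rho))\,\mu_n(f)(z_2,\ldots,z_n,q^{-1}z_1)
=\mu_n\bigl(\sigma_n(\rho)f(z_2,\ldots,z_n,q^{-1}z_1)\bigr)
=c_n\,\mu_n(f)(\mathbf{z}),
$$
which is the twisted $\rho$-equation with the same twist constant $c_n$. Thus $\mu_n(f)\in\textup{Sol}_n(V_{n+1}^{\nu_n};q,c_n)$.

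The cases $n=0$ and $n=1$ are handled identically but with only one equation to check: for $n=1$ one applies the intertwining argument to the $q$-difference equation $\sigma_1(\rho)f(q^{-1}z)=c_1f(z)$, using $\mathbb{C}[z]$-linearity of $\mu_1$ together with the $\mathcal{H}_1$-intertwining property; for $n=0$ one observes that $\mu_0$ intertwines $\sigma_0(X)$ with $\sigma_1(\nu_0(X))=\sigma_1(t^{\frac{1}{4}}\rho+t^{-\frac{1}{4}}\rho^{-1})$, so it sends the $c$-eigenspace of $\sigma_0(X)$ into the $c$-eigenspace of $\sigma_1(\nu_0(X))$ acting on $V_1^{\nu_0}$.

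No genuine obstacle is expected: the lemma is a formal consequence of the definitions, and the only minor care required is the bookkeeping of which Hecke algebra parameter ($t^{\pm\frac{1}{2}}$) and which representation ($\sigma_n$ versus $\sigma_{n+1}\circ\nu_n$) is acting on which tensor factor, and the observation that $\widetilde{R}_i(z_{i+1}/z_i)$, though a priori rational in $\mathbf{z}$, appears in the qKZ equation cleared of denominators so that the resulting identity is between genuine $V_{n+1}^{\nu_n}$-valued polynomials.
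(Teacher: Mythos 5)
Your proposal is correct and follows essentially the same route as the paper: everything reduces to the intertwining relation $\mu_n\circ\sigma_n(h)=(\sigma_{n+1}\nu_n)(h)\circ\mu_n$ together with $\C[\mathbf{z}]$-linearity of the extended $\mu_n$, applied once to $\widetilde{R}_i$ and once to $\rho$, with the degenerate cases $n=0,1$ handled directly. The only cosmetic difference is your auxiliary observation (iii) that $\nu_n(T_i)=T_i$ for $1\leq i<n$; this is true but not actually needed, since the qKZ condition on $V_{n+1}^{\nu_n}$ is already stated through the representation $\sigma_{n+1}\circ\nu_n$, and the paper proceeds with $(\sigma_{n+1}\nu_n)(\widetilde{R}_i)$ without ever unfolding $\nu_n(T_i)$.
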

\begin{proof}
This is immediate from the intertwining property
\begin{equation} \label{qkzinter}
\mu_n\circ \sigma_{n}(h) = (\sigma_{n+1}\nu_n)(h) \circ  \mu_n  \quad \quad \forall h \in \mathcal{H}_n.
 \end{equation}
Indeed, if $f\in\textup{Sol}_n(V;q,c_n)$ then it follows for $n\geq 1$ from \eqref{qkzinter} that
\begin{align*}
		(\sigma_{n+1}\nu_n)(\widetilde{R}_i(z_{i+1}/z_i))\mu_n(f(\ldots,z_{i+1},z_i,\ldots) )
		&= \mu_n\big(\sigma_n(\widetilde{R}_i(z_{i+1}/z_i))f(\ldots,z_{i+1},z_i,\ldots)\big)\\
		&=\mu_n(f(\mathbf{z}))
\end{align*}
for $1\leq i<n$ and
\begin{align*}
(\sigma_{n+1}\nu_n)(\rho)f(z_2,\ldots,z_n,q^{-1}z_1)&=\mu_n\big(\sigma_{n}(\rho)f(z_2,\ldots,
z_n,q^{-1}z_1)\big)\\
&=c_n\mu_n(f(\mathbf{z})),
\end{align*}
hence $\mu_n(f)\in\textup{Sol}_n(V_{n+1}^{\nu_n};q,c_n)$.
For $n=0$ and $f\in\textup{Sol}_0(V_0;q,c_0)$, i.e. $f\in V_0$ satisfying $\sigma_0(X)f=c_0f$, we have
\[
(\sigma_1\nu_0)(X)\mu_0(f)=\mu_0(\sigma_0(X)f)=c_0\mu_0(f),
\]
hence $\mu_0(f)\in\textup{Sol}_0(V_1^{\nu_0};q,c_0)$.
\end{proof}
By the intertwiner $\mu_n$ a qKZ solution $f^{(n)}(\mathbf{z}) \in \textup{Sol}_n(V_n;q,c_n) $ gets lifted to a solution in $\textup{Sol}_n(V^{\nu_n}_{n+1} ;q,c_n)$, taking values in the $\mathcal{H}_{n+1}$-module  $V_{n+1}$. 
Along with this {\it{upward}} module lift there is also a {\it{downward}} descent of a solution, which reduces the number of variables. It is defined as follows.

Recall the algebra map 
$\nu_{n} :\mathcal{H}_{n} \rightarrow \mathcal{H}_{n+1}$ defined by \eqref{nun}.
\begin{lemma}\label{downlemma}
Let $n\geq 0$ and let $V_{n+1}$ be a left $\mathcal{H}_{n+1}(t^{\frac{1}{2}})$-module with associated representation map $\sigma_{n+1}$.
Then, for $n\geq 1$ and $f \in \textup{Sol}_{n+1}(V_{n+1};q,c_{n+1})$,
	$$f(z_1,\ldots, z_n,0) \in \textup{Sol}_n(V^{\nu_n}_{n+1} ;q,-t^{-\frac{3}{4}}c_{n+1}), $$
and for $n=0$ and $f\in\textup{Sol}_1(V_1;q,c_1)$,
$$f(0)\in\textup{Sol}_0(V_1^{\nu_0};q,t^{\frac{1}{4}}c_1+t^{-\frac{1}{4}}c_1^{-1}).$$
\end{lemma}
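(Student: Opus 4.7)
The plan is to write $g(z_1,\ldots,z_n):=f(z_1,\ldots,z_n,0)$ and verify directly the two conditions defining $\textup{Sol}_n(V_{n+1}^{\nu_n};q,-t^{-\frac{3}{4}}c_{n+1})$ for $n\geq 1$, and the analogous single condition for $n=0$.

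For the permutation equations with $1\leq i<n$: the defining relations \eqref{nun} give $\nu_n(T_i)=T_i$ for such $i$, hence $\nu_n(\widetilde{R}_i(x))=\widetilde{R}_i(x)$. Since these equations only involve the variables $z_1,\ldots,z_n$ and none of them is $z_{n+1}$, specialising $z_{n+1}=0$ in the $i$-th qKZ equation for $f$ directly produces the desired $i$-th qKZ equation for $g$ with the module action $\sigma_{n+1}\nu_n$. So only the $\rho$-equation requires real work.

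For the $\rho$-equation I would start from the $(n+1)$-variable identity
\begin{equation*}
\sigma_{n+1}(\rho)\,f(z_2,\ldots,z_{n+1},q^{-1}z_1)=c_{n+1}f(z_1,\ldots,z_{n+1})
\end{equation*}
and set $z_{n+1}=0$, yielding
\begin{equation*}
\sigma_{n+1}(\rho)\,f(z_2,\ldots,z_n,0,q^{-1}z_1)=c_{n+1}g(z_1,\ldots,z_n).
\end{equation*}
The last two arguments on the left are in the wrong order, so I would use the $i=n$ permutation equation at $z_{n+1}=0$: since $\widetilde{R}_n(0)=-t^{-\frac{1}{2}}T_n$, this gives
\begin{equation*}
f(z_1,\ldots,z_{n-1},0,z_n)=-t^{\frac{1}{2}}\sigma_{n+1}(T_n^{-1})\,g(z_1,\ldots,z_n),
\end{equation*}
and substituting $(z_1,\ldots,z_n)\mapsto(z_2,\ldots,z_n,q^{-1}z_1)$ rewrites the previous display as
\begin{equation*}
-t^{\frac{1}{2}}\sigma_{n+1}(\rho T_n^{-1})\,g(z_2,\ldots,z_n,q^{-1}z_1)=c_{n+1}g(z_1,\ldots,z_n).
\end{equation*}
Using $\rho T_n^{-1}=t^{\frac{1}{4}}\nu_n(\rho)$ from \eqref{nun}, the prefactor becomes $-t^{\frac{3}{4}}$, and dividing out gives exactly $(\sigma_{n+1}\nu_n)(\rho)\,g(z_2,\ldots,z_n,q^{-1}z_1)=-t^{-\frac{3}{4}}c_{n+1}g(z_1,\ldots,z_n)$, as required.

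For the base case $n=0$: the hypothesis $\sigma_1(\rho)f(q^{-1}z)=c_1 f(z)$ evaluated at $z=0$ shows that $f(0)$ is an eigenvector of $\sigma_1(\rho)$ with eigenvalue $c_1$, hence also of $\sigma_1(\rho^{-1})$ with eigenvalue $c_1^{-1}$. Since $\nu_0(X)=t^{\frac{1}{4}}\rho+t^{-\frac{1}{4}}\rho^{-1}$, linearity gives $(\sigma_1\nu_0)(X)f(0)=(t^{\frac{1}{4}}c_1+t^{-\frac{1}{4}}c_1^{-1})f(0)$, which is exactly the condition $f(0)\in\textup{Sol}_0(V_1^{\nu_0};q,t^{\frac{1}{4}}c_1+t^{-\frac{1}{4}}c_1^{-1})$. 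The only delicate point in the whole argument is bookkeeping the specialisation $z_{n+1}=0$ in the $i=n$ braid relation—making sure one evaluates $\widetilde{R}_n$ at $0$ and not the opposite limit—but once that is pinned down the rest is a short calculation.
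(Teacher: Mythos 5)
Your proof is correct and follows essentially the same route as the paper: the $i<n$ permutation equations are trivially inherited, the $\rho$-relation is obtained by combining the $\rho$-equation with the $i=n$ braid relation at $z_{n+1}=0$, and the $n=0$ case is the same eigenvector computation. The only bookkeeping variation is that you evaluate $\widetilde{R}_n$ at $0$ (getting $-t^{-1/2}T_n$) and then invert, whereas the paper folds the $i=n$ relation into $\rho\widetilde{R}_n(z_1/qz_{n+1})$ and evaluates $\widetilde{R}_n$ at $\infty$ (getting $-t^{1/2}T_n^{-1}$ directly); these are equivalent via $\widetilde{R}_n(0)\widetilde{R}_n(\infty)=1$.
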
%
\begin{proof}%
Let $n\geq 1$ and $f \in \textup{Sol}_{n+1}(V_{n+1};q,c_{n+1})$. Set
$g(z_1,\ldots,z_n):=f(z_1,\ldots,z_n,0)$.
For $1\leq i<n$ we have
\begin{align*}
		 (\sigma_{n+1}\nu_n)(\widetilde{R}_i(z_{i+1}/z_i)) g(\ldots,z_{i+1},z_i,\ldots) &=  
		 \sigma_{n+1}(\widetilde{R}_i(z_{i+1}/z_i))f(z_1,\ldots,z_{i+1},z_i,
		 \ldots,z_n,0)\\
		 &=f(z_1,\ldots,z_n,0)=g(z_1,\ldots,z_n).
\end{align*} 
Hence, to prove that $g\in\textup{Sol}_n(V_{n+1}^{\nu_n};q,-t^{-\frac{3}{4}}c_{n+1})$ it remains to show that 
\begin{equation}\label{todo1}
(\sigma_{n+1}\nu_n)(\rho)g(z_2,\ldots,z_n,q^{-1}z_1)=
-t^{-\frac{3}{4}}c_{n+1}g(\mathbf{z}).
\end{equation}
To prove \eqref{todo1}, first note that
\begin{align*}
\sigma_{n+1}(\rho\widetilde{R}_n(z_1/qz_{n+1}))f(z_2,\ldots,z_n,q^{-1}z_1,z_{n+1})&=
\sigma_{n+1}(\rho)f(z_2,\ldots,z_{n+1},q^{-1}z_1)\\
&=c_{n+1}f(z_1,\ldots,z_{n+1}).
\end{align*}
Setting $z_{n+1}=0$ and using that $\widetilde{R}_n(\infty):=\lim_{x\rightarrow\infty}\widetilde{R}_n(x)=
-t^{\frac{1}{2}}T_n^{-1}$, we get
$$
-t^{\frac{1}{2}}\sigma_{n+1}(\rho T_n^{-1})g(z_2,\ldots,z_n,q^{-1}z_1)=c_{n+1}g(z_1,\ldots,z_n).
$$
Then, \eqref{todo1} follows from the fact that $\nu_n(\rho)=t^{-\frac{1}{4}}\rho T_n^{-1}$.

For $n=0$ and $f\in\textup{Sol}_1(V_1;q,c_1)$ we have
$$
(\sigma_1\nu_0)(X)f(0)=\sigma_1(t^{\frac{1}{4}}\rho+t^{-\frac{1}{4}}\rho^{-1})f(0)=
(t^{\frac{1}{4}}c_1+t^{-\frac{1}{4}}c_1^{-1})f(0),
$$
hence $f(0)\in\textup{Sol}_0(V_1^{\nu_0};q,t^{\frac{1}{4}}c_1+t^{-\frac{1}{4}}c_1^{-1})$.
\end{proof}
By lifting solutions of qKZ equations by intertwiners $\mu_n$ and descending solutions of qKZ equations by setting variables equal to zero we can connect qKZ solutions of different rank.
This leads to 
 the definition of a qKZ tower of solutions. The starting point is the 
following definition of a tower of extended affine Hecke algebra modules (compare with  \cite{Al-Qasimi:2017aa}, where this notion was introduced for modules over extended affine Temperley-Lieb algebras, see also Section \ref{EATLsection}).
\begin{definition} \label{def-moduletower}
A tower 
\[
V_0\overset{\mu_0}{\longrightarrow}V_1\overset{\mu_1}{\longrightarrow}V_2
\overset{\mu_2}{\longrightarrow}V_3\overset{\mu_3}{\longrightarrow}\cdots
\]
of extended affine Hecke algebra modules is a sequence
$\{(V_n,\mu_n)\}_{n\in\mathbb{Z}_{\geq 0}}$ with $V_n$ a left $\mathcal{H}_n$-module
and $\mu_n\in\textup{Hom}_{\mathcal{H}_n}\bigl(V_n,V_{n+1}^{\nu_n}\bigr)$.
\end{definition}
To lift this notion of a tower to solutions of qKZ equations it is convenient to disregard quasi-periodic (with respect to the action of $\rho$) symmetric normalization factors $h$, i.e. polynomials $h\in\mathbb{C}[\mathbf{z}]^{S_n}$ satisfying $\rho h=\lambda h$
for some $\lambda\in\mathbb{C}^*$. We call such $h$ a {\it $\lambda$-recursion factor}, and 
$\lambda$ the {\it scale parameter}. We write $\mathcal{T}_{n,\lambda}\subset\mathbb{C}[\mathbf{z}]$
for the space of $\lambda$-recursion factors.
Note that $hf\in \textup{Sol}_n(V_n;q,\lambda c_n)$ if $f\in\textup{Sol}_n(V_n;q,c_n)$ and 
$h\in\mathcal{T}_{n,\lambda}$. By convention we define the space $\mathcal{T}_{0,\lambda}$ of $\lambda$-recursion factors for $n=0$ to be $\mathbb{C}$ if $\lambda=1$ and $\{0\}$ otherwise.

If $q$ is a root of unity, then we write $e\in\mathbb{Z}_{>0}$ for the smallest natural number such that $q^e=1$. We take $e=\infty$ if $q$ is not a root of unity.
 \begin{lemma}
 Let $n\geq 1$. Then, $\mathcal{T}_{n,\lambda}=\{0\}$ unless $\lambda=q^{-m}$ for some $0\leq m<e$.
 If $0\leq m<e$ then
 \[
 \mathcal{T}_{n,q^{-m}}=\mathbb{C}[z_1^e,\ldots,z_n^e]^{S_n}(z_1\cdots z_n)^m.
 \]
The latter formula should be read as $\mathcal{T}_{n,q^{-m}}=\textup{span}_{\mathbb{C}}\{(z_1\cdots z_n)^m\}$ if $e=\infty$.
 \end{lemma}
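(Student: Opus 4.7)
The plan is to translate the eigenvalue condition $\rho h=\lambda h$ into a concrete statement about monomial supports, exploit the symmetry of $h$ to transfer that statement from the first variable to all variables, and then read off the answer.

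First, I would unfold the $\rho$-action. For $h\in\mathbb{C}[\mathbf{z}]^{S_n}$ one has
\begin{equation*}
(\rho h)(z_1,\ldots,z_n)=h(z_2,\ldots,z_n,q^{-1}z_1)=h(q^{-1}z_1,z_2,\ldots,z_n),
\end{equation*}
where the second equality uses $S_n$-invariance. Hence $h\in\mathcal{T}_{n,\lambda}$ is equivalent to the single functional equation $h(q^{-1}z_1,z_2,\ldots,z_n)=\lambda\,h(z_1,\ldots,z_n)$ together with $S_n$-symmetry.

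Next I would expand $h=\sum_\alpha c_\alpha\mathbf{z}^\alpha$ with $\alpha\in\mathbb{Z}_{\geq 0}^n$ and match coefficients in the functional equation. This yields $q^{-\alpha_1}c_\alpha=\lambda c_\alpha$ for every $\alpha$, so the monomial support is contained in $\{\alpha:q^{-\alpha_1}=\lambda\}$. If $\lambda$ is not of the form $q^{-m}$ with $m\in\mathbb{Z}_{\geq 0}$ (equivalently, if no $0\le m<e$ satisfies $\lambda=q^{-m}$), this set is empty and $\mathcal{T}_{n,\lambda}=\{0\}$, giving the first half of the statement. Otherwise fix the unique $m\in\{0,\ldots,e-1\}$ (or $m\in\mathbb{Z}_{\geq 0}$ if $e=\infty$) with $\lambda=q^{-m}$; then the support condition becomes $\alpha_1\equiv m\pmod e$ (read as $\alpha_1=m$ when $e=\infty$). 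Invoking $S_n$-symmetry, $c_\alpha=c_{\sigma\alpha}$ for all $\sigma\in S_n$, and since the condition must hold for every permutation of $\alpha$, I get $\alpha_i\equiv m\pmod e$ for every $i$.

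Now I would write each such $\alpha_i=m+e\beta_i$ with $\beta_i\in\mathbb{Z}_{\geq 0}$ (permitted because $0\le m<e\le \alpha_i$ forces $\alpha_i\ge m$, and likewise when $e=\infty$ we have $\alpha_i=m$ so $\beta_i=0$). Then $\mathbf{z}^\alpha=(z_1\cdots z_n)^m(z_1^e)^{\beta_1}\cdots (z_n^e)^{\beta_n}$, which factors $h$ as
\begin{equation*}
h=(z_1\cdots z_n)^m\,g(z_1^e,\ldots,z_n^e)
\end{equation*}
for some $g\in\mathbb{C}[w_1,\ldots,w_n]$. Since $(z_1\cdots z_n)^m$ is $S_n$-invariant, the $S_n$-invariance of $h$ is equivalent to $g\in\mathbb{C}[w_1,\ldots,w_n]^{S_n}$, yielding the asserted inclusion $\mathcal{T}_{n,q^{-m}}\subseteq\mathbb{C}[z_1^e,\ldots,z_n^e]^{S_n}(z_1\cdots z_n)^m$. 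In the case $e=\infty$ this forces $g$ to be a constant, recovering $\mathcal{T}_{n,q^{-m}}=\mathbb{C}(z_1\cdots z_n)^m$.

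For the reverse inclusion, any $h=(z_1\cdots z_n)^m g(z_1^e,\ldots,z_n^e)$ with $g$ symmetric is manifestly $S_n$-invariant; applying $\rho$ multiplies $(z_1\cdots z_n)^m$ by $q^{-m}$ and leaves $g(z_1^e,\ldots,z_n^e)$ invariant (either because $q^e=1$, or, when $e=\infty$, because $g$ is constant), so $\rho h=q^{-m}h$. I expect no serious obstacle: the only point needing a moment of care is handling the two cases $e<\infty$ and $e=\infty$ uniformly in the factorization step, and confirming that $\alpha_i\ge m$ is automatic from $\alpha_i\ge 0$ and $\alpha_i\equiv m\pmod e$ with $0\le m<e$.
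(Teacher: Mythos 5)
Your proof is correct and follows essentially the same route as the paper's: both arguments reduce to computing the effect of $\rho$ on monomials ($\rho\,\mathbf{z}^\beta = q^{-\beta_1}\mathbf{z}^\beta$ up to reindexing), extract the support condition $\alpha_i\equiv m\pmod e$ using $S_n$-invariance, and read off the characterization; the paper phrases this in terms of orbit sums $\sum_{\beta\in S_n\alpha}\mathbf{z}^\beta$ rather than individual monomial coefficients, but the content is the same. (Minor slip: your parenthetical ``$0\le m<e\le\alpha_i$'' suggests $\alpha_i\ge e$, which need not hold; but your final sentence gives the correct justification that $\alpha_i\ge 0$ together with $\alpha_i\equiv m\pmod e$ and $0\le m<e$ already forces $\alpha_i\ge m$.)
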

 \begin{proof}
 Let $\alpha\in\mathbb{Z}_{\geq 0}^n$. It suffices to show that 
 $\sum_{\beta\in S_n\alpha}\mathbf{z}^\beta\in\mathbb{C}[\mathbf{z}]^{S_n}$ is a $\lambda$-recursion factor if and only if there exists a $0\leq m<e$ such that $\lambda=q^{-m}$ and $\alpha_i\equiv m$ mod $e$ 
 for all $i$ (where the latter condition for $e=\infty$ is read as $\alpha_i=m$ for all $i$). 
 
 Note that
 \[
 \rho\Bigl(\sum_{\beta\in S_n\alpha}z^\beta\Bigr)=\sum_{\beta\in S_n\alpha}q^{-\beta_n}z_1^{\beta_n}
 z_2^{\beta_1}\cdots z_n^{\beta_{n-1}}=\sum_{\beta\in S_n\alpha}q^{-\beta_1}z^\beta,
 \]
 hence $\sum_{\beta\in S_n\alpha}z^\beta\in\mathcal{T}_{n,\lambda}$ if and only if $\lambda=q^{-\alpha_i}$ for all $i=1,\ldots,n$. This is equivalent to $\lambda=q^{-m}$ and $\alpha_i\equiv m$ mod $e$ for some $0\leq m<e$.
 \end{proof}
The following lemma shows that by rescaling a nonzero symmetric polynomial solution of the qKZ equations by an appropriate recursion factor, it will remain nonzero if one of its variables is set to zero.
\begin{lemma}
Let $n\geq 1$ and let $V_n$ be a left $\mathcal{H}_n$-module with representation map $\sigma_n$.
If $0\not=f\in \textup{Sol}_n(V_n;q,c_n)$ then there exists a unique $m\in\mathbb{Z}_{\geq 0}$
and $g\in\textup{Sol}_n(V_n;q,q^mc_n)$ such that $f(\mathbf{z})=(z_1\cdots z_n)^mg(\mathbf{z})$ and $g(z_1,\ldots,z_{n-1},0)\not\equiv 0$.
\end{lemma}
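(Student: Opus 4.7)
The strategy is to exploit the qKZ equations to show that if $f$ vanishes at $z_n=0$ then $f$ is divisible by the product $z_1z_2\cdots z_n$, and then to peel off copies of $z_1\cdots z_n$ until this is no longer the case.

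First I would show that for nonzero $f\in\textup{Sol}_n(V_n;q,c_n)$ one has $c_n\in\mathbb{C}^*$ (this is recorded in the text after Definition \ref{Soldef}). Assume $f(z_1,\ldots,z_{n-1},0)\equiv 0$. Substituting $z_1=0$ in the $\rho$-equation
\[
\sigma_n(\rho)f(z_2,\ldots,z_n,q^{-1}z_1)=c_nf(\mathbf{z})
\]
shows that $f(0,z_2,\ldots,z_n)\equiv 0$, since the left-hand side vanishes and $c_n\neq 0$. Iterating this procedure (substitute $z_2=0$ in the $\rho$-equation, use that $f$ already vanishes when its first argument is zero, conclude that $f$ vanishes when its second argument is zero, and so on) shows that $f$ vanishes on each coordinate hyperplane $\{z_i=0\}$. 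Hence the polynomial $f$ is divisible by $z_1z_2\cdots z_n$, so we may write $f=(z_1\cdots z_n)\tilde g$ for some $\tilde g\in V_n[\mathbf{z}]$.

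Next I would verify that $\tilde g\in\textup{Sol}_n(V_n;q,qc_n)$. The polynomial $p(\mathbf{z}):=z_1\cdots z_n$ is $S_n$-invariant and satisfies $p(z_2,\ldots,z_n,q^{-1}z_1)=q^{-1}p(\mathbf{z})$, so $p\in\mathcal{T}_{n,q^{-1}}$ is a $q^{-1}$-recursion factor. Plugging $f=p\tilde g$ into the exchange relation
\[
\sigma_n(\widetilde R_i(z_{i+1}/z_i))f(\ldots,z_{i+1},z_i,\ldots)=f(\mathbf{z})\qquad(1\leq i<n)
\]
and cancelling the common factor $p$ shows $\tilde g$ satisfies the same exchange relations. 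Substituting into the $\rho$-equation and using $p(z_2,\ldots,z_n,q^{-1}z_1)=q^{-1}p(\mathbf{z})$ gives $\sigma_n(\rho)\tilde g(z_2,\ldots,z_n,q^{-1}z_1)=qc_n\tilde g(\mathbf{z})$, establishing the claim.

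Existence of $m$ and $g$ follows by iterating: if $\tilde g$ still vanishes at $z_n=0$, apply the same argument to $\tilde g$. Since $\deg(f)$ drops by $n$ with each peel, the process terminates after finitely many steps, yielding $m\in\mathbb{Z}_{\geq 0}$ and $g\in\textup{Sol}_n(V_n;q,q^mc_n)$ with $f=(z_1\cdots z_n)^mg$ and $g(z_1,\ldots,z_{n-1},0)\not\equiv 0$. For uniqueness, suppose $f=(z_1\cdots z_n)^{m_1}g_1=(z_1\cdots z_n)^{m_2}g_2$ with $g_1,g_2$ not vanishing at $z_n=0$. Assuming $m_1\leq m_2$, cancellation in $V_n[\mathbf{z}]$ gives $g_1=(z_1\cdots z_n)^{m_2-m_1}g_2$, and setting $z_n=0$ forces $m_1=m_2$, hence $g_1=g_2$. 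There are no real obstacles; the only point requiring care is the inductive vanishing argument that turns vanishing at the single hyperplane $\{z_n=0\}$ into vanishing on the full coordinate cross.
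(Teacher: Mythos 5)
Your proof is correct and follows the paper's own route: show that vanishing at $z_n=0$ propagates to all coordinate hyperplanes via the $\rho$-equation, peel off the $q^{-1}$-recursion factor $z_1\cdots z_n$, and iterate with a degree argument. The paper states the induction step in one line (relying on its earlier remark that multiplication by an element of $\mathcal{T}_{n,\lambda}$ shifts the twist parameter by $\lambda$), whereas you verify the shift to $qc_n$ directly and also spell out the uniqueness argument; both are sound.
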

\begin{proof}
Recall that the existence of a nonzero $f\in\textup{Sol}_n(V_n;q,c_n)$ guarantees that $c_n\not=0$.
Suppose that $f(z_1,\ldots,z_{n-1},0)\equiv 0$. Using $\sigma_n(\rho)f(z_2,\ldots,z_n,q^{-1}z_1)=
c_nf(\mathbf{z})$ repeatedly we conclude that $f(\ldots,z_{i-1},0,z_{i+1},\ldots)\equiv 0$.
Hence, $f(\mathbf{z})$ is divisible by the $q^{-1}$-recursion factor $z_1\dots z_n$ in $V_n[\mathbf{z}]$. Now divide this factor
out and apply induction to the total degree of $f$.
\end{proof}

\begin{definition}[qKZ tower] 
Let $\{(V_n,\mu_n)\}_{n\in\mathbb{Z}_{\geq 0}}$ be a tower of extended affine Hecke algebra modules. We call
 $(f^{(n)})_{n\geq 0}$ an associated \emph{qKZ tower of solutions} with twisting parameters $c_n\in\mathbb{C}^*$ ($n\geq 1$) if there exist recursion factors 
 $h^{(n)}\in\mathcal{T}_{n,\lambda_n}$ 
 ($n\geq 0$) such that
\begin{enumerate}
\item[{\bf a)}] $0\not=f^{(n)}\in\textup{Sol}_n(V_n;q,c_n)$ for $n\geq 0$, with $c_0:=
t^{\frac{1}{4}}c_1+t^{-\frac{1}{4}}c_1^{-1}$.
\item[{\bf b)}] $f^{(n+1)}(z_1,\ldots,z_n,0)\not\equiv 0$ for all $n\geq 0$.
\item[{\bf c)}] For all $n\geq 0$ we have
 \begin{equation}\label{glue} 
  f^{(n+1)}(z_1,\ldots,z_n,0)=h^{(n)}(z_1,\ldots,z_n)\mu_n(f^{(n)}(z_1,\ldots,z_n)).
  \end{equation}
  We call \eqref{glue} the braid recursion relations for the qKZ tower $(f^{(n)})_{n\geq 0}$ of solutions.
\end{enumerate}
\end{definition}
Note that by Lemmas \ref{liftedsol} and \ref{downlemma}, we necessarily must have the
compatibility condition
\begin{equation}\label{compatibility}
-t^{-\frac{3}{4}}c_{n+1}=\lambda_nc_n\qquad n\geq 1
\end{equation}
between the twist and scale parameters in a qKZ tower of solutions (note that for $n=0$ we have $t^{\frac{1}{4}}c_1+t^{-\frac{1}{4}}c_1^{-1}=c_0$ by definition).

\section{Extended affine Temperley-Lieb algebra}\label{EATLsection}

The qKZ towers we construct are built using modules of the extended affine Temperley-Lieb algebra, which is a quotient of $\mathcal{H}_n$.
In this section we recall the definition of the extended affine Temperley-Lieb algebra and discuss
the relevant tower of extended affine Temperley-Lieb algebra modules, following \cite{Al-Qasimi:2017aa}.

The extended affine Temperley-Lieb algebras arise as the endomorphism algebras of the skein category of the annulus, see \cite{Al-Qasimi:2017aa} and references therein. 
We first give the definition of the extended affine Temperley-Lieb algebra in terms of generators and relations, and then discuss its relation to $\mathcal{H}_n$ and the qKZ equations.
For more details on the theory discussed in this section see \cite{Al-Qasimi:2017aa} and references within. 
\begin{definition} Let $n\geq 3$. The \emph{extended affine Temperley-Lieb algebra} $\mathcal{TL}_n=\mathcal{TL}_n(t^{\frac{1}{2}})$ is the complex associative algebra with generators
$e_i$ ($i\in\mathbb{Z}/n\mathbb{Z}$) and $\rho,\rho^{-1}$, and defining relations
\begin{equation}\label{relTL}
\begin{split}
	 &e_i^2 = \bigl(-t^{\frac{1}{2}} - t^{-\frac{1}{2}}\bigr) e_i,   \\
	 &e_ie_j = e_je_i\qquad\qquad\quad \hbox{ if }\,\, i-j\not\equiv\pm 1,\\
	 &e_ie_{i\pm 1}e_i = e_i,\\
	  	 	&\rho e_i =  e_{i+1} \rho, \\
		 & \rho\rho^{-1}=1=\rho^{-1}\rho,\\
		 & \bigl(\rho e_1\bigr)^{n-1}  = \rho^n (\rho e_1), 		 
\end{split}
\end{equation}
where the indices are taken modulo $n$. For $n=2$ the extended affine Temperley-Lieb algebra
$\mathcal{TL}_2=\mathcal{TL}_2(t^{\frac{1}{2}})$ is the algebra generated by $e_0,e_1,\rho^{\pm 1}$
with the defining relations \eqref{relTL} but with the third relation omitted. For $n=1$
we set $\mathcal{TL}_1=\mathcal{H}_1=\mathbb{C}[\rho,\rho^{-1}]$, and for $n=0$ we set
$\mathcal{TL}_0=\mathcal{H}_0=\mathbb{C}[X]$.
\end{definition}
The affine Temperley-Lieb algebra is the subalgebra $\mathcal{TL}^a_n$ of $\mathcal{TL}_n$ generated by $e_i$ ($i\in\mathbb{Z}/n\mathbb{Z}$). The first three relations in \eqref{relTL} are the defining relations in
terms of these generators (the first relation is the defining relation when $n=2$).
The (finite) Temperley-Lieb algebra is the subalgebra $\mathcal{TL}^0_n$ of $\mathcal{TL}^a_n$ generated by $e_1,\ldots,e_{n-1}$. The first three relations in \eqref{relTL} for the relevant indices
are then the defining relations. Note that the dependence on the parameter $t^{\frac{1}{2}}$
of $\mathcal{TL}_n$ is actually a dependence on $t^{\frac{1}{2}}+t^{-\frac{1}{2}}$.

It is well known that for $n\geq 2$ the assignments 
\[
T_i\mapsto e_i+t^{-\frac{1}{2}},\qquad \rho\mapsto \rho
\]
for $i\in\mathbb{Z}/n\mathbb{Z}$ extend to a surjective algebra homomorphism $\psi_n: \mathcal{H}_n(t^{\frac{1}{2}})\twoheadrightarrow
\mathcal{TL}_n(t^{\frac{1}{2}})$ see  e.g., \cite[Prop. 7.2]{Al-Qasimi:2017aa} and references therein.
For $n=1$ and $n=0$ we take $\psi_n:  \mathcal{H}_n\rightarrow\mathcal{TL}_n$ to be the identity
map.

Via the  map $\psi_n$ the R-operators $R_i(x):=\psi_n(\widetilde{R}_i(x))$ ($i\in\mathbb{Z}/n\mathbb{Z}$) on the extended affine Temperley-Lieb level are
	\begin{equation}
	R_i(x) = a(x)e_i+b(x)
	\end{equation}
	as rational $\mathcal{TL}_n$-valued function in $x$, with 
	$a(x)=a(x;t^{\frac{1}{2}})$ and $b(x)=b(x;t^{\frac{1}{2}})$ given by
	\begin{equation}\label{weights}
	a(x):=\frac{x -1}{t^{\frac{1}{2}}-t^{-\frac{1}{2}} x},\qquad
	b(x):=\frac{xt^{\frac{1}{2}} - t^{-\frac{1}{2}}}{t^{\frac{1}{2}}-t^{-\frac{1}{2}} x}.
		\end{equation}
		Note that the $R_i(x)$ ($i\in\mathbb{Z}/n\mathbb{Z}$) satisfy the Yang-Baxter type equations \eqref{ybe} in $\mathcal{TL}_n$. The weights $a(x)$ and $b(x)$ will play an important role in the next section, where they appear as the Boltzmann weights of the dense loop model. 

We can now define the following analog of the qKZ solution space $\textup{Sol}_n(V_n;q,c)$
(Definition \ref{Soldef}) for left $\mathcal{TL}_n$-modules $V_n$. For $n\geq 2$ it is the space of $V_n$-valued polynomials $f\in V_n[\mathbf{z}]$ in the variables $z_1,\ldots,z_n$ satisfying
\begin{equation}
\begin{split}
\sigma_n(R_i(z_{i+1}/z_i))f(\ldots,z_{i+1},z_i,\ldots)&=f(\mathbf{z})\qquad (1\leq i<n),\\
\sigma_n(\rho)f(z_2,\ldots,z_n,q^{-1}z_1)&=cf(\mathbf{z}),
\end{split}
\end{equation}
where $\sigma_n$ is the representation map of the $\mathcal{TL}_n$-module $V_n$. For $n=1$ it
is the space of $V_1$-valued polynomials $f$ in the single variable $z$ satisfying 
$\sigma_1(\rho)f(q^{-1}z)=cf(z)$.
For $n=0$ it is the eigenspace of $\sigma_0(X)$ with eigenvalue $c$. By a slight abuse of notation we will denote this space of solutions again by $\textup{Sol}_n(V_n;q,c)$.  No confusion can arise,
since $\textup{Sol}_n(V_n;q,c)$ for the left $\mathcal{TL}_n$-module $V_n$ coincides with $\textup{Sol}_n(\widetilde{V}_n;q,c)$, where $\widetilde{V}_n$ is the $\mathcal{H}_n$-module obtained by endowing
$V_n$ with the lifted $\mathcal{H}_n$-module structure with representation map 
$\sigma_n\circ\psi_n$.

{}From \cite[Prop. 6.3]{Al-Qasimi:2017aa} we have an algebra homomorphism 
$\mathcal{I}_n:\mathcal{TL}_n(t^{\frac{1}{2}})\rightarrow\mathcal{TL}_{n+1}(t^{\frac{1}{2}})$
for $n\geq 0$ defined by $\mathcal{I}_0(X)=t^{\frac{1}{4}}\rho+t^{-\frac{1}{4}}\rho^{-1}$ and
\begin{equation*}
\begin{split}
&\mathcal{I}_n(e_i)=e_i,\qquad\qquad 1\leq i<n,\\
&\mathcal{I}_n(\rho)=\rho(t^{-\frac{1}{4}}e_n+t^{\frac{1}{4}})
\end{split}
\end{equation*}
for $n\geq 1$. In particular, $\mathcal{I}_n(\rho^{-1})=(t^{\frac{1}{4}}e_n+t^{-\frac{1}{4}})\rho^{-1}$.
Note that we have a commutative diagram
\vspace{0.4cm}
\begin{equation}\label{commdiagram}
\setlength{\arraycolsep}{1cm}
\begin{array}{cc}
\Rnode{a}{\mathcal{H}_n} & \Rnode{b}{\mathcal{H}_{n+1}}\\[1.5cm]
 \Rnode{c}{\mathcal{TL}_n} &\Rnode{d}{\mathcal{TL}_{n+1}}\\[0.4cm]
\end{array}
\psset{nodesep=5pt,arrows=->}
\ncLine{a}{b}\Aput{\nu_n}
\ncLine{a}{c}\Bput{\psi_n}
\ncLine{b}{d}\Aput{\psi_{n+1}}
\ncLine{c}{d}\Aput{\mathcal{I}_n}
\end{equation}
Following \cite[Def. 7.1]{Al-Qasimi:2017aa}, we say that  $\{(V_n,\mu_n)\}_{n\in \mathbb{Z}_{\geq 0}}$
is a tower of extended affine Temperley-Lieb modules if $V_n$ is a left $\mathcal{TL}_n$-module
and $\mu_n\in\textup{Hom}_{\mathcal{TL}_n}\bigl(V_n,V_{n+1}^{\mathcal{I}_n}\bigr)$ for all
$n\geq 0$. We sometimes write the tower as 
\[
V_0\overset{\mu_0}{\longrightarrow}V_1\overset{\mu_1}{\longrightarrow}V_2
\overset{\mu_2}{\longrightarrow}V_3\overset{\mu_3}{\longrightarrow}\cdots
\]

Note that \eqref{commdiagram}
implies that an intertwiner $\mu_n\in\textup{Hom}_{\mathcal{TL}_n}(V_n,V_{n+1}^{\mathcal{I}_n})$
is also an intertwiner $\widetilde{V}_n\rightarrow\widetilde{V}_{n+1}^{\nu_n}$
of the associated $\mathcal{H}_n$-modules. Hence, the tower $\{(V_n,\mu_n)\}_{n\geq 0}$
of extended affine Temperley-Lieb algebra modules gives rise to the tower 
$\{(\widetilde{V}_n,\mu_n)\}_{n\geq 0}$ of extend affine Hecke algebra modules. Conversely,
if $\{(\widetilde{V}_n,\mu_n)\}_{n\geq 0}$ is a tower of extended affine Hecke algebra modules and
the representation maps $\widetilde{\sigma}_n: \mathcal{H}_n\rightarrow\textup{End}(V_n)$ factorize through $\psi_n$, then the tower descends to a tower of extended affine Temperley-Lieb algebra modules. 
We will freely use these lifts and descents of towers in the sequel of the paper.
		
The tower of extended affine Temperley-Lieb modules relevant for the dense loop model
is constructed from the skein category $\mathcal{S}=\mathcal{S}(t^{\frac{1}{4}})$ of the annulus,
defined in \cite{Al-Qasimi:2017aa}. We shortly recall here the basic features of the category $\mathcal{S}$. For further details, we refer to \cite[\S 3]{Al-Qasimi:2017aa}.

The category $\mathcal{S}$ is the complex linear category with objects $\mathbb{Z}_{\geq 0}$ and with the space of morphisms $\textup{Hom}_{\mathcal{S}}(m,n)$ being the linear span of planar isotopy classes of $(m,n)$-tangle diagrams on the annulus $A:=\{z\in\mathbb{C} \,\, | \,\, 1\leq |z|\leq 2\}$, with $m$ and $n$ marked ordered points on the inner and outer boundary respectively, modulo the Kauffman skein relation
 	\begin{align}
		&\psset{unit=0.8}\begin{pspicture}[shift=-0.9](-1,-1)(1,1)
			\pscircle[fillstyle=solid,fillcolor=diskin,linewidth=0.8pt,linestyle=dotted](0,0){1}
			\degrees[8]
			\psline[linecolor= line,linewidth=1.2pt](1;1)(1;5)
			\psline[linecolor= blue,linewidth=1.2pt](1;3)(0.2;3)
			\psline[linecolor= blue,linewidth=1.2pt](1;7)(0.2;7)	
		\end{pspicture}
 = t^{\frac{1}{4}}\; 
		\begin{pspicture}[shift=-0.9](-1,-1)(1,1)
			\pscircle[fillstyle=solid,fillcolor=diskin,linewidth=0.8pt,linestyle=dotted ](0,0){1}
			\degrees[8]
			\pscurve[linecolor= line,linewidth=1.2pt](1;3)(0.9;3)(0.4;2)(0.9;1)(1;1)
			\pscurve[linecolor= line,linewidth=1.2pt](1;5)(0.9;5)(0.4;6)(0.9;7)(1;7)
		\end{pspicture}
		+ t^{-\frac{1}{4}} \;
		\begin{pspicture}[shift=-0.9](-1,-1)(1,1)
			\pscircle[fillstyle=solid,fillcolor=diskin,linewidth=0.8pt,linestyle=dotted](0,0){1}
			\degrees[8]
			\pscurve[linecolor= line,linewidth=1.2pt](1;5)(0.9;5)(0.4;4)(0.9;3)(1;3)
			\pscurve[linecolor= line,linewidth=1.2pt](1;1)(0.9;1)(0.4;0)(0.9;7)(1;7)
		\end{pspicture} \; ;\label{kauffman}
	\end{align}
	and the (null-homotopic) loop removal relation 		
	\begin{align}
		&\psset{unit=0.8}\begin{pspicture}[shift=-0.9](-1,-1)(1,1)
			\pscircle[fillstyle=solid,fillcolor=diskin,linewidth=0.8pt,linestyle=dotted](0,0){1}
			\pscircle[linecolor= line,linewidth=1.2pt](0;0){0.5} 
		\end{pspicture}
	=	-(t^{\frac{1}{2}} +t^{-\frac{1}{2}}) \;
		\begin{pspicture}[shift=-0.9](-1,-1)(1,1)
			\pscircle[fillstyle=solid,fillcolor=diskin,linewidth=0.8pt,linestyle=dotted](0,0){1}
		\end{pspicture}\;.\label{loopremoval}
		\end{align} 
We consider here planar isotopies that fix the boundary of $A$ pointwise.		
The ordered marked points on the boundary are $\xi_m^{i-1}$ ($1\leq i\leq m$) and
$\xi_n^{j-1}$ ($1\leq j\leq n$) with $\xi_\ell:=e^{2\pi \mathrm{i}/\ell}$. In these equations the disc shows the local neighbourhood in the annulus where the diagrams differ.
Let  $L$ be an $(l,m)$-tangle diagram and $L'$ an $(m,n)$-tangle diagram.
The composition $[L'] \circ [L]$ of the corresponding equivalence classes in $\mathcal{S}$ is
$[L'\circ L]$, with $L'\circ L$ the $(l,n)$-tangle diagram obtained
 by placing $L$ inside $L'$ such that the outer boundary points of $L$ match with the inner boundary points of $L'$. For example,
 \begin{equation*}
\psset{unit=0.7}
\begin{pspicture}[shift=-1.8](-1.8,-1.8)(1.8,1.8)
      	\SpecialCoor
    	\pscircle[fillstyle=solid, fillcolor=diskin,linewidth=1.2pt](0,0){1.5}
    	\pscircle[fillstyle=solid, fillcolor=white, linewidth=1.2pt](0,0){0.5}
    	\degrees[8]
     	\rput{0}(1.7;0){\tiny $1$}
    	\rput{0}(1.7;4){\tiny $2$}
	\rput{0}(0.25;0){\tiny $1$}
    	\rput{0}(0.25;4){\tiny $2$}
	\pscurve[linecolor=line,linewidth=1.2pt](1.5;0)(1.4;0)(1.2;1)(1.25;2)(1.2;3)(1.4;4)(1.5;4)
	\pscurve[linecolor=line,linewidth=1.2pt](0.5;0)(0.6;0)(0.7;2)(0.6;4)(0.5;4)
   \end{pspicture}
   \circ
        \begin{pspicture}[shift=-1.8](-1.8,-1.8)(1.8,1.8)
    	\SpecialCoor
    	\pscircle[fillstyle=solid, fillcolor=diskin,linewidth=1.2pt](0,0){1.5}
    	\pscircle[fillstyle=solid, fillcolor=white, linewidth=1.2pt](0,0){0.5}
    	\degrees[8]
     	\rput{0}(1.7;0){\tiny $1$}
    	\rput{0}(1.7;4){\tiny $2$}
	\rput{0}(0.26;0){\tiny $1$}
    	\rput{0}(0.26;2){\tiny $2$}
    	\rput{0}(0.26;4){\tiny $3$}
    	\rput{0}(0.26;6){\tiny $4$}
	\pscurve[linecolor=line,linewidth=1.2pt](1.5;0)(1.4;0)(1.1;6)(1.4;4)(1.5;4)
	\pscurve[linecolor=line,linewidth=1.2pt](0.5;0)(0.6;0)(0.7;1)(0.6;2)(0.5;2)
	\pscurve[linecolor=line,linewidth=1.2pt](0.5;4)(0.6;4)(0.7;5)(0.6;6)(0.5;6)
   \end{pspicture}   
    =
    \begin{pspicture}[shift=-1.8](-1.8,-1.8)(1.8,1.8)
    	\SpecialCoor
    	\pscircle[fillstyle=solid, fillcolor=diskin,linewidth=1.2pt](0,0){1.5}
    	\pscircle[fillstyle=solid, fillcolor=white, linewidth=1.2pt](0,0){0.5}
    	\degrees[8]
     	\rput{0}(1.7;0){\tiny $1$}
    	\rput{0}(1.7;4){\tiny $2$}
	\rput{0}(0.26;0){\tiny $1$}
    	\rput{0}(0.26;2){\tiny $2$}
    	\rput{0}(0.26;4){\tiny $3$}
    	\rput{0}(0.26;6){\tiny $4$}
	\pscurve[linecolor=line,linewidth=1.2pt](1.5;0)(1.4;0)(1.2;1)(1.25;2)(1.2;3)(1.4;4)(1.5;4)
	\pscurve[linecolor=line,linewidth=1.2pt](0.5;0)(0.6;0)(0.7;1)(0.6;2)(0.5;2)
	\pscurve[linecolor=line,linewidth=1.2pt](0.5;4)(0.6;4)(0.7;5)(0.6;6)(0.5;6)
	\pscircle[linecolor=line,linewidth=1.2pt](0,0){1}
   \end{pspicture}  
\end{equation*}

By \cite[Prop. 2.3.7]{Green:1998aa} and \cite[Thm. 5.3]{Al-Qasimi:2017aa} we have an isomorphism 
$\theta_n: \mathcal{TL}_n(t^{\frac{1}{2}})\overset{\sim}{\longrightarrow}
\textup{End}_{\mathcal{S}(t^{\frac{1}{4}})}(n)$ of algebras for $n\geq 0$, with the algebra  isomorphism $\theta_n$ for $n\geq 1$ determined by
	\begin{equation*}
	\psset{unit=0.8}
			\rho \mapsto   
			 \begin{pspicture}[shift=-1.4](-1.5,-1.5)(2,1.5)
    	\pscircle[fillstyle=solid,fillcolor = diskin,linewidth=1.2pt](0,0){1.25}
    	\pscircle[fillstyle=solid, fillcolor=white, linewidth=1.2pt](0,0){0.5}
    	\degrees[16]
     	\rput{0}(1.5;0){\tiny $1$}
	\rput{0}(1.5;1){\tiny $2$}
     	\rput{0}(0.25;0.5){\tiny $1$}
	\rput{0}(0.25;14){\tiny $n$}
	\pscurve[linecolor=line,linewidth=1.2pt](1.25;0)(1.15;0)(0.85;15.5)(0.6;15)(0.5;15)
	\pscurve[linecolor=line,linewidth=1.2pt](1.25;1)(1.15;1)(0.85;0.5)(0.6;0)(0.5;0)
	\psarc[linestyle=dotted, linecolor=line,linewidth=1.2pt](0,0){0.85}{2}{14}
   \end{pspicture},
   \qquad
			e_i \mapsto    
			\begin{pspicture}[shift=-1.65](-1.75,-1.75)(2,1.5)
    	\SpecialCoor
    	\pscircle[fillstyle=solid,fillcolor = diskin,linewidth=1.2pt](0,0){1.25}
    	\pscircle[fillstyle=solid, fillcolor=white, linewidth=1.2pt](0,0){0.5}
    	\degrees[17]
     	\rput{0}(1.5;0){\tiny $1$}
	\rput{0}(1.7;9.8){\tiny $i\!-\!1$}
    	\rput{0}(1.6;10.9){\tiny $i$}
	\rput{0}(1.6;11.9){\tiny $i\!+\!1$}
	\rput{0}(1.6;13.3){\tiny $i\!+\!2$}
     	\rput{0}(0.25;0){\tiny $1$}
    	\rput{0}(0.25;10.8){\tiny $i$}
	\psline[linecolor=line,linewidth=1.2pt](0.5;0)(1.25;0)
	\psline[linecolor=line,linewidth=1.2pt](0.5;10)(1.25;10)
	\psline[linecolor=line,linewidth=1.2pt](0.5;13)(1.25;13)
	\pscurve[linecolor=line,linewidth=1.2pt](0.5;11)(0.6;11)(0.75;11.5)(0.6;12)(0.5;12)
	\pscurve[linecolor=line,linewidth=1.2pt](1.25;11)(1.15;11)(0.9;11.5)(1.15;12)(1.25;12)
	\psarc[linestyle=dotted, linecolor=line,linewidth=1.2pt](0,0){0.85}{1}{9}
	\psarc[linestyle=dotted, linecolor=line,linewidth=1.2pt](0,0){0.85}{14}{16}
   \end{pspicture}
\end{equation*}
and for $n=0$ by
	\begin{equation*}
	\psset{unit=0.8}
	X \mapsto    
	\begin{pspicture}[shift=-1.4](-1.5,-1.5)(1.5,1.5)
    	\pscircle[fillstyle=solid,fillcolor = diskin,linewidth=1.2pt](0,0){1.25}
    	\pscircle[fillstyle=solid, fillcolor=white, linewidth=1.2pt](0,0){0.5}
	\pscircle[linecolor = line,linewidth=1.2pt](0,0){0.95}
   \end{pspicture}.
   \end{equation*}
 Moreover, in \cite[Def. 6.1]{Al-Qasimi:2017aa} an arc insertion functor $\mathcal{I}: \mathcal{S}\rightarrow\mathcal{S}$ is defined using a natural monoidal structure on $\mathcal{S}$. It maps $n$ to $n+1$ and, on morphisms, it inserts on the level of link diagrams a new arc connecting the inner and outer boundary while going underneath all arcs it meets
(the particular winding of the new arc is subtle, see \cite[\S 6]{Al-Qasimi:2017aa} for the details). 
 The resulting algebra homomorphisms $\mathcal{I}|_{\textup{End}_{\mathcal{S}}(n)}:
 \textup{End}_{\mathcal{S}}(n)\rightarrow\textup{End}_{\mathcal{S}}(n+1)$ coincides with the algebra homomorphism $\mathcal{I}_n$ by the identification of $\textup{End}_{\mathcal{S}}(n)$ with $\mathcal{TL}_n(t^{\frac{1}{2}})$ through the isomorphism $\theta_n$, see
 \cite[Prop. 8.3]{Al-Qasimi:2017aa}.

Let $v\in \mathbb{C}^*$ and set $u:=t^{\frac{1}{4}}v+t^{-\frac{1}{4}}v^{-1}$.
The one-parameter family of link pattern towers 
\[
    V_0(u)\overset{\phi_0}{\longrightarrow}V_1(v)\overset{\phi_1}{\longrightarrow}
    V_2(u)\overset{\phi_2}{\longrightarrow} V_3(v)\overset{\phi_3}{\longrightarrow}\cdots 
  \]
of extended affine Temperley-Lieb algebra modules is now defined as follows
(see \cite[\S 10]{Al-Qasimi:2017aa}). For $n=2k$ the $\mathcal{TL}_{2k}$-module $V_{2k}(u)$ is defined as 
\[
V_{2k}(u):=\textup{Hom}_{\mathcal{S}}(0,2k)\otimes_{\mathcal{TL}_0}\mathbb{C}_0^{(u)},
\]
where $\textup{Hom}_{\mathcal{S}}(0,2k)$ is endowed with its canonical 
$(\mathcal{TL}_{2k},\mathcal{TL}_0)$-bimodule structure and $\mathbb{C}_0^{(u)}$ denotes the one-dimensional representation of $\mathcal{TL}_0=\mathbb{C}[X]$ defined by $X\mapsto u$.
For $n=2k-1$ the $\mathcal{TL}_{2k-1}$-module $V_{2k-1}(v)$ is defined as
\[
V_{2k-1}(v):=\textup{Hom}_{\mathcal{S}}(1,2k-1)\otimes_{\mathcal{TL}_1}\mathbb{C}_1^{(v)}
\]
with $\mathbb{C}_1^{(v)}$ denoting the one-dimensional representations of $\mathcal{TL}_1=\mathbb{C}[\rho^{\pm 1}]$ defined by $\rho\mapsto v$. For $Y\in\textup{Hom}_{\mathcal{S}}(0,2k)$
we write
$Y_u:=Y\otimes_{\mathcal{TL}_0}1$ for the corresponding element in $V_{2k}(u)$. Similarly,
for $Z\in\textup{Hom}_{\mathcal{S}}(1,2k-1)$ we write $Z_v:=Z\otimes_{\mathcal{TL}_1}1$ for the corresponding
element in $V_{2k-1}(v)$. We sometimes omit the dependence of the representations $V_{2k}(u)$ and $V_{2k-1}(v)$ on $u=t^{\frac{1}{4}}v+t^{-\frac{1}{4}}v^{-1}$
and $v$, if it is clear from context.

The intertwiners $\phi_n$ ($n\geq 0$) are defined as follows. Consider the skein element
$$ U: = t^{\frac{1}{4}}
 \begin{pspicture}[shift=-1.1](-1.5,-1.25)(1.5,1.25)
	\SpecialCoor
	\pscircle[fillstyle=solid,fillcolor=diskin,linewidth=1.2pt](0,0){1}
	\pscircle[fillstyle=solid, fillcolor=white, linewidth=1.2pt](0,0){0.25}
	\degrees[8]
	\pscurve[linecolor=line,linewidth=1.2pt](1;0)(0.9;0)(0.7;1.5)(0.7;2.5)(0.9;4)(1;4)
	\rput{0}(1.2;0){\tiny $1$}
	\rput{0}(1.2;4){\tiny $2$}
\end{pspicture}
+ v
 \begin{pspicture}[shift=-1.1](-1.5,-1.25)(1.5,1.25)
	\SpecialCoor
	\pscircle[fillstyle=solid,fillcolor=diskin,linewidth=1.2pt](0,0){1}
	\pscircle[fillstyle=solid, fillcolor=white, linewidth=1.2pt](0,0){0.25}
	\degrees[8]
	\pscurve[linecolor=line,linewidth=1.2pt](1;0)(0.9;0)(0.7;6.5)(0.7;5.5)(0.9;4)(1;4)
	\rput{0}(1.2;0){\tiny $1$}
	\rput{0}(1.2;4){\tiny $2$}
\end{pspicture} \in \textup{Hom}_{\mathcal{S}}(0,2).$$
Then
\begin{equation*}
\begin{split}
\phi_{2k}([L]_u)&:= \mathcal{I}([L])_v,\\
\phi_{2k-1}([L^\prime]_{v})&:=(\mathcal{I}([L^\prime])\circ U)_u,
\end{split}
\end{equation*}
for a $(0,2k)$-link diagram $L$ and a $(1,2k-1)$-link diagram $L^\prime$.

\begin{example}\label{exampleonannulus}
	\begin{align*}
		\begin{pspicture}[shift=-1.4](-1.5,-1.5)(1.5,1.5)
    			\SpecialCoor
    			\pscircle[fillstyle=solid, fillcolor=diskin,linewidth=1.2pt](0,0){1}
			\pscircle[fillstyle=solid, fillcolor=white, linewidth=1.2pt](0,0){0.25}
    			\degrees[4]
     			\rput{0}(1.2;0){\tiny $1$}
    			\rput{0}(1.2;1){\tiny $2$}
    			\rput{0}(1.2;2){\tiny $3$}
			\rput{0}(1.2;3){\tiny $4$}
			\pscurve[linecolor=line,linewidth=1.2pt](1;1)(0.9;1)(-0.5;1.5)(0.9;2)(1;2)
			\pscurve[linecolor=line,linewidth=1.2pt](1;3)(0.9;3)(0.7;3.5)(0.9;0)(1;0)
  		 \end{pspicture}
		& \overset{\phi_4}{\longmapsto}
		\begin{pspicture}[shift=-1.4](-1.5,-1.5)(1.5,1.5)
    			\SpecialCoor
    			\pscircle[fillstyle=solid, fillcolor=diskin,linewidth=1.2pt](0,0){1}
			\pscircle[fillstyle=solid, fillcolor=white, linewidth=1.2pt](0,0){0.25}
    			\degrees[5]
     			\rput{0}(1.2;0){\tiny $1$}
    			\rput{0}(1.2;1){\tiny $2$}
    			\rput{0}(1.2;2){\tiny $3$}
			\rput{0}(1.2;3){\tiny $4$}
			\rput{0}(1.2;4){\tiny $5$}
			\pscurve[linecolor=line,linewidth=1.2pt](1;4)(0.9;4)(0.8;3.8)(0.5;3)(0.45;2)(0.4;1)(0.35;0)(0.25;0)
			\psdot[linecolor=diskin,dotsize=0.3](0.64; 3.5)
			\psdot[linecolor=diskin,dotsize=0.3](0.5; 2.9)
			\pscurve[linecolor=line,linewidth=1.2pt](1;1)(0.9;1)(0.7;0.5)(-0.4;1.5)(0.7;2.5)(0.9;2)(1;2)
			\pscurve[linecolor=line,linewidth=1.2pt](1;3)(0.9;3)(0.65;4)(0.9;0)(1;0)
  		 \end{pspicture} \\
		 \begin{pspicture}[shift=-1.4](-1.5,-1.5)(1.5,1.5)
    			\SpecialCoor
    			\pscircle[fillstyle=solid, fillcolor=diskin,linewidth=1.2pt](0,0){1}
			\pscircle[fillstyle=solid, fillcolor=white, linewidth=1.2pt](0,0){0.25}
    			\degrees[3]
     			\rput{0}(1.2;0){\tiny $1$}
    			\rput{0}(1.2;1){\tiny $2$}
    			\rput{0}(1.2;2){\tiny $3$}
			\pscurve[linecolor=line,linewidth=1.2pt](1;1)(0.9;1)(0.5;0.5)(0.35;0)(0.25;0)
			\pscurve[linecolor=line,linewidth=1.2pt](1;2)(0.9;2)(0.5;2.5)(0.9;3)(1;3)
  		 \end{pspicture}
	&\overset{\phi_3}{\longmapsto}
		 t^{\frac{1}{4}}
 			 \begin{pspicture}[shift=-1.4](-1.5,-1.5)(1.5,1.5)
    				\SpecialCoor
    				\pscircle[fillstyle=solid, fillcolor=diskin,linewidth=1.2pt](0,0){1}
				\pscircle[fillstyle=solid, fillcolor=white, linewidth=1.2pt](0,0){0.25}
    				\degrees[4]
   			  	\rput{0}(1.2;0){\tiny $1$}
    				\rput{0}(1.2;1){\tiny $2$}
    				\rput{0}(1.2;2){\tiny $3$}
				\rput{0}(1.2;3){\tiny $4$}
				\pscurve[linecolor=line,linewidth=1.2pt](1;1)(0.9;1)(0.35;2)(0.9;3)(1;3)
				\psdot[linecolor=diskin,dotsize=0.3](0.5; 2.8)
				\pscurve[linecolor=line,linewidth=1.2pt](1;0)(0.9;0)(0.5;2.9)(0.9;2)(1;2)
  			 \end{pspicture}
  		 +v
   			  \begin{pspicture}[shift=-1.4](-1.5,-1.5)(1.5,1.5)
                                	\SpecialCoor
                                	\pscircle[fillstyle=solid, fillcolor=diskin,linewidth=1.2pt](0,0){1}
				\pscircle[fillstyle=solid, fillcolor=white, linewidth=1.2pt](0,0){0.25}
                                	\degrees[4]
                                 	\rput{0}(1.2;0){\tiny $1$}
                                	\rput{0}(1.2;1){\tiny $2$}
                                	\rput{0}(1.2;2){\tiny $3$}
                            	\rput{0}(1.2;3){\tiny $4$}
                            	\pscurve[linecolor=line,linewidth=1.2pt](1;1)(0.9;1)(0.35;0)(0.9;3)(1;3)
                            	\psdot[linecolor=diskin,dotsize=0.3](0.5; 3.3)
                            	\pscurve[linecolor=line,linewidth=1.2pt](1;0)(0.9;0)(0.5;3)(0.9;2)(1;2)
   			\end{pspicture} 
		\end{align*}

\end{example}
The rather peculiar form of the intertwiners $\phi_{2k-1}$ can be explained in terms of a 
Roger-Yang \cite{Roger:2014aa} type graded algebra structure on the total space
$V_0(u)\oplus V_1(v)\oplus V_2(u)\oplus\ldots$ of the link pattern tower, see 
\cite[Rem. 8.11]{Al-Qasimi:2017aa}.

Let $\mathbb{D}=\{z\in\mathbb{C} \,\, | \,\, |z|\leq 2\}$ and $\mathbb{D}^*:=\mathbb{D}{\setminus}\{0\}$.
A punctured link pattern of size $2k$ is a perfect matching of the $2k$ equally spaced marked points  $2\xi_{2k}^{i-1}$ ($1\leq i\leq 2k$) on the boundary of $\mathbb{D}^*$ by $k$ non-intersecting arcs lying within $\mathbb{D}^*$. A punctured link pattern of size $2k-1$ is a perfect matching of the $2k$ marked points
$2\xi_{2k-1}^{j-1}$ ($1\leq j<2k$) and $0$ by $k$ non-intersecting arcs lying within $\mathbb{D}$. Only the endpoints of the arcs are allowed to lie on $\{0\}\cup\partial\mathbb{D}$.
Two link patterns are regarded the same if they are planar isotopic by a planar isotopy fixing $0$ and the boundary $\partial\mathbb{D}$ of $\mathbb{D}$ pointwise. 
 The arc connecting $0$ to the outer boundary of $\mathbb{D}$ is called the 
\emph{defect line}.
An arc that connects two points on the boundary are sometimes referred to as an \emph{arch} and an arch that connects two consecutive points that does not contain the puncture is called a \emph{little arch}. We denote the set of punctured link patterns of size $n$ by $\mathcal{L}_{n}$.
As an example, the following punctured link patterns \\
$$\psset{unit=0.8}
 \begin{pspicture}(-1.3,-1.3)(1.3,1.3)
    	\SpecialCoor
    	\pscircle[fillstyle=solid, fillcolor=diskin,linewidth=1.2pt](0,0){1}
    	\degrees[3]
     	\rput{0}(1.2;0){\tiny $1$}
    	\rput{0}(1.2;1){\tiny $2$}
    	\rput{0}(1.2;2){\tiny $3$}
	\psline[linecolor=line,linewidth=1.2pt](1;0)(0;0)
	\pscurve[linecolor=line,linewidth=1.2pt](1;1)(0.9;1)(0.6;1.5)(0.9;2)(1;2)
	\psdot[dotstyle=asterisk,dotscale=1.5](0;0)
   \end{pspicture} 
 \hspace{1cm}   
        \begin{pspicture}(-1.3,-1.3)(1.3,1.3)
    	\SpecialCoor
    	\pscircle[fillstyle=solid, fillcolor=diskin,linewidth=1.2pt](0,0){1}
    	\degrees[3]
     	\rput{0}(1.2;0){\tiny $1$}
    	\rput{0}(1.2;1){\tiny $2$}
    	\rput{0}(1.2;2){\tiny $3$}
	\psline[linecolor=line,linewidth=1.2pt](1;1)(0;1)
	\pscurve[linecolor=line,linewidth=1.2pt](1;2)(0.9;2)(0.6;2.5)(0.9;3)(1;3)
	\psdot[dotstyle=asterisk,dotscale=1.5](0;0)
   \end{pspicture}   
  \hspace{1cm}
        \begin{pspicture}(-1.3,-1.3)(1.3,1.3)
    	\SpecialCoor
    	\pscircle[fillstyle=solid, fillcolor=diskin,linewidth=1.2pt](0,0){1}
    	\degrees[3]
     	\rput{0}(1.2;0){\tiny $1$}
    	\rput{0}(1.2;1){\tiny $2$}
    	\rput{0}(1.2;2){\tiny $3$}
	\psline[linecolor=line,linewidth=1.2pt](1;2)(0;2)
	\pscurve[linecolor=line,linewidth=1.2pt](1;0)(0.9;0)(0.6;0.5)(0.9;1)(1;1)
	\psdot[dotstyle=asterisk,dotscale=1.5](0,0)
   \end{pspicture}
$$
constitute $\mathcal{L}_3$.

For twist parameter $v=1$ we can naturally identify
the $n$th representation space $V_n$ in the link pattern tower with $\mathbb{C}[\mathcal{L}_n]$
as a vector space by shrinking the hole $\{z\in\mathbb{C} \,\, | \,\, |z|\leq 1\}$ of the annulus to $0$.
The resulting action of $\mathcal{TL}_n$ on $\mathbb{C}[\mathcal{L}_n]$ can be explicitly described skein theoretically, see \cite[\S 8]{Al-Qasimi:2017aa}.

\section{qKZ equations on the space of link patterns} \label{subsec-linkqkz}

In this section we fix $v=1$. We discuss the qKZ equations associated with 
the $\mathcal{TL}_n$-modules $V_n\simeq\mathbb{C}[\mathcal{L}_n]$ ($n\geq 0$) from the link pattern tower, and we derive
necessary conditions for the existence of qKZ towers of solutions. The {\it existence} of qKZ towers of solutions will be the subject of later sections.

Let $L_{\cap}=L_\cap^{(n)} \in \mathcal{L}_n$ denote the link patterns
$$
  \psset{unit=0.8cm}
  \begin{pspicture}[shift=-1.8](-2.25,-1.8)(2.25,1.8)
    	\pscircle[fillstyle=solid, fillcolor=diskin,linewidth=1.2pt](0,0){1.5}
    	\degrees[360]
	\psdot[dotstyle=asterisk,dotscale=1.5](0,0)
	\rput{0}(1.8;0){\tiny$1$}
	\rput{0}(1.8;-20){\tiny$2k$}
	\rput{0}(1.8;160){\tiny$k$}
	\rput{0}(2;180){\tiny$k\!+\!1$}
	\pscurve[linecolor=line,linewidth=1.2pt](1.5;0)(1.4;0)(1.2;-10)(1.4;-20)(1.5;-20)
	\pscurve[linecolor=line,linewidth=1.2pt](1.5;160)(1.4;160)(-0.4;170)(1.4;180)(1.5;180)
	\pscurve[linecolor=line,linewidth=1.2pt](1.5;140)(1.4;140)(-0.7;170)(1.4;200)(1.5;200)
	\psarc[linestyle=dotted,linecolor=line,linewidth=1.2pt](0,0){1}{10}{120}
	\psarc[linestyle=dotted,linecolor=line,linewidth=1.2pt](0,0){1}{220}{-30}
   \end{pspicture}   
   \hspace{0.5cm} \text{ and }\hspace{0.5cm} 
  \begin{pspicture}[shift=-1.8](-2.25,-1.8)(2.25,1.8)
    	\pscircle[fillstyle=solid, fillcolor=diskin,linewidth=1.2pt](0,0){1.5}
    	\degrees[360]
	\rput{0}(1.8;0){\tiny$1$}
	\rput{0}(2;-20){\tiny$2k\!-\!1$}
	\rput{0}(2;150){\tiny$k\!-\!1$}
	\rput{0}(2;170){\tiny$k$}
	\rput{0}(2;190){\tiny$k\!+\!1$}
	\pscurve[linecolor=line,linewidth=1.2pt](1.5;0)(1.4;0)(1.2;-10)(1.4;-20)(1.5;-20)
	\pscurve[linecolor=line,linewidth=1.2pt](1.5;150)(1.4;150)(-0.5;170)(1.4;190)(1.5;190)
	\psline[linecolor=line,linewidth=1.2pt](1.5;170)(0;0)
	\psarc[linestyle=dotted,linecolor=line,linewidth=1.2pt](0,0){1}{10}{120}
	\psarc[linestyle=dotted,linecolor=line,linewidth=1.2pt](0,0){1}{220}{-30}
	\psdot[dotstyle=asterisk,dotscale=1.5](0,0)
   \end{pspicture}   
   $$
for $n=2k$ and $2k-1$, respectively. We call $L_\cap\in\mathcal{L}_n$ the {\it fully nested diagram}. For 
$g^{(n)}(\mathbf{z})=\sum_{L\in\mathcal{L}_n}g_L^{(n)}(\mathbf{z})L\in
V_n(\mathbf{z})$
we call $g_{L_\cap}^{(n)}(\mathbf{z})$ the {\it fully nested component} of $g^{(n)}(\mathbf{z})$.

The fully nested component plays an important role in the analysis of polynomial twisted-symmetric solutions $g^{(n)}(\mathbf{z})\in V_n^{(c)}(\mathbf{z})^{\nabla(W_n)}$ of the qKZ equations. In
\cite[\S 2.2]{Di-Francesco:2006aa} and \cite[\S 3.5.2]{Kasatani:2007aa} it was remarked that such solutions are uniquely determined by their fully nested component, and an explicit expression for the fully nested component was determined in case the solution is polynomial of total degree $\frac{1}{2}n(n-1)$ (existence of such a solution is a subtle issue). We recall these results here, extend them to qKZ solutions taking values in $V_{n+1}^{\mathcal{I}_n}$, and show how these results combined lead to explicit braid recursion relations. 

\begin{lemma}
Let $n\geq 1$, $q,c\in\mathbb{C}^*$ and let
\[
g^{(n)}(\mathbf{z})=\sum_{L\in\mathcal{L}_n}g_L^{(n)}(\mathbf{z})L\in
V_n^{(c)}(\mathbf{z})
\]
with coefficients $g_L^{(n)}(\mathbf{z})\in\mathbb{C}(\mathbf{z})$ ($L\in\mathcal{L}_n$). Then
$g^{(n)}(\mathbf{z})\in 
V_n^{(c)}(\mathbf{z})^{\nabla(W_n)}$
 if and only if for all $L\in\mathcal{L}_n$ and $1\leq i<n$,
\begin{equation}\label{recursion}
\begin{split}
g_L^{(n)}(\mathbf{z})&=b(z_{i+1}/z_i)g_L^{(n)}(s_i\mathbf{z})+\sum_{L^\prime\in\mathcal{L}_n:\, e_iL^\prime\sim L}\gamma_{L^\prime,L}^{(i)}a(z_{i+1}/z_i)g_{L^\prime}^{(n)}(s_i\mathbf{z}),\\
g_{L}^{(n)}(\mathbf{z})&=c^{-1}g_{\rho^{-1}L}^{(n)}(z_2,\ldots,z_n,q^{-1}z_1),
\end{split}
\end{equation}
where $e_iL^\prime\sim L$ means that $L$ is obtained from $e_iL^\prime$ by removing the loops in $e_iL^\prime$ (there is in fact at most one loop). The coefficient $\gamma_{L^\prime,L}^{(i)}$ is 
$$  \gamma^{(i)}_{L^\prime,L} = \begin{cases} 
						-(t^{\frac{1}{2}} +t^{-\frac{1}{2}})  &\text{ if } e_iL^\prime \text{ has a null-homotopic loop,}  \\
						t^{\frac{1}{4}} +t^{-\frac{1}{4}}  &\text{ if } e_iL^\prime \text{ has a non null-homotopic loop,} \\
						1& \text{ otherwise}.
						\end{cases}
$$
\end{lemma}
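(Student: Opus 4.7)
The plan is to unpack the definition of $\nabla(W_n)$-invariance into a set of componentwise equations in the link pattern basis. Since $W_n$ is generated by $s_1,\ldots,s_{n-1}$ and $\rho$ (the action of $s_0$ being determined by the others, as remarked after \eqref{actionnabla}) and $\nabla$ is already known to define a $W_n$-action on $V_n^{(c)}(\mathbf{z})$, the element $g^{(n)}$ is $\nabla(W_n)$-invariant if and only if $\nabla(s_i) g^{(n)} = g^{(n)}$ for $1\leq i<n$ and $\nabla(\rho) g^{(n)} = g^{(n)}$.

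Next I would rewrite these two conditions using the twisted representation map $\sigma_n^{(c)}$. Since $\sigma_n^{(c)}$ agrees with $\sigma_n$ on each $T_i$, and since $\sigma_n$ factors through $\psi_n$, the operator $\sigma_n^{(c)}(\widetilde{R}_i(x))$ acts on $V_n$ as $R_i(x) = a(x)e_i + b(x)$. The $\rho$ part of the twist contributes $\sigma_n^{(c)}(\rho) = c^{-1}\sigma_n(\rho)$. Thus the invariance conditions translate to
\begin{align*}
\bigl(a(z_{i+1}/z_i)e_i + b(z_{i+1}/z_i)\bigr)\,g^{(n)}(s_i\mathbf{z}) &= g^{(n)}(\mathbf{z}) \qquad (1\leq i<n),\\
\sigma_n(\rho)\,g^{(n)}(z_2,\ldots,z_n,q^{-1}z_1) &= c\,g^{(n)}(\mathbf{z}).
\end{align*}

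The first equation, expanded coefficientwise in the basis $\mathcal{L}_n$, immediately produces the $b(z_{i+1}/z_i) g_L^{(n)}(s_i\mathbf{z})$ term. For the $a(z_{i+1}/z_i)e_i$ term, I would compute $e_i L'$ skein-theoretically: placing the generator $e_i$ on top of the link pattern $L'$ produces either a new link pattern $L$ (contribution $\gamma = 1$), or a link pattern with a single closed loop which must be removed. Using the loop-removal relation \eqref{loopremoval} a null-homotopic loop contributes $-(t^{\frac{1}{2}}+t^{-\frac{1}{2}})$, while a loop encircling the puncture is identified, via the realization $V_n \simeq \textup{Hom}_{\mathcal{S}}(\epsilon,n)\otimes_{\mathcal{TL}_\epsilon}\mathbb{C}_\epsilon^{(\cdot)}$, with the action of $X$ (respectively $\rho+\rho^{-1}$) on the one-dimensional representation, which for twist $v=1$ evaluates to $u = t^{\frac{1}{4}}+t^{-\frac{1}{4}}$. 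Collecting these contributions over all $L'$ with $e_iL'\sim L$ gives precisely the first equation of \eqref{recursion}.

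For the $\rho$ condition, I would use that on $V_n\simeq\mathbb{C}[\mathcal{L}_n]$ the operator $\sigma_n(\rho)$ acts by rotation: $\sigma_n(\rho)L = \rho L$, where $\rho L\in\mathcal{L}_n$ is the one-step rotated link pattern. Reindexing the sum via $L\mapsto \rho^{-1}L$ converts $\sigma_n(\rho)\sum_{L'}g_{L'}^{(n)}(\ldots)L'$ into $\sum_L g_{\rho^{-1}L}^{(n)}(\ldots)L$, and equating the coefficient of each $L$ with that of $c\,g^{(n)}(\mathbf{z})$ yields the second equation. Each step is reversible, so the equivalence holds. The main bookkeeping obstacle is the tripartite case analysis producing $\gamma^{(i)}_{L',L}$: one must verify carefully that no two distinct loop-removal topologies can arise at once, which follows since $e_iL'$ differs from $L'$ only in a small neighbourhood of the $i$-th and $(i+1)$-st marked points and thus produces at most one extra closed component.
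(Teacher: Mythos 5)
Your proposal is correct and follows the same route as the paper's terse proof (``rewrite the qKZ equations component-wise''): identify the $\nabla(W_n)$-invariance conditions with the two qKZ equations, expand $\sigma_n(\widetilde{R}_i(x))$ through $\psi_n$ as $a(x)e_i+b(x)$, and match coefficients in the link-pattern basis, using that $e_i$ creates at most one closed loop so the three cases for $\gamma^{(i)}_{L',L}$ are mutually exclusive and exhaustive.

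One small imprecision worth flagging: for odd $n$, the weight of a non-contractible loop is not literally the value of $\rho+\rho^{-1}$ in the one-dimensional $\mathcal{TL}_1$-representation. A closed loop around the puncture necessarily crosses the defect line, and its removal is obtained by resolving that crossing via the Kauffman relation \eqref{kauffman}; the resulting scalar is $t^{\frac{1}{4}}v^{-1}+t^{-\frac{1}{4}}v$ (cf.\ the proof of Lemma \ref{Qnotzerolemma}), which only coincides with $t^{\frac{1}{4}}+t^{-\frac{1}{4}}$ because $v=1$ throughout Section \ref{subsec-linkqkz}. For even $n$ the identification with the value $u$ of $X$ on $\mathbb{C}_0^{(u)}$ is exactly right. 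Neither issue changes the conclusion, but the skein-theoretic derivation of the middle case of $\gamma^{(i)}_{L',L}$ is genuinely different in the even and odd cases and deserves to be stated as such.
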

\begin{proof}
This follows directly by rewriting the qKZ equations
\begin{equation*}
\begin{split}
g^{(n)}(\mathbf{z})&=R_i(z_{i+1}/z_i)g^{(n)}(\ldots,z_{i+1},z_i,\ldots),\qquad 1\leq i<n,\\
g^{(n)}(\mathbf{z})&=c^{-1}\rho g^{(n)}(z_2,\ldots,z_n,q^{-1}z_1)
\end{split}
\end{equation*}
component-wise.
\end{proof}
 For the following lemmas concerning the uniqueness of solutions we need to impose
 that the loop weights $-(t^{\frac{1}{2}}+t^{-\frac{1}{2}})$ and 
 $t^{\frac{1}{4}}+t^{-\frac{1}{4}}$ are both nonzero.
\begin{lemma}\label{unique1} Let $n\geq 1$, $q,c\in\mathbb{C}^*$ and $t^{\frac{1}{4}}\in\mathbb{C}^*$
with $(t^{\frac{1}{2}}+1)(t+1)\not=0$. Let
\[
g^{(n)}(\mathbf{z})=\sum_{L\in\mathcal{L}_n}g_L^{(n)}(\mathbf{z})L\in 
V_n^{(c)}(\mathbf{z})^{\nabla(W_n)}.
\]
{\bf (a)} If $g_{L_\cap}^{(n)}(\mathbf{z})=0$ then $g^{(n)}(\mathbf{z})=0$.\\
{\bf (b)} If $g_{L_\cap}^{(n)}(\mathbf{z})\in\mathbb{C}[\mathbf{z}]$ is a homogeneous polynomial
of total degree $m$, then so is $g_L^{(n)}(\mathbf{z})$ for all $L\in\mathcal{L}_n$.
\end{lemma}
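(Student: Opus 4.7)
The plan for both parts is to use the recursion relations \eqref{recursion} to express each component $g_L^{(n)}$ in terms of $g_{L_\cap}^{(n)}$ via a downward induction on a partial order $\preceq$ on $\mathcal{L}_n$ with respect to which $L_\cap$ is maximal. The second (rotational) line of \eqref{recursion} immediately extends any information about the component at $L_\cap$ to its full $\langle\rho\rangle$-orbit.

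For part (a), starting from $g_{L_\cap}^{(n)}=0$, the $\rho$-recursion in \eqref{recursion} gives $g_{\rho^k L_\cap}^{(n)}=0$ for all $k \in \mathbb{Z}$. To propagate vanishing further, I would use the identity $b(x)-(t^{\frac{1}{2}}+t^{-\frac{1}{2}})a(x)=1$ (a direct check from \eqref{weights}), which permits rewriting the first recursion in \eqref{recursion}, for $L$ having an arc between positions $i$ and $i+1$, as
\begin{equation*}
g_L^{(n)}(\mathbf{z})-g_L^{(n)}(s_i\mathbf{z}) = a(z_{i+1}/z_i)\sum_{L'\in S_{L,i}\setminus\{L\}}\gamma_{L',L}^{(i)}\,g_{L'}^{(n)}(s_i\mathbf{z}),
\end{equation*}
where $S_{L,i}=\{L'\in\mathcal{L}_n: e_iL'\sim L\}$. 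The link patterns $L'$ in the sum are obtained from $L$ by ``swapping'' the arc $(i,i+1)$ of $L$ against another arc of $L$. A suitable $\preceq$ is one such that these $L'$'s are strictly lower than $L$, and for which every non-maximal $L$ admits a choice of $i$ and swap giving such a triangular relation. Downward induction then propagates the vanishing from the rotates of $L_\cap$ to all of $\mathcal{L}_n$. The hypothesis $(t^{\frac{1}{2}}+1)(t+1)\neq 0$ enters at each inductive step, as it ensures that the loop-weight factors $-(t^{\frac{1}{2}}+t^{-\frac{1}{2}})$ and $t^{\frac{1}{4}}+t^{-\frac{1}{4}}$ entering the coefficients $\gamma_{L',L}^{(i)}$ are nonzero and may be inverted.

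For part (b), the same inductive scheme applies while tracking homogeneous degrees. Since $a(z_{i+1}/z_i)$ and $b(z_{i+1}/z_i)$ are rational functions homogeneous of degree $0$ in $\mathbf{z}$ and $s_i$ preserves total degree, the induction preserves the homogeneous degree $m$. To deduce that each $g_L^{(n)}$ is actually polynomial rather than merely rational, I would rewrite \eqref{recursion} in its cleared-denominator form
\begin{equation*}
(z_it^{\frac{1}{2}}-z_{i+1}t^{-\frac{1}{2}})\,g_L^{(n)}(\mathbf{z}) = (z_{i+1}t^{\frac{1}{2}}-z_it^{-\frac{1}{2}})\,g_L^{(n)}(s_i\mathbf{z}) + (z_{i+1}-z_i)\sum_{L'\in S_{L,i}}\gamma_{L',L}^{(i)}\,g_{L'}^{(n)}(s_i\mathbf{z}),
\end{equation*}
so that assuming inductively that the $g_{L'}^{(n)}$ for $L'\succ L$ are polynomial, the right-hand side is a polynomial. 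Combining this identity with the analogous one obtained by swapping $z_i\leftrightarrow z_{i+1}$ and using a gcd argument in $\mathbb{C}[\mathbf{z}]$ then shows that $g_L^{(n)}$ itself is polynomial.

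The main obstacle is the combinatorial statement underpinning the partial order: that every non-maximal link pattern $L$ admits an $(i,L')$-pair yielding the required triangular step. This is essentially a cellularity-type assertion for the link pattern module of the extended affine Temperley-Lieb algebra, and requires careful bookkeeping of arcs wrapping the puncture and (for odd $n$) the defect line, which distinguish $\mathcal{L}_n$ from link patterns on the disc.
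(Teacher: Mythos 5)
Your high-level strategy matches the paper's: propagate information from $g_{L_\cap}^{(n)}$ to all other components by combining the $\rho$-equation with the $s_i$-recursions, inducting along a suitable ordering of link patterns, with the nonvanishing of the loop weights entering to ensure the relevant $\gamma^{(i)}_{L',L}$ are invertible. The rewriting $b(x)-(t^{\frac12}+t^{-\frac12})a(x)=1$ is a correct simplification, though the paper gets by without it. The paper's Appendix~\ref{uniqueproof} carries out precisely this induction, organized via the bijection of link patterns with Dyck paths and the containment order, reducing the punctured case to the unpunctured one by $\rho$-rotation.

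However, your key combinatorial hypothesis cannot hold as stated, and this is not merely an unresolved bookkeeping issue. You posit a partial order $\preceq$ with $L_\cap$ maximal in which every $L'\in S_{L,i}\setminus\{L\}$ is strictly below $L$. Translate to Dyck paths with $L_\cap$ identified (up to rotation) with the maximal path $L_0$: if $L$ has a little arch at $(i,i+1)$ (a local maximum of height $h\ge 2$), its preimages under $e_i$ are of two kinds. The path $N$ obtained by turning the local maximum into a local minimum is \emph{strictly below} $L$ in containment. But the ``slope'' preimages, whose image under $e_i$ involves a collapse, are \emph{strictly above} $L$: the collapse lowers the Dyck path, so $e_iL'=L$ forces $L'\succ L$. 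Thus any order compatible with containment (or, equivalently, with decreasing content) puts some $L'$ above $L$, and there is in general more than one preimage, so a strict triangularity ``all $L'\prec L$'' would in any case leave you with one equation in several unknowns. The workable triangularity, used in the paper, is different: induct on the content $|L|$; at stage $m$ every preimage $L'$ with $|L'|\le m$ (this includes $L$ itself and all slope preimages) is already covered, and exactly one preimage, namely $N$ with $|N|=m+1$, is new, so the equation at $L$ has $g_N^{(n)}$ as its unique unknown. This is the statement you need to prove, and it is not a pure order-theoretic assertion about $\preceq$. You should also be aware that the remaining subtlety for $\mathcal{L}_n$ is not just ``bookkeeping of arcs wrapping the puncture'': one must check that when working inside a fixed sector $LP_{2k}^{(*,j)}$, the preimages produced by the collapsing algorithm stay in that sector. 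The paper's Remark~\ref{collapsing} (never collapse a local maximum of height one) is precisely what guarantees this, and it is an essential step you would also need.
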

\begin{proof}
In Appendix A we show by induction that, given $g_{L_\cap}(\mathbf{z})$,  the recursion relations \eqref{recursion} determine the other coefficients $g_L^{(n)}(\mathbf{z})$ ($L\in\mathcal{L}_n$)
uniquely. For this the first equation in \eqref{recursion} is used in the following way:
for $L^\prime\in\mathcal{L}_n$ and $1\leq i<n$ such that $L^\prime$ does not have a little arch between $i$ and $i+1$, denote by $L\in\mathcal{L}_n$ the link
pattern such that $e_iL^\prime\sim L$, then $g_{L^\prime}^{(n)}(\mathbf{z})$ can be computed from other base components
by the formula
\[
\gamma_{L^\prime,L}^{(i)}a(z_{i+1}/z_i)g_{L^\prime}^{(n)}(s_i\mathbf{z})=g_L(\mathbf{z})-b(z_{i+1}/z_i)g_L(s_i\mathbf{z})
-\sum_{L^{\prime\prime}\in\mathcal{L}_n{\setminus}\{L^\prime\}:\, e_iL^{\prime\prime}\sim L}
\gamma_{L^{\prime\prime},L}^{(i)}a(z_{i+1}/z_i)g_{L^{\prime\prime}}^{(n)}(s_i\mathbf{z})
\]
since $\gamma_{L^\prime,L}^{(i)}\not=0$.
By substituting the explicit expressions of the weights $a(x)$ and $b(x)$, this can be rewritten as
\begin{equation*}
\begin{split}
\gamma_{L^\prime,L}^{(i)}(z_{i+1}-z_i)g_{L^\prime}^{(n)}(s_i\mathbf{z})&=
(1-s_i)\left(t^{\frac{1}{2}}z_i-t^{-\frac{1}{2}}z_{i+1}\right)g_L(\mathbf{z})\\
&\;\;-(z_{i+1}-z_i)\sum_{L^{\prime\prime}\in\mathcal{L}_n{\setminus}\{L^\prime\}:\, 
e_iL^{\prime\prime}\sim L}\gamma_{L^{\prime\prime},L}^{(i)}g_{L^{\prime\prime}}^{(n)}(s_i\mathbf{z}),
\end{split}
\end{equation*}
from which it is clear that $g_{L^\prime}^{(n)}(\mathbf{z})$ will be a homogeneous polynomial of total
degree $m$ if $g_L^{(n)}(\mathbf{z})$ and 
$g_{L^{\prime\prime}}^{(n)}(\mathbf{z})$ are homogeneous polynomials of total degree $m$.
\end{proof}
A similar result holds true for the restricted modules $V_{n+1}^{\mathcal{I}_n}$:
\begin{lemma}\label{unique2}
Let $n\geq 1$, $q,c\in\mathbb{C}^*$ and $t^{\frac{1}{4}}\in\mathbb{C}^*$ such that $(t^{\frac{1}{2}}+1)(t+1)\not=0$.
Let
\[
g^{(n)}(\mathbf{z})=\sum_{L\in\mathcal{L}_{n+1}}g_L^{(n)}(\mathbf{z})L\in 
V_{n+1}^{\mathcal{I}_n, (c)}(\mathbf{z})^{\nabla(W_n)}.
\]
{\bf (a)} If $g_{L_\cap}^{(n)}(\mathbf{z})=0$ with $L_\cap=L_\cap^{(n+1)}\in\mathcal{L}_{n+1}$
the fully nested diagram, then $g^{(n)}(\mathbf{z})=0$.\\
{\bf (b)} If $g_{L_\cap}^{(n)}(\mathbf{z})\in\mathbb{C}[\mathbf{z}]$ is a homogeneous polynomial
of total degree $m$, then so is $g_L^{(n)}(\mathbf{z})$ for all $L\in\mathcal{L}_{n+1}$.
\end{lemma}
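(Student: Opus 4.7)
The plan is to run the same componentwise argument as in Lemma \ref{unique1}, adjusted for the fact that the $\mathcal{H}_n$-action on $V_{n+1}$ is twisted by $\nu_n$ (equivalently, by $\mathcal{I}_n$ on the Temperley--Lieb level). Expanding $g^{(n)}(\mathbf{z}) \in V_{n+1}^{\mathcal{I}_n,(c)}(\mathbf{z})^{\nabla(W_n)}$ in the link pattern basis of $\mathbb{C}[\mathcal{L}_{n+1}]$ and using $\mathcal{I}_n(e_i)=e_i$ for $1\leq i<n$, the invariance under $\nabla(s_i)$ produces recursion relations of exactly the form of the first line of \eqref{recursion}, with $L,L'$ now ranging over $\mathcal{L}_{n+1}$. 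For the $\rho$-equation, the action is twisted via $\mathcal{I}_n(\rho)=\rho(t^{-\frac{1}{4}}e_n+t^{\frac{1}{4}})$, which yields
\[
c\, g_L^{(n)}(\mathbf{z}) = \sum_{L'\in \mathcal{L}_{n+1}} M_{L,L'}\, g_{L'}^{(n)}(z_2,\ldots,z_n,q^{-1}z_1),
\]
where $M_{L,L'}$ denote the matrix coefficients of $\mathcal{I}_n(\rho)$ in the link pattern basis; this relation is linear with scalar coefficients and hence preserves polynomiality and total degree in $\mathbf{z}$.

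For part (a), the induction is the one announced for Lemma \ref{unique1}: using $(t^{\frac{1}{2}}+1)(t+1)\neq 0$ to ensure that every $\gamma_{L',L}^{(i)}$ is nonzero, the rewriting
\[
\gamma_{L',L}^{(i)}(z_{i+1}-z_i)\, g_{L'}^{(n)}(s_i\mathbf{z}) = (1-s_i)\bigl(t^{\frac{1}{2}}z_i-t^{-\frac{1}{2}}z_{i+1}\bigr) g_L^{(n)}(\mathbf{z}) - (z_{i+1}-z_i)\!\!\sum_{L''\neq L'}\!\!\gamma_{L'',L}^{(i)}\, g_{L''}^{(n)}(s_i\mathbf{z})
\]
lets us solve for $g_{L'}^{(n)}$ whenever $L'\in \mathcal{L}_{n+1}$ has a little arch between positions $i$ and $i+1$ with $1\leq i<n$ and the right-hand side is already known. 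Starting from $g_{L_\cap}^{(n)}$ and iterating, combined with the modified $\rho$-equation to propagate to components that the $e_i$-moves with $i<n$ cannot reach, we determine every $g_L^{(n)}$; in particular the vanishing of $g_{L_\cap}^{(n)}$ forces $g^{(n)}\equiv 0$. Part (b) is then immediate from the same induction: the displayed formula is homogeneous of the same total degree as $g_L^{(n)}$ and the $g_{L''}^{(n)}$ on the right, and the modified $\rho$-equation has scalar coefficients, so homogeneity of degree $m$ of $g_{L_\cap}^{(n)}$ propagates to every $g_L^{(n)}$.

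The main obstacle is the combinatorial content of the induction, namely showing that every $L\in\mathcal{L}_{n+1}$ can be reached from $L_\cap$ by a sequence of admissible moves — the $e_i$-moves with $1\leq i<n$ together with the cyclic-rotation-plus-$e_n$-insertion encoded by $\mathcal{I}_n(\rho)$. In Lemma \ref{unique1} all generators $e_1,\ldots,e_n$ act directly, whereas here $e_n$ appears only inside the twisted $\rho$-equation; a carefully chosen complexity measure (for instance, lexicographic in the distance of the partner of the $(n+1)$-st endpoint from the defect line and the number of non fully-nested arcs among the remaining points) must therefore be verified to decrease along some admissible move until $L_\cap$ is reached, at which point the inductive hypothesis applies.
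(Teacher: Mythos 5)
You have the right framework: componentwise recursions from the $\nabla(s_i)$-invariance for $1\le i<n$, plus the twisted $\rho$-equation coming from $\mathcal{I}_n(\rho)=\rho\bigl(t^{-\frac{1}{4}}e_n+t^{\frac{1}{4}}\bigr)$. Part (b) does indeed follow mechanically once part (a)'s determination order is in place. But the crux of part (a) — and the entire content of the paper's proof in Appendix \ref{uniqueproof} — is precisely the combinatorial claim you flag in your last paragraph and then leave unverified. Writing ``a carefully chosen complexity measure must therefore be verified'' is not a proof; you have identified the gap without closing it, and the measure you sketch (distance of the $(n{+}1)$-st endpoint's partner from the defect line, plus count of non-nested arcs) is not analysed and is not what the paper uses.

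What makes the restricted-module case genuinely delicate, and what the paper does that your sketch does not: the only directly available $e_i$-recursions are for $1\le i<n$ (one fewer than for $\mathcal{L}_{n+1}$ as an $\mathcal{H}_{n+1}$-module), so the arch at position $(n,n+1)$ can only be manipulated through the $\rho$-equation, which now carries an extra ``pre-image sum'' $t^{-\frac{1}{4}}\sum_{L':\,e_nL'\sim L}\gamma^{(n)}_{L',L}\,g^{(n)}_{L'}$. The paper partitions $\mathcal{L}_{n+1}$ by the position of the defect line (the sets $LP^{(*,j)}_{n+1}$), first shows $g^{(n)}_{\rho^iL_\cap}\equiv 0$ for all $i$ because $L_\cap$ and enough of its rotations have no little arch at $(n,n+1)$ so the pre-image sum is empty, then determines everything in $LP^{(*,n+1)}_{n+1}$ by reduction to the unpunctured case (this reduction works because, by Remark \ref{collapsing}, the collapsing algorithm never touches a local maximum of height one, so it never needs the missing $e_n$-equation), and finally propagates through the remaining $LP^{(*,j)}_{n+1}$ cyclically via the $\rho$-equation, checking at each step that whenever the pre-image sum is nonempty its contributing $L'$ already lie in the previously-determined sets (this requires an explicit case analysis of the pre-images, done separately for $n=2k$ and $n=2k-1$). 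None of this is present in your proposal, so as written the argument does not establish that $g^{(n)}_{L_\cap}\equiv 0$ forces $g^{(n)}\equiv 0$.
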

\begin{proof}
The proof is similar to the proof of the previous lemma, but the check that the recursion relations
coming from the qKZ equations for the representation $V_{n+1}^{\mathcal{I}_n,(c)}$ determine all components in terms of the fully nested component $g_{L_\cap^{(n+1)}}^{(n)}(\mathbf{z})$
is more subtle. The details are given in Appendix A.
\end{proof}

\begin{corollary}\label{degreeleading}
Let $n\geq 1$ and 
\[
g^{(n)}(\mathbf{z})=\sum_{L\in\mathcal{L}_n}g_L^{(n)}(\mathbf{z})L\in V_n^{(c)}(\mathbf{z})^{\nabla(W_n)}.
\]
Then, 
\[
g_{L_\cap}^{(n)}(\mathbf{z})=C_n(\mathbf{z})\prod_{1\leq i<j\leq n}\left(t^{\frac{1}{2}}z_j-
t^{-\frac{1}{2}}z_i\right)
\]
with $C_n(\mathbf{z})\in\mathbb{C}(\mathbf{z})^{S_n}$. If in addition $g_{L_\cap}^{(n)}(\mathbf{z})$ is a homogeneous
polynomial of total degree $m$ and $(t^{\frac{1}{2}}+1)(t+1)\not=0$, then $m\geq \frac{1}{2}n(n-1)$ and 
$C_n(\mathbf{z})$ is a homogeneous symmetric polynomial of total degree $m-\frac{1}{2}n(n-1)$.
\end{corollary}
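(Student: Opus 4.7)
The strategy is to derive from the qKZ invariance of $g^{(n)}$ a recursion for $g_{L_\cap}^{(n)}$ that directly produces the claimed factorization and symmetry of the quotient $C_n$; the polynomial/degree statement will then follow because $h(\mathbf{z})$ admits no nontrivial $S_n$-invariant polynomial divisor.

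I first note when the sum $\sum_{L^\prime:\,e_iL^\prime\sim L_\cap}$ in the first equation of \eqref{recursion} can contribute: only when $L_\cap$ possesses a little arch at positions $(i,i+1)$. Reading off the fully nested diagram, this occurs for \emph{no} $i\in\{1,\ldots,n-1\}$ when $n$ is odd, and only at $i=k$ when $n=2k$ is even. For all other indices the recursion collapses to $g_{L_\cap}^{(n)}(\mathbf{z})=b(z_{i+1}/z_i)\,g_{L_\cap}^{(n)}(s_i\mathbf{z})$, which after clearing denominators reads
\[
(t^{\frac{1}{2}}z_i - t^{-\frac{1}{2}}z_{i+1})\,g_{L_\cap}^{(n)}(\mathbf{z}) = (t^{\frac{1}{2}}z_{i+1} - t^{-\frac{1}{2}}z_i)\,g_{L_\cap}^{(n)}(s_i\mathbf{z}).
\]
Since $h(s_i\mathbf{z})/h(\mathbf{z}) = (t^{\frac{1}{2}}z_i - t^{-\frac{1}{2}}z_{i+1})/(t^{\frac{1}{2}}z_{i+1} - t^{-\frac{1}{2}}z_i)$, this is precisely the $s_i$-invariance of $C_n(\mathbf{z}) := g_{L_\cap}^{(n)}(\mathbf{z})/h(\mathbf{z})$ as a rational function, already giving the full $S_n$-invariance when $n$ is odd.

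For even $n=2k$ only invariance under $S_{\{1,\ldots,k\}}\times S_{\{k+1,\ldots,n\}}$ has been obtained, and the transposition $s_k$ requires separate treatment. Using the identity $b(x)-(t^{\frac{1}{2}}+t^{-\frac{1}{2}})a(x)=1$, which absorbs the self-contribution $L^\prime=L_\cap$ (whose coefficient $\gamma^{(k)}_{L_\cap,L_\cap}=-(t^{\frac{1}{2}}+t^{-\frac{1}{2}})$ arises from the null-homotopic loop in $e_kL_\cap$), the $i=k$ recursion simplifies to
\[
g_{L_\cap}^{(n)}(\mathbf{z})-g_{L_\cap}^{(n)}(s_k\mathbf{z})=a(z_{k+1}/z_k)\sum_{L^\prime\neq L_\cap}\gamma^{(k)}_{L^\prime,L_\cap}\,g_{L^\prime}^{(n)}(s_k\mathbf{z}).
\]
The auxiliary link patterns $L^\prime$ in the sum, notably the one obtained from $L_\cap$ by breaking the two innermost arches $(k-1,k+2)$ and $(k,k+1)$ into little arches $(k-1,k)$ and $(k+1,k+2)$, have no little arch at $(k,k+1)$; so the argument of the previous paragraph applied to $L^\prime$ yields the simple recursion for $g_{L^\prime}^{(n)}$ at $i=k$. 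Combined with the coprimality of the linear factors $t^{\frac{1}{2}}z_k - t^{-\frac{1}{2}}z_{k+1}$ and $t^{\frac{1}{2}}z_{k+1} - t^{-\frac{1}{2}}z_k$ under the hypothesis $(t^{\frac{1}{2}}+1)(t+1)\neq 0$, this forces $(t^{\frac{1}{2}}z_{k+1} - t^{-\frac{1}{2}}z_k)\mid g_{L^\prime}^{(n)}(\mathbf{z})$; feeding these divisibilities through the displayed identity delivers the missing $s_k$-invariance of $C_n$.

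For the polynomial/degree statement, assume $g_{L_\cap}^{(n)}$ is a homogeneous polynomial of total degree $m$. By Lemma~\ref{unique1}(b) every $g_L^{(n)}$ is then a homogeneous polynomial of degree $m$. Writing $C_n=g_{L_\cap}^{(n)}/h$ in lowest terms as $p/q$ with $p,q$ coprime and $q\mid h$, the $S_n$-invariance of $C_n$ forces $q$ to be $S_n$-invariant; but $h$ is a product of pairwise coprime linear factors whose $S_n$-orbit strictly contains the factor set of $h$ (it also contains the ``wrong-order'' versions $t^{\frac{1}{2}}z_i - t^{-\frac{1}{2}}z_j$ for $i<j$), so no nontrivial $S_n$-stable polynomial divides $h$, and $q$ must be a scalar. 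Hence $C_n$ is a homogeneous symmetric polynomial of degree $m-\frac{1}{2}n(n-1)$, which in particular forces $m\geq\frac{1}{2}n(n-1)$. The main obstacle throughout is the upgrade to full $S_n$-symmetry in the even case: the nearest-neighbor recursions for $L_\cap$ alone yield only the partial $(S_k\times S_k)$-symmetry, and bridging the transposition $s_k$ requires the auxiliary analysis of the $i=k$ recursion together with the divisibility of the neighboring coefficients $g_{L^\prime}^{(n)}$.
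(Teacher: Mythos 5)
Your analysis of the little‐arch structure of $L_\cap$ for even $n=2k$ is incorrect, and this invalidates the entire second paragraph. The arch of $L_\cap$ connecting $k$ and $k+1$ wraps around the puncture; by the paper's convention a \emph{little arch} is an arch connecting two consecutive boundary points that does \emph{not} contain the puncture, so the arch at $(k,k+1)$ is not a little arch. In fact $L_\cap$ has no little arch at $(i,i+1)$ for any $1\leq i<n$, for even and odd $n$ alike — its only little arch is at $(n,1)$, outside the index range of the first qKZ relation. Moreover, $e_kL'$ always contains a genuine little arch at $(k,k+1)$ (the cup of $e_k$ does not enclose the puncture), whereas $L_\cap$ does not, so the set $\{L':e_kL'\sim L_\cap\}$ is empty and the recursion \eqref{recursion} at $i=k$ reduces to $g_{L_\cap}^{(n)}(\mathbf{z})=b(z_{k+1}/z_k)\,g_{L_\cap}^{(n)}(s_k\mathbf{z})$, exactly as for every other index. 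The elaborate detour you give for $i=k$ is therefore not needed, and its premise is false. (Even granting your misreading, the coefficient you quote would be wrong: a loop formed by capping off the puncture-enclosing arch of $L_\cap$ is non-null-homotopic, so $\gamma^{(k)}_{L_\cap,L_\cap}$ would be $t^{1/4}+t^{-1/4}$, not $-(t^{1/2}+t^{-1/2})$; and since $e_kL_\cap$ places the little arch on the wrong side of the puncture, $L_\cap$ cannot even be a pre-image of itself.)

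With this correction, your first paragraph \emph{is} the paper's proof of the $S_n$-invariance of $C_n(\mathbf{z})=g^{(n)}_{L_\cap}(\mathbf{z})/\prod_{i<j}(t^{1/2}z_j-t^{-1/2}z_i)$, valid for all $n$. Your third paragraph — arguing that a rational $S_n$-invariant $C_n$ with polynomial numerator must itself be polynomial because no nontrivial $S_n$-stable polynomial divides $\prod_{i<j}(t^{1/2}z_j-t^{-1/2}z_i)$ under the hypothesis on $t$ — is a legitimate alternative to the paper's route, which instead shows divisibility of $g^{(n)}_{L_\cap}$ by the partial products $\prod_{1\leq i<j\leq r}(t^{1/2}z_j-t^{-1/2}z_i)$ by induction on $r$ using \eqref{hulp} directly. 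The two arguments are of comparable length; yours replaces the inductive divisibility step with a one-shot stability argument. The side appeal to Lemma \ref{unique1}(b) at the start of your degree argument is not used and can be dropped.
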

\begin{proof}
Note that $L_\cap$ does not have a little arch connecting $i$ and $i+1$ for $1\leq i<n$.
By the recursion relation \eqref{recursion}, it follows that
\begin{equation}\label{hulp}
g_{L_\cap}^{(n)}(s_i\mathbf{z})\left(t^{\frac{1}{2}}z_{i+1}-t^{-\frac{1}{2}}z_i\right)=
g_{L_\cap}^{(n)}(\mathbf{z})\left(t^{\frac{1}{2}}z_i-t^{-\frac{1}{2}}z_{i+1}\right)
\end{equation}
for $1\leq i<n$. The first result now follows immediately. 

For the second statement, suppose that $g_{L_\cap}^{(n)}(\mathbf{z})$ is a homogeneous polynomial
of total degree $m$. Then, 
\eqref{hulp} and $t^2\not=1$ imply that $g_{L_\cap}^{(n)}(\mathbf{z})$
is divisible by $t^{\frac{1}{2}}z_{2}-t^{-\frac{1}{2}}z_1$ in $\mathbb{C}[\mathbf{z}]$ and the resulting quotient is invariant under interchanging $z_1$ and $z_{2}$. One now proves by induction on $r$ that
$g_{L_\cap}^{(n)}(\mathbf{z})$ is divisible by $\prod_{1\leq i<j\leq r}\left(t^{\frac{1}{2}}z_j-t^{-\frac{1}{2}}z_i\right)$ in $\mathbb{C}[\mathbf{z}]$ and the resulting quotient is symmetric in $z_1,\ldots,z_r$. The second statement then follows by taking $r=n$. 
\end{proof}

It follows from the previous result that if the loop weights are nonzero and if there exists a nonzero
$g^{(n)}\in\textup{Sol}_n(V_n;q,c_n)$ with coefficients being homogeneous of total degree $\frac{1}{2}n(n-1)$,
then it is unique up to a nonzero scalar multiple and 
\[
g^{(n)}_{L_\cap}(\mathbf{z})=\kappa\prod_{1\leq i<j\leq n}\left(t^{\frac{1}{2}}z_j-t^{-\frac{1}{2}}z_i\right)
\]
for some $\kappa\in\mathbb{C}^*$. 

The following lemma is important in the analysis of qKZ
towers of solutions relative to the link pattern tower $\{(V_n,\phi_n)\}_{n\geq 0}$.
\begin{lemma}\label{intertwinerfullynested}
For $L\in\mathcal{L}_n$, consider the expansion
\[
\phi_n(L)=\sum_{L^\prime\in\mathcal{L}_{n+1}}c_{L,L^\prime}L^\prime\qquad (c_{L,L^\prime}\in\mathbb{C})
\]
of $\phi_n(L)$ in terms of the linear basis $\mathcal{L}_n$ of $V_n$. Then,
$c_{L,L_\cap^{(n+1)}}=t^{-\frac{1}{4}\lfloor n/2\rfloor}\delta_{L,L_\cap^{(n)}}$.
\end{lemma}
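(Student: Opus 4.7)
The plan is to prove the lemma by a direct skein-theoretic computation, using the topological definitions of $\phi_n$ together with the Kauffman relation \eqref{kauffman}, and treating the even case $n=2k$ and the odd case $n=2k-1$ separately.

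For the $L=L_\cap^{(n)}$ direction I would compute $\phi_n(L_\cap^{(n)})$ explicitly. In the even case $\phi_{2k}(L_\cap^{(2k)})=\mathcal{I}(L_\cap^{(2k)})_v$, and by the topological definition of the arc insertion functor the new strand enters at a new inner marked point, exits at a new outer marked point placed cyclically between the old points $2k$ and $1$, and passes underneath each of the $k$ nested arcs of $L_\cap^{(2k)}$, producing $k$ under-crossings. Applying \eqref{kauffman} at each crossing expands the result into $2^k$ planar terms whose scalar coefficients are products of $k$ factors $t^{\pm 1/4}$. The defect line of the target $L_\cap^{(2k+1)}$ lies at the central outer position $k+1$, while the new strand exits at cyclic position $n+1=2k+1$; the key observation is that exactly one sequence of smoothings---the ``cup-cap'' resolution of weight $t^{-1/4}$ at every one of the $k$ crossings---successively drags the head of the new strand past each nested arc until it lands at position $k+1$, producing $L_\cap^{(2k+1)}$ with overall coefficient $(t^{-1/4})^k = t^{-k/4} = t^{-\lfloor n/2\rfloor/4}$. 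In the odd case $\phi_{2k-1}(L_\cap^{(2k-1)})=(\mathcal{I}(L_\cap^{(2k-1)})\circ U)_u$ with $U=t^{1/4}(\cdot)+(\cdot)$ since $v=1$; now the new strand crosses the $k-1$ nested arcs as well as the defect line of $L_\cap^{(2k-1)}$, and an analogous analysis shows that precisely one combination---the $t^{1/4}$-summand of $U$ together with the ``cup-cap'' resolution at each of the $k$ crossings---yields $L_\cap^{(2k)}$, with coefficient $t^{1/4}(t^{-1/4})^k = t^{-(k-1)/4} = t^{-\lfloor n/2\rfloor/4}$.

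For $L\neq L_\cap^{(n)}$ I would argue that $L_\cap^{(n+1)}$ does not appear in the expansion of $\phi_n(L)$. Any such $L$ contains a ``misplaced'' little arch, i.e.\ an arc between two adjacent marked points at a position other than the unique innermost little arch of $L_\cap^{(n+1)}$; since arc insertion only modifies arcs actually crossed by the new strand, any little arch of $L$ whose endpoints are not adjacent to the insertion position survives intact through every Kauffman resolution, so every summand of $\phi_n(L)$ still carries that misplaced little arch, ruling out $L_\cap^{(n+1)}$. A short case-analysis handles the little arches adjacent to the insertion position. The main obstacle will be the meticulous bookkeeping of cyclic labels and crossing orientations through arc insertion and Kauffman resolution, particularly in the odd case where the composition with $U$ together with the extra crossing of the new strand with the defect line of $L_\cap^{(2k-1)}$ makes identifying the unique contributing smoothing more delicate; once this bookkeeping is settled, the rigidity of the fully nested target pattern forces the unique resolution choice and the claimed coefficient $t^{-\lfloor n/2\rfloor/4}$.
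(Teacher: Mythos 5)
Your overall plan---direct skein computation, even/odd split, identify the unique contributing resolution for $L=L_\cap^{(n)}$, and rule out $L\neq L_\cap^{(n)}$ by observing that little arches survive arc insertion---is exactly the paper's strategy, and your even case and your little-arch argument match the paper's proof.

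However, your accounting in the odd case disagrees with the paper's in two places, and these two discrepancies happen to cancel, which is why you still land on the right coefficient. You claim the inserted strand crosses the $k-1$ nested arcs \emph{and} the defect line of $L_\cap^{(2k-1)}$, for $k$ under-crossings, and that the contributing term is the $t^{1/4}$-summand of $U$, yielding $t^{1/4}\cdot (t^{-1/4})^{k}$. The paper instead observes that after composing with $U$, \emph{both} resulting $(0,2k)$-diagrams have only $k-1$ under-crossings (the inserted strand does not cross the defect line), and the term that resolves to $L_\cap^{(2k)}$ is the one carrying the scalar $v=1$, not the $t^{1/4}$-summand; the coefficient is therefore $1\cdot (t^{-1/4})^{k-1}$. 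Numerically $t^{1/4}(t^{-1/4})^{k}=(t^{-1/4})^{k-1}$, so your answer is correct, but the mechanism you describe is not: you have a spurious extra crossing and the wrong summand of $U$, and the errors cancel. You should redo the topology of the odd case carefully --- in particular verify whether the inserted strand actually meets the defect line (after the natural Reidemeister simplifications it does not), and check which summand of $U$ produces the fully nested pattern after the crossings are smoothed. Also note that if the strand really did cross the defect line, the $U$-composition would join those two strands into one, so that crossing would become a kink; resolving a kink contributes a factor $-t^{\pm 3/4}$, not $t^{-1/4}$, and your computation does not account for this. So the $k$-crossing picture is internally inconsistent as written.
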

\begin{proof}
For $n=2k$, consider a  link pattern $L\in \mathcal{L}_{2k}$ that has a little arch connecting $i,i+1$ for some $i\in\{1,\ldots,2k-1\}$.
All the link patterns in the image $\phi_{2k}(L)$ also contain the same little arch since the inserted defect line at the skein module level does not cross it (possibly after an appropriate number of applications of Reidemeister II moves).
The only link pattern that does not contain a little arch connecting $i,i+1$ for any $1\leq i <2k$ is $L_\cap$.
By the mapping $\phi_{2k}$ we have at the skein module level
\begin{align*}
	\psset{unit=0.8}
	 \begin{pspicture}[shift=-1.8](-2.25,-1.8)(2.25,1.8)
    	\pscircle[fillstyle=solid, fillcolor=diskin,linewidth=1.2pt](0,0){1.5}
    	\degrees[360]
	\psdot[dotstyle=asterisk,dotscale=1.5](0,0)
	\rput{0}(1.8;0){\tiny$1$}
	\rput{0}(1.8;-20){\tiny$2k$}
	\rput{0}(1.8;160){\tiny$k$}
	\rput{0}(2;180){\tiny$k\!+\!1$}
	\pscurve[linecolor=line,linewidth=1.2pt](1.5;0)(1.4;0)(1.2;-10)(1.4;-20)(1.5;-20)
	\pscurve[linecolor=line,linewidth=1.2pt](1.5;160)(1.4;160)(-0.4;170)(1.4;180)(1.5;180)
	\pscurve[linecolor=line,linewidth=1.2pt](1.5;140)(1.4;140)(-0.7;170)(1.4;200)(1.5;200)
	\psarc[linestyle=dotted,linecolor=line,linewidth=1.2pt](0,0){1}{10}{120}
	\psarc[linestyle=dotted,linecolor=line,linewidth=1.2pt](0,0){1}{220}{-30}
   \end{pspicture}  
   \overset{\phi_{2k}}{\longmapsto}
  \begin{pspicture}[shift=-1.8](-2.25,-1.8)(2.25,1.8)
    	\pscircle[fillstyle=solid, fillcolor=diskin,linewidth=1.2pt](0,0){1.5}
    	\degrees[360]
	\rput{0}(1.8;0){\tiny$1$}
	\rput{0}(2;-20){\tiny$2k\!+\!1$}
	\rput{0}(2;-40){\tiny$2k$}
	\rput{0}(2;150){\tiny$k$}
	\rput{0}(2;170){\tiny$k\!+\!1$}
	\psline[linecolor=line,linewidth=1.2pt](1.5;-20)(0;0)
	\psdot[dotscale=1.5,linecolor=diskin](1.2;-20)
	\psdot[dotscale=1.4,linecolor=diskin](-0.5;160)
	\psdot[dotscale=1.4,linecolor=diskin](-0.8;160)
	\pscurve[linecolor=line,linewidth=1.2pt](1.5;0)(1.4;0)(1.2;-20)(1.4;-40)(1.5;-40)
	\pscurve[linecolor=line,linewidth=1.2pt](1.5;150)(1.4;150)(-0.5;160)(1.4;170)(1.5;170)
	\pscurve[linecolor=line,linewidth=1.2pt](1.5;130)(1.4;130)(-0.8;160)(1.4;190)(1.5;190)
	\psarc[linestyle=dotted,linecolor=line,linewidth=1.2pt](0,0){1}{10}{110}
	\psarc[linestyle=dotted,linecolor=line,linewidth=1.2pt](0,0){1}{220}{-50}
	\psdot[dotstyle=asterisk,dotscale=1.5](0,0)
   \end{pspicture}  
\end{align*}
and note that the image has $k$ under-crossings. Resolving all the crossings using the Kauffman skein relations gives a linear combination of link patterns.
The contribution to link pattern $L_\cap \in \mathcal{L}_{2k+1}$ comes from taking the smoothing \id for each crossing \cross.
Each of these contributions gives a factor $t^{-\frac{1}{4}}$, which establishes the result for $n$ even.

For the case $n=2k-1$ odd the first step of the argument is similar.
The only link pattern that does not contain a little arch connecting $i,i+1$ for any $1\leq i <2k-1$ is $L_\cap$.
By the mapping $\phi_{2k-1}$ we have at the skein module level

$$\psset{unit=0.8}
  \begin{pspicture}[shift=-1.8](-2.25,-1.8)(2.25,1.8)
    	\pscircle[fillstyle=solid, fillcolor=diskin,linewidth=1.2pt](0,0){1.5}
    	\degrees[360]
	\rput{0}(1.8;0){\tiny$1$}
	\rput{0}(2;-20){\tiny$2k\!-\!1$}
	\rput{0}(2;150){\tiny$k-1$}
	\rput{0}(2;170){\tiny$k$}
	\rput{0}(2;190){\tiny$k\!+\!1$}
	\pscurve[linecolor=line,linewidth=1.2pt](1.5;0)(1.4;0)(1.2;-10)(1.4;-20)(1.5;-20)
	\pscurve[linecolor=line,linewidth=1.2pt](1.5;150)(1.4;150)(-0.5;170)(1.4;190)(1.5;190)
	\psline[linecolor=line,linewidth=1.2pt](1.5;170)(0;0)
	\psarc[linestyle=dotted,linecolor=line,linewidth=1.2pt](0,0){1}{10}{120}
	\psarc[linestyle=dotted,linecolor=line,linewidth=1.2pt](0,0){1}{220}{-30}
	\psdot[dotstyle=asterisk,dotscale=1.5](0,0)
   \end{pspicture}   
   \overset{\phi_{2k-1}}{\longmapsto}
     \begin{pspicture}[shift=-1.8](-2.25,-1.8)(2.25,1.8)
    	\pscircle[fillstyle=solid, fillcolor=diskin,linewidth=1.2pt](0,0){1.5}
    	\degrees[360]
	\rput{0}(1.8;0){\tiny$1$}
	\rput{0}(2;-40){\tiny$2k\!-\!1$}
	\rput{0}(2;140){\tiny$k-1$}
	\rput{0}(2;160){\tiny$k$}
	\rput{0}(2;180){\tiny$k\!+\!1$}
	\rput{0}(2;-20){\tiny$2k$}
	\psarc[linestyle=dotted,linecolor=line,linewidth=1.2pt](0,0){1}{10}{110}
	\pscurve[linecolor=line,linewidth=1.2pt](1.5;160)(1.4;160)(0.4;160)(0.25;70)(0.4;-20)(1.4;-20)(1.5;-20)
	\psdot[dotscale=1.5,linecolor=diskin](1.2;-20)
	\psdot[dotscale=1.5,linecolor=diskin](0.6;-20)
	\pscurve[linecolor=line,linewidth=1.2pt](1.5;0)(1.4;0)(1.2;-20)(1.4;-40)(1.5;-40)
	\pscurve[linecolor=line,linewidth=1.2pt](1.5;140)(1.4;140)(-0.6;160)(1.4;180)(1.5;180)
	\psarc[linestyle=dotted,linecolor=line,linewidth=1.2pt,](0,0){1}{220}{-50}
	\psdot[dotstyle=asterisk,dotscale=1.5](0,0)
   \end{pspicture}  
   + t^{\frac{1}{4}}\;\;
    \begin{pspicture}[shift=-1.8](-2.25,-1.8)(2.25,1.8)
    	\pscircle[fillstyle=solid, fillcolor=diskin,linewidth=1.2pt](0,0){1.5}
    	\degrees[360]
	\rput{0}(1.8;0){\tiny$1$}
	\rput{0}(2;-40){\tiny$2k\!-\!1$}
	\rput{0}(2;140){\tiny$k-1$}
	\rput{0}(2;160){\tiny$k$}
	\rput{0}(2;180){\tiny$k\!+\!1$}
	\rput{0}(2;-20){\tiny$2k$}
	\psarc[linestyle=dotted,linecolor=line,linewidth=1.2pt](0,0){1}{10}{120}
	\pscurve[linecolor=line,linewidth=1.2pt](1.5;160)(1.4;160)(0.4;160)(0.25;-110)(0.4;-20)(1.4;-20)(1.5;-20)
	\psdot[dotscale=1.5,linecolor=diskin](1.2;-20)
	\psdot[dotscale=1.5,linecolor=diskin](0.6;-20)
	\pscurve[linecolor=line,linewidth=1.2pt](1.5;0)(1.4;0)(1.2;-20)(1.4;-40)(1.5;-40)
	\pscurve[linecolor=line,linewidth=1.2pt](1.5;140)(1.4;140)(-0.6;160)(1.4;180)(1.5;180)
	\psarc[linestyle=dotted,linecolor=line,linewidth=1.2pt](0,0){1}{220}{-30}
	\psdot[dotstyle=asterisk,dotscale=1.5](0,0)
   \end{pspicture}  
   $$
and note that each term in the image has $k-1$ under-crossings. 
Resolving all the crossings using Kauffman's skein relations gives a linear combination of link patterns.
The contributions to the link pattern $L_\cap \in \mathcal{L}_{2k}$ come from taking the smoothing \id for each crossing \cross in the first term.
Each of these contributions gives a factor $t^{-\frac{1}{4}}$, which establishes the result for $n$ odd.
\end{proof}
The next lemma provides necessary conditions on the parameters $q,c_n$ for the existence of a qKZ tower of solutions of minimal degree relative to the link pattern tower.
\begin{lemma}\label{CRUClem}
Let $v=1$ and $q,c_n, t^{\frac{1}{4}}\in\mathbb{C}^* \; (n\geq 1)$ with $(t^{\frac{1}{2}}+1)(t+1)\not=0$. Suppose that for each $n\geq 1$ there exists a $g^{(n)}\in\textup{Sol}_n(V_n;q,c_n)$ with 
\[
g^{(n)}_{L_\cap}(\mathbf{z})=\prod_{1\leq i<j\leq n}\left(t^{\frac{1}{2}}z_j-t^{-\frac{1}{2}}z_i\right).
\]
Write $g^{(0)}:=1\in V_0$.

Then, the following two statements are equivalent:
\begin{enumerate}
\item[{\bf (a)}]
$\bigl(g^{(n)}\bigr)_{n\geq 0}$ is a qKZ tower of solutions relative to the link pattern tower $\{(V_n,\phi_n)\}_{n\geq 0}$.
\item[{\bf (b)}]
$q=t^{\frac{3}{2}}$,
$c_n=\bigl(-t^{-\frac{3}{4}}\bigr)^{n-1}$ ($n\geq 1$) and $c_0=t^{\frac{1}{4}}+t^{-\frac{1}{4}}$.
\end{enumerate}
If these equivalent conditions are satisfied then 
$\lambda_n:=q^{-1}$ ($n\geq 1$), $\lambda_0=1$, 
\[
h^{(n)}(\mathbf{z})=t^{\frac{1}{4}(\lfloor n/2\rfloor-2n)}z_1z_2\cdots z_n\qquad (n\geq 1)
\]
and $h^{(0)}=1$. In other words, the corresponding braid recursion relations are then given by
\begin{equation}\label{braidrecursionrelation}
g^{(n+1)}(z_1,\ldots,z_n,0)=t^{\frac{1}{4}(\lfloor n/2\rfloor-2n)}z_1z_2\cdots z_n
\phi_n\bigl(g^{(n)}(z_1,\ldots,z_n)\bigr),\qquad n\geq 0.
\end{equation}
\end{lemma}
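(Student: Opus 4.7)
The strategy is to handle both implications at once by exploiting the uniqueness result in Lemma \ref{unique2}: two elements of $\textup{Sol}_n(V_{n+1}^{\nu_n};q,c)$ are equal as soon as their twist parameters match and their fully nested components (indexed by $L_\cap^{(n+1)}$) coincide. First I would observe that by Lemma \ref{unique1}(b) each $g^{(n)}$ is homogeneous of total degree $\binom{n}{2}$, and by the commutative diagram \eqref{commdiagram} the Temperley--Lieb intertwiner $\phi_n$ automatically satisfies the $\mathcal{H}_n$-intertwining property required in Lemma \ref{liftedsol}. Then Lemma \ref{downlemma} places $g^{(n+1)}(z_1,\dots,z_n,0)$ in $\textup{Sol}_n(V_{n+1}^{\nu_n};q,-t^{-3/4}c_{n+1})$, while Lemma \ref{liftedsol} together with the recursion-factor scaling puts $h^{(n)}\phi_n(g^{(n)})$ in $\textup{Sol}_n(V_{n+1}^{\nu_n};q,\lambda_n c_n)$. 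Matching twist parameters forces the necessary compatibility $-t^{-3/4}c_{n+1}=\lambda_n c_n$.

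The heart of the argument is the explicit computation of the two fully nested components. Setting $z_{n+1}=0$ in the Vandermonde-type product for $g^{(n+1)}_{L_\cap^{(n+1)}}$ splits off the factors with $j=n+1$, yielding
\[
g^{(n+1)}_{L_\cap^{(n+1)}}(z_1,\dots,z_n,0)=\Bigl(\prod_{i=1}^n(-t^{-\frac{1}{2}}z_i)\Bigr)\prod_{1\leq i<j\leq n}\bigl(t^{\frac{1}{2}}z_j-t^{-\frac{1}{2}}z_i\bigr).
\]
On the other side, Lemma \ref{intertwinerfullynested} identifies the fully nested component of $\phi_n(g^{(n)}(\mathbf{z}))$ as $t^{-\frac{1}{4}\lfloor n/2\rfloor}\prod_{1\leq i<j\leq n}(t^{\frac{1}{2}}z_j-t^{-\frac{1}{2}}z_i)$, since only the fully nested $L=L_\cap^{(n)}$ contributes. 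Dividing these two expressions gives $h^{(n)}(\mathbf{z})$ equal to a scalar multiple of $z_1\cdots z_n$ with the claimed $t$-exponent $\frac{1}{4}(\lfloor n/2\rfloor-2n)$. Lemma \ref{unique2} then immediately upgrades the equality of fully nested components to the full braid recursion \eqref{glue}.

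Since $h^{(n)}$ is a scalar multiple of $z_1\cdots z_n$ and $\rho(z_1\cdots z_n)=q^{-1}z_1\cdots z_n$, we read off $\lambda_n=q^{-1}$, so the compatibility condition becomes $c_{n+1}=-t^{3/4}q^{-1}c_n$. Applying the $q$-difference equation directly to $g^{(1)}(z)=L_\cap^{(1)}$ (a constant, because $\sigma_1(\rho)=v=1$ on $V_1$) forces $c_1=1$; iterating the recursion then demands $-t^{3/4}q^{-1}=-t^{-3/4}$, i.e.\ $q=t^{3/2}$, and $c_n=(-t^{-3/4})^{n-1}$. The definitional relation $c_0=t^{\frac{1}{4}}c_1+t^{-\frac{1}{4}}c_1^{-1}$ gives $c_0=t^{\frac{1}{4}}+t^{-\frac{1}{4}}$, and the $n=0$ braid recursion is checked by hand: $\phi_0(g^{(0)})=\phi_0(1)$ equals $L_\cap^{(1)}=g^{(1)}(0)$, so $h^{(0)}=1$ and $\lambda_0=1\in\mathcal{T}_{0,1}$. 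The main obstacle is the careful bookkeeping of $t^{\frac{1}{4}}$-powers in the fully nested component comparison; once Lemma \ref{unique2} has reduced the braid recursion to matching a single polynomial coefficient, both implications fall out of the same computation.
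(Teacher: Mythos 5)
Your proposal follows the paper's approach closely: compute the fully nested component of $g^{(n+1)}(z_1,\ldots,z_n,0)$ by setting $z_{n+1}=0$ in the Vandermonde-type product, compute the fully nested component of $\phi_n(g^{(n)})$ via Lemma~\ref{intertwinerfullynested}, read off $h^{(n)}$ by dividing, read off $\lambda_n=q^{-1}$ from $\rho(z_1\cdots z_n)=q^{-1}z_1\cdots z_n$, and use the uniqueness Lemma~\ref{unique2} to promote matching of a single component to the full braid recursion. All of that is correct and is exactly what the paper does.

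However, there is a genuine gap in the step where you pin down $q$. You obtain the twist-parameter compatibility $c_{n+1}=-t^{3/4}q^{-1}c_n$, and from $c_1=1$ you iterate to $c_n=(-t^{3/4}q^{-1})^{n-1}$. You then assert that "iterating the recursion then demands $-t^{3/4}q^{-1}=-t^{-3/4}$," but iteration alone puts no constraint whatsoever on the relationship between $q$ and $t$: for any $q\in\mathbb{C}^*$ one obtains a perfectly consistent sequence $(c_n)$. The missing input is the relation
\[
(c_n)^n\,g^{(n)}(\mathbf{z})=\sigma_n(\rho^n)\,g^{(n)}(q^{-1}z_1,\ldots,q^{-1}z_n)=q^{-\frac{1}{2}n(n-1)}g^{(n)}(\mathbf{z}),
\]
which uses that $\rho^n$ acts as the identity on the link pattern module $V_n$ (with $v=1$) together with the homogeneity of $g^{(n)}$ of total degree $\frac{1}{2}n(n-1)$. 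This forces $(c_n)^n=q^{-\binom{n}{2}}$. Only by combining this constraint with $c_n=(-t^{3/4}q^{-1})^{n-1}$ does one obtain $(t^{3/2}q^{-2})^{\frac{1}{2}n(n-1)}=q^{-\frac{1}{2}n(n-1)}$ for all $n$, hence $q=t^{3/2}$ and consequently $c_n=(-t^{-3/4})^{n-1}$. Without invoking the centrality/identity action of $\rho^n$ and homogeneity, the implication $\mathbf{(a)}\Rightarrow\mathbf{(b)}$ cannot be closed.
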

\begin{proof}
Note that for $n\geq 1$,
\begin{equation}\label{degreec}
(c_n)^ng^{(n)}(\mathbf{z})=\rho^ng^{(n)}(q^{-1}z_1,\ldots,q^{-1}z_n)=q^{-\frac{1}{2}n(n-1)}g^{(n)}(\mathbf{z})
\end{equation}
since $\rho^n$ acts as the identity on $V_n$ and $g^{(n)}$ is homogeneous of total degree
$\frac{1}{2}n(n-1)$. Hence, $(c_n)^n=q^{-\frac{1}{2}n(n-1)}$ ($n\geq 1$).
Furthermore, $c_1=1$ since $g^{(1)}$ is constant.

By the rank descent lemma, we have
\[
g^{(n+1)}(z_1,\ldots,z_n,0)\in\textup{Sol}_n\bigl(V_{n+1}^{\mathcal{I}_n};q,-t^{-\frac{3}{4}}c_{n+1}),
\]
while the representation lift lemma gives 
$\phi_n(g^{(n)}(z_1,\ldots,z_n))\in\textup{Sol}_n\bigl(V_{n+1}^{\mathcal{I}_n};q,c_n)$.
The fully nested component of $g^{(n+1)}(z_1,\ldots,z_n,0)$ is
\[
g_{L_\cap^{(n+1)}}^{(n+1)}(z_1,\ldots,z_n,0)=\bigl(-t^{-\frac{1}{2}}\bigr)^nz_1z_2\cdots z_n
\prod_{1\leq i<j\leq n}\left(t^{\frac{1}{2}}z_j-t^{-\frac{1}{2}}z_i\right).
\]
Using Lemma \ref{intertwinerfullynested}, the fully nested component of 
$\phi_n(g^{(n)}(z_1,\ldots,z_n))$ is 
\[
t^{-\frac{1}{4}\lfloor n/2\rfloor}\prod_{1\leq i<j\leq n}\left(t^{\frac{1}{2}}z_j-t^{-\frac{1}{2}}z_i\right).
\]
{\bf (a)}$\Rightarrow${\bf (b)}: assume that $(g^{(n)})_{n\geq 0}$ is a qKZ tower of solutions.
Then, the above analysis of the fully nested components implies that $\lambda_n=q^{-1}$ and 
\[
h^{(n)}(\mathbf{z})=t^{\frac{1}{4}(\lfloor n/2\rfloor-2n)}z_1z_2\cdots z_n
\]
for $n\geq 1$,
while $\lambda_0=1$, $h^{(0)}=1$ for $n=0$. Hence, the corresponding braid recursion takes on the explicit form \eqref{braidrecursionrelation}. Note that $c_0=t^{\frac{1}{4}}+t^{-\frac{1}{4}}$ since $g^{(0)}=1$. For $n\geq 1$, the left-hand side of \eqref{braidrecursionrelation} lies in 
$\textup{Sol}_n(V_{n+1}^{\mathcal{I}_n};q,-t^{-\frac{3}{4}}c_{n+1})$, while the right-hand side lies
in $\textup{Sol}_n(V_{n+1}^{\mathcal{I}_n};q,q^{-1}c_n)$, hence, the twist parameters $c_n$ must satisfy
$c_{n+1}=-q^{-1}t^{\frac{3}{4}}c_n$ ($n\geq 1$). Since $c_1=1$, we conclude that
\[
c_n=\bigl(-q^{-1}t^{\frac{3}{4}}\bigr)^{n-1},\qquad n\geq 1.
\]
Combined with \eqref{degreec} we obtain for $n\geq 1$,
\[
\bigl(q^{-2}t^{\frac{3}{2}}\bigr)^{\frac{1}{2}n(n-1)}=q^{-\frac{1}{2}n(n-1)},
\]
which is satisfied if and only if $q=t^{\frac{3}{2}}$. It follows that $c_n=(-t^{-\frac{3}{4}})^{n-1}$
for $n\geq 1$, as desired.\\
{\bf (b)}$\Rightarrow${\bf (a)}: in view of Lemmas \ref{unique1} and \ref{unique2} we only have to show that under the parameter conditions as stated in {\bf (b)}, the fully nested components of the left and right-hand side of \eqref{braidrecursionrelation} match. This can be confirmed by a direct computation.
\end{proof}

We can now state the main theorem of the paper. 
\begin{theorem}\label{mainTHM}
Let $t^{\frac{1}{4}}\in\mathbb{C}^*$ with $(t^{\frac{1}{2}}+1)(t+1)\not=0$ and set $v=1$, $q=t^{\frac{3}{2}}$. 
There exists, for all $n\geq 1$, 
a unique solution $g^{(n)}(\mathbf{z})\in\textup{Sol}_n\bigl(V_n;t^{\frac{3}{2}}, (-t^{-\frac{3}{4}})^{n-1}\bigr)$
homogeneous of total degree $\frac{1}{2}n(n-1)$, such that 
\[
g^{(n)}_{L_\cap}(\mathbf{z})=\prod_{1\leq i<j\leq n}\left(t^{\frac{1}{2}}z_j-t^{-\frac{1}{2}}z_i\right).
\]
Then, $(g^{(n)})_{n\geq 0}$, with $g^{(0)}:=1\in\textup{Sol}_0(V_0;t^{\frac{3}{2}},t^{\frac{1}{4}}+t^{-\frac{1}{4}})$, is a qKZ tower of solutions, with 
the associated braid recursion relations given by \eqref{braidrecursionrelation}.
\end{theorem}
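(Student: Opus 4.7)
The plan is to split the proof into two essentially independent tasks: (I) existence and uniqueness of a solution $g^{(n)}\in\textup{Sol}_n(V_n;t^{3/2},(-t^{-3/4})^{n-1})$ that is homogeneous of total degree $\frac{1}{2}n(n-1)$ and has the prescribed fully nested component, and (II) the qKZ tower property together with the explicit braid recursion relations \eqref{braidrecursionrelation}. Once (I) is settled, (II) is essentially free; (I) is where the real work lies.

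For the uniqueness part of (I), the argument is immediate from Lemma \ref{unique1}(a): if $g_1^{(n)}$ and $g_2^{(n)}$ are two candidates, their difference lies in the same solution space with vanishing fully nested component and is therefore zero. The standing hypothesis $(t^{1/2}+1)(t+1)\neq 0$, which corresponds to nonvanishing loop weights, is precisely what makes Lemma \ref{unique1} applicable.

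Existence is the main obstacle. My plan is first to produce \emph{some} nonzero $\widetilde{g}^{(n)}\in\textup{Sol}_n(V_n;t^{3/2},(-t^{-3/4})^{n-1})$ homogeneous of total degree $\frac{1}{2}n(n-1)$, and then to invoke Corollary \ref{degreeleading} together with Lemma \ref{unique1}(a) to conclude that its fully nested component equals $\kappa\prod_{1\leq i<j\leq n}(t^{1/2}z_j-t^{-1/2}z_i)$ for some $\kappa\in\mathbb{C}^*$ (nonzero, or else $\widetilde{g}^{(n)}=0$); rescaling then yields the desired $g^{(n)}$. To construct $\widetilde{g}^{(n)}$ I would use the Cherednik-Matsuo correspondence. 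This requires first identifying $V_n$ with a principal series representation of $\mathcal{H}_n$ (the content of Theorem \ref{principalTHM}), which then allows one to transport specialised common eigenfunctions of Cherednik's $Y$-operators --- namely specialised nonsymmetric Macdonald polynomials --- into $V_n$-valued polynomial solutions of the qKZ equations. The technically subtle point is that at $q=t^{3/2}$ the two parameters of the underlying double affine Hecke algebra satisfy an algebraic relation that destroys semisimplicity of the $Y$-action; one needs Kasatani's \cite{Kasatani:2005aa} refinement of Macdonald theory to make sense of the construction at this specialization and to verify that the output is nonzero and has total degree exactly $\frac{1}{2}n(n-1)$. This is where the hardest technical work lies, and is the content of Sections 5 and 6 (with Section 5 providing a more direct construction via the O(1) loop model ground state when $t$ is a primitive third root of unity).

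With (I) established, (II) is immediate from Lemma \ref{CRUClem}: the parameter values $q=t^{3/2}$ and $c_n=(-t^{-3/4})^{n-1}$ satisfy condition (b) of that lemma, and the proof of the implication (b)$\Rightarrow$(a) already verifies, via Lemma \ref{intertwinerfullynested}, that the fully nested components of the two sides of \eqref{braidrecursionrelation} coincide. The case $n=0$ is tautological, as $V_0$ is one-dimensional and $g^{(0)}=1$.
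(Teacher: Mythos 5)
Your proposal follows essentially the same route as the paper: uniqueness from Lemma \ref{unique1}(a), existence via the Cherednik-Matsuo correspondence applied to specialised dual nonsymmetric Macdonald polynomials (requiring Theorem \ref{principalTHM} to identify $V_n$ with a principal series module and Kasatani's results to handle the non-semisimple specialization $q=t^{3/2}$), and then the tower property by plugging into Lemma \ref{CRUClem}(b)$\Rightarrow$(a). That overall architecture is correct.

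There is one genuine gap, however. The Cherednik-Matsuo machinery --- Theorem \ref{principalTHM} (which requires $I_{w_n}Q_n\neq 0$, calibration of $V_n$, and pairwise distinct weights $w\gamma$), the nonvanishing $Q_n\neq 0$ of Lemma \ref{Qnotzerolemma}, and Kasatani's specialization results on $E_{\lambda^{(n)}}$ --- is established only for \emph{generic} $t^{\frac{1}{4}}$, a nonempty Zariski open subset of $\mathbb{C}^*$. The theorem, by contrast, is asserted for every $t^{\frac{1}{4}}$ with $(t^{\frac{1}{2}}+1)(t+1)\neq 0$, which is a strictly larger set than the generic one. Your proposal presents the CM construction as if it directly yields existence on the full stated parameter range, which it does not. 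The missing step is the specialization argument: once $g^{(n)}(\mathbf{z})$ is constructed over $\mathbb{C}(t^{\frac{1}{4}})$ with the normalized fully nested component $\prod_{i<j}(t^{\frac{1}{2}}z_j-t^{-\frac{1}{2}}z_i)$ (manifestly regular in $t^{\frac{1}{4}}$), one must argue, via the recursion relations underlying Lemma \ref{unique1} and Appendix A, that every other component $g_L^{(n)}(\mathbf{z})$ is obtained by dividing only by the factors $\gamma^{(i)}_{L',L}\in\{1,\,-(t^{\frac{1}{2}}+t^{-\frac{1}{2}}),\,t^{\frac{1}{4}}+t^{-\frac{1}{4}}\}$, which are nonzero precisely when the loop weights are nonzero; hence all components are regular and the solution specializes to every $t^{\frac{1}{4}}$ with $(t^{\frac{1}{2}}+1)(t+1)\neq 0$. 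Without this, your existence argument only covers generic $t^{\frac{1}{4}}$ plus the single value treated in Section 5. (A minor omission: the Cherednik-Matsuo apparatus also requires $n\geq 2$; for $n=1$ one simply takes $g^{(1)}\equiv 1$, and the empty-product convention makes the fully nested component $1$ as required.)
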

The proof of the theorem will be given in Section \ref{sec-CM}. The key step is 
the construction of $g^{(n)}(\mathbf{z})$ for generic $t^{\frac{1}{4}}\in\mathbb{C}^*$ in terms of specialised non-symmetric dual Macdonald polynomials using the Cherednik-Matsuo correspondence \cite{Stokman:2011aa} and using results of Kasatani \cite{Kasatani:2005aa}.  The generic conditions on $t^{\frac{1}{4}}$ can then be removed by noting that the constructed solution $g^{(n)}(\mathbf{z})$ is well defined over $\mathbb{C}(t^{\frac{1}{4}})$ and the fact that the coefficients $g_L^{(n)}(\mathbf{z})$
for $L\in\mathcal{L}_n$ are regular at the values $t^{\frac{1}{4}}\in\mathbb{C}^*$ for which $(t^{\frac{1}{2}}+1)(t+1)\not=0$. Indeed, $g_{L_\cap}^{(n)}(\mathbf{z})$ is clearly regular at $t^{\frac{1}{4}}\in\mathbb{C}^*$. By the recursion relations expressing $g_L^{(n)}(\mathbf{z})$ in terms of $g_{L_\cap}^{(n)}(\mathbf{z})$
(see the proof of Lemma \ref{unique1} and Appendix A), it then follows inductively that 
all coefficients $g_L^{(n)}(\mathbf{z})$ ($L\in\mathcal{L}_n$) are regular at the values 
$t^{\frac{1}{4}}\in\mathbb{C}^*$ 
for which $(t^{\frac{1}{2}}+1)(t+1)\not=0$.

\begin{remark}\label{remTHM}
Note that for $t^{\frac{1}{4}}=\exp(\pi \mathrm{i}/3)$, we have $t^{\frac{3}{2}}=1$
and 
\[
-t^{\frac{1}{2}}-t^{-\frac{1}{2}}=1=t^{\frac{1}{4}}+t^{-\frac{1}{4}}.
\] 
The resulting
qKZ tower of solutions $(g^{(n)})_{n\geq 0}$ from Theorem \ref{mainTHM} is closely related to the inhomogeneous
dense O(1) loop model on the half-infinite cylinder, see Section \ref{sec-groundstate} and
\cite{Di-Francesco:2006aa}. In fact, the constituents 
$g^{(n)}\in\textup{Sol}_n\bigl(V_n;1,1\bigr)$ then are the renormalized ground states of the
inhomogeneous O(1) dense loop models on the half-infinite cylinder. In this case, the braid
recursion relations reduce to 
\begin{equation*}
\begin{split}
g^{(2k)}(z_1,\ldots,z_{2k-1},0)&=(-1)^kt^{-\frac{1}{2}}z_1\cdots z_{2k-1}\phi_{2k-1}(
g^{(2k-1)}(z_1,\ldots,z_{2k-1})),\\
g^{(2k+1)}(z_1,\ldots,z_{2k},0)&=(-1)^kz_1\cdots z_{2k}\phi_{2k}\bigl(g^{2k)}(z_1,\ldots,z_{2k})).
\end{split}
\end{equation*}
\end{remark}

\section{Existence of solution for $t^{\frac{1}{4}}= \exp(\pi \mathrm{i} /3)$} \label{sec-groundstate}

In this section we recall the construction of the polynomial solutions $g^{(n)}(\mathbf{z}) \in \textup{Sol}_n(V_n(1);1,1)$ of degree $\frac{1}{2}n(n-1)$ for $v=1$ and $t^{\frac{1}{4}}=\exp(\pi \mathrm{i}/3)$ (see Theorem \ref{mainTHM}). In this special case, the construction of the qKZ tower of solutions is facilitated by the fact that the underlying integrable model, the inhomogeneous dense O(1) loop model on the half-infinite cylinder, is stochastic. This allows one to construct $g^{(n)}(\mathbf{z})$ as a suitably renormalized version of the ground state of the inhomogeneous dense O(1) loop model, following \cite{Di-Francesco:2006aa}.

The section begins with discussing the Temperley-Lieb transfer operator and then we specialize the analysis to the inhomogeneous dense O(1) loop model on the half-infinite cylinder.
In this section $v=1$. 

\subsection{Transfer operator}

The transfer operator $\widehat{T}^{(n)} :=\widehat{T}(x; z_1,\ldots,z_n) : \C[\mathcal{L}_n] \to \C[\mathcal{L}_n] $ can be defined as follows \cite{Di-Francesco:2005aa,Di-Francesco:2006aa}.
For $n>0$ consider the following two tiles 
$$ \ltileb \hspace{3cm} \ltilea $$
which we denote by $\tau^\text{nw}$ and $\tau^\text{ne}$, respectively, where `nw' and `ne' indicate that the north edge of the tile is connected to the west or east edge by an arc.
Then, $\widehat{T}^{(n)}(x;\mathbf{z})=\widehat{T}^{(n)}(x;z_1,\ldots,z_n)$ is defined by 
$$ 
\widehat{T}^{(n)} (x;\mathbf{z}):=\sum_{\tau_1,\ldots, \tau_n}\left (\prod_{i=1}^n P_{\tau_i}(x/z_i)\right)  
\psset{unit=0.6}
\begin{pspicture}[shift=-1.8](-2.2,-2)(2,2)
    	\pscircle[fillstyle=solid,fillcolor= diskin,linewidth=1.2pt](0,0){2}
    	\pscircle[fillstyle=solid, fillcolor=white, linewidth=1.2pt](0,0){1}
    	\degrees[14]
     	\psline(1;0)(2;0)
	\psline(1;1)(2;1)
	\psline(1;2)(2;2)
	\psline(1;10)(2;10)
	\psline(1;11)(2;11)
	\psarc[linestyle=dotted,linewidth=1.2pt](0,0){1.5}{3}{9}
	\psarc[linestyle=dotted,linewidth=1.2pt](0,0){1.5}{12}{13}
	\rput{4}(1.5;0.5){\small$\tau_1$}
	\rput{5}(1.5;1.5){\small$\tau_2$}
	\rput{0}(1.5;10.5){\small$\tau_i$}
   \end{pspicture}  
   $$
where $\tau_i \in \{\tau^\text{nw}, \tau^\text{ne}\}$, 
\begin{equation*}
\begin{split}
P_{\tau^\text{nw}}(x/z_i) &= a(x/z_i)  =  \frac{x -z_i}{t^{\frac{1}{2}}z_i-t^{-\frac{1}{2}} x},\\
P_{\tau^\text{ne}}(x/z_i) &= b(x/z_i) = \frac{t^{\frac{1}{2}}x - t^{-\frac{1}{2}}z_i}{t^{\frac{1}{2}}z_i -t^{-\frac{1}{2}} x}.
\end{split}
\end{equation*}
Note that the  inner boundary of the annulus is always taken as the north edge of the tile.
Moreover, for the case $n=1$, tiling the annulus is done by stretching the tile so that the east and west edges are identified.
The string of tiles covering the annulus can immediately be interpreted as an element in $\mathcal{S}_n(t^{\frac{1}{4}})$. Hence, by the algebra isomorphism $\theta_n: \mathcal{TL}_n(t^{\frac{1}{2}})\overset{\sim}{\longrightarrow}
\textup{End}_{\mathcal{S}\left(t^{\frac{1}{4}}\right)}(n)$ we have $\widehat{T}^{(n)}(x;\mathbf{z}) \in \C(x,\mathbf{z}) \otimes \mathcal{TL}_n(t^{\frac{1}{2}}).$

The case $n=0$ is special.
We define $\widehat{T}^{(0)} := \theta_0(X)$ (recall that $\mathcal{TL}_0=\mathbb{C}[X]$).
We also point out that since $\mathcal{TL}_1=\mathbb{C}[\rho,\rho^{-1}]$ we have 
$$\widehat{T}^{(1)} (x;z_1)=\frac{x -z_1}{t^{\frac{1}{2}}z_1-t^{-\frac{1}{2}}x} \theta_1(\rho^{-1})+ \frac{t^{\frac{1}{2}}x - t^{-\frac{1}{2}}z_1}{t^{\frac{1}{2}}z_1 -t^{-\frac{1}{2}} x} \theta_1(\rho).$$

We will drop the isomorphism $\theta_n$ when it is clear from context.
Using diagrams we write the $R$-operator as
\begin{equation} \label{Rop}
\psset{unit=0.8}
R_i(z_{i+1}/z_i) = \frac{z_{i+1} -z_i}{t^{\frac{1}{2}}z_i-t^{-\frac{1}{2}} z_{i+1}}
\begin{pspicture}[shift=-1.65](-1.75,-1.75)(1.75,1.5)
    	\SpecialCoor
    	\pscircle[fillstyle=solid,fillcolor = diskin,linewidth=1.2pt](0,0){1.25}
    	\pscircle[fillstyle=solid, fillcolor=white, linewidth=1.2pt](0,0){0.5}
    	\degrees[17]
     	\rput{0}(1.5;0){\tiny $z_1$}
	\rput{0}(1.7;9.9){\tiny $z_{i-1}$}
    	\rput{0}(1.6;10.9){\tiny $z_i$}
	\rput{0}(1.6;12){\tiny $z_{i+1}$}
	\rput{0}(1.6;13.3){\tiny $z_{i+2}$}
     	\rput{0}(0.25;0){\tiny $1$}
    	\rput{0}(0.25;10.8){\tiny $i$}
	\psline[linecolor=line,linewidth=1.2pt](0.5;0)(1.25;0)
	\psline[linecolor=line,linewidth=1.2pt](0.5;10)(1.25;10)
	\psline[linecolor=line,linewidth=1.2pt](0.5;13)(1.25;13)
	\pscurve[linecolor=line,linewidth=1.2pt](0.5;11)(0.6;11)(0.75;11.5)(0.6;12)(0.5;12)
	\pscurve[linecolor=line,linewidth=1.2pt](1.25;11)(1.15;11)(0.9;11.5)(1.15;12)(1.25;12)
	\psarc[linestyle=dotted, linecolor=line,linewidth=1.2pt](0,0){0.85}{1}{9}
	\psarc[linestyle=dotted, linecolor=line,linewidth=1.2pt](0,0){0.85}{14}{16}
   \end{pspicture}
    + \frac{t^{\frac{1}{2}}z_{i+1} - t^{-\frac{1}{2}}z_i}{t^{\frac{1}{2}}z_i -t^{-\frac{1}{2}} z_{i+1}} 
    \begin{pspicture}[shift=-1.65](-1.75,-1.75)(1.75,1.5)
    	\SpecialCoor
    	\pscircle[fillstyle=solid,fillcolor = diskin,linewidth=1.2pt](0,0){1.25}
    	\pscircle[fillstyle=solid, fillcolor=white, linewidth=1.2pt](0,0){0.5}
    	\degrees[17]
    	\rput{0}(1.5;0){\tiny $z_1$}
	\rput{0}(1.7;9.9){\tiny $z_{i-1}$}
    	\rput{0}(1.6;10.9){\tiny $z_i$}
	\rput{0}(1.6;12){\tiny $z_{i+1}$}
	\rput{0}(1.6;13.3){\tiny $z_{i+2}$}
     	\rput{0}(0.25;0){\tiny $1$}
    	\rput{0}(0.25;10.8){\tiny $i$}
	\psline[linecolor=line,linewidth=1.2pt](0.5;0)(1.25;0)
	\psline[linecolor=line,linewidth=1.2pt](0.5;10)(1.25;10)
	\psline[linecolor=line,linewidth=1.2pt](0.5;13)(1.25;13)
	\psline[linecolor=line,linewidth=1.2pt](0.5;12)(1.25;12)
	\psline[linecolor=line,linewidth=1.2pt](0.5;11)(1.25;11)
	\psarc[linestyle=dotted, linecolor=line,linewidth=1.2pt](0,0){0.85}{1}{9}
	\psarc[linestyle=dotted, linecolor=line,linewidth=1.2pt](0,0){0.85}{14}{16}
   \end{pspicture},
\end{equation}
and also as 
\[R_i(z_{i+1}/z_i)
	 =
	  \begin{pspicture}[shift=-1.65](-1.75,-1.75)(2,1.5)
    	\SpecialCoor
    	\pscircle[fillstyle=solid,fillcolor = diskin,linewidth=1.2pt](0,0){1.25}
    	\pscircle[fillstyle=solid, fillcolor=white, linewidth=1.2pt](0,0){0.5}
    	\degrees[17]
    	\rput{0}(1.5;0){\tiny $z_1$}
	\rput{0}(1.7;9.9){\tiny $z_{i-1}$}
    	\rput{0}(1.6;10.9){\tiny $z_i$}
	\rput{0}(1.6;12){\tiny $z_{i+1}$}
	\rput{0}(1.6;13.3){\tiny $z_{i+2}$}
     	\rput{0}(0.25;0){\tiny $1$}
    	\rput{0}(0.25;10.8){\tiny $i$}
	\psline[linecolor=line,linewidth=1.2pt](0.5;0)(1.25;0)
	\psline[linecolor=line,linewidth=1.2pt](0.5;10)(1.25;10)
	\psline[linecolor=line,linewidth=1.2pt](0.5;13)(1.25;13)
	\pscurve[linecolor=line,linewidth=1.2pt](0.5;11)(0.6;11)(0.85;11.5)(1.15;12)(1.25;12)
	\pscurve[linecolor=line,linewidth=1.2pt](1.25;11)(1.15;11)(0.85;11.5)(0.6;12)(0.5;12)
	\psarc[linestyle=dotted, linecolor=line,linewidth=1.2pt](0,0){0.85}{1}{9}
	\psarc[linestyle=dotted, linecolor=line,linewidth=1.2pt](0,0){0.85}{14}{16}
   \end{pspicture} 
   \]
where we view the crossing in the annulus as a weighted sum of the two diagrams given in \eqref{Rop}. 
Using the diagram description of the $R$-operator, the Yang-Baxter equations and inversion relation (lines 1 and 3 of  \eqref{ybe}) can be depicted as
\begin{equation}\label{ybed}
\psset{unit=0.6cm}
\begin{pspicture}[shift=-3.5](11,7)
\psline*[linecolor=diskin](1.5,5.5)(2.5,6)(4.5,4.5)(4.5,2.5)(2.5,1)(1.5,1.5)
\psline[linestyle=dotted,linewidth=0.8pt](1.5,5.5)(2.5,6)(4.5,4.5)(4.5,2.5)(2.5,1)(1.5,1.5)(1.5,5.5)
\psline[linewidth=1.2pt,linecolor=line](1.5,5.5)(4.5,2.5)
\psline[linewidth=1.2pt,linecolor=line](4.5,4.5)(1.5,1.5)
\psline[linewidth=1.2pt,linecolor=line](2.5,6)(2.5,1)
\rput(1,6){$x$}
\rput(2.5,6.5){$y$}
\rput(5,5){$z$}
\rput(1,1){$z$}
\rput(2.5,0.5){$y$}
\rput(5,2){$x$}
\rput(5.5,3.5){$=$}
\psline*[linecolor=diskin](9.5,5.5)(8.5,6)(6.5,4.5)(6.5,2.5)(8.5,1)(9.5,1.5)
\psline[linestyle=dotted,linewidth=0.8pt](9.5,5.5)(8.5,6)(6.5,4.5)(6.5,2.5)(8.5,1)(9.5,1.5)(9.5,5.5)
\psline[linewidth=1.2pt,linecolor=line](9.5,5.5)(6.5,2.5)
\psline[linewidth=1.2pt,linecolor=line](6.5,4.5)(9.5,1.5)
\psline[linewidth=1.2pt,linecolor=line](8.5,6)(8.5,1)
\rput(10,6){$z$}
\rput(8.5,6.5){$y$}
\rput(6,5){$x$}
\rput(10,1){$x$}
\rput(8.5,0.5){$y$}
\rput(6,2){$z$}
\end{pspicture}
\text{ and } \qquad %
\begin{pspicture}[shift=-3](8,6)
\psframe*[linecolor=diskin](0.5,1)(3.5,5)
\psline[linestyle=dotted,linewidth=0.8pt](0.5,1)(3.5,1)(3.5,5)(0.5,5)(0.5,1)
\psline[linewidth=1.2pt,linecolor=line](1,1)(3,3)(1,5)
\psline[linewidth=1.2pt,linecolor=line](3,1)(1,3)(3,5)
\rput(1,0.5){$x$}
\rput(1,5.5){$x$}
\rput(3,0.5){$y$}
\rput(3,5.5){$y$}
\rput(4,3){$=$}
\psframe*[linecolor=diskin](4.5,1)(7.5,5)
\psline[linestyle=dotted,linewidth=0.8pt](4.5,1)(7.5,1)(7.5,5)(4.5,5)(4.5,1)
\psline[linewidth=1.2pt,linecolor=line](5,1)(5,5)
\psline[linewidth=1.2pt,linecolor=line](7,1)(7,5)
\rput(7,0.5){$y$}
\rput(7,5.5){$y$}
\rput(5,0.5){$x$}
\rput(5,5.5){$x$}
\end{pspicture}
\end{equation}
respectively.
The area within the dotted lines is a local neighbourhood in the annulus.

The transfer operator can now be defined in terms of the $R$-operators,
$R_i(x)$ for $ i \in \Z/n\Z$, as follows.
Let 
$$ M^{(n)}_0(x;\mathbf{z}) := \rho R_{n-1}(x/z_{n}) R_{n-2}(x/z_{n-1}) \cdots R_0(x/z_{1})   \in \mathcal{TL}_{n+1} $$ 
be the monodromy operator where we view the auxiliary point as $n+1\equiv 0$ (modulo $n+1$).
Then,
$$ \widehat{T}^{(n)}(x;\mathbf{z}) := \text{cl}_0\left( M^{(n)}_0(x;\mathbf{z})\right) $$
where $\text{cl}_0$ corresponds to the tangle closure \cite{Goodman:2006aa} at the auxiliary point $0$.
In this specific case $\text{cl}_0$ amounts to disconnecting the two arcs from the  inner- and outer boundary points labeled `0' and connecting them in $\textup{End}_{\mathcal{S}}(n)$ by an arc that under-crosses all arcs one meets. 

The transfer operators with different values of $x$ commute in $\mathcal{TL}_n$,
\[
\lbrack \widehat{T}_n(x;\mathbf{z}),\widehat{T}_n(x^\prime,\mathbf{z})\rbrack=0.
\]
This can be shown by interlacing two T operators with $R$-operators.
In the literature, it is usually shown diagrammatically using the inversion relation and Yang-Baxter equation \eqref{ybed} of the $R$-operators.
For an example of this technique, we refer the reader to \cite{Di-Francesco:2005aa} for dense loop models and \cite{Baxter:1982aa} in general.
Using the Yang-Baxter equation and the relations involving $\rho$ (see \eqref{relTL}) one shows that 
\begin{equation}\label{eq-rttr}
\begin{split}
R_i(z_{i+1}/z_i)  \widehat{T}^{(n)}(x; \ldots,z_{i+1},z_{i},\ldots) &= \widehat{T}^{(n)}(x;\mathbf{z}) R_i(z_{i+1}/z_i),\\
\rho \widehat{T}^{(n)}(x; z_{2},\ldots,z_{n},z_1) &= \widehat{T}^{(n)}(x; \mathbf{z}) \rho.
\end{split}
\end{equation}

In  \cite{Di-Francesco:2006aa} the authors made the crucial observation that the $R$-operators 
$R_i(0), R_i(\infty)\in\mathcal{TL}_n$ can be interpreted as a single crossing in the skein description of the element, 
$$R_{i}(0) = -t^{-\frac{3}{4}}  
	   \begin{pspicture}[shift=-1.4](-1.5,-1.5)(2,1.5)
    	\pscircle[fillstyle=solid,fillcolor = diskin,linewidth=1.2pt](0,0){1.25}
    	\pscircle[fillstyle=solid, fillcolor=white, linewidth=1.2pt](0,0){0.5}
    	\degrees[18]
	\rput{0}(1.5;0){\tiny $1$}
	\rput{0}(0.25;0){\tiny $1$}
     	\rput{0}(1.5;12){\tiny $i$}
	\rput{0}(1.5;13){\tiny $i+1$}
	 \rput{0}(0.25;12){\tiny $i$}
	   \psline[linecolor=line,linewidth=1.2pt](1.25;0)(0.5;0)
	\psline[linecolor=line,linewidth=1.2pt](1.25;11)(0.5;11)
	\psline[linecolor=line,linewidth=1.2pt](1.25;14)(0.5;14)
	\psline[linecolor=line,linewidth=1.2pt](1.25;15)(0.5;15)
	\pscurve[linecolor=line,linewidth=1.2pt](1.25;13)(1.15;13)(0.8;12.5)(0.6;12)(0.5;12)
	\psdot[linecolor=diskin,dotsize=0.25](0.8;12.5)
	\pscurve[linecolor=line,linewidth=1.2pt](1.25;12)(1.15;12)(0.8;12.5)(0.6;13)(0.5;13)
	\psarc[linestyle=dotted, linecolor=line,linewidth=1.2pt](0,0){0.8}{1}{10}
	\psarc[linestyle=dotted, linecolor=line,linewidth=1.2pt](0,0){0.8}{16}{17}
   \end{pspicture},  
   \qquad
   R_{i}(\infty) = -t^{\frac{3}{4}}  
	   \begin{pspicture}[shift=-1.4](-1.5,-1.5)(2,1.5)
    	\pscircle[fillstyle=solid,fillcolor = diskin,linewidth=1.2pt](0,0){1.25}
    	\pscircle[fillstyle=solid, fillcolor=white, linewidth=1.2pt](0,0){0.5}
    	\degrees[18]
	\rput{0}(1.5;0){\tiny $1$}
	\rput{0}(0.25;0){\tiny $1$}
     	\rput{0}(1.5;12){\tiny $i$}
	\rput{0}(1.5;13){\tiny $i+1$}
	 \rput{0}(0.25;12){\tiny $i$}
	   \psline[linecolor=line,linewidth=1.2pt](1.25;0)(0.5;0)
	\psline[linecolor=line,linewidth=1.2pt](1.25;11)(0.5;11)
	\psline[linecolor=line,linewidth=1.2pt](1.25;14)(0.5;14)
	\psline[linecolor=line,linewidth=1.2pt](1.25;15)(0.5;15)
	\pscurve[linecolor=line,linewidth=1.2pt](1.25;12)(1.15;12)(0.8;12.5)(0.6;13)(0.5;13)
	\psdot[linecolor=diskin,dotsize=0.25](0.8;12.5)
	\pscurve[linecolor=line,linewidth=1.2pt](1.25;13)(1.15;13)(0.8;12.5)(0.6;12)(0.5;12)
	\psarc[linestyle=dotted, linecolor=line,linewidth=1.2pt](0,0){0.8}{1}{10}
	\psarc[linestyle=dotted, linecolor=line,linewidth=1.2pt](0,0){0.8}{16}{17}
   \end{pspicture}.$$
Consequently, 
$$ 
\widehat{T}^{(n)} (x; z_1,\ldots,z_{n-1},0)=  -t^{\frac{3}{4}}  \sum_{\tau_1,\ldots, \tau_{n-1}}\left (\prod_{i=1}^{n-1} P_{\tau_i}(x/z_i) \right)  
\psset{unit=0.6}
\begin{pspicture}[shift=-1.8](-2.2,-2)(2.2,2)
    	\pscircle[fillstyle=solid,fillcolor= diskin,linewidth=1.2pt](0,0){2}
    	\pscircle[fillstyle=solid, fillcolor=white, linewidth=1.2pt](0,0){1}
    	\degrees[14]
     	\psline(1;0)(2;0)
	\psline(1;2)(2;2)
	\psline(1;10)(2;10)
	\psline(1;12)(2;12)
	\psarc[linestyle=dotted,linewidth=1.2pt](0,0){1.5}{3}{9}
	\psarc[linestyle=dotted,linewidth=1.2pt](0,0){1.5}{12}{12}
	\psline[linecolor=line,linewidth=1.2pt](1;13)(2;13)
	\psdot[linecolor=diskin,dotsize=0.25](1.5;13)
	\psarc[linecolor=line,linewidth=1.2pt](0,0){1.5}{12}{14}
	\rput{5}(1.5;1){\small$\tau_1$}
	\rput{0}(1.5;11){\small$\tau_{n-1}$}
   \end{pspicture} .
   $$
Noting this over crossing and recalling the algebra map $\mathcal{I}_{n-1}: \mathcal{TL}_{n-1}
\rightarrow \mathcal{TL}_n$ arising from the arc insertion functor,
we obtain the following braid recursion relation for the transfer operator, which is due to \cite[\S 2.4]{Di-Francesco:2006aa}:
\begin{proposition} \label{tmat}
For $n\geq 1$,
$$ \widehat{T}^{(n)}(x;z_1,\ldots, z_{n-1},0) = -t^{\frac{3}{4}}\mathcal{I}_{n-1} \left( \widehat{T}^{(n-1)}(x;z_1,\ldots, z_{n-1})\right).$$
\end{proposition}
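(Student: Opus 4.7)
The plan is to start from the tile sum for $\widehat{T}^{(n)}(x;\mathbf{z})$ and evaluate at $z_n=0$. In the monodromy product $M^{(n)}_0(x;\mathbf{z})=\rho R_{n-1}(x/z_n)\cdots R_0(x/z_1)$, only the leftmost factor $R_{n-1}(x/z_n)$ depends on $z_n$, and at $z_n=0$ its argument becomes $\infty$. Using the limit formula displayed just before the proposition, $R_{n-1}(\infty)=-t^{3/4}$ times the overcrossing in the annulus at positions $(n-1,n)$. This already accounts for the prefactor $-t^{3/4}$ in the statement, so the remaining task is purely diagrammatic: show that the resulting closed tangle equals $\mathcal{I}_{n-1}(\widehat{T}^{(n-1)}(x;z_1,\ldots,z_{n-1}))$.

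Concretely, I would substitute into the tile-sum display for $\widehat{T}^{(n)}(x;z_1,\ldots,z_{n-1},0)$ given in the excerpt. The sum now ranges only over the first $n-1$ tile choices $\tau_1,\ldots,\tau_{n-1}$ with the same Boltzmann weights $P_{\tau_i}(x/z_i)$, while the $n$-th ``tile'' is replaced by a single over-crossing of the $n$-th strand over the ring of tiles. The closure $\mathrm{cl}_0$ at the auxiliary point then produces an extra strand joining the inner and outer boundary points labelled $n$, and this strand passes under the whole monodromy ring except at the position $(n-1,n)$ where it goes over once. Performing a planar isotopy fixing the boundary and applying a Reidemeister-II move at that single location reroutes the strand so that it uniformly under-crosses every tile in the ring.

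The final step is to recognise the result as $\mathcal{I}_{n-1}$ applied to the $(n-1)$-tile diagram $\prod_{i=1}^{n-1}P_{\tau_i}(x/z_i)$ on the annulus. By the identification of $\mathrm{End}_{\mathcal{S}}(n)$ with $\mathcal{TL}_n$ via $\theta_n$ and the compatibility $\mathcal{I}|_{\mathrm{End}_{\mathcal{S}}(n)}=\mathcal{I}_{n}$ cited from \cite[Prop.~8.3]{Al-Qasimi:2017aa}, the inserted under-crossing defect strand is precisely the image of the arc insertion functor on each tile diagram. Summing over $(\tau_1,\ldots,\tau_{n-1})$, interchanging the sum with the linear map $\mathcal{I}_{n-1}$, and pulling out the scalar $-t^{3/4}$ yields the claimed identity.

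The main technical point to check is the topological manipulation in the second paragraph: the single over-crossing produced by $R_{n-1}(\infty)$ must combine with the closure to reproduce exactly the winding convention of $\mathcal{I}$ (which, as emphasised in \cite[\S 6]{Al-Qasimi:2017aa}, is subtle). Once one verifies by an explicit Reidemeister-II isotopy that the closure-strand-plus-overcrossing is planar-isotopic to a single strand under-crossing all tiles in the prescribed direction, the rest is an immediate rewriting. No new algebraic identities are needed beyond the limit $R_{n-1}(\infty)=-t^{3/4}\,(\text{overcrossing})$ and the diagrammatic description of $\mathcal{I}_{n-1}$.
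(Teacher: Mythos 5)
Your high-level plan is the same as the paper's: substitute $z_n=0$, use the limit $R_{n-1}(\infty)=-t^{3/4}\cdot(\text{crossing})$ to extract the prefactor, and then identify the resulting skein diagram with $\mathcal{I}_{n-1}\bigl(\widehat{T}^{(n-1)}\bigr)$. The paper carries this out directly in the tile-sum form, drawing the resulting picture and observing that the leftover $n$-th strand is precisely the under-crossing arc that the arc-insertion functor $\mathcal{I}$ would produce; no further isotopy is invoked.

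However, your intermediate topological description in the second and third paragraphs is not correct. The closure $\mathrm{cl}_0$ acts at the auxiliary point $0$, not at position $n$; it does not create an ``extra strand joining the inner and outer boundary points labelled $n$.'' The $n$-th physical strand is a radial arc, and since the auxiliary ring winds around the annulus exactly once, the $n$-th strand meets it exactly once, not ``under the whole ring except at one place where it goes over once.'' A Reidemeister-II move removes or creates a \emph{pair} of opposite crossings; it cannot reverse the over/under sense of a single crossing, so the step ``applying a Reidemeister-II move at that single location reroutes the strand'' would fail as stated. In fact no isotopy is needed: tracing the monodromy through the $R$-matrix picture, the auxiliary strand at $R_{n-1}$ goes from inner-$(n-1)$ to outer-$n$ and, in the paper's $R_{n-1}(\infty)$ formula, this arc passes \emph{over} the physical strand from inner-$n$ to outer-$(n-1)$; so the $n$-th physical strand goes \emph{under} the auxiliary loop at its single crossing, which already coincides with the under-crossing winding convention of $\mathcal{I}_{n-1}$. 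Once you fix the over/under bookkeeping, the identification is immediate, term by term in the tile sum, exactly as the paper presents it.
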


\subsection{The inhomogeneous dense O(1) loop model}
The transfer operator $\widehat{T}^{(n)}(x;\mathbf{z})\in\mathcal{TL}_n$ acting
on the link pattern tower representation $V_n$ in the special case $v=1$ is by definition the
transfer operator $T^{(n)}(x;\mathbf{z})\in\textup{End}(V_n)$
of the inhomogeneous dense $\textup{O}(-t^{\frac{1}{2}}-t^{-\frac{1}{2}})$ loop
model on the punctured disc \cite{Di-Francesco:2006aa,Kasatani:2007aa}. 
We specialize in this section further to the case
$t^{\frac{1}{4}}=\exp(\pi \mathrm{i}/3)$, in which case
\[
-t^{\frac{1}{2}} - t^{-\frac{1}{2}} = 1= t^{\frac{1}{4}} + t^{-\frac{1}{4}}.
\]
This means that all loops can be removed by a factor $1$.
As we shall discuss in a moment, the resulting $\textup{O}(1)$-model is not only Bethe integrable but also stochastic. We identfy $V_n$ with $\mathbb{C}[\mathcal{L}_n]$ as vector spaces (see the end
of Section \ref{EATLsection}).

In \cite{Di-Francesco:2006aa} the authors stated the existence and uniqueness of a suitably normalized ground state of the inhomogeneous dense $\textup{O}(1)$ loop model, with $\mathbf{z}$ regarded as formal variables. 
For the convenience of the reader, we provide a full proof of this result. It uses the irreducibility and stochasticity of the transfer operator $\widehat{T}^{(n)}(x;\mathbf{z})$ for a particular parameter regime, and it uses the algebraic dependence of $\widehat{T}^{(n)}(x;\mathbf{z})$ on $x$ and $\mathbf{z}$. 

Consider the matrix $A^{(n)}(x;\mathbf{z}):=(A_{LL'} (x;\mathbf{z}))_{L,L' \in \mathcal{L}_n}$ of
$T^{(n)}(x;\mathbf{z})$ with respect to the link pattern basis,
\[
	T^{(n)}(x;\mathbf{z}) L' = \sum_{L\in\mathcal{L}_n} A_{LL'}(x;\mathbf{z})L.
\]
The coefficients $A_{LL^\prime}(x;\mathbf{z})$ depend rationally on $x,z_1,\ldots,z_n$.
For the special value $t^{\frac{1}{4}}=\exp(\pi \mathrm{i}/3)$ the Boltzmann weights $a(x)$ and $b(x)$ (see
\eqref{weights}) satisfy 
\[
a(x)+b(x)=1,
\]
hence $\sum_{L\in\mathcal{L}_n}A_{LL^\prime}(x;\mathbf{z})=1$ for all $L^\prime\in\mathcal{L}_n$.
Furthermore, we have $0<a(x)<1$ if $x=e^{\mathrm{i}\theta}$ with $0<\theta<2\pi/3$; hence, $A^{(n)}(x;\mathbf{z})$
is left-stochastic if $x/z_j=e^{\mathrm{i}\theta_j}$ with $0<\theta_j<2\pi/3$ for $j=1,\ldots,n$. In this situation, 
$A^{(n)}(x;\mathbf{z})$ is irreducible; this follows from the fact that each $L\in\mathcal{L}_n$ is a 
cyclic vector for the $\mathcal{TL}_n$-module $V_n$, which can be proven as follows.

For $n=2k$ even, let $L_{ln}\in\mathcal{L}_{2k}$ be the \emph{least-nested} link pattern, which is the link pattern that has little arches connecting boundary points $(2i-1,2i)$ for $1\leq i \leq k$ such that the little-arches do not contain the puncture.
All $L\in\mathcal{L}_n$ can be mapped to $L_{ln}$ by acting with $e_{1} e_{3} \cdots e_{2k-1}$.
In turn, $L_{ln}$ can be mapped to the fully nested link pattern $L_\cap$ by the action of $\rho^{k}g_kg_{k-1}\cdots g_2$ with $g_i:=e_{i}e_{i+2}\cdots e_{2k-i}$.
Lastly, by the inductive argument in Appendix \ref{uniqueproof}, $L_\cap$ can be mapped to any $L\in\mathcal{L}_n$.
The case for $n$ odd is analogous.
\begin{lemma}
Let $v=1$, $q=1$ and $t^{\frac{1}{4}}=\exp(\pi \mathrm{i}/3)$.
There exists a unique $\widehat{g}^{(n)}(\mathbf{z})=\sum_{L\in\mathcal{L}_n}\widehat{g}^{(n)}_L(\mathbf{z})L$ with $\widehat{g}_L^{(n)}(\mathbf{z})\in \mathbb{C}(\mathbf{z})$ such that
\[
T^{(n)}(x;\mathbf{z})\widehat{g}^{(n)}(\mathbf{z})=\widehat{g}^{(n)}(\mathbf{z})
\]
for all $x\in\mathbb{C}$ and such that $\sum_{L\in\mathcal{L}_n}\widehat{g}^{(n)}_L(\mathbf{z})=1$. Furthermore,
\[
\widehat{g}^{(n)}(\mathbf{z})\in \bigl(\mathbb{C}(\mathbf{z})\otimes V_n\bigr)^{\nabla(W_n)}.
\]
\end{lemma}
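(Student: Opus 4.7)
The plan is to construct $\widehat{g}^{(n)}(\mathbf{z})$ as the generic Perron--Frobenius eigenvector of $T^{(n)}(x;\mathbf{z})$, normalised by the sum condition, and then bootstrap $x$-independence and $\nabla(W_n)$-invariance from the intertwining relations \eqref{eq-rttr} together with commutativity of the transfer operators in $x$.

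First I would record two stochasticity identities at $t^{\frac{1}{4}} = \exp(\pi\mathrm{i}/3)$. A short computation using $1+t^{\frac{1}{2}} = -t^{-\frac{1}{2}}$ and $1+t^{-\frac{1}{2}} = -t^{\frac{1}{2}}$ gives $a(x)+b(x) \equiv 1$ in $\mathbb{C}(x)$, and both loop weights equal $1$. Consequently, for any tile configuration the associated skein diagram applied to $L'$ yields a single link pattern weighted by the loop factor $1$, and summing over tile configurations shows that each column sum of $A^{(n)}(x;\mathbf{z})$ factors as $\prod_i\bigl(a(x/z_i)+b(x/z_i)\bigr)=1$. Hence the linear functional $\Sigma: V_n \to \mathbb{C}$, $\Sigma\bigl(\sum_L v_L L\bigr) := \sum_L v_L$, satisfies $\Sigma \circ T^{(n)}(x;\mathbf{z}) = \Sigma$ as an identity in $\mathbb{C}(x,\mathbf{z})$, which forces $\dim\ker\bigl(T^{(n)}(x;\mathbf{z}) - \mathrm{Id}\bigr) \geq 1$ on all of parameter space. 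Specialising to $x/z_j = e^{\mathrm{i}\theta_j}$ with $0 < \theta_j < 2\pi/3$ puts $a,b$ in $(0,1)$, so $A^{(n)}(x;\mathbf{z})$ is left-stochastic, and the cyclicity of the link pattern basis under $\mathcal{TL}_n$ (as sketched before the lemma) makes it irreducible. Perron--Frobenius then forces the nullity at this specialisation to be exactly $1$, so by upper semicontinuity the generic nullity over $\mathbb{C}(\mathbf{z})$ equals $1$.

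Fix a generic $x_0$ and let $\widehat{g}^{(n)}(\mathbf{z})$ span $\ker_{\mathbb{C}(\mathbf{z})}\bigl(T^{(n)}(x_0;\mathbf{z}) - \mathrm{Id}\bigr)$; since the Perron--Frobenius eigenvector has nonzero $\Sigma$-value, $\Sigma(\widehat{g}^{(n)}) \not\equiv 0$ and we may rescale so that $\Sigma(\widehat{g}^{(n)}) = 1$. Commutativity of $T^{(n)}(x;\mathbf{z})$ in $x$ then yields $T^{(n)}(x;\mathbf{z})\widehat{g}^{(n)}(\mathbf{z}) = \lambda(x,\mathbf{z})\widehat{g}^{(n)}(\mathbf{z})$ for some rational $\lambda$, and applying $\Sigma$ forces $\lambda \equiv 1$; this gives the eigenvalue equation for every $x$ together with uniqueness under the normalisation $\Sigma = 1$. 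For $\nabla(W_n)$-invariance (with $q = c = 1$), the intertwining relations \eqref{eq-rttr} imply that both $R_i(z_{i+1}/z_i)\widehat{g}^{(n)}(\ldots,z_{i+1},z_i,\ldots)$ and $\rho\,\widehat{g}^{(n)}(z_2,\ldots,z_n,z_1)$ lie in $\ker_{\mathbb{C}(\mathbf{z})}\bigl(T^{(n)}(x_0;\mathbf{z}) - \mathrm{Id}\bigr)$, hence are rational multiples of $\widehat{g}^{(n)}(\mathbf{z})$; applying $\Sigma$ pins these multiples to $1$, using that $\Sigma$ is preserved by $R_i = a\,e_i + b$ (via $a + b = 1$ and $\Sigma(e_i L) = 1$, since any loop produced has weight $1$) and by $\rho$ (which permutes link patterns).

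The main technical obstacle in this plan is the irreducibility input to Perron--Frobenius. One has to check that the cyclicity of the link pattern basis under $\mathcal{TL}_n$ is realised by words in the $e_i$'s and powers of $\rho$ whose contribution to the matrix of $T^{(n)}(x;\mathbf{z})$ has strictly positive amplitude in the regime $0 < \theta_j < 2\pi/3$. For $n = 2k$ this follows cleanly from the least-nested diagram argument sketched before the lemma, but for $n = 2k-1$ the defect line requires separate treatment; once this is in place, the rest of the proof is a routine combination of Perron--Frobenius with the algebraic identities derived in the first step.
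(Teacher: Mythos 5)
Your proposal is correct and takes essentially the same approach as the paper: construct the normalized eigenvector via Perron--Frobenius in the stochastic regime $x/z_j = e^{\mathrm{i}\theta_j}$ with $0 < \theta_j < 2\pi/3$, use commutativity of $\widehat{T}^{(n)}(x;\mathbf{z})$ in $x$ together with preservation of $\Sigma$ to get the eigenvalue equation for all $x$, and then use the intertwining relations \eqref{eq-rttr} plus $\Sigma$-preservation to obtain $\nabla(W_n)$-invariance. The only cosmetic difference is that the paper fixes $x = 1$ where you fix a generic $x_0$ and phrase the uniqueness step in terms of nullity rather than via the explicit two-property characterization.
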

\begin{proof}
Consider $A^{(n)}(\mathbf{z}):=A^{(n)}(1;\mathbf{z})$. Since the matrix coefficients $A_{LL^\prime}(\mathbf{z}):=A_{LL^\prime}(1;\mathbf{z})$ satisfy $\sum_{L\in\mathcal{L}_n}A_{LL^\prime}(\mathbf{z})=1$, we have
$\textup{det}\bigl(A^{(n)}(\mathbf{z})-1\bigr)=0$ and hence there exists a nonzero vector
$\kappa(\mathbf{z})=\bigl(\kappa_L(\mathbf{z})\bigr)_{L\in\mathcal{L}_n}$ with $\kappa_L(\mathbf{z})\in\mathbb{C}(\mathbf{z})$ such that $A^{(n)}(\mathbf{z})\kappa(\mathbf{z})=\kappa(\mathbf{z})$. Consider
\[
N(\mathbf{z}):=\sum_{L\in\mathcal{L}_n}\kappa_L(\mathbf{z}).
\]
Note that $A^{(n)}(\mathbf{z})$ is irreducible left-stochastic if $z_j=e^{-\mathrm{i}\theta_j}$ with $0<\theta_j<2\pi/3$; hence, for generic specialized values of the rapidities in this stochastic parameter regime, $A^{(n)}(\mathbf{z})$
has a one-dimensional eigenspace with eigenvalue $1$, spanned by the Frobenius-Perron eigenvector $v^{FP}(\mathbf{z})$, and the Frobenius-Perron eigenvector $v^{FP}(\mathbf{z})$ (normalized such that the sum of the coefficients is one), has the property that all its coefficients are positive. Hence, for generic values of the rapidities in the stochastic parameter regime, $N(\mathbf{z})\not=0$. In particular, $N(\mathbf{z})\in\mathbb{C}(\mathbf{z}){\setminus}\{0\}$, and we may set
$\widehat{g}^{(n)}(\mathbf{z}):=\sum_{L\in\mathcal{L}_n}g_L^{(n)}(\mathbf{z})L$ with 
$\widehat{g}^{(n)}_L(\mathbf{z}):=\kappa_L(\mathbf{z})/N(\mathbf{z})\in\mathbb{C}(\mathbf{z})$. Then, 
\[
T^{(1)}(1;\mathbf{z})\widehat{g}^{(n)}(\mathbf{z})=\widehat{g}^{(n)}(\mathbf{z})
\]
and $\sum_{L\in\mathcal{L}_n}\widehat{g}_L^{(n)}(\mathbf{z})=1$. It follows from restricting to the stochastic parameter regime again that these two properties determine 
$\widehat{g}^{(n)}(\mathbf{z})$ uniquely.

Let $x\in\mathbb{C}$ and set 
\[
\widehat{g}^{(n)}(x;\mathbf{z}):=T^{(n)}(x;\mathbf{z})\widehat{g}^{(n)}(\mathbf{z}).
\]
Write
\[
\widehat{g}^{(n)}(x;\mathbf{z})=\sum_{L\in\mathcal{L}_n}\widehat{g}_{L}^{(n)}(x;\mathbf{z})L
\]
with $\widehat{g}_L^{(n)}(x;\mathbf{z})\in\mathbb{C}(\mathbf{z})$. Since $\lbrack T^{(n)}(1;\mathbf{z}),
T^{(n)}(x;\mathbf{z})\rbrack=0$ we have 
\[
T^{(n)}(1;\mathbf{z})\widehat{g}^{(n)}(x;\mathbf{z})=\widehat{g}^{(n)}(x;\mathbf{z}).
\]
Since $\sum_{L\in\mathcal{L}_n}A_{LL^\prime}(x;\mathbf{z})=1$ for all $L^\prime\in\mathcal{L}_n$,
we furthermore have $\sum_{L\in\mathcal{L}_n}\widehat{g}_L^{(n)}(x;\mathbf{z})=1$. Hence
$\widehat{g}^{(n)}(x;\mathbf{z})=\widehat{g}^{(n)}(\mathbf{z})$, i.e.
\[
T^{(n)}(x;\mathbf{z})\widehat{g}^{(n)}(\mathbf{z})=\widehat{g}^{(n)}(\mathbf{z}).
\]
This completes the proof of the uniqueness and existence of $\widehat{g}_n(\mathbf{z})$.

For the second statement, let $1\leq i<n$ and set 
$h_i(\mathbf{z}):=R_i(z_{i+1}/z_i)\widehat{g}^{(n)}(s_i\mathbf{z})$. Then, by the first formula
of \eqref{eq-rttr},
\[
\widehat{T}^{(n)}(x;\mathbf{z})h_i(\mathbf{z})=h_i(\mathbf{z}),
\]
and the sum of the coefficients of $h_i(\mathbf{z})$ is one since $a(x)+b(x)=1$. Hence
$h_i(\mathbf{z})=\widehat{g}^{(n)}(\mathbf{z})$, i.e.,
\[
R_i(z_{i+1}/z_i)\widehat{g}^{(n)}(s_i\mathbf{z})=\widehat{g}^{(n)}(\mathbf{z}).
\]
In the same way, one shows that $\rho \widehat{g}^{(n)}(z_2,\ldots,z_n,z_1)=\widehat{g}^{(n)}(\mathbf{z})$, now using the second equality of \eqref{eq-rttr}. This completes the proof of the lemma.
\end{proof}
Now we are ready to prove Theorem \ref{mainTHM} in the special case that $t^{\frac{1}{4}}=
\exp(\pi i/3)$. By Corollary \ref{degreeleading}, the fully
nested component is of the form
\[
\widehat{g}^{(n)}_{L_\cap}(\mathbf{z})=C_n(\mathbf{z})\prod_{1\leq i<j\leq n}
\left(t^{\frac{1}{2}}z_j-t^{-\frac{1}{2}}z_i\right)
\]
with $0\not=C_n(\mathbf{z})\in\mathbb{C}(z)^{S_n}$. Since in the present situation $q=1$ and $C_n(\mathbf{z})$ is symmetric, we have that the renormalized function
\[
g^{(n)}(\mathbf{z}):=C_n(\mathbf{z})^{-1}\widehat{g}^{(n)}(\mathbf{z})
\]
is also a symmetric solution of the qKZ equations, $g^{(n)}(\mathbf{z})\in
\bigl(\mathbb{C}(\mathbf{z})\otimes V_n\bigr)^{\nabla(W_n)}$.
Now $g^{(n)}(\mathbf{z})$ has fully nested component
\begin{equation}\label{fnc}
g^{(n)}_{L_\cap}(\mathbf{z})=\prod_{1\leq i<j\leq n}
\left(t^{\frac{1}{2}}z_j-t^{-\frac{1}{2}}z_i\right).
\end{equation}
By Lemma \ref{unique1} we conclude that $g^{(n)}(\mathbf{z})\in\textup{Sol}_n(V_n;1,1)$
is a homogeneous polynomial solution of total degree $\frac{1}{2}n(n-1)$, which completes the proof of Theorem \ref{mainTHM} in the special case that $t^{\frac{1}{4}}=\exp(\pi i/3)$. 

\begin{remark}
From Proposition \ref{tmat} it follows immediately that
$$\mathcal{I}_n (\widehat{T}^{(n)}(x;z_1,\ldots, z_n)) g^{(n+1)}(\mathbf{z},0) = g^{(n+1)}(\mathbf{z},0).$$
when $t^{\frac{1}{4}}=\exp(\pi i/3)$.
In \cite{Di-Francesco:2006aa} the authors use this equation to prove the braid recursion relation 
for $v=1$, $1=q$ $t^{\frac{1}{4}}=\exp(\pi i/3)$ and $n$ even (see Remark \ref{remTHM}). 
However, they implicitly assume that $g^{(n+1)}(\mathbf{z},0)$ is uniquely characterized as 
ground state of  $\widehat{T}^{(n+1)}(1;\mathbf{z},0)$, which is though not clear since 
there is no stochastic parameter regime when one of the rapidities is set equal to zero.
We have circumvented this
problem here by using the characterization of $g^{(n)}(\mathbf{z})$ as a twisted symmetric solution of qKZ equations.
\end{remark}

\section{Existence of solutions for generic $t^{\frac{1}{4}}$} \label{sec-CM}

In this section we construct for generic $t^{\frac{1}{4}}$ (i.e., for values $t^{\frac{1}{4}}$ in a nonempty Zariski open subset of $\mathbb{C}$) nontrivial polynomial twisted symmetric solutions to the qKZ equations for link pattern modules, leading to the proof of Theorem \ref{mainTHM} for generic $t^{\frac{1}{4}}\in\mathbb{C}^*$. As we remarked in the paragraph following Theorem \ref{mainTHM}, the generic condition on $t^{\frac{1}{4}}$ can subsequently be weakened to the condition that the loop weights are nonzero. 

A major difference between the generic case and the case that $t^{\frac{1}{4}}=\exp(\pi \mathrm{i} /3)$ is that we do not have the argument of a stochastic matrix to construct $g^{(n)}$ using the Frobenius-Perron theorem.
We instead use the Cherednik-Matsuo correspondence \cite{Stokman:2011aa}.  This is different from the approach in \cite{Kasatani:2007aa}, where Kazhdan-Lusztig bases are used.

In order to be able to apply the Cherednik-Matsuo correspondence, we first need to identify the
link pattern representations $V_n$ with principal series representations. This is done in the first subsection, for general twist parameter $v$. In the subsequent subsection, we recall the Cherednik-Matsuo correspondence and rephrase it in terms of dual $Y$-operators. In the last subsection,
we prove Theorem \ref{mainTHM} by constructing the polynomial solution of the qKZ equation
from dual non-symmetric Macdonald polynomials with specialised parameters.

For fixed $v \in \C^*$ the link patterns $\mathcal{L}_n$ form a (non-canonical) basis of $V_n$.
We can naturally identify $V_n$ with $\C[\mathcal{L}_n]$ as a vector space by shrinking the hole $\{ z \in \C  \; |\; |z| \leq 1\}$ of the annulus to $0$. 
A choice needs to be made for the winding of the defect line, unless $v=1$.

\subsection{$V_n$ as a principal series module}
In this section we take $n\geq 2$, and we fix $v\in\mathbb{C}^*$. We recall first the definition of the principal series representation $M^I(\gamma)$ of the affine Hecke algebra
$\mathcal{H}_n=\mathcal{H}_n(t^{\frac{1}{2}})$.

Let  $\epsilon_i$ ($1 \leq i \leq n$) denote the standard basis of $\mathbb{R}^n$. Set $R_0:= \{ \epsilon_i-\epsilon_{j} | 1\leq i\neq j \leq n\}$, the root system of type $A_{n-1}$. We take  $R_0^{+}:= \{ \epsilon_i-\epsilon_{j} | 1\leq i<j\leq  n \}$ the set of positive roots. The corresponding simple roots are $\alpha_i:= \epsilon_i - \epsilon_{i+1} $ ($1 \leq i <n$). We write $s_\alpha$ ($\alpha\in R_0$) for the reflection in $\alpha$. Then, the simple reflections $s_i:=s_{\alpha_i}$ ($1 \leq i<n$) correspond to the simple
neighboring transpositions $i\leftrightarrow i+1$. For $\alpha=\epsilon_i-\epsilon_j\in R_0$ we write $\mathbf{z}^\alpha=z_i/z_j$ and $Y^\alpha=Y_i/Y_j$ in $\mathbb{C}[\mathbf{z}^{\pm 1}]$
and $\mathcal{H}_n$, respectively.

For $I \subseteq \{ 1, \ldots, n-1 \}$, we write
\[
T^I=T^{I,t}:=\{ \gamma\in(\mathbb{C}^*)^n\,\,\, | \,\,\, \gamma_i/\gamma_{i+1}=t^{-1}\quad
\forall\,\, i\in I\}.
\]
For $\gamma\in T^I$ let $\chi_{\gamma}^I:=\mathcal{H}_I(t^{\frac{1}{2}})\rightarrow
\mathbb{C}$ be the one-dimensional representation of the parabolic subalgebra
$\mathcal{H}_I=\mathcal{H}_I(t^{\frac{1}{2}}):= \C \langle Y^{\pm1}_j , T_i | i \in I, j=1\ldots,n \rangle$ of
$\mathcal{H}_n(t^{\frac{1}{2}})$ satisfying $\chi_\gamma^I(Y_j)=\gamma_j$ ($1\leq j\leq n$)
and $\chi_\gamma^I(T_i)=t^{-\frac{1}{2}}$ ($i\in I$). It is well defined since $\gamma\in T^I$.
The corresponding principal series module $M^I(\gamma)$ with central character $\gamma$
is
\[
M^I(\gamma): = \mathcal{H}_n\otimes_{\mathcal{H}_I} \C_{\chi^I_{\gamma}}.
\]
Comparing with  the notations from \cite[\S 4.3]{Stokman:2011aa}: $(k,m,\zeta,H(k))$ correspond to our $(-t^{\frac{1}{2}},n,\rho,\mathcal{H}_n(t^{\frac{1}{2}}))$. The principal series module 
$M^I(\gamma)$ then corresponds to the principal series module $M^{-t^{\frac{1}{2}},-,I}(\gamma)$ from
\cite[Lem. 2.5]{Stokman:2011aa}.
 
Let $S_{n,I}=\langle s_i\,\, | \,\, i\in I\rangle\subseteq S_n$ be the standard parabolic subgroup generated by the simple neighboring transpositions $s_i$ ($i\in I$), and $S_n^I$ the minimal coset
representatives of $S_n/S_{n,I}$. For $w\in S_n$, let $T_w\in\mathcal{H}_n^0$ be the element
$T_w=T_{i_1}T_{i_2}\cdots T_{i_r}$ if $w=s_{i_1}s_{i_2}\cdots s_{i_r}$ is a reduced expression.
This is well defined since the $T_i$'s satisfy the braid relations. A linear basis of $M^I(\gamma)$
is given by $\{v_w^I(\gamma):=T_w\otimes_{\mathcal{H}_I}1_{\chi_\gamma^I}\}_{w\in S_n^I}$.

For a finite-dimensional left $\mathcal{H}_n$-module $V$ and $\xi\in (\mathbb{C}^*)^n$, we define
the subspace of vectors of weight $\xi$ by
\[
V_\xi:=\{v\in V \,\, | \,\, Y_jv=\xi_jv\quad (1\leq j\leq n)\}.
\]
The module $V$ is said to be calibrated  if $V=\bigoplus_{\xi}V_\xi$. 

For $1 \leq i < n$ set
\begin{equation}\label{Ii}
		I_i:= T_i( 1-  Y^{\alpha_i}) + (t^{-\frac{1}{2}} - t^{\frac{1}{2}})Y^{\alpha_i} \in \mathcal{H}_n.
\end{equation}
The following theorem is well known, see
 \cite[Thrm. 2.8, Cor. 2.9]{Stokman:2011aa} and references therein.

\begin{theorem}\label{intertwinerthm}
For $w\in S_n$ and $w=s_{i_1}s_{i_2}\cdots s_{i_r}$ a reduced expression,
\[
I_w:=I_{j_1}I_{j_2}\cdots I_{j_r}\in\mathcal{H}_n
\]
is well defined (independent of the choice of reduced expression).
Furthermore, for all $f(\mathbf{z})\in\mathbb{C}[\mathbf{z}^{\pm 1}]$ and $w\in S_n$ we have
\begin{equation*}
\begin{split}
I_wf(Y)&=(wf)(Y)I_w,\\
I_{w^{-1}}I_w&=e_w(Y)
\end{split}
\end{equation*}
in $\mathcal{H}_n$, with 
\[
e_w(\mathbf{z}):=\prod_{\alpha\in R_0^+\cap w^{-1}R_0^-}
\left(t^{\frac{1}{2}}-t^{-\frac{1}{2}}\mathbf{z}^\alpha\right)
\left(t^{\frac{1}{2}}-t^{-\frac{1}{2}}\mathbf{z}^{-\alpha}\right)\in\mathbb{C}[\mathbf{z}^{\pm 1}].
\]
\end{theorem}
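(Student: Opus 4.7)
The plan is to establish the theorem in four stages: first verify the intertwining identity $I_if(Y)=(s_if)(Y)I_i$ at the simple-reflection level, then compute $I_i^2$ explicitly, then establish the braid relations for the $I_i$ (which gives well-definedness of $I_w$), and finally extend the intertwining identity and derive the product formula for arbitrary $w$ by induction on the Coxeter length $\ell(w)$.

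For the first stage, by multiplicativity it is enough to check $I_iY_j^{\pm 1}=Y_{s_i(j)}^{\pm 1}I_i$ for each $j=1,\ldots,n$. When $j\neq i,i+1$ this is immediate because both $T_i$ and $Y^{\alpha_i}$ commute with $Y_j$. When $j=i,i+1$ the Bernstein--Zelevinsky cross-relation extracted from $T_iY_{i+1}T_i=Y_i$, namely $T_iY_i=Y_{i+1}T_i+(t^{-\frac{1}{2}}-t^{\frac{1}{2}})Y_i$, produces a correction term that is precisely cancelled by the additive term $(t^{-\frac{1}{2}}-t^{\frac{1}{2}})Y^{\alpha_i}$ in the definition \eqref{Ii} of $I_i$. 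For $I_i^2$, observe first that the established intertwining implies $I_i^2$ commutes with all $f(Y)$, so it lies in the maximal commutative subalgebra $\mathcal{A}_n$; a direct expansion of $I_i^2$ using the quadratic relation $T_i^2=(t^{-\frac{1}{2}}-t^{\frac{1}{2}})T_i+1$ together with the cross-relations then yields
\[
I_i^2=\bigl(t^{\frac{1}{2}}-t^{-\frac{1}{2}}Y^{\alpha_i}\bigr)\bigl(t^{\frac{1}{2}}-t^{-\frac{1}{2}}Y^{-\alpha_i}\bigr)=e_{s_i}(Y),
\]
consistent with $R_0^+\cap s_iR_0^-=\{\alpha_i\}$.

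The main obstacle is the braid relation $I_iI_{i+1}I_i=I_{i+1}I_iI_{i+1}$; the commutations $I_iI_j=I_jI_i$ for $|i-j|>1$ follow at once from $T_iT_j=T_jT_i$ together with $[T_i,Y^{\alpha_j}]=0=[T_j,Y^{\alpha_i}]$ and $[Y^{\alpha_i},Y^{\alpha_j}]=0$. My approach is to pass to the localization $\widetilde{\mathcal{H}}_n:=\mathcal{H}_n\otimes_{\mathcal{A}_n}\textup{Frac}(\mathcal{A}_n)$ and introduce the rational intertwiners
\[
J_i:=T_i+\frac{(t^{-\frac{1}{2}}-t^{\frac{1}{2}})Y^{\alpha_i}}{1-Y^{\alpha_i}},
\]
so that $I_i=(1-Y^{\alpha_i})J_i$ and $J_i$ inherits the intertwining property $J_if(Y)=(s_if)(Y)J_i$. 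Both $J_iJ_{i+1}J_i$ and $J_{i+1}J_iJ_{i+1}$ are intertwiners for $w=s_is_{i+1}s_i=s_{i+1}s_is_{i+1}$ whose coefficient of $T_w$ in the PBW basis $\{T_v\}$ over $\textup{Frac}(\mathcal{A}_n)$ equals $1$ (courtesy of the braid relation for the $T_i$); since the space of such intertwiners is a free rank-one module over $\textup{Frac}(\mathcal{A}_n)$, these two constraints force equality. Telescoping now gives
\[
I_{i_1}\cdots I_{i_r}=\prod_{k=1}^{r}\bigl(1-Y^{s_{i_1}\cdots s_{i_{k-1}}\alpha_{i_k}}\bigr)\cdot J_{i_1}\cdots J_{i_r},
\]
and since the multiset of exponents equals $R_0^+\cap w^{-1}R_0^-$ (independent of the reduced expression) and $J_{i_1}\cdots J_{i_r}$ depends only on $w$ by the established braid relations, we conclude that $I_w$ is well-defined.

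The intertwining identity $I_wf(Y)=(wf)(Y)I_w$ for general $w$ now follows by induction on $\ell(w)$ from the simple case. For the product formula I also induct on $\ell(w)$: writing $w=s_iw'$ with $\ell(w)=\ell(w')+1$, so that $w^{-1}=w'^{-1}s_i$ with $\ell(w^{-1})=\ell(w'^{-1})+1$, I compute
\[
I_{w^{-1}}I_w=I_{w'^{-1}}I_i^2I_{w'}=I_{w'^{-1}}e_{s_i}(Y)I_{w'}=(w'^{-1}e_{s_i})(Y)\cdot I_{w'^{-1}}I_{w'}=(w'^{-1}e_{s_i})(Y)\cdot e_{w'}(Y),
\]
using the intertwining identity for $I_{w'^{-1}}$ followed by the induction hypothesis. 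Since the disjoint decomposition $R_0^+\cap w^{-1}R_0^-=(R_0^+\cap w'^{-1}R_0^-)\sqcup\{w'^{-1}\alpha_i\}$ translates into the Laurent polynomial identity $e_w(z)=e_{w'}(z)\cdot(w'^{-1}e_{s_i})(z)$, the desired equality $I_{w^{-1}}I_w=e_w(Y)$ follows, completing the induction.
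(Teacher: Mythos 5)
The paper does not prove this theorem; it is quoted as a well-known result with a citation to Stokman's paper, so there is no in-paper proof to compare against. Your strategy --- verify the intertwining relation at the level of simple reflections, compute $I_i^2$, pass to the localization of $\mathcal{H}_n$ over the fraction field of $\mathcal{A}_n$, introduce normalized intertwiners $J_i$, prove the braid relations for the $J_i$ by a uniqueness argument, then telescope --- is the classical route, and it does work.

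Two places need repair before this could stand as a proof. First, the factorization $I_i=(1-Y^{\alpha_i})J_i$ is incorrect, because $T_i$ does not commute with $Y^{\alpha_i}$: expanding your right-hand side gives $(1-Y^{\alpha_i})T_i+(t^{-1/2}-t^{1/2})Y^{\alpha_i}$, not $I_i = T_i(1-Y^{\alpha_i})+(t^{-1/2}-t^{1/2})Y^{\alpha_i}$. The correct identity is $I_i = J_i(1-Y^{\alpha_i})$. Consequently the telescoped polynomial factor emerges on the \emph{right} of $J_{i_1}\cdots J_{i_r}$, namely $\prod_{k=1}^{r}\bigl(1-Y^{s_{i_r}\cdots s_{i_{k+1}}\alpha_{i_k}}\bigr)$, whose exponent set is $R_0^+\cap w^{-1}R_0^-$ as required; the set you display, $\{s_{i_1}\cdots s_{i_{k-1}}\alpha_{i_k}\}_{k}$, is actually $R_0^+\cap wR_0^-$. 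Both multisets are reduced-expression-independent, so your well-definedness conclusion survives, but the identification is off. Second, the assertion that $w$-intertwiners in the localized algebra form a free rank-one module over the fraction field of $\mathcal{A}_n$ is the linchpin of the braid-relation argument and cannot simply be asserted; you should supply the (short) argument: writing an intertwiner as $X=\sum_v T_v\, g_v(Y)$ with $g_v$ rational in $Y$, and taking $v_0$ maximal in Bruhat order among $\{v\,:\,g_v\neq0\}$, a comparison of $T_{v_0}$-coefficients in $Xf(Y)=(wf)(Y)X$ yields $g_{v_0}(Y)\bigl[f(Y)-(v_0^{-1}wf)(Y)\bigr]=0$ for all $f$, forcing $v_0=w$, whence uniqueness. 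Modulo these two points, the verification of the simple intertwining relations, the formula $I_i^2=e_{s_i}(Y)$, the disjoint decomposition $R_0^+\cap w^{-1}R_0^-=(R_0^+\cap w'^{-1}R_0^-)\sqcup\{w'^{-1}\alpha_i\}$ for $w=s_iw'$ with $\ell(w)=\ell(w')+1$, and the closing induction establishing $I_{w^{-1}}I_w=e_w(Y)$ are all correct.
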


If $V$ is a left $\mathcal{H}_n$-module,  then the previous theorem implies that $I_w(V_\xi)\subseteq V_{w\xi}$ for $w\in S_n$ and $\xi\in (\mathbb{C}^*)^n$.

It is known that $M^I(\gamma)$ is calibrated
for generic $\gamma\in T^I$ with corresponding weight decomposition
\[
M^I(\gamma)=\bigoplus_{w\in S_n^I}M^I(\gamma)_{w\gamma},\qquad
M^I(\gamma)_{w\gamma}=\mathbb{C}b_w^I(\gamma)
\]
with $b_w^I(\gamma):=I_w\otimes_{\mathcal{H}_I}1_{\chi_\gamma^I}$
(see e.g., \cite[Prop. 2.12]{Stokman:2011aa}, for the specific additional conditions on $\gamma$). 

We now view the $\mathcal{TL}_n$-module $V_n$ from the link pattern tower as an $\mathcal{H}_n$-module  through the surjective algebra map $\psi_n: \mathcal{H}_n\twoheadrightarrow
\mathcal{TL}_n$ satisfying $\psi_n(T_i)=e_i+t^{-\frac{1}{2}}$ ($1\leq i<n$) and $\psi_n(\rho)=\rho$.
The aim is to show that $V_n$ is isomorphic to $M^I(\gamma)$ for an appropriate subset
$I\subseteq\{1,\ldots,n-1\}$ and $\gamma\in T^I$ for generic $t^{\frac{1}{4}}$. As a first step, we create explicit weight vectors in $V_n$.

Write $k=\lfloor\frac{n}{2}\rfloor$ and let $J \subseteq \{ 1,\ldots,k \}$ be a subset, say $J =\{ j_1, \dots ,j_r \} , 1 \leq j_1 < \cdots < j_r\leq k$. 
Let $D^{n}_J$ be the element in $V_{n}$ shown in Figure \ref{dj}.
Note that in the definition of $D^{n}_J$, the arches $(2m-1, 2m)$ include the hole of the annulus if $m \in J$, and $(2j_{s+1}-1, 2j_{s+1})$ is positioned over  $(2j_{s}-1, 2j_{s})$.	
Furthermore,  $D^{2k+1}_J$ is obtained from $D^{2k}_J$ by inserting the defect line at $2k+1$, which is positioned over all other paths. 

\begin{figure}[t]
    \centering
    \begin{subfigure}[b]{0.45\textwidth}
       \begin{pspicture}[shift=-2.9](-3,-3)(3,2.5)
				\SpecialCoor
				\pscircle[fillstyle=solid,fillcolor=diskin,linewidth=1.2pt](0,0){2}
    				\pscircle[fillstyle=solid, fillcolor=white, linewidth=1.2pt](0,0){0.25}
				\pscurve[linecolor=line, linewidth=1.2pt](2;0)(1.8;2)(1.8;8)(2;10)
				\pscurve[linecolor=line, linewidth=1.2pt](2;50)(1.8;52)(1.8;58)(2;60)
				\pscurve[linecolor=line, linewidth=1.2pt](2;110)(1.8;112)(1.8;118)(2;120) 
				\pscurve[linecolor=line, linewidth=1.2pt](2;180)(1.8;182)(1.8;188)(2;190)
				\pscurve[linecolor=line, linewidth=1.2pt](2;220)(1.8;222)(1.8;228)(2;230) 
				\pscurve[linecolor=line, linewidth=1.2pt](2;270)(1.8;272)(1.8;278)(2;280)	
				\pscurve[linecolor=line, linewidth=1.2pt](2;310)(1.8;312)(1.8;318)(2;320)	
				\pscurve[linecolor=line,linewidth=1.2pt](2;90)(1.9;90)(1.3;135)(1.3;180)(1.3;225)(1.3;270)(1.3;315)(1.3;0)(1.3;45)(1.9;80)(2;80)
				\definecolor{diskmid}{gray}{0.8}
				\psdot[linecolor=diskin,dotsize=0.3](1.3;185)
				\psdot[linecolor=diskin,dotsize=0.3](1.3;225)
				\psdot[linecolor=diskin,dotsize=0.3](1.3;285)
				\psdot[linecolor=diskin,dotsize=0.3](1.3;305)
				\pscurve[linecolor=line,linewidth=1.2pt](2;210)(1.9;210)(1;255)(1;300)(1;345)(1;30)(1;75)(1;110)(1;155)(1.9;200)(2;200)
				\psdot[linecolor=diskin,dotsize=0.3](0.9;275)
				\psdot[linecolor=diskin,dotsize=0.3](1;312)
				\pscurve[linecolor=line,linewidth=1.2pt](2;300)(1.9;300)(0.6;345)(0.6;30)(0.6;75)(0.6;110)(0.6;155)(0.6;200)(0.6;245)(1.9;290)(2;290)
				\psarc[linecolor=blue,linestyle=dotted, linewidth=1.2pt](0,0){1.8}{20}{40}
				\psarc[linecolor=blue,linestyle=dotted, linewidth=1.2pt](0,0){1.8}{130}{170}
				\psarc[linecolor=blue,linestyle=dotted, linewidth=1.2pt](0,0){1.8}{240}{260}
				\psarc[linecolor=blue,linestyle=dotted, linewidth=1.2pt](0,0){1.8}{330}{-10}
				\rput{0}(2.2;0){\tiny$1$}
				\rput{0}(2.2;10){\tiny$2$}
				\rput{0}(2.25;70){\tiny$2j_1\!-\!1$}
				\rput{0}(2.2;90){\tiny$2j_1$}
				\rput{0}(2.5;198){\tiny$2j_2\!-\!1$}
				\rput{0}(2.3;210){\tiny$2j_2$}	
				\rput{-45}(2.5;293){\tiny$2j_r\!-\!1$}
				\rput{0}(2.3;305){\tiny$2j_r$}	
		\end{pspicture}
        \caption{$D^{2k}_J \in V_{2k}$ }
        \label{dj2k}
    \end{subfigure}
    ~ 
    \begin{subfigure}[b]{0.45 \textwidth}
         \begin{pspicture}[shift=-2.9](-3,-3)(3,2.5)
				\SpecialCoor
				\pscircle[fillstyle=solid,fillcolor=diskin, linewidth=1.2pt](0,0){2}
    				\pscircle[fillstyle=solid, fillcolor=white, linewidth=1.2pt](0,0){0.25}
				\pscurve[linecolor=line, linewidth=1.2pt](2;0)(1.8;2)(1.8;8)(2;10)
				\pscurve[linecolor=line, linewidth=1.2pt](2;50)(1.8;52)(1.8;58)(2;60)
				\pscurve[linecolor=line, linewidth=1.2pt](2;110)(1.8;112)(1.8;118)(2;120) 
				\pscurve[linecolor=line, linewidth=1.2pt](2;180)(1.8;182)(1.8;188)(2;190)
				\pscurve[linecolor=line, linewidth=1.2pt](2;220)(1.8;222)(1.8;228)(2;230) 
				\pscurve[linecolor=line, linewidth=1.2pt](2;270)(1.8;272)(1.8;278)(2;280)	
				\pscurve[linecolor=line, linewidth=1.2pt](2;310)(1.8;312)(1.8;318)(2;320)	
				\pscurve[linecolor=line,linewidth=1.2pt](2;90)(1.9;90)(1.5;135)(1.5;180)(1.5;225)(1.5;270)(1.5;315)(1.5;0)(1.5;45)(1.9;80)(2;80)
				\definecolor{diskin}{gray}{0.8}
				\psdot[linecolor=diskin,dotsize=0.3](1.5;190)
				\psdot[linecolor=diskin,dotsize=0.3](1.5;220)
				\psdot[linecolor=diskin,dotsize=0.3](1.5;285)
				\psdot[linecolor=diskin,dotsize=0.3](1.5;305)
				\psdot[linecolor=diskin,dotsize=0.3](1.5;350)
				\pscurve[linecolor=line,linewidth=1.2pt](2;210)(1.9;210)(1.2;255)(1.2;300)(1.2;345)(1.2;30)(1.2;75)(1.2;110)(1.2;155)(1.9;200)(2;200)
				\definecolor{diskmid}{gray}{0.78}
				\psdot[linecolor=diskin,dotsize=0.3](1.2;275)
				\psdot[linecolor=diskin,dotsize=0.3](1.2;312)
				\psdot[linecolor=diskin,dotsize=0.3](1.2;350)
				\pscurve[linecolor=line,linewidth=1.2pt](2;300)(1.9;300)(0.8;345)(0.8;30)(0.8;75)(0.8;110)(0.8;155)(0.8;200)(0.8;245)(1.9;290)(2;290)
				\psdot[linecolor=diskin,dotsize=0.3](0.8;342)
				\pscurve[linecolor=line,linewidth=1.2pt](2;350)(1.9;350)(1;350)(0.5;270)(0.5;225)(0.5;180)(0.5;135)(0.5;90)(0.5;45)(0.35;0)(0.25;0)	
				\psarc[linecolor=blue,linestyle=dotted, linewidth=1.2pt](0,0){1.8}{20}{40}
				\psarc[linecolor=blue,linestyle=dotted, linewidth=1.2pt](0,0){1.8}{130}{170}
				\psarc[linecolor=blue,linestyle=dotted, linewidth=1.2pt](0,0){1.8}{240}{260}
				\psarc[linecolor=blue,linestyle=dotted, linewidth=1.2pt](0,0){1.8}{330}{-20}
				\rput{0}(2.2;0){\tiny$1$}
				\rput{0}(2.2;10){\tiny$2$}
				\rput{0}(2.25;70){\tiny$2j_1\!-\!1$}
				\rput{0}(2.2;90){\tiny$2j_1$}
				\rput{0}(2.5;198){\tiny$2j_2\!-\!1$}
				\rput{0}(2.3;210){\tiny$2j_2$}	
				\rput{-45}(2.5;293){\tiny$2j_r\!-\!1$}
				\rput{0}(2.3;305){\tiny$2j_r$}	
				\rput{0}(2.6;350){\tiny$2k+1$}
		\end{pspicture}
        \caption{$D^{2k+1}_J \in V_{2k+1}$}
        \label{dj2kp1}
    \end{subfigure}
    \caption{The element $D^n_J \in V_n$ }\label{dj}
\end{figure}
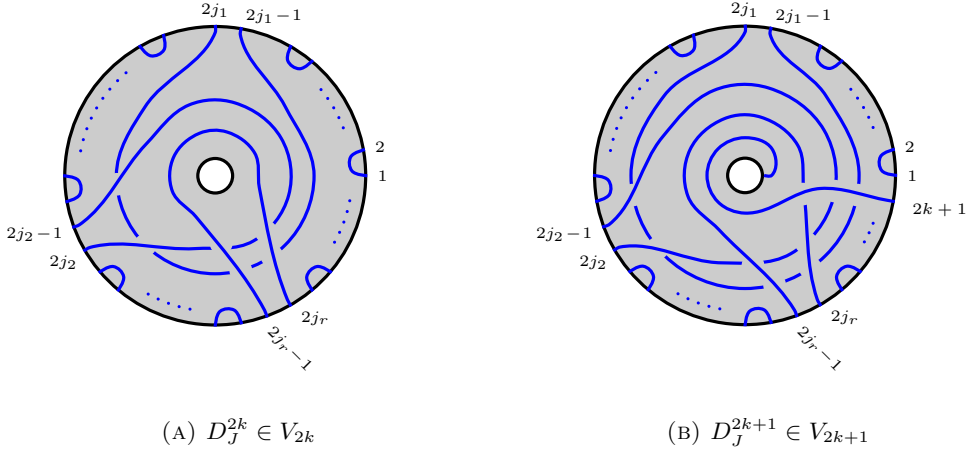

We require the skein theoretic description of $\psi_n(Y_j) \in \mathcal{TL}_n$. {}From the expression
$Y_j=T_{j-1}^{-1}\cdots T_1^{-1}\rho T_{n-1}\cdots T_j$ we obtain
\[
\psi_n(Y_j)=t^{\frac{2j-n-1}{4}}\widehat{Y}_j
\]
with $\widehat{Y}_j \in \mathcal{TL}_n\simeq \textup{End}_{\mathcal{S}}(n)$ the skein class of
$$\psset{unit=0.8cm}
 \begin{pspicture}[shift=-2.2](-2.2,-2.2)(2.4,2)
    	\pscircle[fillstyle=solid, fillcolor=diskin,linewidth=1.2pt](0,0){2}
    	\pscircle[fillstyle=solid, fillcolor=white, linewidth=1.2pt](0,0){1}
	\degrees[12]
	\psline[linecolor=line,linewidth=1.2pt](2;0)(1;0)
	\psdot[linecolor=diskin,dotsize=0.3](1.45;0)
	\psline[linecolor=line,linewidth=1.2pt](2;1)(1;1)
	\psdot[linecolor=diskin,dotsize=0.3](1.5;1)
	\psline[linecolor=line,linewidth=1.2pt](2;5)(1;5)
	\psdot[linecolor=diskin,dotsize=0.3](1.6;5)
	\psline[linecolor=line,linewidth=1.2pt](2;6)(1;6)
	\psdot[linecolor=diskin,dotsize=0.3](1.7;6)
    \rput{7}(0;0){ \psplot[algebraic, plotpoints=400,linecolor=line,polarplot=true,linewidth=1.2pt]{0}{2 Pi mul }{(2/Pi)*ASIN(x/Pi-1) +(4*Pi-x)/(2*Pi)}}
    \psdot[linecolor=diskin,dotsize=0.3](1.3;8)
    \psdot[linecolor=diskin,dotsize=0.3](1.4;9)
    \psdot[linecolor=diskin,dotsize=0.3](1.5;11)
    \psline[linecolor=line,linewidth=1.2pt](2;8)(1;8)
    \psline[linecolor=line,linewidth=1.2pt](2;9)(1;9)
    \psline[linecolor=line,linewidth=1.2pt](2;11)(1;11)
    \psarc[linestyle=dotted, linecolor=line,linewidth=1.2pt](0,0){1.3}{2}{4}
    \psarc[linestyle=dotted, linecolor=line,linewidth=1.2pt](0,0){1.7}{9.5}{10.5}      
    \rput{0}(2.3;0){\tiny$1$}
       \rput{0}(2.3;1){\tiny$2$}     
    \rput{0}(2.2;7){\tiny$j$}
      \rput{0}(2.3;8){\tiny$j+1$}
         \rput{0}(2.3;11){\tiny$n$}
   \end{pspicture}     
$$
Set $\epsilon_n:=(-1)^n$.
\begin{lemma}\label{Qlemma}
Write $\underline{\eta}=(\eta_1,\ldots,\eta_{\lfloor n/2\rfloor})$ with
$\eta_j\in\{v^{\epsilon_n},t^{-\frac{1}{2}}v^{-\epsilon_n}\}$. 
Let $\widehat{\xi}(\underline{\eta})\in (\mathbb{C}^*)^n$ be given by
\begin{equation*}
\widehat{\xi}(\underline{\eta}):=
\begin{cases}
(\eta_1^{-1},\eta_1,\eta_2^{-1},\eta_2,\ldots,\eta_{\lfloor n/2\rfloor}^{-1},
\eta_{\lfloor n/2\rfloor})\qquad\quad \hbox{ if }\,\, n\,\, \textup{ even},\\
(\eta_1^{-1},\eta_1,\eta_2^{-1},\eta_2,\ldots,\eta_{\lfloor n/2\rfloor}^{-1},
\eta_{\lfloor n/2\rfloor},v)\qquad \hbox{ if }\,\, n\,\, \textup{ odd}
\end{cases}
\end{equation*}
and write for $J\subseteq \{1,\ldots,\lfloor n/2\rfloor\}$,
\[
c_J(\underline{\eta}):=t^{\frac{\#J}{4}}\prod_{j\in J}\eta_j^{-1}.
\]
Then, $\widehat{Y}_jQ_n(\underline{\eta})=\widehat{\xi}_j(\underline{\eta})Q_n(\underline{\eta})$ in $V_n$ for $j=1,\ldots,n$, where
\[
Q_n(\underline{\eta}):=\sum_{J\subseteq \{1,\ldots,\lfloor n/2\rfloor\}}c_J(\underline{\eta})D_J^n\in V_n.
\]
In particular we have $Q_n(\underline{\eta})\in V_{n,\xi(\underline{\eta})}$ with weight
\begin{equation*}
\xi(\underline{\eta})=
\begin{cases}
(t^{\frac{1-n}{4}}\eta_1^{-1},t^{\frac{3-n}{4}}\eta_1,
t^{\frac{5-n}{4}}\eta_2^{-1},\ldots,t^{\frac{n-3}{4}}\eta_{\lfloor n/2\rfloor}^{-1},
t^{\frac{n-1}{4}}\eta_{\lfloor n/2\rfloor})\qquad\qquad\quad \hbox{ if }\,\, n\,\, \textup{ even},\\
(t^{\frac{1-n}{4}}\eta_1^{-1},t^{\frac{3-n}{4}}\eta_1,t^{\frac{5-n}{4}}\eta_2^{-1},\ldots,
t^{\frac{n-5}{4}}\eta_{\lfloor n/2\rfloor}^{-1},
t^{\frac{n-3}{4}}\eta_{\lfloor n/2\rfloor},t^{\frac{n-1}{4}}v)\qquad \hbox{ if }\,\, n\,\, \textup{ odd}.
\end{cases}
\end{equation*}
\end{lemma}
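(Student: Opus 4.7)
The plan is a direct skein-theoretic computation of the action of each $\widehat{Y}_j$ on the elements $D_J^n$, followed by an algebraic check that the specific linear combination $Q_n(\underline{\eta})$ is a simultaneous eigenvector. Recall that $V_n\simeq\textup{Hom}_{\mathcal{S}}(\ast,n)\otimes_{\mathcal{TL}_\ast}\mathbb{C}_\ast$ with $\ast=0$ and $X\mapsto u$ for $n$ even, and $\ast=1$ and $\rho\mapsto v$ for $n$ odd; and that $\widehat{Y}_j\in\textup{End}_{\mathcal{S}}(n)$ is the skein element that winds the $j$-th strand once around the hole while leaving the other strands straight.

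For each non-defect index $j\in\{2m-1,2m\}$ belonging to the $m$-th pair of $D_J^n$, I would compose $\widehat{Y}_j$ with $D_J^n$ and expand every crossing using the Kauffman skein relation \eqref{kauffman} together with the loop removal rule \eqref{loopremoval}. Because only the $j$-th strand is nontrivial and it lies in the arch of pair $m$, the net effect only modifies that one arch: after resolving all crossings and absorbing loops around the puncture via the one-dimensional character $X\mapsto u$ or $\rho\mapsto v$, I expect a formula of the shape
\[
\widehat{Y}_jD_J^n=\alpha_J^{(j)}D_J^n+\beta_J^{(j)}D_{J\triangle\{m\}}^n
\]
with coefficients $\alpha_J^{(j)},\beta_J^{(j)}\in\mathbb{C}[t^{\pm 1/4},v^{\pm 1}]$ depending only on whether $m\in J$ (and not on $J\setminus\{m\}$), on $v^{\pm 1}$, and on $t^{\pm 1/4}$. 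For odd $n$ and $j=n$ the relevant strand is the defect line; winding it once around the puncture corresponds under the Hom-space identification to right multiplication by $\rho$ on $\textup{Hom}_{\mathcal{S}}(1,n)$, so $\widehat{Y}_nD_J^n=vD_J^n$ independently of $J$.

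Substituting $Q_n(\underline{\eta})=\sum_Jc_J(\underline{\eta})D_J^n$ into $\widehat{Y}_jQ_n(\underline{\eta})$ and collecting the coefficient of each $D_J^n$ reduces the eigenvalue identity $\widehat{Y}_jQ_n(\underline{\eta})=\widehat{\xi}_j(\underline{\eta})Q_n(\underline{\eta})$, for each pair $m$ and each $J$ with $m\notin J$, to a pair of linear equations on $(c_{J\setminus\{m\}}(\underline{\eta}),c_{J\cup\{m\}}(\underline{\eta}))$ governed by the $2\times 2$ matrix with rows $(\alpha_{J\setminus\{m\}}^{(j)},\beta_{J\cup\{m\}}^{(j)})$ and $(\beta_{J\setminus\{m\}}^{(j)},\alpha_{J\cup\{m\}}^{(j)})$. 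Using the explicit ratio $c_{J\cup\{m\}}(\underline{\eta})/c_{J\setminus\{m\}}(\underline{\eta})=t^{\frac{1}{4}}\eta_m^{-1}$ read off from $c_J(\underline{\eta})=t^{\#J/4}\prod_{j\in J}\eta_j^{-1}$, and the two available values $\eta_m\in\{v^{\epsilon_n},t^{-\frac{1}{2}}v^{-\epsilon_n}\}$, these equations become eigenvalue equations that match $\widehat{\xi}_{2m-1}(\underline{\eta})=\eta_m^{-1}$ and $\widehat{\xi}_{2m}(\underline{\eta})=\eta_m$. Multiplying each $\widehat{Y}_j$-eigenvalue by the prefactor $t^{(2j-n-1)/4}$ from $\psi_n(Y_j)=t^{(2j-n-1)/4}\widehat{Y}_j$ then produces the asserted weight $\xi(\underline{\eta})$.

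The hard part will be the skein bookkeeping: one must verify that the Kauffman resolutions of all the crossings created by winding strand $j$ through the nested outer arches of $D_J^n$ (those arches with indices $m'\in J$ strictly outer to pair $m$) conspire so that the outcome really lies in $\mathbb{C}D_J^n\oplus\mathbb{C}D_{J\triangle\{m\}}^n$ and not in the span of other link patterns in $\mathcal{L}_n$, track which smoothings produce null-homotopic loops (weight $-t^{\frac{1}{2}}-t^{-\frac{1}{2}}$) versus loops around the puncture (absorbed into $u$ or $v^{\pm 1}$), and fix the over/under conventions for the winding curve and the defect line so that the sign $\epsilon_n=(-1)^n$ and the alternative eigenvalue $t^{-\frac{1}{2}}v^{-\epsilon_n}$ emerge with the correct form.
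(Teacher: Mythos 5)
Your plan coincides with the paper's proof: the paper likewise observes that $\widehat{Y}_{2i}$ and $\widehat{Y}_{2i-1}$ touch only the $i$-th arch, combines the $J$ and $J\cup\{i\}$ terms exactly as in your $2\times 2$ scheme (with the key relation $\eta_i=v^{\epsilon_n}+t^{-\frac{1}{2}}v^{-\epsilon_n}-t^{-\frac{1}{2}}\eta_i^{-1}$ satisfied by both choices of $\eta_i$), and treats $j=n$ odd by your defect-line winding argument. The skein bookkeeping you flag as the hard part is carried out in the paper by explicit diagrammatic resolution, yielding $\widehat{Y}_{2i}D_J^n=(v^{\epsilon_n}+t^{-\frac{1}{2}}v^{-\epsilon_n})D_J^n+t^{\frac{1}{4}}D_{J\cup\{i\}}^n$ and $\widehat{Y}_{2i}D_{J\cup\{i\}}^n=-t^{-\frac{3}{4}}D_J^n$, which confirms that the output stays in the span of $D_J^n$ and $D_{J\triangle\{i\}}^n$ as you anticipate.
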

\begin{proof}
It suffices to show that $\widehat{Y}_jQ_n(\underline{\eta})=
\widehat{\xi}_j(\underline{\eta})Q_n(\underline{\eta})$. Write $k:=\lfloor n/2\rfloor$.
There are three cases to consider, $j=2i,2i-1$ (for $1\leq i \leq k$) and, if $n$ is odd, $j=n=2k+1$. 
We consider first $j=2i$. Note that by the definition of $D^{n}_J$ an arch is placed on top of the previous arch if they both encircle the hole of the annulus.
For $\widehat{Y}_{2i}D_j^n$, the path connected to $2i$ that is wound around the diagram passes over all paths connected to $l < 2i$ and under all paths connected to $l >2i$.
Due to these properties, the action of $\widehat{Y}_{2i}$ on $D_J^n$ will only affect the arch $(2i-1,2i)$ and leave the others unchanged.

Consider now $\widehat{Y}_{2i}Q_{n}(\underline{\eta})$ and combine the terms $J$ and $J\cup \{i\}$ for subsets $J$
not containing $i$,
\begin{align*}
	\widehat{Y}_{2i}  Q_{n}(\underline{\eta}) &= \sum _{J \subseteq  \{ 1,\ldots,k \}} c_J(\underline{\eta})\widehat{Y}_{2i}D_J^n\\
				  		   &=   \sum _{J \subseteq  \{ 1,\ldots,k \} \backslash \{i\}} c_J(\underline{\eta})
						   \widehat{Y}_{2i} \left( D^{n}_{J} +   t^{\frac{1}{4}}\eta^{-1}_{i}  D^{n}_{J\cup\{i\}} \right).
\end{align*}

Focusing on the action of $\widehat{Y}_{2i}$ on the terms in the bracket, we claim that
	\begin{align*}
		\widehat{Y}_{2i}\left( D^{n}_{J} + t^{\frac{1}{4}}\eta^{-1}_{i}  D^{n}_{J\cup\{i\}} \right) = (v^{\epsilon_n} + t^{-\frac{1}{2}} v^{-\epsilon_n} -t^{-\frac{1}{2}}\eta^{-1}_{i}) D^{n}_{J} + t^{\frac{1}{4}} D^{n}_{J\cup\{i\}} 
	\end{align*} 
for all $J \subseteq \{1, \dots ,k\} \backslash \{i\}$. Since $\eta_i$ satisfies
\[
\eta_i=v^{\epsilon_n}+t^{-\frac{1}{2}}v^{-\epsilon_n}-t^{-\frac{1}{2}}\eta_i^{-1}
\]
it then follows that $\widehat{Y}_{2i}Q_n(\underline{\eta})=\eta_iQ_n(\underline{\eta})$. To prove the claim we show
\begin{equation*}
\begin{split}
\widehat{Y}_{2i}D_J^n&=(v^{\epsilon_n}+t^{-\frac{1}{2}}v^{-\epsilon_n})D_J^n+
t^{\frac{1}{4}}D_{J\cup\{i\}}^n,\\
\widehat{Y}_{2i}D_{J\cup\{i\}}^n&=-t^{-\frac{3}{4}}D_J^n
\end{split}
\end{equation*}
for $J\subseteq\{1,\ldots,\lfloor n/2\rfloor\}{\setminus}\{i\}$.
These equalities follow from the following diagrammatic calculations, in which we omit all paths that are not involved in the computation. 
The first diagrammatic computation is for  $\widehat{Y}_{2i}D^{2k}_{J}$ in $V_{2k}$, the second for $\widehat{Y}_{2i}D^{2k+1}_{J}$ in $V_{2k+1}$ (note that the defect line creates a subtle difference) and the third for  $\widehat{Y}_{2i} D^{n}_{J\cup\{i\}}$ in $V_n$ (in this case, the defect line does not affect the calculation):
{\psset{unit=0.8cm}
\begin{align*}
\widehat{Y}_{2i} 
\begin{pspicture}[shift=-1.4](-1.7,-1.5)(1.25,1.5)
	\SpecialCoor
	\pscircle[fillstyle=solid, fillcolor=diskin,linewidth=1.2pt](0,0){1}
	\pscircle[fillstyle=solid, fillcolor=white, linewidth=1.2pt](0,0){0.25}
	\degrees[10]
	\pscurve[linecolor=line,linewidth=1.2pt](1;6)(0.9;6)(0.6;6.4)(0.6;6.6)(0.9;7)(1;7)
	\rput{0}(1.3;6){\tiny $2i\!-\!\!1$}
	\rput{0}(1.2;7){\tiny $2i$}
\end{pspicture} 
&=
  \begin{pspicture}[shift=-0.9](-1.5,-1)(1.5,1)
	\SpecialCoor
	\pscircle[fillstyle=solid, fillcolor=diskin,linewidth=1.2pt](0,0){1}
	\pscircle[fillstyle=solid, fillcolor=white, linewidth=1.2pt](0,0){0.25}
	\degrees[10]
	\pscurve[linecolor=line,linewidth=1.2pt](1;6)(0.8;6)(0.35;6.5)(0.5;7)(0.6;7)
	\psdot[linecolor=diskin,dotsize=0.25](0.7;6)
	\psecurve[linecolor=line,linewidth=1.2pt](0.5;7)(0.6;7)(0.65;7)(0.7;8)(0.7;9)(0.7;0)(0.7;1)(0.7;2)(0.7;3)(0.7;4)(0.7;5)(0.7;6)(0.7;6.5)(0.9;7)(1;7)(1;7)
	\rput{0}(1.3;6){\tiny $2i\!-\!\!1$}
	\rput{0}(1.2;7){\tiny $2i$}
\end{pspicture} 
= t^{\frac{1}{4}}
 \begin{pspicture}[shift=-0.9](-1.25,-1)(1.25,1)
	\SpecialCoor
	\pscircle[fillstyle=solid, fillcolor=diskin, linewidth=1.2pt](0,0){1}
	\pscircle[fillstyle=solid, fillcolor=white, linewidth=1.2pt](0,0){0.25}
	\degrees[10]
	\pscurve[linecolor=line,linewidth=1.2pt](1;7)(0.9;7)(0.6;8)(0.6;9)(0.6;0)(0.6;1)(0.6;2)(0.6;3)(0.6;4)(0.6;5)(0.9;6)(1;6)
	\rput{0}(1.3;6){\tiny $2i\!-\!\!1$}
	\rput{0}(1.2;7){\tiny $2i$}
\end{pspicture} 
+ ( v + t^{-\frac{1}{2}}v^{-1}) 
	\begin{pspicture}[shift=-1.4](-1.7,-1.5)(1.25,1.5)
	\SpecialCoor
	\pscircle[fillstyle=solid, fillcolor=diskin,linewidth=1.2pt](0,0){1}
	\pscircle[fillstyle=solid, fillcolor=white, linewidth=1.2pt](0,0){0.25}
	\degrees[10]
	\pscurve[linecolor=line,linewidth=1.2pt](1;6)(0.9;6)(0.6;6.4)(0.6;6.6)(0.9;7)(1;7)
	\rput{0}(1.3;6){\tiny $2i\!-\!\!1$}
	\rput{0}(1.2;7){\tiny $2i$}
	\end{pspicture} ;
\\
\widehat{Y}_{2i}
\begin{pspicture}[shift=-1.4](-1.7,-1.5)(1.25,1.5)
	\SpecialCoor
	\pscircle[fillstyle=solid, fillcolor=diskin,linewidth=1.2pt](0,0){1}
	\pscircle[fillstyle=solid, fillcolor=white, linewidth=1.2pt](0,0){0.25}
	\degrees[10]
	\pscurve[linecolor=line,linewidth=1.2pt](1;6)(0.9;6)(0.6;6.4)(0.6;6.6)(0.9;7)(1;7)
	\pscurve[linecolor=line,linewidth=1.2pt](0.25;0)(0.35;0)(0.4;1)(0.4;2)(0.4;3)(0.4;4)(0.4;5)(0.4;6)(0.4;7)(0.8;9)(1;9)
	\rput{0}(1.3;6){\tiny $2i\!-\!\!1$}
	\rput{0}(1.2;7){\tiny $2i$}
	\rput{0}(1.3;9){\tiny $2k\!\!+\!\!1$}
\end{pspicture} 
&=
  \begin{pspicture}[shift=-0.9](-1.5,-1)(1.5,1)
	\SpecialCoor
	\pscircle[fillstyle=solid, fillcolor=diskin,linewidth=1.2pt](0,0){1}
	\pscircle[fillstyle=solid, fillcolor=white, linewidth=1.2pt](0,0){0.25}
	\degrees[10]
	\psecurve[linecolor=line,linewidth=1.2pt](1;6)(1;6)(0.9;6)(0.7;6.5)(0.65;7)(0.7;8)
	\psdot[linecolor=diskin,dotsize=0.25](0.7;6.5)
	\psecurve[linecolor=line,linewidth=1.2pt](0.6;7)(0.65;7)(0.7;8)(0.7;9)(0.7;0)(0.7;1)(0.7;2)(0.7;3)(0.7;4)(0.7;5)(0.7;6)(0.7;6.5)(0.9;7)(1;7)(1;7)
	\psdot[linecolor=diskin,dotsize=0.25](0.7;8.9)
	\pscurve[linecolor=line,linewidth=1.2pt](0.25;0)(0.35;0)(0.4;1)(0.4;2)(0.4;3)(0.4;4)(0.4;5)(0.4;6)(0.4;7)(0.8;9)(1;9)
	\rput{0}(1.3;6){\tiny $2i\!-\!\!1$}
	\rput{0}(1.2;7){\tiny $2i$}
	\rput{0}(1.3;9){\tiny $2k\!\!+\!\!1$}
\end{pspicture} 
= t^{\frac{1}{4}}
 \begin{pspicture}[shift=-0.9](-1.25,-1)(1.25,1)
	\SpecialCoor
	\pscircle[fillstyle=solid, fillcolor=diskin, linewidth=1.2pt](0,0){1}
	\pscircle[fillstyle=solid, fillcolor=white, linewidth=1.2pt](0,0){0.25}
	\degrees[10]
	\pscurve[linecolor=line,linewidth=1.2pt](1;7)(0.9;7)(0.6;8)(0.6;9)(0.6;0)(0.6;1)(0.6;2)(0.6;3)(0.6;4)(0.6;5)(0.9;6)(1;6)
		\psdot[linecolor=diskin,dotsize=0.25](0.65;8.75)
	\pscurve[linecolor=line,linewidth=1.2pt](0.25;0)(0.35;0)(0.4;1)(0.4;2)(0.4;3)(0.4;4)(0.4;5)(0.4;6)(0.4;7)(0.8;9)(1;9)
	\rput{0}(1.3;6){\tiny $2i\!-\!\!1$}
	\rput{0}(1.2;7){\tiny $2i$}
		\rput{0}(1.3;9){\tiny $2k\!\!+\!\!1$}
\end{pspicture} 
+  t^{-\frac{1}{4}}
	\begin{pspicture}[shift=-1.4](-1.7,-1.5)(1.25,1.5)
	\SpecialCoor
	\pscircle[fillstyle=solid, fillcolor=diskin,linewidth=1.2pt](0,0){1}
	\pscircle[fillstyle=solid, fillcolor=white, linewidth=1.2pt](0,0){0.25}
	\degrees[10]
	\pscurve[linecolor=line,linewidth=1.2pt](1;6)(0.9;6)(0.8;6.4)(0.8;6.6)(0.9;7)(1;7)
	\pscircle[ linecolor=line,linewidth=1.2pt](0,0){0.65}
		\psdot[linecolor=diskin,dotsize=0.25](0.68;8.8)
	\pscurve[linecolor=line,linewidth=1.2pt](0.25;0)(0.35;0)(0.4;1)(0.4;2)(0.4;3)(0.4;4)(0.4;5)(0.4;6)(0.4;7)(0.8;9)(1;9)
	\rput{0}(1.3;6){\tiny $2i\!-\!\!1$}
	\rput{0}(1.2;7){\tiny $2i$}
		\rput{0}(1.3;9){\tiny $2k\!\!+\!\!1$}
	\end{pspicture} 
	\\
	&= 
	 t^{\frac{1}{4}}
 \begin{pspicture}[shift=-0.9](-1.25,-1)(1.25,1)
	\SpecialCoor
	\pscircle[fillstyle=solid, fillcolor=diskin, linewidth=1.2pt](0,0){1}
	\pscircle[fillstyle=solid, fillcolor=white, linewidth=1.2pt](0,0){0.25}
	\degrees[10]
	\pscurve[linecolor=line,linewidth=1.2pt](1;7)(0.9;7)(0.6;8)(0.6;9)(0.6;0)(0.6;1)(0.6;2)(0.6;3)(0.6;4)(0.6;5)(0.9;6)(1;6)
		\psdot[linecolor=diskin,dotsize=0.25](0.65;8.75)
	\pscurve[linecolor=line,linewidth=1.2pt](0.25;0)(0.35;0)(0.4;1)(0.4;2)(0.4;3)(0.4;4)(0.4;5)(0.4;6)(0.4;7)(0.8;9)(1;9)
	\rput{0}(1.3;6){\tiny $2i\!-\!\!1$}
	\rput{0}(1.2;7){\tiny $2i$}
		\rput{0}(1.3;9){\tiny $2k\!\!+\!\!1$}
\end{pspicture} 
+  
	\begin{pspicture}[shift=-1.4](-1.7,-1.5)(1.25,1.5)
	\SpecialCoor
	\pscircle[fillstyle=solid, fillcolor=diskin,linewidth=1.2pt](0,0){1}
	\pscircle[fillstyle=solid, fillcolor=white, linewidth=1.2pt](0,0){0.25}
	\degrees[10]
	\pscurve[linecolor=line,linewidth=1.2pt](1;6)(0.9;6)(0.8;6.4)(0.8;6.6)(0.9;7)(1;7)
		\psdot[linecolor=diskin,dotsize=0.25](0.68;8.8)
	\pscurve[linecolor=line,linewidth=1.2pt](0.25;0)(0.35;0)(0.8;9)(1;9)
	\rput{0}(1.3;6){\tiny $2i\!-\!\!1$}
	\rput{0}(1.2;7){\tiny $2i$}
		\rput{0}(1.3;9){\tiny $2k\!\!+\!\!1$}
	\end{pspicture} 
	+
	t^{-\frac{1}{2}}
	\begin{pspicture}[shift=-1.4](-1.7,-1.5)(1.25,1.5)
	\SpecialCoor
	\pscircle[fillstyle=solid, fillcolor=diskin,linewidth=1.2pt](0,0){1}
	\pscircle[fillstyle=solid, fillcolor=white, linewidth=1.2pt](0,0){0.25}
	\degrees[10]
	\pscurve[linecolor=line,linewidth=1.2pt](1;6)(0.9;6)(0.8;6.4)(0.8;6.6)(0.9;7)(1;7)
	\pscurve[linecolor=line,linewidth=1.2pt](0.25;0)(0.35;0)(0.4;1)(0.4;2)(0.4;3)(0.4;4)(0.4;5)(0.4;6)(0.4;7)(0.5;9)(0.6;1)(0.6;2)(0.6;3)(0.6;4)(0.6;5)(0.6;6)(0.6;7)(0.8;9)(1;9)
	\rput{0}(1.3;6){\tiny $2i\!-\!\!1$}
	\rput{0}(1.2;7){\tiny $2i$}
		\rput{0}(1.3;9){\tiny $2k\!\!+\!\!1$}
	\end{pspicture} 
	\\
	&= 
	 t^{\frac{1}{4}}
 \begin{pspicture}[shift=-0.9](-1.25,-1)(1.25,1)
	\SpecialCoor
	\pscircle[fillstyle=solid, fillcolor=diskin, linewidth=1.2pt](0,0){1}
	\pscircle[fillstyle=solid, fillcolor=white, linewidth=1.2pt](0,0){0.25}
	\degrees[10]
	\pscurve[linecolor=line,linewidth=1.2pt](1;7)(0.9;7)(0.6;8)(0.6;9)(0.6;0)(0.6;1)(0.6;2)(0.6;3)(0.6;4)(0.6;5)(0.9;6)(1;6)
		\psdot[linecolor=diskin,dotsize=0.25](0.65;8.75)
	\pscurve[linecolor=line,linewidth=1.2pt](0.25;0)(0.35;0)(0.4;1)(0.4;2)(0.4;3)(0.4;4)(0.4;5)(0.4;6)(0.4;7)(0.8;9)(1;9)
	\rput{0}(1.3;6){\tiny $2i\!-\!\!1$}
	\rput{0}(1.2;7){\tiny $2i$}
		\rput{0}(1.3;9){\tiny $2k\!\!+\!\!1$}
\end{pspicture} 
+  
 ( v^{-1} + t^{-\frac{1}{2}}v) 
	\begin{pspicture}[shift=-1.4](-1.7,-1.5)(1.25,1.5)
	\SpecialCoor
	\pscircle[fillstyle=solid, fillcolor=diskin,linewidth=1.2pt](0,0){1}
	\pscircle[fillstyle=solid, fillcolor=white, linewidth=1.2pt](0,0){0.25}
	\degrees[10]
	\pscurve[linecolor=line,linewidth=1.2pt](1;6)(0.9;6)(0.6;6.4)(0.6;6.6)(0.9;7)(1;7)
	\pscurve[linecolor=line,linewidth=1.2pt](0.25;0)(0.35;0)(0.4;1)(0.4;2)(0.4;3)(0.4;4)(0.4;5)(0.4;6)(0.4;7)(0.8;9)(1;9)
	\rput{0}(1.3;6){\tiny $2i\!-\!\!1$}
	\rput{0}(1.2;7){\tiny $2i$}
		\rput{0}(1.3;9){\tiny $2k\!\!+\!\!1$}
	\end{pspicture} ;
	\\
\widehat{Y}_{2i}  
\begin{pspicture}[shift=-1.4](-1.7,-1.5)(1.25,1.5)
	\SpecialCoor
	\pscircle[fillstyle=solid, fillcolor=diskin, linewidth=1.2pt](0,0){1}
	\pscircle[fillstyle=solid, fillcolor=white, linewidth=1.2pt](0,0){0.25}
	\degrees[10]
	\pscurve[linecolor=line,linewidth=1.2pt](1;7)(0.9;7)(0.6;8)(0.6;9)(0.6;0)(0.6;1)(0.6;2)(0.6;3)(0.6;4)(0.6;5)(0.9;6)(1;6)
	\rput{0}(1.3;6){\tiny $2i\!-\!\!1$}
	\rput{0}(1.2;7){\tiny $2i$}
\end{pspicture}
&=
 \begin{pspicture}[shift=-0.9](-1.5,-1)(1.5,1)
	\SpecialCoor
	\pscircle[fillstyle=solid, fillcolor=diskin,linewidth=1.2pt](0,0){1}
	\pscircle[fillstyle=solid, fillcolor=white, linewidth=1.2pt](0,0){0.25}
	\degrees[10]
	\pscurve[linecolor=line,linewidth=1.2pt](1;6)(0.8;6)(0.4;5)(0.4;4)(0.4;3)(0.4;2)(0.4;1)(0.4;0)(0.4;9)(0.4;8)(0.5;7)(0.6;7)
	\psdot[linecolor=diskin,dotsize=0.25](0.7;6)
	\psecurve[linecolor=line,linewidth=1.2pt](0.5;7)(0.6;7)(0.65;7)(0.7;8)(0.7;9)(0.7;0)(0.7;1)(0.7;2)(0.7;3)(0.7;4)(0.7;5)(0.7;6)(0.7;6.5)(0.9;7)(1;7)(1;7)
	\rput{0}(1.3;6){\tiny $2i\!-\!\!1$}
	\rput{0}(1.2;7){\tiny $2i$}
\end{pspicture}   = -t^{-\frac{3}{4}} 
  \begin{pspicture}[shift=-1.4](-1.25,-1.5)(1.25,1.5)
	\SpecialCoor
	\pscircle[fillstyle=solid, fillcolor=diskin, linewidth=1.2pt](0,0){1}
	\pscircle[fillstyle=solid, fillcolor=white, linewidth=1.2pt](0,0){0.25}
	\degrees[10]
	\pscurve[linecolor=line,linewidth=1.2pt](1;6)(0.9;6)(0.6;6.4)(0.6;6.6)(0.9;7)(1;7)
	\rput{0}(1.3;6){\tiny $2i\!-\!\!1$}
	\rput{0}(1.2;7){\tiny $2i$}
\end{pspicture} .
  \end{align*}
  }

The check that $\widehat{Y}_{2i-1}Q_n(\underline{\eta})=\eta_i^{-1}Q_n(\underline{\eta})$ is analogous.

The proof that $\widehat{Y}_{2k+1}Q_{2k+1}(\underline{\eta})=vQ_{2k+1}(\underline{\eta})$ with $n=2k+1$ odd is simpler.
All that the operator $\widehat{Y}_{2k+1}$ does is wind the defect line a full turn around the hole
of the annulus.
The operator keeps the defect line above all other curves. This full turn in $V_{2k+1}$ can then be removed by the multiplicative factor $v$.
\end{proof}
{}From now on we choose
\[
\underline{\eta}=(v^{\epsilon_n},\ldots,v^{\epsilon_n})
\]
and we write $Q_n$, $\xi$ and $\widehat{\xi}$ for the corresponding $Q_n(\underline{\eta})$,
$\xi(\underline{\eta})$ and $\widehat{\xi}(\underline{\eta})$. Concretely,
\[
Q_n=\sum_{J\subseteq\{1,\ldots,\lfloor n/2\rfloor\}}t^{\frac{\#J}{4}}v^{-\epsilon_n\#J}D_J^n\in V_{n,\xi},
\]
and
\begin{equation}\label{spectrum}
\xi=
\begin{cases}
(t^{\frac{1-n}{4}}v^{-1},t^{\frac{3-n}{4}}v,
t^{\frac{5-n}{4}}v^{-1},\ldots,t^{\frac{n-3}{4}}v^{-1},
t^{\frac{n-1}{4}}v)\qquad\qquad\quad \hbox{ if }\,\, n\, \textup{ even},\\
(t^{\frac{1-n}{4}}v,t^{\frac{3-n}{4}}v^{-1},t^{\frac{5-n}{4}}v,\ldots,
t^{\frac{n-5}{4}}v,
t^{\frac{n-3}{4}}v^{-1},t^{\frac{n-1}{4}}v)\qquad\quad\,\,\hbox{ if }\,\, n\,\, \textup{ odd}.
\end{cases}
\end{equation}
\begin{lemma}\label{Qnotzerolemma}
$Q_n \neq 0$ for generic $t^{\frac{1}{4}}$.
\end{lemma}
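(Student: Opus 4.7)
The plan is to show that the coefficient of a specific link pattern in the expansion $Q_n = \sum_{L \in \mathcal{L}_n} q_L L$ is a nonzero rational function of $(t^{\frac{1}{4}}, v)$; this immediately yields $Q_n \neq 0$ for generic $t^{\frac{1}{4}}$. The natural candidate is $L^\star := D_\emptyset^n$, which is a bona fide link pattern in $\mathcal{L}_n$: it consists of the little arches $(2m-1, 2m)$ for $1 \leq m \leq \lfloor n/2 \rfloor$, together with a defect line running straight to $n$ when $n$ is odd. In particular $D_\emptyset^n$ carries no crossings, so the $J = \emptyset$ summand of $Q_n$ contributes the unit coefficient $c_\emptyset = 1$ to $q_{L^\star}$.

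For $n = 2k$ the argument is then immediate. Each $D_J^{2k}$ is itself an element of $\mathcal{L}_{2k}$: the nested winding arcs can be drawn in the annulus without any crossings, and distinct $J \subseteq \{1, \ldots, k\}$ produce topologically distinct link patterns, differing in exactly which of the arcs $(2m-1, 2m)$ encircle the puncture. Hence $\{D_J^{2k}\}_{J \subseteq \{1,\ldots,k\}}$ is a linearly independent subset of $V_{2k}$, and since $c_\emptyset = 1 \neq 0$ the linear combination $Q_{2k} = \sum_J c_J D_J^{2k}$ is nonzero for every $(t^{\frac{1}{4}}, v) \in (\mathbb{C}^*)^2$.

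For $n = 2k+1$ the defect line of $D_J^{2k+1}$ (with $|J| = r$) is positioned over each of the $r$ winding arcs, producing $r$ crossings. Resolving them via Kauffman's skein relation \eqref{kauffman} and simplifying using the coend identification $\rho \leftrightarrow v$ from $V_n = \textup{Hom}_{\mathcal{S}}(1, n) \otimes_{\mathcal{TL}_1} \mathbb{C}^{(v)}_1$ yields expansions $D_J^{2k+1} = \sum_L \gamma_{J, L}(t^{\frac{1}{4}}, v)\, L$. Using $\epsilon_{2k+1} = -1$, the coefficient $q_{L^\star}$ reads
\[
q_{L^\star} = 1 + \sum_{\emptyset \neq J \subseteq \{1,\ldots,k\}} t^{|J|/4}\, v^{|J|}\, \gamma_{J, L^\star}(t^{\frac{1}{4}}, v).
\]
I would show this is not identically zero by a degree-in-$v$ bookkeeping that isolates the $v^0$-component as the unit contribution from $J = \emptyset$.

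The main obstacle is precisely the control of $v$-powers arising from the simplifications in the odd case; one must account carefully for the $v$-factors contributed whenever a smoothing produces a defect line winding around the puncture and the $\mathcal{TL}_1$-relation $\rho \mapsto v$ is applied. A cleaner alternative, bypassing the detailed $v$-bookkeeping, is to first specialize $v = 1$: then the identification $V_n \simeq \mathbb{C}[\mathcal{L}_n]$ becomes canonical, the coefficients $\gamma_{J, L}(t^{\frac{1}{4}}, 1)$ involve only $t^{\pm 1/4}$, and one can verify nonvanishing of $Q_n|_{v=1}$ for generic $t^{\frac{1}{4}}$ either by a direct Kauffman expansion of the $L^\star$-component or by induction on $|J|$; rational dependence of $Q_n$ on $v$ then propagates this nonvanishing to generic $(t^{\frac{1}{4}}, v)$, giving $Q_n \neq 0$ for generic $t^{\frac{1}{4}}$ as required.
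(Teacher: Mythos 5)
Your strategy---extracting the coefficient of $L^\star:=D^n_\emptyset$ in the $\mathcal{L}_n$-expansion of $Q_n$---is genuinely different from the paper's, which instead composes $Q_n$ with a fixed dual skein element $Z_n\in\textup{Hom}_{\mathcal{S}}(n,0)$ (resp.\ $\textup{Hom}_{\mathcal{S}}(n,1)$) and evaluates the resulting explicit Laurent polynomial in $t^{1/4}$. For $n=2k$ your argument is complete and in fact a bit stronger than the paper's: the $D_J^{2k}$ are pairwise distinct basis elements of $\mathcal{L}_{2k}$, so $Q_{2k}\neq 0$ for \emph{every} $(t^{1/4},v)$, whereas the pairing argument only yields generic nonvanishing because the loop weight $-(t^{1/2}+t^{-1/2})$ can vanish.

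The odd case is where your write-up has a genuine gap. Both plans you float --- the ``degree-in-$v$ bookkeeping'' and the ``specialize $v=1$, then verify by Kauffman expansion or by induction on $|J|$'' --- are announced but not carried out, and the concluding propagation step only gives nonvanishing for generic $(t^{1/4},v)$, which does not by itself give the statement at the fixed $v$ the lemma is set up for. In fact the $v$-bookkeeping you worry about is a red herring. Under either Kauffman smoothing of a crossing the over-strand never reconnects straight through: its two ends are joined to the two ends of the under-strand. Since the defect line of $D^{2k+1}_J$ is the over-strand at each of its $|J|$ crossings, already at the first crossing the inner boundary point (the puncture) gets rerouted onto a winding arch, so after any choice of smoothings the resulting defect line terminates at $2j-1$ or $2j$ for some $j\in J\subseteq\{1,\dots,k\}$, never at $2k+1$. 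Consequently $\gamma_{J,L^\star}\equiv 0$ for every nonempty $J$, hence $q_{L^\star}=1$ identically and $Q_{2k+1}\neq 0$ for all $(t^{1/4},v)$ with no bookkeeping at all. Had you made this topological observation your approach would close cleanly for both parities and even sharpen the lemma; as written, the odd case is not proved.
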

\begin{proof}
Consider first $n=2k$ even. Let $Z_{2k}\in\textup{Hom}_{\mathcal{S}}(2k,0)$ be the 
skein class of the $(2k,0)$-link diagram with little arches connecting $2i-1$ and $2i$ for
$1\leq i\leq k$. Composing on the left with $Z_{2k}$ defines a linear map 
$\textup{Hom}_{\mathcal{S}}(0,2k)\rightarrow \textup{End}_{\mathcal{S}}(0)$ that descends to 
a well-defined linear map $Z_{2k}: V_{2k}\rightarrow V_0\simeq \mathbb{C}$. Then
\[
Z_{2k}(Q_{2k})=\sum_{J\subseteq \{1,\ldots,k\}}
t^{\#J/4}v^{-\#J}\bigl(vt^{\frac{1}{4}}+v^{-1}t^{-\frac{1}{4}}\bigr)^{\#J}
\bigl(-t^{\frac{1}{2}}-t^{-\frac{1}{2}}\bigr)^{k-\#J},
\]
which is a nonzero Laurent polynomial in $t^{\frac{1}{4}}$ (look at its highest order term).

For $n=2k+1$ odd, we apply a similar argument, now using the element
$Z_{2k+1}\in\textup{Hom}_{\mathcal{S}}(2k+1,1)$ which is the skein class of the $(2k+1,1)$-link
diagram with little arches connecting the inner boundary points $2i-1$ and $2i$ ($1\leq i\leq k$) and with a defect line connecting the inner boundary point at $2k+1$ to the outer boundary point  at $1$.
Then, the resulting linear map $Z_{2k+1}: V_{2k+1}\rightarrow V_1\simeq\mathbb{C}$
maps $Q_{2k+1}$ to
\[
v^\kappa \sum_{J\subseteq\{1,\ldots,k\}}t^{\#J/4}v^{\#J}\bigl(v^{-1}t^{\frac{1}{4}}+vt^{-\frac{1}{4}}\bigr)^{\#J}
\bigl(-t^{\frac{1}{2}}-t^{-\frac{1}{2}}\bigr)^{k-\#J}
\]
for some $\kappa\in\mathbb{Z}$, which again is a nonzero Laurent polynomial in $t^{\frac{1}{4}}$
(the factor for the removal of a closed loop around the hole with a defect line running over it is
$(v^{-1}t^{\frac{1}{4}}+vt^{-\frac{1}{4}})$, as shown in the proof of the previous lemma).
\end{proof}
To establish an identification $V_n\simeq M^I(\gamma)$ as $\mathcal{H}_n$-modules,
we will use $Q_n$ and the intertwiners $I_w$ ($w\in S_n$) to construct the corresponding cyclic vector in $V_n$. But first we determine what the subset $I\subseteq\{1,\ldots,n-1\}$ should be.

Set  
\[
I^{(n)}:=\{1,\ldots,\lceil n/2\rceil-1,\lceil n/2\rceil+1,\ldots,n-1\}.
\]
The associated parabolic subgroup $S_{n,I^{(n)}}$ of $S_n$ is isomorphic to 
$S_k\times S_k$ if $n=2k$ even, and $S_{k}\times S_{k-1}$ if $n=2k-1$ is odd.

\begin{lemma}\label{Dimensionlemma}
$\text{\emph{Dim}}(V_{n}) = \#(S_{n}/S_{n,I^{(n)}}) = \# S_n^{I^{(n)}}$.
\end{lemma}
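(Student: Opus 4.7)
The plan is to establish the two equalities in the statement separately. The right equality $\#(S_n/S_{n,I^{(n)}})=\#S_n^{I^{(n)}}$ is the classical Coxeter-theoretic fact that the minimal-length coset representatives of a parabolic quotient form a complete set of coset representatives, so no additional argument is needed.

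For the left equality $\dim V_n = \#(S_n/S_{n,I^{(n)}})$ I would compute both sides explicitly. Since $I^{(n)}$ is obtained from $\{1,\ldots,n-1\}$ by removing the single element $\lceil n/2\rceil$, the parabolic subgroup $S_{n,I^{(n)}}$ generated by $\{s_i\,:\,i\in I^{(n)}\}$ is the Young subgroup $S_{\{1,\ldots,\lceil n/2\rceil\}}\times S_{\{\lceil n/2\rceil+1,\ldots,n\}}$ of order $\lceil n/2\rceil!\,\lfloor n/2\rfloor!$, which gives
\[
\#(S_n/S_{n,I^{(n)}})=\binom{n}{\lfloor n/2\rfloor}.
\]
On the module side, the identification $V_n\simeq\mathbb{C}[\mathcal{L}_n]$ of vector spaces recalled at the end of Section \ref{EATLsection} gives $\dim V_n=\#\mathcal{L}_n$, so it remains to show $\#\mathcal{L}_n=\binom{n}{\lfloor n/2\rfloor}$.

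My combinatorial approach to the count $\#\mathcal{L}_n=\binom{n}{\lfloor n/2\rfloor}$ is to construct a bijection between $\mathcal{L}_n$ and the set of $\lfloor n/2\rfloor$-element subsets of $\{1,\ldots,n\}$. To each link pattern $L$ one assigns its set of \emph{openers}: for an arc not encircling the puncture the smaller-indexed endpoint is the opener, while for an arc encircling the puncture (such arcs necessarily form a nested family around the puncture) the opener/closer designation is inverted. The natural family of diagrams $D_J^n$ of Figure \ref{dj} already realises $2^{\lfloor n/2\rfloor}$ distinct subsets in this way, and the remaining link patterns (coming from more general non-crossing matchings, possibly with nesting of non-encircling arcs) can be shown to exhaust all $\binom{n}{\lfloor n/2\rfloor}$ subsets.

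The main obstacle is verifying rigorously that this opener/closer map is a bijection, particularly in handling general non-crossing configurations combining nested encircling and non-encircling arcs. A perhaps cleaner alternative is a principal series argument: by Lemma \ref{Qlemma}, $Q_n$ is a weight vector of weight $\xi$ given in \eqref{spectrum}, and for generic $t^{\frac{1}{4}}$ the vectors $\{I_w Q_n\,:\,w\in S_n^{I^{(n)}}\}$ furnished by Theorem \ref{intertwinerthm} have pairwise distinct $Y$-weights and are therefore linearly independent, giving $\dim V_n\geq\#S_n^{I^{(n)}}$. The reverse inequality would follow from showing that $V_n$ is cyclically generated by $Q_n$ as an $\mathcal{H}_n$-module, so that $V_n$ is a quotient of the principal series module $M^{I^{(n)}}(\xi)$, whose dimension is $\#S_n^{I^{(n)}}$. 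Since the basis $\mathcal{L}_n$ of $V_n$ is independent of $t^{\frac{1}{4}}$, the equality of dimensions for generic $t^{\frac{1}{4}}$ extends to all values of $t^{\frac{1}{4}}$.
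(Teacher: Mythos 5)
Your plan correctly reduces to the count $\#\mathcal{L}_n=\binom{n}{\lfloor n/2\rfloor}$ and proposes a bijection with $\lfloor n/2\rfloor$-element subsets of $\{1,\ldots,n\}$, which is the same target the paper aims for (phrased there as binary words in $\{\alpha,\beta\}$ of length $n$ with $\lfloor n/2\rfloor$ occurrences of $\alpha$). But the step you flag as ``the main obstacle'' --- making the opener/closer assignment well-defined and bijective --- is exactly the content of the lemma, and your description (smaller endpoint is the opener for non-encircling arcs, ``inverted'' for encircling arcs) is not a complete rule once arcs encircling the puncture are nested among non-encircling ones, nor is it clear how to handle the defect line in the odd case. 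The paper resolves this cleanly with an orientation argument: orient each arc of $L$ so that the closed loop formed by the arc together with the appropriate piece of the anticlockwise-oriented outer boundary encloses the puncture; then assign $\alpha$ at boundary point $i$ if the arc is oriented away from $i$, and $\beta$ if towards $i$ (with the extra rule for odd $n$ that the endpoint of the defect line gets $\alpha$). This gives an unambiguous word, and one checks directly it is a bijection onto words with $\lceil n/2\rceil$ letters $\alpha$. You should supply an argument of this kind rather than leave the bijection as a hoped-for fact.

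Your alternative via intertwiners does not work as an independent proof. First, you would need the $I_wQ_n$ (or $I_wI_{w_n}Q_n$) to be nonzero, which requires the nonvanishing of the $c$-functions $e_w$ at the relevant spectral points --- an argument the paper does carry out, but only in the proof of Theorem \ref{principalTHM}. More seriously, the reverse inequality (cyclicity of $Q_n$ and the parabolic relations $T_iu=t^{-1/2}u$ for $i\in I^{(n)}$ needed to realise $V_n$ as a quotient of $M^{I^{(n)}}(\gamma)$) is precisely what Theorem \ref{principalTHM} establishes, and its proof invokes Lemma \ref{Dimensionlemma}: the weight-space decomposition $V_n=\bigoplus_{w\in S_n^{I^{(n)}}}V_{n,w\gamma}$ used there to deduce $V_{n,s_i\gamma}=\{0\}$ is exactly where the dimension count enters. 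So this route is circular in the paper's logical order. There is also a smaller slip: you write $I_wQ_n$, but $Q_n$ has weight $\xi$ whereas the principal series cyclic vector should sit in weight $\gamma=w_n\xi$; the paper works with $u_w=I_wI_{w_n}Q_n$ for this reason. The combinatorial bijection is the right and non-circular way to prove this lemma, and it is what you should complete.
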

\begin{proof}
For $n=2k$ even, $\mathcal{L}_{2k}$ is in bijective correspondence with the set of binary words of length $2k$ with letters $\alpha,\beta$ of length $2k$ such that $k$ letters  are $\alpha$. 
The bijection is as follows. Orient the outer boundary of the punctured disc anticlockwise. 
Given $L\in\mathcal{L}_{2k}$, orient the arcs in $L$ in such a way that the closed oriented loop obtained by adding a piece of the oriented outer boundary of the punctured disc, is enclosing the puncture. Then, the word of length $2k$ in the letters $\{\alpha,\beta\}$ is obtained by putting $\alpha$ as the $i$th letter if the orientation of the arc at $i$ is away from $i$, and $\beta$ if it is towards $i$.

In the odd case $n=2k-1$ we create a bijective correspondence of $\mathcal{L}_{2k-1}$ with
the set of binary words of length $2k-1$ with letters $\alpha,\beta$ such that $k$ letters are $\alpha$
by a similar procedure, with the only addition that $\alpha$ is assigned to the outer boundary point that is connected to the puncture.

Clearly the cardinality of the set of such binary words is equal to
$\#(S_n/S_{n,I^{(n)}})$.
\end{proof}
\begin{remark}\label{mincoset}
The minimal coset representatives $S_n^{I^{(n)}}$ are described as the set of permutations
$\sigma\in S_n$ such that $\ell(\sigma s_i)=\ell(\sigma)+1$ for all $i\in I^{(n)}$, where $\ell$
is the length function of $S_n$. It follows that $S_n^{I^{(n)}}$ is the set of permutations $\sigma\in S_n$ such that
\[
\sigma(1)<\sigma(2)<\cdots<\sigma(\lceil n/2\rceil),
\qquad
\sigma(\lceil n/2\rceil+1)<\sigma(\lceil n/2\rceil+2)<\cdots<\sigma(n).
\]
\end{remark}

We define $w_n\in S_n$ as follows,
\begin{equation}\label{w2k}
\begin{split}
w_{2k}&=\begin{pmatrix}
		1 & 2 & 3 & 4 & \cdots & 2k-1 & 2k \\
		1 & k+1 & 2 & k+2 & \cdots & k & 2k
		\end{pmatrix},\\
w_{2k-1}&=\begin{pmatrix}
		1 & 2 & 3 & 4 & \cdots & 2k-2 & 2k-1 \\
		1 & k+1 & 2 & k+2 & \cdots & 2k-1 & k
		\end{pmatrix}		
\end{split}
\end{equation}
Note that $w_{2k}=\iota_{2k-1}(w_{2k-1})$ with $\iota_n: S_{n-1}\hookrightarrow S_n$ the natural
group embedding extending $\sigma\in S_{n-1}$ to a permutation of $\{1,\ldots,n\}$
by $\sigma(n)=n$. Note that
\[
w_{2k-1}^{-1}=\begin{pmatrix}
		1 & 2 & 3 & \cdots & k      & k+1 & k+2   &\cdots & 2k-1\\
		1 & 3 & 5 & \cdots  &2k-1 &   2   & 4 & \cdots & 2k-2
		\end{pmatrix}.
\]
It follows that $w_n^{-1}\in S_n^{I^{(n)}}$, cf. Remark \ref{mincoset}. We now 
define $\gamma=\gamma^{(n)}\in (\mathbb{C}^*)^n$ by 
\[
\gamma:=w_n\xi\in (\mathbb{C}^*)^n
\]
with $\xi$ the weight of $V_n$ as given by \eqref{spectrum}. Concretely we have
\begin{equation}\label{highestweight}
\gamma=
\begin{cases}
(t^{\frac{1-n}{4}}v^{-1},t^{\frac{5-n}{4}}v^{-1},\ldots,t^{\frac{n-3}{4}}v^{-1},
t^{\frac{3-n}{4}}v, t^{\frac{7-n}{4}}v,\ldots,t^{\frac{n-1}{4}}v)
\qquad\quad\,\, &\hbox{ if }\,\, n\, \textup{ even},\\
(t^{\frac{1-n}{4}}v, t^{\frac{5-n}{4}}v,\ldots,t^{\frac{n-1}{4}}v, t^{\frac{3-n}{4}}v^{-1}, t^{\frac{7-n}{4}}v^{-1},
\ldots, t^{\frac{n-3}{4}}v^{-1})\qquad\quad\,\,&\hbox{ if }\,\, n\,\, \textup{ odd}.
\end{cases}
\end{equation}
Note that $\gamma\in T^{I^{(n)}}$, and hence we have the associated principal series module
$M^{I^{(n)}}(\gamma)$.
\begin{theorem}\label{principalTHM}
For generic $t^{\frac{1}{4}}$, we have $M^{I^{(n)}}(\gamma)\simeq V_n$ as left $\mathcal{H}_n$-modules, with the isomorphism $M^{I^{(n)}}(\gamma)\overset{\sim}{\longrightarrow}
V_n$ mapping $v_e^{I^{(n)}}(\gamma)\in M^{I^{(n)}}(\gamma)_\gamma$ to
$I_{w_n}Q_n\in V_{n,\gamma}$, where $e$ is the unit element of the symmetric group $S_n$.
\end{theorem}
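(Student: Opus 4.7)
The plan is to produce a distinguished weight $\gamma$ vector in $V_n$, use it to induce an $\mathcal{H}_n$-linear map from a principal series module into $V_n$, and then invoke irreducibility together with a dimension count to promote that map to an isomorphism. First I would show that $I_{w_n}Q_n$ is a nonzero element of $V_{n,\gamma}$. The weight identification $I_{w_n}Q_n \in V_{n, w_n\xi} = V_{n,\gamma}$ is immediate from the intertwining relation $I_w f(Y) = (wf)(Y)I_w$ of Theorem~\ref{intertwinerthm} applied to $Q_n \in V_{n,\xi}$ (Lemma~\ref{Qlemma}). Nonvanishing follows from $I_{w_n^{-1}}I_{w_n} = e_{w_n}(Y)$ acting on $Q_n$ as the scalar $e_{w_n}(\xi)$: since $Q_n \neq 0$ by Lemma~\ref{Qnotzerolemma} and no factor $t^{\frac{1}{2}} - t^{-\frac{1}{2}}\xi^{\pm\alpha}$ for $\alpha \in R_0^+ \cap w_n^{-1}R_0^-$ vanishes identically in $t^{\frac{1}{4}}$, one has $I_{w_n}Q_n \neq 0$ for generic $t^{\frac{1}{4}}$.

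Next I would set $J := \{i \in \{1,\ldots,n-1\} : T_i \cdot I_{w_n}Q_n = t^{-\frac{1}{2}} I_{w_n}Q_n\}$ and verify $J \subseteq I^{(n)}$. Writing $v := I_{w_n}Q_n$ and substituting $T_iv = t^{-\frac{1}{2}}v$ together with $Y^{\alpha_i}v = (\gamma_i/\gamma_{i+1})v$ into \eqref{Ii} yields $I_iv = \bigl(t^{-\frac{1}{2}} - t^{\frac{1}{2}}(\gamma_i/\gamma_{i+1})\bigr)v$, which is a nonzero multiple of the weight $\gamma$ vector $v$ unless $\gamma_i/\gamma_{i+1}=t^{-1}$. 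But Theorem~\ref{intertwinerthm} also places $I_iv$ in $V_{n,s_i\gamma}$, and for generic parameters the entries of $\gamma$ are distinct so $s_i\gamma \neq \gamma$. A nonzero vector cannot be a common eigenvector of $Y_1,\ldots,Y_n$ for two distinct eigenvalues, so $I_iv = 0$, forcing $\gamma_i/\gamma_{i+1}=t^{-1}$ and hence $i \in I^{(n)}$. The assignment $1 \mapsto I_{w_n}Q_n$ is therefore $\chi_\gamma^J$-equivariant, and the universal property of induction produces a nonzero $\mathcal{H}_n$-linear map $\phi : M^J(\gamma) \to V_n$ sending $v_e^J(\gamma)$ to $I_{w_n}Q_n$.

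To conclude I would invoke (i) the irreducibility of $V_n$ as an $\mathcal{H}_n$-module for generic $t^{\frac{1}{4}}$ and (ii) the fact that $M^{I^{(n)}}(\gamma)$ is the unique irreducible quotient of $M^J(\gamma)$ for $J \subseteq I^{(n)}$ and $\gamma$ satisfying $\gamma_i/\gamma_{i+1}=t^{-1}$ for $i \in I^{(n)}$, standard but delicate results in affine Hecke algebra representation theory (cf.~\cite{Stokman:2011aa} and references therein). Together with $\phi$ being a nonzero surjection onto the irreducible $V_n$, this ensures $\phi$ factors through the natural quotient $M^J(\gamma) \twoheadrightarrow M^{I^{(n)}}(\gamma)$, which sends $v_e^J(\gamma) \mapsto v_e^{I^{(n)}}(\gamma)$, producing an $\mathcal{H}_n$-linear surjection $M^{I^{(n)}}(\gamma) \twoheadrightarrow V_n$ sending $v_e^{I^{(n)}}(\gamma)$ to $I_{w_n}Q_n$. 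The dimension identity $\dim V_n = |S_n^{I^{(n)}}| = \dim M^{I^{(n)}}(\gamma)$ from Lemma~\ref{Dimensionlemma} then upgrades this surjection to the required isomorphism. The principal obstacle is supplying the irreducibility and structural inputs (i) and (ii), which concern principal series at the non-generic central character forced on us by $\gamma_i/\gamma_{i+1}=t^{-1}$ on $I^{(n)}$; everything else reduces to straightforward manipulations with the intertwiners $I_w$ and the weight decomposition.
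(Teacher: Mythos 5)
Your opening steps match the paper exactly: $I_{w_n}Q_n\in V_{n,\gamma}$ by Lemma~\ref{Qlemma} and Theorem~\ref{intertwinerthm}, and nonvanishing follows from $I_{w_n^{-1}}I_{w_n}Q_n=e_{w_n}(\xi)Q_n$ with $e_{w_n}(\xi)\neq 0$ generically. The divergence, and the gap, begins after that. You define $J$ as the set of $i$ for which $T_i$ has eigenvalue $t^{-1/2}$ on $u_e:=I_{w_n}Q_n$, and correctly argue $J\subseteq I^{(n)}$, but you never establish the containment actually needed, namely $I^{(n)}\subseteq J$. To compensate you induce up from $M^J(\gamma)$ and then lean on two unproved inputs: (i) irreducibility of $V_n$ as an $\mathcal{H}_n$-module for generic $t^{1/4}$, and (ii) that $M^{I^{(n)}}(\gamma)$ is the unique irreducible quotient of $M^J(\gamma)$. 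Neither is established in the paper, and (i) is particularly problematic because it is most naturally a \emph{corollary} of the theorem being proved (once $V_n\simeq M^{I^{(n)}}(\gamma)$, irreducibility for generic $t^{1/4}$ follows from irreducibility of the generic principal series); using it as input courts circularity, or at least demands a substantial independent argument. Fact (ii) concerns the non-semisimple regime $\gamma_i/\gamma_{i+1}=t^{-1}$ and also needs proof.

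The paper avoids both by arguing one step more concretely. Having produced $u_e=I_{w_n}Q_n\neq 0$, it sets $u_w:=I_wu_e\in V_{n,w\gamma}$ for $w\in S_n^{I^{(n)}}$, shows $u_w\neq 0$ for generic $t^{1/4}$ via $I_{w^{-1}}u_w=e_w(\gamma)u_e$ and the nonvanishing of $e_w(\gamma)$ (using the description of $R_0^+\cap w^{-1}R_0^-$ from Remark~\ref{mincoset}), and then invokes the dimension count of Lemma~\ref{Dimensionlemma} plus pairwise distinctness of $\{w\gamma\}_{w\in S_n^{I^{(n)}}}$ to conclude that these $u_w$ span $V_n$ and exhaust its weight spaces. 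At that point $V_{n,s_i\gamma}=\{0\}$ for $i\in I^{(n)}$ is immediate (since $s_i\notin S_n^{I^{(n)}}$), so $I_iu_e=0$ forces $T_iu_e=t^{-1/2}u_e$, giving $J=I^{(n)}$ outright. Frobenius reciprocity then yields the map $M^{I^{(n)}}(\gamma)\to V_n$ sending $v_e^{I^{(n)}}(\gamma)$ to $u_e$, and it is an isomorphism because it carries the $|S_n^{I^{(n)}}|$ linearly independent weight vectors $b_w^{I^{(n)}}(\gamma)$ bijectively onto the $u_w$. In short, the missing idea in your proposal is the explicit calibrated weight basis of $V_n$; once you have it, the $T_i$-eigenvalue condition is forced and the irreducibility machinery becomes unnecessary.
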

\begin{proof}
We have $I_{w_n}Q_n\in V_{n,\gamma}$ by Lemma \ref{Qlemma} and Theorem \ref{intertwinerthm}.
Furthermore, by Theorem \ref{intertwinerthm} again,
\[
I_{w_n^{-1}}I_{w_n}Q_n=e_{w_n}(\xi)Q_n
\]
and 
\[
e_{w_n}(\xi)=\prod_{\alpha\in R_0^+\cap w_n^{-1}R_0^-}
\left(t^{\frac{1}{2}}-t^{-\frac{1}{2}}\xi^\alpha\right)\left(t^{\frac{1}{2}}-t^{-\frac{1}{2}}\xi^{-\alpha}\right)\not=0
\]
for generic $t^{\frac{1}{4}}$, since $R_0^+\cap w_n^{-1}R_0^-$ consists of the 
roots $\epsilon_{2l}-\epsilon_{2m-1}$ ($l<m$). Hence, $I_{w_n}Q_n\not=0$ for generic $t^{\frac{1}{4}}$.
Consider now the vectors
\[
u_w:=I_wI_{w_n}Q_n\in V_{n,w\gamma},\qquad w\in S_n^{I^{(n)}}.
\]
Then, for $w\in S_n^{I^{(n)}}$ we have
\[
I_{w^{-1}}u_w=e_w(\gamma)I_{w_n}Q_n
\]
by Theorem \ref{intertwinerthm}. Furthermore, $e_w(\gamma)\not=0$ for generic $t^{\frac{1}{4}}$ since
for $w\in S_n^{I^{(n)}}$ we have
\[
R_0^+\cap w^{-1}R_0^-\subseteq \{\epsilon_l-\epsilon_m\,\,\, | \,\,\, 1\leq l\leq \lceil n/2\rceil\,\,\,
\& \,\,\, \lceil n/2\rceil+1\leq m<n\}
\]
for $w\in S_n^{I^{(n)}}$, in view of Remark \ref{mincoset}. It follows that 
$0\not=u_w\in V_{n,w\gamma}$ for all $w\in S_n^{I^{(n)}}$. Hence, by
Lemma \ref{Dimensionlemma}, for generic $t^{\frac{1}{4}}$,
\[
V_n=\bigoplus_{w\in S_n^{I^{(n)}}}V_{n,w\gamma}
\]
with $V_{n,w\gamma}=\mathbb{C}u_w$ for all $w\in S_n^{I^{(n)}}$, since 
the $w\gamma$'s ($w\in S_n^{I^{(n)}}$) are pairwise different for generic $t^{\frac{1}{4}}$.
It remains to show that
$T_iu_e=t^{-\frac{1}{2}}u_e$ for $i\in I^{(n)}$ and generic $t^{\frac{1}{4}}$. Fix $i\in I^{(n)}$. Then
$I_iu_e\in V_{n,s_i\gamma}=\{0\}$ 
since 
\[
s_i\gamma\not\in\{w\gamma\,\, | \,\, w\in S_n^{I^{(n)}}\}
\]
for generic $t^{\frac{1}{4}}$. By the explicit expression of the intertwiner $I_i$ (see \eqref{Ii}),
we then obtain 
\begin{equation*}
\begin{split}
0=I_iu_e&=(1-\gamma^{\alpha_i})T_iu_e+(t^{-\frac{1}{2}}-t^{\frac{1}{2}})\gamma^{\alpha_i}u_e\\
&=(1-t^{-1})T_iu_e+(t^{-\frac{1}{2}}-t^{\frac{1}{2}})t^{-1}u_e\\
&=(1-t^{-1})(T_i-t^{-\frac{1}{2}})u_e.
\end{split}
\end{equation*}
Hence, $T_iu_e=t^{-\frac{1}{2}}u_e$, as desired.
\end{proof}

\subsection{The Cherednik-Matsuo correspondence} 
Now that we have identified the link pattern modules $V_n$ with principal
series representations, we can apply the Cherednik-Matsuo correspondence to analyse the
existence of polynomial solutions of the associated qKZ equations. 

The Cherednik-Matsuo correspondence gives a bijective correspondence between meromorphic twisted symmetric solution to qKZ equations associated with a principal series module and suitable classes of meromorphic common eigenfunctions for the action of the $Y$-operators under the basic representation \cite{Stokman:2011aa,Kasatani:2007ab}.

The version of the Cherednik-Matsuo correspondence we need is as follows. If $I\subseteq\{1,\ldots,n-1\}$
and $w\in S_n$, then we write $\underline{w}\in S_{n,I}$ and $\overline{w}\in S_n^I$ for the unique
elements such that $w=\overline{w}\underline{w}$. Let $w_0\in S_n$ be the longest Weyl group
element, mapping $j$ to $n+1-j$ for $j=1,\ldots,n$. Let $I^*:=\{i^*\,\, | \,\, i\in I\}$ with
$i^*\in\{1,\ldots,n-1\}$ such that $\overline{w}_0(\alpha_i)=\alpha_{i^*}$.

\begin{theorem}\label{CMthm}
Fix $c\in\mathbb{C}^*$, $I \subseteq \{ 1, \ldots, n-1 \}$ and $\zeta\in T^I$.
	Then, we have a linear isomorphism
	\begin{align} \label{cmbijection}
	\left \{ f(\mathbf{z}) \in \C [\mathbf{z}]\,\,  \middle | \,\,
		\begin{matrix} 
			\pi_{n}^{t^{-\frac{1}{2}},q}(Y_j)f = c(\overline{w}_0\zeta^{-1})_jf  &  \text{ for all } 1\leq j \leq n  \\ 
			\pi_{n}^{t^{-\frac{1}{2}},q}(T_{i})f = t^{\frac{1}{2}} f & \text{ for all } i\in I^* 			
			\end{matrix}   
			\right\}
		\overset{\sim}{\underset{\textup{CM}_{I,\zeta}}{\longrightarrow}} \text{\emph{Sol}}_n(M^I(\zeta),q,c) 
	\end{align}	
with $\textup{CM}_{I,\zeta}$ given by
	\begin{align*}
		\textup{CM}_{I, \zeta}(f) : = \sum_{w \in S^I_n} \pi_{n}^{t^{-1/2},q} (T_{w \overline{w}_0^{-1} } ) f \otimes v^I_w(\zeta).
	\end{align*}
\end{theorem}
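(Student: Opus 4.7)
The plan is to verify the claimed bijection by analyzing both sides of the qKZ condition componentwise with respect to the basis $\{v_w^I(\zeta)\}_{w \in S_n^I}$ of $M^I(\zeta)$. Using the alternative description of the solution space recalled earlier, $F \in \textup{Sol}_n(M^I(\zeta); q, c)$ is equivalent to
$$\pi_n^{t^{-1/2},q}(h) F = \sigma^c(J(h)) F$$
for all $h \in \mathcal{H}_n(t^{-1/2})$, where $\sigma$ denotes the left $\mathcal{H}_n$-action on $M^I(\zeta)$; it therefore suffices to check this identity on the generators $T_i$ ($1\leq i<n$) and $\rho$.

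Injectivity of $\textup{CM}_{I,\zeta}$ is immediate: in the expansion $\textup{CM}_{I,\zeta}(f) = \sum_{w\in S_n^I} f_w(\mathbf{z}) \otimes v_w^I(\zeta)$, the coefficient $f_{\overline{w}_0}$ equals $f$, so the assignment $F \mapsto f_{\overline{w}_0}$ is a left inverse. To verify that $F := \textup{CM}_{I,\zeta}(f)$ lies in $\textup{Sol}_n(M^I(\zeta); q, c)$ when $f$ satisfies the stated eigenvalue conditions, I would check the $T_i$-qKZ equations ($1 \leq i < n$) by expanding $\sigma(T_i^{-1}) v_w^I(\zeta) = T_i^{-1}T_w\otimes 1$ in the basis of $M^I(\zeta)$ via the standard left multiplication rules. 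Three cases arise according to whether $s_i w$ is shorter than $w$, longer than $w$ and still in $S_n^I$, or leaves $S_n^I$; comparing the coefficient of $v_w^I(\zeta)$ on both sides then reduces, using the braid relations for the operators $\pi(T_i)$, either to an automatic identity or, in the third case, to the parabolic condition $\pi(T_j) f = t^{1/2} f$ for the corresponding $j \in I^*$.

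The $\rho$-qKZ equation is handled via the Bernstein-Zelevinsky formula $\rho = T_1 T_2 \cdots T_{n-1} Y_n$ together with the relations $T_i Y_{i+1} T_i = Y_i$. Expanding $\sigma(\rho^{-1}) v_w^I(\zeta)$ in the basis and iterating the $T$-intertwining identities already established reduces the remaining equation to the eigenvalue conditions $\pi_n^{t^{-1/2},q}(Y_j) f = c(\overline{w}_0 \zeta^{-1})_j f$ on the base component. The twist by $\overline{w}_0$ arises because the $Y$-weight of $v_{\overline{w}_0}^I(\zeta)$ in $M^I(\zeta)$ is $\overline{w}_0 \zeta$, while the antiautomorphism $J$ inverts this weight when transferred through the basic representation.

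For surjectivity, given $F \in \textup{Sol}_n(M^I(\zeta); q, c)$, set $f := f_{\overline{w}_0}$. The $T_i$-qKZ equations propagate outward from $\overline{w}_0$ through the Bruhat order of $S_n^I$ and force $f_w = \pi(T_{w\overline{w}_0^{-1}}) f$ for every $w \in S_n^I$. The main obstacle I anticipate is the consistency of this recursion: two distinct reduced paths from $\overline{w}_0$ to a given $w$ must yield the same polynomial $f_w$, a cocycle-type identity that follows from the braid relations for $\pi(T_i)$ but must be reconciled with the $I^*$-parabolic constraints triggered at each step where the recursion would exit $S_n^I$. Properly handling this interplay is precisely what pins down the eigenvalue conditions on $f$ that characterize the image of $\textup{CM}_{I,\zeta}$.
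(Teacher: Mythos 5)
The paper proves this statement by a short citation: for $c=1$ it invokes the Cherednik--Matsuo correspondence of \cite[Cor.~4.4, Thm.~4.14]{Stokman:2011aa}, and reduces general $c$ to $c=1$ by twisting the $\rho$-action on $M^I(\zeta)$ by a scalar. Your proposal is a genuinely different route: a direct componentwise verification in the basis $\{v_w^I(\zeta)\}_{w\in S_n^I}$. Such a proof would be self-contained and handle $c$ uniformly, but it amounts to reproving the cited theorem.

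As a plan your outline is conceptually sound: the left-inverse argument for injectivity is correct (the $v_{\overline{w}_0}^I(\zeta)$-coefficient of $\textup{CM}_{I,\zeta}(f)$ is $f$, since $T_{\overline{w}_0\,\overline{w}_0^{-1}}=1$); the three-case split for $T_i^{-1}T_w$ according to whether $s_iw$ drops in length, rises inside $S_n^I$, or exits $S_n^I$ and triggers $\chi^I_\zeta(T_j)=t^{-1/2}$ for some $j\in I$, is the right structure; and your account of why the eigenvalue $c(\overline{w}_0\zeta^{-1})_j$ appears (via the weight $\overline{w}_0\zeta$ of $v_{\overline{w}_0}^I(\zeta)$ and $J(Y_j)=Y_j^{-1}$) is accurate. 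However none of the verifications are actually carried out, and the proposal flags its own unresolved step in the surjectivity argument. Two comments on that. The reduced-path consistency you name as the main obstacle is in fact automatic: for a fixed $F\in\textup{Sol}_n(M^I(\zeta);q,c)$ the $f_w$ are already well-defined as its coordinates, and the braid relations satisfied simultaneously by the $\pi_n^{t^{-1/2},q}(T_i)$'s and the $T_w$'s make the propagation formula $f_w=\pi_n^{t^{-1/2},q}(T_{w\overline{w}_0^{-1}})f$ path-independent. The genuine content, which is exactly what the paper outsources to \cite{Stokman:2011aa}, is the verification that the $T_i$- and $\rho$-qKZ equations for $F$ translate to precisely the $n$ commuting $Y$-eigenvalue conditions together with the $I^*$-parabolic conditions on the single base component $f$, with the $Y$-eigenvalues extracted through the Bernstein--Zelevinsky relation $\rho=T_1\cdots T_{n-1}Y_n$; a self-contained proof must spell those computations out.
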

\begin{proof}
For $c=1$, this is an easy consequence
of \cite[Cor. 4.4 \& Thm. 4.14]{Stokman:2011aa}. For general $c$, it then follows
using the fact that $M^I(c^{-1}\zeta)\simeq M^I(\zeta)^{(c^{-1})}$ with isomorphism given by $v_w^I(c^{-1}\zeta)\mapsto v_w^I(\zeta)$ for $w\in S_n^{I}$, and 
\[
\textup{Sol}_n(M^I(\zeta)^{(c^{-1})};q,1)=
\textup{Sol}_n(M^I(\zeta);q,c).
\]
\end{proof}
We want to re-express the common eigenspace for $\pi^{t^{-\frac{1}{2}},q}(Y_j)$-operators
in the left-hand side of \eqref{cmbijection} in terms of the dual Cherednik operators, in order
to apply the results of \cite{Kasatani:2005aa} in the next subsection.
The dual $Y$-operators are defined by 
\[
\overline{Y}_j:=T_j\cdots T_{n-1}\rho^{-1}T_1^{-1}\cdots T_{j-1}^{-1}\in\mathcal{H}_n\qquad (1\leq j\leq n),
\]
cf. \cite[\S 2.2]{Kasatani:2005aa}. The relation to our commuting $Y$-operators
\[
Y_j=T_{j-1}^{-1}T_{j-2}^{-1}\cdots T_1^{-1}\rho T_{n-1}\cdots T_{j+1}T_j
\]
is as follows.
\begin{lemma}
We have in $\mathcal{H}_n$,
\begin{equation*}
\begin{split}
T_{w_0}T_i&=T_{n-i}T_{w_0},\qquad\quad 1\leq i<n,\\
T_{w_0}Y_j&=\overline{Y}_{n+1-j}^{-1}T_{w_0},\qquad 1\leq j\leq n.
\end{split}
\end{equation*}
\end{lemma}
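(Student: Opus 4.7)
The first identity is the classical ``$w_0$-symmetry'' in the finite Hecke algebra $\mathcal{H}_n^0$. I begin from the Weyl-group identity $w_0\alpha_i = -\alpha_{n-i}$, hence $w_0 s_i w_0^{-1} = s_{n-i}$, i.e.\ $w_0 s_i = s_{n-i} w_0$ with both sides of length $\ell(w_0)-1$; therefore $T_{w_0 s_i} = T_{s_{n-i}w_0}$. Expanding $T_{w_0}T_i$ and $T_{n-i}T_{w_0}$ via the length-descending multiplication rule (combined with the quadratic relation $T_i^2 = 1 + (t^{-1/2}-t^{1/2})T_i$) gives $T_{w_0}T_i = T_{w_0 s_i} + (t^{-1/2}-t^{1/2})T_{w_0}$ and $T_{n-i}T_{w_0} = T_{s_{n-i}w_0} + (t^{-1/2}-t^{1/2})T_{w_0}$, which coincide.

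For the second identity I will induct on $j$. The engine is a pair of matching recursions: $Y_j = T_{j-1}^{-1}Y_{j-1}T_{j-1}^{-1}$ (a rearrangement of $T_{j-1}Y_jT_{j-1}=Y_{j-1}$) and $\overline{Y}_{k-1}^{-1} = T_{k-1}^{-1}\overline{Y}_k^{-1}T_{k-1}^{-1}$, the latter a one-line check from the explicit formula for $\overline{Y}_k$ (multiply $\overline{Y}_k$ on both sides by $T_{k-1}$ and note that the right factor $T_{k-1}^{-1}$ in $\overline{Y}_k$ absorbs). Using the first identity to slide $T_{w_0}$ past $T_{j-1}^{-1}$ turns the index $j-1$ into $n-j+1$, and the two recursions interlock perfectly:
\begin{equation*}
T_{w_0}Y_j = T_{n-j+1}^{-1}\bigl(T_{w_0}Y_{j-1}\bigr)T_{j-1}^{-1} = T_{n-j+1}^{-1}\overline{Y}_{n-j+2}^{-1}T_{n-j+1}^{-1}T_{w_0} = \overline{Y}_{n-j+1}^{-1}T_{w_0}.
\end{equation*}

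Everything therefore reduces to the base case $j=1$, which, after substituting $Y_1 = \rho T_{n-1}\cdots T_1$ and $\overline{Y}_n^{-1} = T_{n-1}\cdots T_1\,\rho$, reads $T_{w_0}\rho T_{n-1}\cdots T_1 = T_{n-1}\cdots T_1\,\rho\,T_{w_0}$. The plan is to push $\rho$ to the far left on both sides via $T_i\rho = \rho T_{i-1}$ (indices mod $n$); each side becomes $\rho$ times an element of the affine Hecke algebra $\mathcal{H}_n^a$, and cancelling $\rho$ leaves the identity $T_{w_0'}T_\sigma = T_{\sigma'}T_{w_0}$, where $\sigma := s_{n-1}\cdots s_1$ and primes denote shifting all simple-reflection indices by $-1$. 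Using the length-additive parabolic factorisation $w_0 = \sigma\cdot w_0^{(n-1)}$ with $w_0^{(n-1)}$ the longest element of $\langle s_2,\ldots,s_{n-1}\rangle$, its $(-1)$-shifted avatar $w_0' = \sigma'\cdot w_0^{(n-1)'}$, and the Weyl-group check $w_0^{(n-1)'}\sigma = w_0$ (done on permutations: both sides send $k$ to $n+1-k$; length-additive since $\binom{n-1}{2} + (n-1) = \binom{n}{2}$), one reads off $T_{w_0'}T_\sigma = T_{\sigma'}T_{w_0^{(n-1)'}}T_\sigma = T_{\sigma'}T_{w_0}$. The main obstacle is exactly this last step---untangling $\rho$ and matching up the twin parabolic decompositions cleanly; the inductive step and the first identity are essentially formal.
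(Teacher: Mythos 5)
Your proof is correct, and the underlying mechanism is the same as the paper's: reduce the family of identities $T_{w_0}Y_j=\overline{Y}_{n+1-j}^{-1}T_{w_0}$ to a single conjugation relation involving $\rho$, and establish that relation from the shift $\rho T_i=T_{i+1}\rho$ together with the two length-additive parabolic factorisations of $w_0$. You package the reduction as an induction on $j$ using the recursions $Y_j=T_{j-1}^{-1}Y_{j-1}T_{j-1}^{-1}$ and $\overline{Y}_{k-1}^{-1}=T_{k-1}^{-1}\overline{Y}_k^{-1}T_{k-1}^{-1}$, whereas the paper reduces in one step via the first identity and the Bernstein relations to $T_{w_0}\rho T_{w_0}^{-1}=T_{n-1}\cdots T_1\rho T_{n-1}^{-1}\cdots T_1^{-1}$ (easily seen to be equivalent to your $j=1$ base case). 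For that base case you push $\rho$ all the way through (at the cost of introducing $T_0$) and use $w_0=\sigma w_0^{(n-1)}$ together with the Weyl-group check $w_0^{(n-1)'}\sigma=w_0$; the paper instead keeps everything in the finite Hecke algebra by commuting $\rho$ only past the parabolic factor, citing the two reduced expressions $w_0=(s_{n-1}\cdots s_1)(s_{n-1}\cdots s_2)\cdots s_{n-1}=(s_1\cdots s_{n-1})(s_{n-2}\cdots s_1)\cdots s_{n-2}$. These are the same two factorisations you use (your second appears as the transpose of the paper's second reduced word), so the content is identical; the differences are purely organisational.
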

\begin{proof}
The first identity is well known. For the second identity, it suffices to show that
\[
T_{w_0}\rho T_{w_0}^{-1}=T_{n-1}\cdots T_1\rho T_{n-1}^{-1}\cdots T_1^{-1}.
\]
This follows using $\rho T_i=T_{i+1}\rho$ and the fact that
\begin{equation*}
\begin{split}
w_0&=(s_{n-1}\cdots s_1)(s_{n-1}\cdots s_2)\cdots (s_{n-1}s_{n-2})s_{n-1},\\
&=(s_1\cdots s_{n-1})(s_{n-2}\cdots s_1)\cdots (s_{n-2}s_{n-3})s_{n-2}
\end{split}
\end{equation*}
are two reduced expressions for $w_0\in S_n$.
\end{proof}
Returning to the Cherednik-Matsuo correspondence (see Theorem \ref{CMthm}), we can reformulate
it as follows.
\begin{corollary}\label{CMdualcor}
Fix $c\in\mathbb{C}^*$, $I \subseteq \{ 1, \ldots, n-1 \}$ and $\zeta\in T^I$.
	Then, we have a linear isomorphism
	\begin{align} \label{cmbijection2}
	\left \{ f(\mathbf{z}) \in \C [\mathbf{z}]\,\,  \middle | \,\,
		\begin{matrix} 
			\pi_{n}^{t^{-\frac{1}{2}},q}(\overline{Y}_j)f = c^{-1}(\underline{w}_0\zeta)_jf  &  \text{ for all } 1\leq j \leq n  \\ 
			\pi_{n}^{t^{-\frac{1}{2}},q}(T_{n-i})f = t^{\frac{1}{2}} f & \text{ for all } i\in I^* 			
			\end{matrix}   
			\right\}
		\overset{\sim}{\underset{\overline{\textup{CM}}_{I,\zeta}}{\longrightarrow}} \text{\emph{Sol}}_n(M^I(\zeta),q,c) 
	\end{align}	
with $\overline{\textup{CM}}_{I,\zeta}$ given by
	\begin{align*}
		\overline{\textup{CM}}_{I, \zeta}(f) : = \sum_{w \in S^I_n}
		 \pi_{n}^{t^{-1/2},q} (T_{w^{-1}}^{-1}T_{\underline{w}_0}^{-1}) f \otimes v^I_w(\zeta).
	\end{align*}
\end{corollary}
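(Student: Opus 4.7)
My plan is to deduce Corollary~\ref{CMdualcor} from Theorem~\ref{CMthm} by conjugating with the operator $\pi_n^{t^{-1/2},q}(T_{w_0})$, where $w_0\in S_n$ is the longest element. Write $A,B\subset\mathbb{C}[\mathbf{z}]$ for the source spaces of $\textup{CM}_{I,\zeta}$ and $\overline{\textup{CM}}_{I,\zeta}$ respectively. Each generator $\pi_n^{t^{-1/2},q}(T_i)$ preserves $\mathbb{C}[\mathbf{z}]$ (the apparent pole $(z_{i+1}-z_i)^{-1}$ in its formula is killed by $s_i-1$ acting on polynomials) and admits the polynomial-preserving inverse $\pi_n^{t^{-1/2},q}(T_i)+(t^{-1/2}-t^{1/2})$, so $\pi_n^{t^{-1/2},q}(T_{w_0})$ is a linear automorphism of $\mathbb{C}[\mathbf{z}]$. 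I will set
\[
\Phi:B\to\mathbb{C}[\mathbf{z}],\qquad \Phi(f):=\pi_n^{t^{-1/2},q}(T_{w_0})^{-1}f,
\]
and argue that $\Phi$ restricts to a linear isomorphism $B\to A$, after which $\overline{\textup{CM}}_{I,\zeta}=\textup{CM}_{I,\zeta}\circ\Phi$ will be a bijection by Theorem~\ref{CMthm}.

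For the first check, fix $f\in B$ and set $g:=\Phi(f)$. The commutation $T_{w_0}Y_j=\overline{Y}_{n+1-j}^{-1}T_{w_0}$ from the preceding lemma gives
\[
\pi_n^{t^{-1/2},q}(Y_j)g=c\,(\underline{w}_0\zeta)_{n+1-j}^{-1}\,g.
\]
Since $\underline{w}_0$ is an involution (it is the longest element of the symmetric group $S_{n,I}$), and $w_0$ acts on indices by $k\mapsto n+1-k$, the identity $\overline{w}_0\underline{w}_0(n+1-j)=w_0(n+1-j)=j$ gives $(\underline{w}_0\zeta)_{n+1-j}^{-1}=(\overline{w}_0\zeta^{-1})_j$, so $g$ satisfies the $Y$-eigenvalue conditions defining $A$. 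Similarly, $T_{w_0}T_i=T_{n-i}T_{w_0}$ transports the conditions $\pi_n^{t^{-1/2},q}(T_{n-i})f=t^{1/2}f$ ($i\in I^*$) to $\pi_n^{t^{-1/2},q}(T_i)g=t^{1/2}g$, and the reverse direction is symmetric.

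Next I will verify $\overline{\textup{CM}}_{I,\zeta}(f)=\textup{CM}_{I,\zeta}(\Phi(f))$ for $f\in B$. Comparing the coefficient of $v_w^I(\zeta)$ on both sides, this reduces to the $\mathcal{H}_n$-identity
\[
T_{w\overline{w}_0^{-1}}\,T_{w_0}^{-1}=T_{w^{-1}}^{-1}\,T_{\underline{w}_0}^{-1},\qquad w\in S_n^I.
\]
Since $\ell(\underline{w}_0 w^{-1})=\ell(w\underline{w}_0)=\ell(w)+\ell(\underline{w}_0)$ by parabolic length additivity for $w\in S_n^I$, the right-hand side rewrites as $T_{\underline{w}_0 w^{-1}}^{-1}$. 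Using $\underline{w}_0\overline{w}_0^{-1}=(\overline{w}_0\underline{w}_0)^{-1}=w_0^{-1}=w_0$, the identity becomes equivalent to $T_{\underline{w}_0 w^{-1}}\,T_{w\overline{w}_0^{-1}}=T_{w_0}$, and ultimately to the length formula
\[
\ell(w\overline{w}_0^{-1})=\ell(\overline{w}_0)-\ell(w),\qquad w\in S_n^I.
\]

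This length identity is the main obstacle. Using the standard inversion-count expression for $\ell(uv^{-1})$, it reduces to the combinatorial assertion that every inversion of $w\in S_n^I$ is an inversion of $\overline{w}_0$. Since $w$ is increasing on every orbit of $S_{n,I}$ acting on $\{1,\ldots,n\}$, the inversions of $w$ are necessarily pairs $(i<j)$ lying in distinct orbits. A short extremal argument shows that the longest element $\overline{w}_0\in S_n^I$ inverts \emph{every} cross-orbit pair: any such pair can be added as an inversion without violating the within-orbit sorting constraints, and the longest element of $S_n^I$ is unique. The inversions of $w$ are therefore contained in those of $\overline{w}_0$, which yields the length identity, the Hecke algebra identity, and hence the equality $\overline{\textup{CM}}_{I,\zeta}=\textup{CM}_{I,\zeta}\circ\Phi$, completing the proof.
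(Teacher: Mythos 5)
Your proposal follows essentially the same route as the paper's proof: conjugate by $\pi_n^{t^{-1/2},q}(T_{w_0})^{-1}$, transport the eigenvalue conditions via the commutation relations $T_{w_0}Y_j=\overline{Y}_{n+1-j}^{-1}T_{w_0}$ and $T_{w_0}T_i=T_{n-i}T_{w_0}$, and reduce the identity $\overline{\textup{CM}}_{I,\zeta}=\textup{CM}_{I,\zeta}\circ\pi_n^{t^{-1/2},q}(T_{w_0}^{-1})$ to the Hecke algebra identity $T_{w\overline{w}_0^{-1}}T_{w_0}^{-1}=T_{w^{-1}}^{-1}T_{\underline{w}_0}^{-1}$ for $w\in S_n^I$, which is in turn equivalent to the length relation $\ell(w\overline{w}_0^{-1})=\ell(\overline{w}_0)-\ell(w)$. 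The only difference is that you spell out a combinatorial proof of this length relation (inversions of $w\in S_n^I$ are cross-orbit pairs, and $\overline{w}_0$ inverts all of them), whereas the paper treats this standard fact about minimal coset representatives and the longest element of a parabolic quotient as known; your argument is correct and fills in that detail.
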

\begin{proof}
By the previous lemma, $\pi^{t^{-\frac{1}{2}},q}(T_{w_0}^{-1})$ restricts to a linear isomorphism from the space defined by the left-hand side of \eqref{cmbijection2} onto the space defined by the left-hand side of \eqref{cmbijection}. Hence, it suffices to note that 
\[
\overline{\textup{CM}}_{I,\zeta}=\textup{CM}_{I,\zeta}\circ\pi^{t^{-\frac{1}{2}},q}(T_{w_0}^{-1}),
\]
which follows from the fact that for all $w\in S_n^I$,
\[
T_{w\overline{w}_0^{-1}}T_{w_0}^{-1}= T^{-1}_{w^{-1}} T_{\overline{w}_0^{-1}} T^{-1} _{\overline{w}_0^{-1}}T_{\underline{w}^{-1}_0}^{-1}=T_{w^{-1}}^{-1}T_{\underline{w}_0}^{-1}. 
\]
\end{proof}

\subsection{Dual nonsymmetric Macdonald polynomials}
In this subsection, we take $n\geq 2$.
The next step will be to introduce the polynomial eigenfunctions of the dual Cherednik operators
$\pi^{t^{-\frac{1}{2}},q}(\overline{Y}_j)$ ($1\leq j\leq n$), called the dual nonsymmetric
Macdonald polynomials. We follow Kasatani \cite{Kasatani:2005aa}: the $(t^{\frac{1}{2}},\omega,Y_j)$
in \cite{Kasatani:2005aa} corresponds to our $(-t^{-\frac{1}{2}},\rho^{-1},\overline{Y}_j)$.

For $\lambda\in\mathbb{Z}^n$, let 
\begin{align*}
		 \rho ( \lambda)  &:=  \frac{1}{2}\sum _{1 \leq i< j\leq n} \chi( \lambda_i -\lambda_j)(\epsilon_i -\epsilon_j),\\
		\chi(a) &:= \begin{cases} 
				1 & \text{ if } a \geq 0, \\
				-1 & \text{ if } a < 0.
				\end{cases}
\end{align*}
Then, 
$2 \rho( \lambda) = \sum^{n}_{i=1} d_i(\lambda) \epsilon_i$ with 
\begin{equation}\label{rhoformula}
d_i( \lambda)= 2 \# \{ j>i | \lambda_j = \lambda_i \}  + 2 \# \{ j | \lambda_i > \lambda_j \} + 1-n.
\end{equation}
 Write
\[
s_\lambda:=\bigl(-t^{-\frac{1}{2}}\bigr)^{2\rho(\lambda)}q^\lambda\in (\mathbb{C}^*)^n,\qquad \lambda\in\mathbb{Z}^n,
\]
i.e. $s_{\lambda}=(s_{\lambda,1},\ldots,s_{\lambda,n})$ with 
$s_{\lambda,i}=\bigl(-t^{-\frac{1}{2}}\bigr)^{d_i(\lambda)}q^{\lambda_i}$.

For generic $q$ and $t^{\frac{1}{4}}$ (or indeterminates), the monic dual nonsymmetric Macdonald polynomial
\[
E_\lambda=E_\lambda(\mathbf{z};-t^{-\frac{1}{2}},q)\in\mathbb{C}[\mathbf{z}^{\pm 1}]
\] 
of degree $\lambda\in\mathbb{Z}^n$ is the unique Laurent polynomial satisfying the eigenvalue equations
\[
\pi^{t^{-\frac{1}{2}},q}_n(f(\overline{Y}))E_\lambda=f(s_\lambda)E_\lambda\qquad
\text{ for all }\, f\in\mathbb{C}[\mathbf{z}^{\pm 1}]
\]
such that the coefficient of $\mathbf{z}^\lambda$ in the expansion of $E_\lambda$ in monomials 
$\{\mathbf{z}^\nu\}_{\nu\in\mathbb{Z}^n}$ is one. It is well known that $E_\lambda$ is homogeneous of total degree $|\lambda|:=\lambda_1+\cdots+\lambda_n$. In addition, $E_\lambda\in\mathbb{C}[\mathbf{z}]$ if and only if $\lambda\in\mathbb{Z}_{\geq 0}^n$. The intertwiners with respect to 
the dual $Y$-operators are defined by 
\[
B_i:=T_i\left(\overline{Y}_{i+1}\overline{Y}_i^{-1}-1\right)+t^{\frac{1}{2}}-t^{-\frac{1}{2}},\qquad 1\leq i<n,
\]
cf. \cite[Lemma 2.6]{Kasatani:2005aa}. Then, for $1\leq i<n$,
\begin{equation}\label{Brelation}
\pi^{t^{-\frac{1}{2}},q}(B_i)E_\lambda=-t^{\frac{1}{2}}
\left(\frac{(ts_{\lambda,i+1}s_{\lambda,i}^{-1}-1)(t^{-1}s_{\lambda,i+1}s_{\lambda,i}^{-1}-1)}
{(s_{\lambda,i+1}/s_{\lambda,i}-1)}\right)E_{s_i\lambda}
\end{equation}
if $\lambda\in\mathbb{Z}^n$ and $\lambda_{i}>\lambda_{i+1}$.

Kasatani \cite{Kasatani:2005aa} analyzed the dual nonsymmetric Macdonald polynomials
$E_\lambda$ with parameters specialized to $t^{-k-1}q^{r-1}=1$ with $1\leq k\leq n-1$ and $r\geq 2$.
In our situation, we are going to need the special case that $k=2$ and $r=3$, i.e., when
$t^{-3}q^2=1$ (cf. Theorem \ref{mainTHM}). In fact, for our purposes it suffices to take $q=t^{\frac{3}{2}}$. We recall some key results from \cite{Kasatani:2005aa} in this special case.

\begin{definition}\label{def-wheel}
We say $\lambda \in \Z^n$ has a \emph{neighbourhood} if it has a pair of indices $(i,j)$ such that condition 1 and 2 are satisfied:
\begin{enumerate}
	\item $\rho(\lambda)_i - \rho(\lambda)_j = 2,$ 
	\item \begin{enumerate}
			\item $\lambda_i - \lambda_j  \leq 1,$ \\
			or  
			\item $\lambda_i - \lambda_j =2$ and $j<i$.
		\end{enumerate}
\end{enumerate}
\end{definition}
Write
\begin{align*}
	S^{(2,3)} :&=\{\lambda \in \Z^n\,\,\, | \,\,\,  \lambda \text{ has a neighbourhood}\}\\
	B^{(2,3)} :&= \Z^n\backslash S^{(2,3)}. \\
\end{align*}
By \cite[Thm. 3.11]{Kasatani:2005aa}, the dual non-symmetric Macdonald polynomial
$E_\lambda$ can be specialized at $q=t^{\frac{3}{2}}$ if $\lambda\in B^{(2,3)}$. For 
$q=t^{\frac{3}{2}}$ write
\begin{equation*}
\begin{split}
Z^{(2,3)} := \{ \mathbf{z} \in \C^n\,\,\, | \,\,\, &\text{There exist distinct } i_1, i_2, i_3 \in \{1 ,\ldots ,n\}\\
							   & \text{and positive integers } r_1,r_2 \in \Z_{\geq 0} \\
							   & \text{such that } z_{i_{a+1}} = z_{i_{a}}t q^{r_a} \text{ for }a=1,2, \\
							   &r_1+r_2\leq 1, \text{ and } i_a < i_{a+1} \text{ if } r_a=0 \},
\end{split}							   
\end{equation*}
and define the ideal $I^{(2,3)}\subseteq\mathbb{C}[\mathbf{z}]$ by 
\[ 
I^{(2,3)}: = \left\{f \in \mathbb{C}[\mathbf{z}^{\pm 1}]\,\, | \,\,  f(\mathbf{z}) =0 \text{ for all } 
\mathbf{z}\in Z^{(2,3)}\right\}.
\]
Then, for $q=t^{\frac{3}{2}}$ and generic $t^{\frac{1}{4}}$, the ideal $I^{(2,3)}$ is a 
$\pi_n^{t^{-\frac{1}{2}},t^{\frac{3}{2}}}(\mathcal{H}_n(t^{-\frac{1}{2}}))$-submodule of 
$\mathbb{C}[\mathbf{z}^{\pm 1}]$ and 
\[
I^{(2,3)}=\bigoplus_{\mu\in B^{(2,3)}}\mathbb{C}E_\mu(\mathbf{z};-t^{-\frac{1}{2}},t^{\frac{3}{2}})
\]
by \cite[Thm. 3.11]{Kasatani:2005aa}.
\begin{remark}\label{rem-wheel}
The conditions $f(\mathbf{z})=0$ for $\mathbf{z}\in Z^{(2,3)}$ are 
known as \emph{wheel conditions}. 
It originally appeared in \cite{Feigin:2003aa} (see also \cite{Kasatani:2007aa}).
\end{remark}
We recall now the notion of a $(2,3)$-wheel in $\lambda\in\mathbb{Z}^n$, following
\cite[Def. 3.5]{Kasatani:2005aa}. 
\begin{definition}
Let $q=t^{\frac{3}{2}}$ and fix $\lambda\in\mathbb{Z}^n$. A three-tuple $(i_1,i_2,i_3)$ with distinct $i_1,i_2,i_3\in \{1,\ldots,n\}$ is called a $(2,3)$-wheel in $\lambda$ if there exists $r_1,r_2\in\mathbb{Z}_{\geq 0}$ such that
\[
s_{\lambda,i_2}^{-1}=s_{\lambda,i_1}^{-1}t^{-1}q^{r_1},\qquad
s_{\lambda,i_3}^{-1}=s_{\lambda,i_2}^{-1}t^{-1}q^{r_2}
\]
with $r_1+r_2\leq 1$, and $i_a<i_{a+1}$ if $r_a=0$ ($a=1,2$).
\end{definition}
Two wheels $(i_1,i_2,i_3)$ and $(j_1,j_2,j_3)$ in $\lambda$ are said to be equivalent if there exists
a $\sigma\in S_3$ such that $i_a=j_{\sigma^{-1}(a)}$ for $a=1,2,3$. We write $\#^{(2,3)}(\lambda)$
for the number of equivalence classes of $(2,3)$-wheels in $\lambda$. Note that 
(still under the assumption that $q=t^{\frac{3}{2}}$) we have $\#^{(2,3)}(\lambda)=0$ if and only if $s_\lambda^{-1}\in Z^{(2,3)}$.  Furthermore,
from \cite[\S 3]{Kasatani:2005aa} (below Definition 3.7) we have
\[
\{\mu\in\mathbb{Z}^n \,\, | \,\, \#^{(2,3)}(\mu)=0 \}\subseteq B^{(2,3)}.
\]

\subsection{Proof of Theorem \ref{mainTHM}}
Let $n\geq 2$ and specialize throughout this subsection $v=1$ and $q=t^{\frac{3}{2}}$. 
Furthermore, we set
\[
c_n:=\bigl(-t^{-\frac{3}{4}}\bigr)^{n-1},
\]
cf. Theorem \ref{mainTHM}. Recall the notation
$I^{(n)}=\{1,\ldots,\lceil n/2\rceil-1, \lceil n/2\rceil+1,\ldots,n-1\}$ and the central character
$\gamma=\gamma^{(n)}\in T^{I^{(n)}}$ with $v=1$ (see \eqref{highestweight}), so
\begin{equation}\label{highestweightv=1}
\gamma=
\begin{cases}
\bigl(t^{\frac{1-n}{4}},t^{\frac{5-n}{4}},\ldots,t^{\frac{n-3}{4}},
t^{\frac{3-n}{4}}, t^{\frac{7-n}{4}},\ldots,t^{\frac{n-1}{4}}\bigr)
\qquad\quad\,\,&\hbox{ if }\,\, n\, \textup{ even},\\
\bigl(t^{\frac{1-n}{4}}, t^{\frac{5-n}{4}},\ldots,t^{\frac{n-1}{4}}, t^{\frac{3-n}{4}}, t^{\frac{7-n}{4}},
\ldots, t^{\frac{n-3}{4}}\bigr)\qquad\quad\,\,&\hbox{ if }\,\, n\,\, \textup{ odd}.
\end{cases}
\end{equation}
In the even $n=2k$ case, the decomposition $w_0=\overline{w}_0\underline{w}_0$ of the
longest element $w_0\in S_{2k}$ as a product of $\overline{w}_0\in S_{2k}^{I^{(2k)}}$ and 
$\underline{w}_0\in S_{2k,I^{(2k)}}$ gives the expressions
\begin{equation*}
\begin{split}
\overline{w}_0&=\left(\begin{matrix} 1 & 2 & \cdots & k & k+1 & k+2 & \cdots & 2k\\
k+1 & k+2 & \cdots & 2k & 1 & 2 & \cdots & k\end{matrix}\right),\\
\underline{w}_0&=\left(\begin{matrix} 1 & 2 & \cdots & k & k+1 & k+2 & \cdots & 2k\\
k & k-1 & \cdots & 1 & 2k & 2k-1 & \cdots & k+1\end{matrix}\right).
\end{split}
\end{equation*}
Hence, for $n=2k$ even, we have $I^{(2k),*}=I^{(2k)}$ and 
\[
\{ 2k-i\,\, | \,\, i\in I^{(2k),*} \}=I^{(2k)}.
\]
In the odd $n=2k-1$ case, 
the decomposition $w_0=\overline{w}_0\underline{w}_0$ of the
longest element $w_0\in S_{2k-1}$ as a product of $\overline{w}_0\in S_{2k-1}^{I^{(2k-1)}}$ and 
$\underline{w}_0\in S_{2k-1,I^{(2k-1)}}$ gives the expressions
\begin{equation*}
\begin{split}
\overline{w}_0&=\left(\begin{matrix} 1 & 2 & \cdots & k & k+1 & k+2 & \cdots & 2k-1\\
k & k+1 & \cdots & 2k-1 & 1 & 2 & \cdots & k-1\end{matrix}\right),\\
\underline{w}_0&=\left(\begin{matrix} 1 & 2 & \cdots & k & k+1 & k+2 & \cdots & 2k-1\\
k & k-1 & \cdots & 1 & 2k-1 & 2k-2 & \cdots & k+1\end{matrix}\right).
\end{split}
\end{equation*}
Therefore, $I^{(2k-1),*}=\{1,\ldots,k-2,k,k+1,\ldots,2k-2\}$ and 
\[
\{ 2k-1-i\,\, | \,\, i\in I^{(2k-1),*} \}=I^{(2k-1)}.
\]
Hence, it follows from Theorem \ref{principalTHM} and Corollary \ref{CMdualcor} 
that for generic $t^{\frac{1}{4}}$, we have
	\begin{align} \label{cmbijection3}
	\left \{ f(\mathbf{z}) \in \C [\mathbf{z}]\,\,  \middle | \,\,
		\begin{matrix} 
			\pi_{n}^{t^{-\frac{1}{2}},t^{\frac{3}{2}}}(\overline{Y}_j)f = c_n^{-1}(\underline{w}_0\gamma)_jf  &  \text{ for all } 1\leq j \leq n  \\ 
			\pi_{n}^{t^{-\frac{1}{2}},t^{\frac{3}{2}}}(T_{i})f = t^{\frac{1}{2}} f & \text{ for all } i\in I^{(n)}		
			\end{matrix}   
			\right\}
		\overset{\sim}{\underset{\widetilde{\textup{CM}}_n}{\longrightarrow}} \text{\emph{Sol}}_n(V_n,q,c_n) 
	\end{align}	
with $\widetilde{\textup{CM}}_n$ given by
	\begin{align*}
		\widetilde{\textup{CM}}_{n}(f) : = \sum_{w \in S^{I^{(n)}}_n}
		 \pi_{n}^{t^{-\frac{1}{2}},t^{\frac{3}{2}}} (T_{w^{-1}}^{-1}T_{\underline{w}_0}^{-1}) f \otimes 
		 T_wI_{w_n}Q_n.
	\end{align*}
In the next lemma, we relate the spectral point $c_n^{-1}(\underline{w}_0\gamma)$ to the spectrum
of the dual nonsymmetric Macdonald polynomials.
\begin{lemma}
For $q=t^{\frac{3}{2}}$,
\[
c_n^{-1}\underline{w}_0\gamma^{(n)}=s_{\lambda^{(n)}}
\]
with $\lambda^{(n)}\in\mathbb{Z}^n$ given by
\begin{equation*}
\begin{split}
\lambda^{(2k)}&=(2k-2, 2k-4,\ldots,0,2k-1.2k-3,\ldots,1),\\
\lambda^{(2k-1)}&=(2k-2,2k-4,\ldots,0,2k-3,2k-5,\ldots,1).
\end{split}
\end{equation*}
\end{lemma}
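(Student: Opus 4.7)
The statement is an explicit identity between two vectors in $(\mathbb{C}^*)^n$, so my strategy is a direct componentwise computation, treating the even and odd cases in parallel. First I will unpack the left-hand side using \eqref{highestweightv=1} together with the explicit expression for $\underline{w}_0$ displayed above the lemma. The key observation is that in both parities $\underline{w}_0$ is the involution that reverses the ordered blocks $\{1,\dots,\lceil n/2\rceil\}$ and $\{\lceil n/2\rceil+1,\dots,n\}$, so $(\underline{w}_0\gamma)_i=\gamma_{\underline{w}_0(i)}$ can be read off immediately as a power of $t^{1/4}$. Multiplying by $c_n^{-1}=(-1)^{n-1}t^{3(n-1)/4}$ then collapses the exponents to simple linear expressions in $i$; for $n=2k$ one gets
\[
(c_n^{-1}\underline{w}_0\gamma^{(2k)})_i=\begin{cases}-t^{(4k-1-2i)/2}& (1\le i\le k),\\ -t^{3k-i}& (k+1\le i\le 2k),\end{cases}
\]
and an analogous formula (with sign $+1$) for $n=2k-1$.

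For the right-hand side I will use $s_{\lambda,i}=(-t^{-1/2})^{d_i(\lambda)}t^{3\lambda_i/2}$, evaluating \eqref{rhoformula} on $\lambda^{(n)}$. The crucial simplification is that the entries of $\lambda^{(n)}$ are the integers $0,2,\dots,2k-2$ in the first block and the odd integers (either $1,3,\dots,2k-1$ or $1,3,\dots,2k-3$, depending on parity) in the second block, so all entries are distinct and the first term of \eqref{rhoformula} vanishes. Thus $d_i(\lambda^{(n)})=2\,\#\{j:\lambda^{(n)}_j<\lambda^{(n)}_i\}+1-n$, which is computed by splitting the counting over the two monotone blocks. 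A short bookkeeping exercise produces closed forms such as $d_i(\lambda^{(2k)})=2k-4i+1$ for $1\le i\le k$ and $d_i(\lambda^{(2k)})=6k-4i+3$ for $k<i\le 2k$, and similarly in the odd case.

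Inserting these into $s_{\lambda^{(n)},i}=(-t^{-1/2})^{d_i(\lambda^{(n)})}t^{3\lambda^{(n)}_i/2}$, the exponents of $t$ collapse to exactly the expressions found for the left-hand side, and the $(-1)$-powers match the signs coming from $c_n^{-1}$ (this is where the $(-1)^{n-1}$ in $c_n$ earns its keep, since $(-1)^{d_i(\lambda^{(n)})}$ is $-1$ precisely when $n$ is even, owing to the parity pattern above). Matching componentwise in both blocks and both parities then completes the proof.

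The only real obstacle is notational: the even and odd cases require slightly different index shifts in both $\gamma^{(n)}$ and $\lambda^{(n)}$, and the boundary index $j=\lceil n/2\rceil$ has to be handled carefully in the counting that yields $d_i(\lambda^{(n)})$. I would therefore organize the proof as a single display per parity, listing $\gamma^{(n)}_{\underline{w}_0(i)}$, $d_i(\lambda^{(n)})$, $3\lambda^{(n)}_i/2$, and their combination, so that the identification on both blocks becomes a line-by-line comparison rather than a single long computation.
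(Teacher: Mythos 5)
Your plan is correct and coincides with the paper's proof, which is simply the same direct componentwise computation stated as "By a direct computation" with the resulting vectors displayed explicitly. The two details you flag — that $\underline{w}_0$ acts by reversing the two monotone blocks and that distinctness of the entries of $\lambda^{(n)}$ kills the first term of \eqref{rhoformula} — are exactly the right observations, and the exponents and signs you obtain agree with those displayed in the paper (e.g. $-t^{2k-\frac{3}{2}},\dots,-t^{k-\frac{1}{2}},-t^{2k-1},\dots,-t^{k}$ for $n=2k$).
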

\begin{proof}
By a direct computation, for $q=t^{\frac{3}{2}}$,
\begin{equation*}
\begin{split}
c_{2k}^{-1}\underline{w}_0\gamma^{(2k)}&=\bigl(-t^{2k-\frac{3}{2}},-t^{2k-\frac{5}{2}},\ldots,
-t^{k-\frac{1}{2}},-t^{2k-1},-t^{2k-2},\ldots,-t^k\bigr)=s_{\lambda^{(2k)}},\\
c_{2k-1}^{-1}\underline{w}_0\gamma^{(2k-1)}&=
\bigl(t^{2k-2},t^{2k-3},\ldots,t^{k-1},t^{2k-\frac{5}{2}},t^{2k-\frac{7}{2}},\ldots,t^{k-\frac{1}{2}}\bigr)=
s_{\lambda^{(2k-1)}}.
\end{split}
\end{equation*}
\end{proof}
Hence, we have for generic $t^{\frac{1}{4}}$, 
	\begin{align} \label{cmbijection4}
	\left \{ f(\mathbf{z}) \in \C [\mathbf{z}]\,\,  \middle | \,\,
		\begin{matrix} 
			\pi_{n}^{t^{-\frac{1}{2}},t^{\frac{3}{2}}}(p(\overline{Y}))f = p(s_{\lambda^{(n)}})f  &  \text{ for all } p(\mathbf{z})\in\mathbb{C}[\mathbf{z}^{\pm 1}]  \\ 
			\pi_{n}^{t^{-\frac{1}{2}},t^{\frac{3}{2}}}(T_{i})f = t^{\frac{1}{2}} f & \text{ for all } i\in I^{(n)}		
			\end{matrix}   
			\right\}
		\overset{\sim}{\underset{\widetilde{\textup{CM}}_n}{\longrightarrow}} \text{\emph{Sol}}_n(V_n,q,c_n) .
	\end{align}	
Next we need to verify that the dual non-symmetric Macdonald polynomials $E_{\lambda^{(n)}}$ are nonzero under the specialisation $q=t^{\frac{3}{2}}$. 
\begin{lemma}
$S_{n}\lambda^{(n)}\cap B^{(2,3)}=
\{\sigma\lambda^{(n)}\,\,\, | \,\,\, \sigma\in S_{n}^{I^{(n)}} \}$. In particular, $\lambda^{(n)}\in B^{(2,3)}$,
and for generic $t^{\frac{1}{4}}$,
\[
0\not=E_{\lambda^{(n)}}(\mathbf{z};-t^{-\frac{1}{2}},t^{\frac{3}{2}})\in\mathbb{C}[\mathbf{z}]
\] 
is 
well defined, homogeneous of total degree $\frac{1}{2}n(n-1)$, and satisfies
\[
\pi_{n}^{t^{-\frac{1}{2}},t^{\frac{3}{2}}}(p(\overline{Y}))E_{\lambda^{(n)}}(\cdot;t^{-\frac{1}{2}},t^{\frac{3}{2}}) = p(s_{\lambda^{(n)}})E_{\lambda^{(n)}}(\cdot;-t^{-\frac{1}{2}},t^{\frac{3}{2}}) \qquad
\forall\, p(\mathbf{z})\in\mathbb{C}[\mathbf{z}^{\pm 1}].
\]
\end{lemma}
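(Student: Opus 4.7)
The plan is to combine a combinatorial simplification with a counting argument: since the entries of $\lambda^{(n)}$ are a permutation of $\{0, 1, \ldots, n-1\}$, every $\mu = \sigma \lambda^{(n)}$ in the $S_n$-orbit has pairwise distinct entries, and this trivialises the neighbourhood condition of Definition~\ref{def-wheel}. First I would use \eqref{rhoformula} to observe that for any such $\mu$ one has $d_i(\mu) = 2 \mu_i + 1 - n$, hence $\rho(\mu)_i - \rho(\mu)_j = \mu_i - \mu_j$. Condition~1 of Definition~\ref{def-wheel} then forces $\mu_i - \mu_j = 2$, making sub-case~(a) of condition~2 impossible. Consequently $\mu \in S^{(2,3)}$ iff there exist indices $i > j$ with $\mu_i = \mu_j + 2$; equivalently, $\mu \in B^{(2,3)}$ iff, reading $\mu$ from left to right, the values of each fixed parity appear in strictly decreasing order.

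Next I would count these constraints. For $n = 2k$ there are $k$ even and $k$ odd values in $\{0, \ldots, 2k-1\}$, and any $\mu \in S_{2k} \lambda^{(2k)} \cap B^{(2,3)}$ is uniquely determined by specifying which $k$ of the $2k$ positions receive the even values (the internal order on each parity class being forced). This yields $\binom{2k}{k}$ such $\mu$, matching $|S_{2k}^{I^{(2k)}}| = |S_{2k}/(S_k \times S_k)|$; the analogous count $\binom{2k-1}{k}$ for $n = 2k-1$ matches $|S_{2k-1}^{I^{(2k-1)}}|$. Then I would verify that for each $\sigma \in S_n^{I^{(n)}}$ the element $\sigma \lambda^{(n)}$ genuinely satisfies the decreasing-on-each-parity condition: using $(\sigma \lambda^{(n)})_i = \lambda^{(n)}_{\sigma^{-1}(i)}$ together with the characterisation of $S_n^{I^{(n)}}$ from Remark~\ref{mincoset} (monotonicity on the two blocks $\{1, \ldots, \lceil n/2 \rceil\}$ and its complement), the set $\sigma(\{1, \ldots, \lceil n/2 \rceil\})$ becomes exactly the set of positions carrying even values in $\sigma \lambda^{(n)}$, arranged in the required decreasing order of value. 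Combined with the matching cardinalities, this identifies $S_n \lambda^{(n)} \cap B^{(2,3)}$ with $\{\sigma \lambda^{(n)} : \sigma \in S_n^{I^{(n)}}\}$, and taking $\sigma = e$ delivers the "in particular" statement $\lambda^{(n)} \in B^{(2,3)}$.

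The remaining assertions follow from the machinery already in place: Theorem~3.11 of \cite{Kasatani:2005aa}, as recalled above, guarantees that $E_{\lambda^{(n)}}(\mathbf{z}; -t^{-1/2}, t^{3/2})$ is well-defined and nonzero for generic $t^{1/4}$, appearing as a basis element of the $\pi_n^{t^{-1/2}, t^{3/2}}(\mathcal{H}_n)$-submodule $I^{(2,3)} \subset \mathbb{C}[\mathbf{z}^{\pm 1}]$. Homogeneity of total degree $\sum_{j=0}^{n-1} j = \tfrac{1}{2} n (n-1)$ and the displayed eigenvalue equation are standard properties of dual non-symmetric Macdonald polynomials. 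The only genuine content is thus the combinatorial bijection, and the main (though still routine) nuisance will be handling the even and odd cases of $n$ uniformly, since for odd $n$ the two parity classes of values have different sizes; I do not anticipate any deeper obstacle.
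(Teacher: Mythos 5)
Your argument is correct, and it shares the paper's central observation: for $\mu$ a permutation of $\{0,\ldots,n-1\}$, formula \eqref{rhoformula} gives $d_i(\mu)=2\mu_i+1-n$, so $\rho(\mu)_i-\rho(\mu)_j=\mu_i-\mu_j$ and the neighbourhood condition of Definition~\ref{def-wheel} collapses to ``$\mu_i-\mu_j=2 \Rightarrow i<j$''. The paper proves this equivalence by evaluating $\#\{r : \mu_r<\mu_i\}-\#\{r : \mu_r<\mu_j\}$ in both directions; your version via the explicit identity $d_i(\mu)=2\mu_i+1-n$ is a tidier way to obtain the same thing.

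Where you genuinely diverge is in how the identification with $\{\sigma\lambda^{(n)} : \sigma\in S_n^{I^{(n)}}\}$ is completed. The paper records the equality \eqref{help}
\[
\{\sigma\lambda^{(n)} \mid \sigma\in S_n^{I^{(n)}}\}=\{\mu\in S_n\lambda^{(n)} \mid \mu_i-\mu_j=2 \Rightarrow i<j\}
\]
directly from Remark~\ref{mincoset} and the shape of $\lambda^{(n)}$, and then argues both inclusions of the lemma. You instead prove one inclusion (via the parity-decreasing reformulation, which is a correct restatement of ``$\mu_i-\mu_j=2\Rightarrow i<j$'') and finish by matching cardinalities: both sides have size $\binom{n}{\lceil n/2\rceil}$. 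This is valid, but relies on the injectivity of $\sigma\mapsto\sigma\lambda^{(n)}$ on $S_n^{I^{(n)}}$, which you should note follows from $\lambda^{(n)}$ having pairwise distinct entries. The counting route avoids verifying the harder inclusion of \eqref{help} explicitly but trades it for an enumeration; the paper's biconditional is marginally more self-contained and extends more readily if one wanted the statement for a $\lambda$ with nontrivial stabiliser. For the ``in particular'' assertions you correctly cite Kasatani's Theorem~3.11 together with the degree formula $|\lambda^{(n)}|=\tfrac12 n(n-1)$; this matches the paper.
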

\begin{proof}
Note that by Remark \ref{mincoset} and the definition of $\lambda^{(n)}$,
\begin{equation}\label{help}
\{\sigma\lambda^{(n)} \,\,\, | \,\,\, \sigma\in S_n^{I^{(n)}}\}=\{\mu\in S_n\lambda^{(n)} \,\,\, | \,\,\, \mu_i-\mu_j=2\, \Rightarrow i<j \}.
\end{equation}
Now suppose that $\mu\in S_n\lambda^{(n)}\cap B^{(2,3)}$ and $\mu_i-\mu_j=2$. By \eqref{rhoformula},
\begin{equation*}
\begin{split}
\rho(\mu)_i-\rho(\mu)_j&=\#\{r\,\, | \,\, \mu_r<\mu_i\}-\#\{r\,\, | \,\, \mu_r<\mu_j\}\\
&=\#\{r \,\, | \,\, \mu_j\leq \mu_r\leq \mu_j+1\}=2.
\end{split}
\end{equation*}
Since $\mu\in B^{(2,3)}$, this implies that $i<j$. By \eqref{help}, we conclude that $\mu\in 
\{\sigma\lambda^{(n)}\,\, | \,\, 
\sigma\in S_n^{I^{(n)}}\}$.

Conversely, suppose that $\mu\in \{\sigma\lambda^{(n)}\,\, | \,\, 
\sigma\in S_n^{I^{(n)}}\}$. Suppose that $\rho(\mu)_i-\rho(\mu)_j=2$. By \eqref{rhoformula}, this implies that
\[
\#\{r \,\, | \,\, \mu_r<\mu_i\}-\#\{r \,\, | \,\, \mu_r<\mu_j\}=2.
\]
It follows that $\mu_i>\mu_j$, and hence
\[
\#\{r \,\, | \,\, \mu_j\leq \mu_r<\mu_i\}=2,
\]
forcing $\mu_i-\mu_j=2$. By \eqref{help}, this implies that $i<j$, and hence $\mu\in S_n\lambda^{(n)}\cap B^{(2,3)}$. The remaining statements now follow immediately (note that the degree of $E_{\lambda^{(n)}}(\mathbf{z})$ is $\sum_{i=1}^n\lambda_i^{(n)}=\frac{1}{2}n(n-1)$).
\end{proof}
\begin{proposition}
For generic $t^{\frac{1}{4}}$, we have 
\[
\pi^{t^{-\frac{1}{2}},t^{\frac{3}{2}}}(T_i)E_{\lambda^{(n)}}(\cdot;-t^{-\frac{1}{2}},t^{\frac{3}{2}})=
t^{\frac{1}{2}}E_{\lambda^{(n)}}(\cdot;-t^{-\frac{1}{2}},t^{\frac{3}{2}})\qquad \forall \, i\in I^{(n)}.
\]
In particular, there exists a unique $\kappa_n\in\mathbb{C}^*$ such that
\[
g^{(n)}:=\kappa_n\widetilde{\textup{CM}}_n\bigl(E_{\lambda^{(n)}}(\cdot;-t^{-\frac{1}{2}},t^{\frac{3}{2}})
\bigr)\in
\textup{Sol}_n\bigl(V_n; t^{\frac{3}{2}}, (-t^{-\frac{3}{4}})^{n-1}\bigr)
\]
has fully nested component
\begin{equation}
g^{(n)}_{L_\cap}(\mathbf{z})=\prod_{1\leq i<j\leq n}\left(t^{\frac{1}{2}}z_j-t^{-\frac{1}{2}}z_i\right). \label{fnccm}
\end{equation}
\end{proposition}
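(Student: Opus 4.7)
The plan is to verify that $E_{\lambda^{(n)}}(\cdot;-t^{-\frac{1}{2}},t^{\frac{3}{2}})$ lies in the source of the Cherednik--Matsuo isomorphism \eqref{cmbijection4}, then invoke this isomorphism to produce the solution, and finally normalize using the uniqueness results of the preceding section. Since the $\overline{Y}$-eigenvalue conditions in \eqref{cmbijection4} were established in the lemma immediately before the proposition, only the $T_i$-eigenvalue condition $\pi_n^{t^{-\frac{1}{2}},t^{\frac{3}{2}}}(T_i)E_{\lambda^{(n)}} = t^{\frac{1}{2}} E_{\lambda^{(n)}}$ for $i \in I^{(n)}$ needs to be checked.

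For this I would use the intertwiner relation \eqref{Brelation}. Inspection of the definition of $\lambda^{(n)}$ shows that any $i \in I^{(n)}$ lies inside one of the two arithmetic progressions of step $2$ making up $\lambda^{(n)}$, so $\lambda^{(n)}_i - \lambda^{(n)}_{i+1} = 2$. A short calculation from \eqref{rhoformula}, using that the entries of $\lambda^{(n)}$ are a permutation of $\{0,1,\ldots,n-1\}$, then gives $d_i(\lambda^{(n)}) - d_{i+1}(\lambda^{(n)}) = 4$ uniformly over $i \in I^{(n)}$, whence at $q=t^{\frac{3}{2}}$,
\[
\frac{s_{\lambda^{(n)},i+1}}{s_{\lambda^{(n)},i}} = (-t^{-\frac{1}{2}})^{-4}(t^{\frac{3}{2}})^{-2} = t^{-1}.
\]
Thus the prefactor $(t\, s_{\lambda^{(n)},i+1}/s_{\lambda^{(n)},i}-1)$ in \eqref{Brelation} vanishes, giving $\pi_n^{t^{-\frac{1}{2}},t^{\frac{3}{2}}}(B_i) E_{\lambda^{(n)}} = 0$. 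Substituting the definition $B_i = T_i(\overline{Y}_{i+1}\overline{Y}_i^{-1}-1) + t^{\frac{1}{2}} - t^{-\frac{1}{2}}$ and using the $\overline{Y}$-eigenvalue $t^{-1}$ then rearranges algebraically to $\pi_n^{t^{-\frac{1}{2}},t^{\frac{3}{2}}}(T_i)E_{\lambda^{(n)}} = t^{\frac{1}{2}}E_{\lambda^{(n)}}$, as required.

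By \eqref{cmbijection4}, the image $g := \widetilde{\textup{CM}}_n(E_{\lambda^{(n)}})$ is then a nonzero element of $\textup{Sol}_n(V_n; t^{\frac{3}{2}}, c_n)$, homogeneous of total degree $|\lambda^{(n)}| = \tfrac{1}{2}n(n-1)$ since the basic representation preserves total degree. By Corollary \ref{degreeleading} and degree counting, the fully nested component of $g$ has the form $C_n\prod_{1\le i<j\le n}(t^{\frac{1}{2}}z_j - t^{-\frac{1}{2}}z_i)$ with $C_n \in \mathbb{C}$, and Lemma \ref{unique1}(a) applied to $g$ forces $C_n \neq 0$; setting $\kappa_n := C_n^{-1}$ yields the unique rescaling giving \eqref{fnccm}. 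The main delicate point is the application of \eqref{Brelation} at $q=t^{\frac{3}{2}}$, since $s_i\lambda^{(n)} \notin B^{(2,3)}$ for $i \in I^{(n)}$ and hence $E_{s_i\lambda^{(n)}}$ on the right-hand side is generally not defined at this specialisation; this is handled by applying \eqref{Brelation} at generic $q$ and passing to the limit, weighing the simple zero of the prefactor $(ts_{i+1}/s_i-1)$ against the pole order of $E_{s_i\lambda^{(n)}}$ in the spirit of Kasatani's analysis of degenerate dual non-symmetric Macdonald polynomials.
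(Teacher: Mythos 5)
Your main computation—verifying that $\lambda_i^{(n)} - \lambda_{i+1}^{(n)} = 2$ for $i \in I^{(n)}$, hence $d_i(\lambda^{(n)}) - d_{i+1}(\lambda^{(n)}) = 4$ and $s_{\lambda^{(n)},i+1}/s_{\lambda^{(n)},i} = t^{-1}$ at $q = t^{\frac{3}{2}}$, so the factor $(t\,s_{\lambda^{(n)},i+1}/s_{\lambda^{(n)},i} - 1)$ in \eqref{Brelation} vanishes—matches the paper exactly, and your use of Corollary \ref{degreeleading} and Lemma \ref{unique1}(a) to normalize the fully nested component is also correct.

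However, the handling of the right-hand side of \eqref{Brelation} at the specialization $q = t^{\frac{3}{2}}$ contains a genuine gap. You acknowledge that $s_i\lambda^{(n)} \notin B^{(2,3)}$ and propose to make sense of \eqref{Brelation} via a limiting argument, ``weighing the simple zero of the prefactor against the pole order of $E_{s_i\lambda^{(n)}}$.'' As written this is not a proof: you neither establish what the pole order is, nor that it is strictly less than the order of vanishing of the prefactor. If $E_{s_i\lambda^{(n)}}$ had a pole of order $\geq 1$, the limit could well be nonzero or fail to exist. What the paper actually does is compute the wheel count $\#^{(2,3)}(s_i\lambda^{(n)}) = 1$ explicitly (listing the two equivalent wheels $(i,i+1,k+i+1)$ and $(k+i+1,i,i+1)$ for $n=2k$ and the analogue for $n$ odd) and then invoke Kasatani's Lemma 4.13, which asserts that $\#^{(2,3)}(\mu) = 1$ already guarantees that $E_\mu(\cdot;-t^{-\frac{1}{2}},q)$ is \emph{regular} at $q = t^{\frac{3}{2}}$ for generic $t^{\frac{1}{4}}$. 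In other words, there is no pole to cancel: \eqref{Brelation} can be evaluated directly at the specialization, and the vanishing of the prefactor immediately gives $\pi_n^{t^{-\frac{1}{2}},t^{\frac{3}{2}}}(B_i)E_{\lambda^{(n)}} = 0$. To repair your argument you should compute $\#^{(2,3)}(s_i\lambda^{(n)})$ and cite Kasatani's Lemma 4.13 in place of the informal limit-comparison.
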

\begin{proof}
For the first statement, fix $i\in I^{(n)}$. 
In fact, for $n=2k$ even the $(2,3)$-wheels in $s_i\lambda^{(2k)}$ are
$(i,i+1,k+i+1)$ and $(k+i+1,i,i+1)$ if $1\leq i<k$ and $(i,i+1,i-k)$, $(i-k,i,i+1)$ if $k+1\leq i<2k$.
For $n=2k-1$, the $(2,3)$-wheels in
$s_i\lambda^{(2k-1)}$ are $(i,i+1,k+i)$, $(k+i,i,i+1)$ if $1\leq i<k$ and $(i,i+1,i-k+1)$, 
$(i-k+1,i,i+1)$ if $k+1\leq i<2k-1$. Hence, $\#^{(2,3)}(s_i\lambda^{(n)})=1$.

By \cite[Lemma 4.13]{Kasatani:2005aa}, it follows from $\#^{(2,3)}(s_i\lambda^{(n)})=1$
that $E_{s_i\lambda^{(n)}}(\mathbf{z};-t^{-\frac{1}{2}},q)$ can be specialized to $q=t^{\frac{3}{2}}$
for generic $t^{\frac{1}{4}}$. By \eqref{Brelation}, we then obtain
\[
\pi^{t^{-\frac{1}{2}},t^{\frac{3}{2}}}(B_i)E_{\lambda^{(n)}}(\cdot;-t^{-\frac{1}{2}},t^{\frac{3}{2}})=0
\]
since $s_{\lambda^{(n)},i+1}/s_{\lambda^{(n)},i}=t^{-1}$ and $\lambda_{i}^{(n)}>\lambda_{i+1}^{(n)}$.
Substituting the explicit expression of $B_i$ then gives
\begin{equation*}
\begin{split}
0&=\pi^{t^{-\frac{1}{2}},t^{\frac{3}{2}}}(B_i)E_{\lambda^{(n)}}(\cdot;-t^{-\frac{1}{2}},t^{\frac{3}{2}})\\
&=\left((s_{\lambda^{(n)},i+1}s_{\lambda^{(n)},i}^{-1}-1)\pi_n^{t^{-\frac{1}{2}},t^{\frac{3}{2}}}(T_i)
+(t^{\frac{1}{2}}-t^{-\frac{1}{2}})\right)E_{\lambda^{(n)}}(\cdot;-t^{-\frac{1}{2}},t^{\frac{3}{2}})\\
&=(t^{-1}-1)\left(\pi^{t^{-\frac{1}{2}},t^{\frac{3}{2}}}(T_i)-t^{\frac{1}{2}}\right)E_{\lambda^{(n)}}(\cdot;-t^{-\frac{1}{2}},t^{\frac{3}{2}}),
\end{split}
\end{equation*}
hence $\pi_n^{t^{-\frac{1}{2}},t^{\frac{3}{2}}}(T_i)E_{\lambda^{(n)}}(\cdot;-t^{-\frac{1}{2}},t^{\frac{3}{2}})=
t^{\frac{1}{2}}E_{\lambda^{(n)}}(\cdot;-t^{-\frac{1}{2}},t^{\frac{3}{2}})$ for $i\in I^{(n)}$.

It follows that 
\[
0\not=\widetilde{g}^{(n)}:=\widetilde{\textup{CM}}_n\bigl(E_{\lambda^{(n)}}(\cdot;-t^{-\frac{1}{2}},t^{\frac{3}{2}})
\bigr)\in\textup{Sol}_n(V_n;t^{\frac{3}{2}},(-t^{-\frac{3}{4}})^{n-1})
\]
is homogeneous of total degree $\frac{1}{2}n(n-1)$. By Corollary \ref{degreeleading},
\[
\widetilde{g}_{L_\cap}^{(n)}(\mathbf{z})=
\kappa_n\prod_{1\leq i<j\leq n}\left(t^{\frac{1}{2}}z_j-t^{-\frac{1}{2}}z_i\right)
\]
for some $\kappa_n\in\mathbb{C}^*$, hence the result. 
\end{proof}
With the last proposition we have completed the proof of Theorem \ref{mainTHM} for generic
$t^{\frac{1}{4}}\in\mathbb{C}^*$ (note that Lemma \ref{unique1} implies uniqueness, and that for
$n=1$ the desired unique solution $g^{(1)}$ is simply given by the constant function
$g^{(1)}\equiv 1$). The remark following Theorem \ref{mainTHM} then completes the proof
of Theorem \ref{mainTHM} for 
all values $t^{\frac{1}{4}}\in\mathbb{C}^*$ for which $(t^{\frac{1}{2}}+1)(t+1)\not=0$.

 \section{The dual braid recursion}\label{dual}
The extended affine Temperley-Lieb algebra $\mathcal{TL}_n(t^{\frac{1}{2}})$ is invariant under the inversion $t^{\frac{1}{4}} \to t^{-\frac{1}{4}}$.
In this last section of the article, we discuss how this symmetry results in a dual braid recursion
for the qKZ towers of solutions $(g^{(n)})_{n\geq 0}$ from Theorem \ref{mainTHM}. We set $v=1$
in this section.

First, we discuss the inversion on the Kauffman skein relation \eqref{kauffman}.
If we invert $t^{\frac{1}{4}}$ and then rotate the diagrams by ninety degrees, we have
\begin{align*}
		&\psset{unit=0.8}\begin{pspicture}[shift=-0.9](-1,-1)(1,1)
			\pscircle[fillstyle=solid,fillcolor=diskin,linewidth=0.8pt,linestyle=dotted](0,0){1}
			\degrees[8]
			\psline[linecolor= line,linewidth=1.2pt](1;3)(1;7)
			\psline[linecolor= blue,linewidth=1.2pt](1;1)(0.2;1)
			\psline[linecolor= blue,linewidth=1.2pt](1;5)(0.2;5)	
		\end{pspicture}
 = t^{\frac{1}{4}}\; 
		\begin{pspicture}[shift=-0.9](-1,-1)(1,1)
			\pscircle[fillstyle=solid,fillcolor=diskin,linewidth=0.8pt,linestyle=dotted ](0,0){1}
			\degrees[8]
			\pscurve[linecolor= line,linewidth=1.2pt](1;3)(0.9;3)(0.4;2)(0.9;1)(1;1)
			\pscurve[linecolor= line,linewidth=1.2pt](1;5)(0.9;5)(0.4;6)(0.9;7)(1;7)
		\end{pspicture}
		+ t^{-\frac{1}{4}} \;
		\begin{pspicture}[shift=-0.9](-1,-1)(1,1)
			\pscircle[fillstyle=solid,fillcolor=diskin,linewidth=0.8pt,linestyle=dotted](0,0){1}
			\degrees[8]
			\pscurve[linecolor= line,linewidth=1.2pt](1;5)(0.9;5)(0.4;4)(0.9;3)(1;3)
			\pscurve[linecolor= line,linewidth=1.2pt](1;1)(0.9;1)(0.4;0)(0.9;7)(1;7)
		\end{pspicture}\;.
	\end{align*}
Comparing this to the original equation, we see that all the over-crossings swap to under-crossings and vice versa. Hence, in the identification of $\mathcal{TL}_n(t^{\frac{1}{2}})$ with 
$\textup{End}_{\mathcal{S}}(n)$, the inversion $t^{\frac{1}{4}}\rightarrow t^{-\frac{1}{4}}$ amounts
to replacing under-crossings by over-crossing and vice versa. 

Let $\mathcal{I}_n^\iota: \mathcal{TL}_n\rightarrow\mathcal{TL}_{n+1}$ be the algebra map $\mathcal{I}_n$ with the role of $t^{\frac{1}{4}}$ replaced by $t^{-\frac{1}{4}}$. Hence, in the skein theoretic 
description, the arc insertion is now done by over-crossing all arcs its meets, instead of under-crossing.
Similarly, we write 
\[
\phi_n^\iota\in \textup{Hom}_{\mathcal{TL}_n}(V_n,V_{n+1}^{\mathcal{I}_n^\iota})
\]
for the intertwiner obtained from replacing in the construction of 
$\phi_n\in\textup{Hom}_{\mathcal{TL}_n}\bigl(V_n,V_{n+1}^{\mathcal{I}_n}\bigr)$
the parameter $t^{\frac{1}{4}}$ by $t^{-\frac{1}{4}}$ and the role of $\mathcal{I}_n$ by 
$\mathcal{I}_n^\iota$. 
\begin{theorem}
With the assumptions as in Theorem \ref{mainTHM},
let $g^{(n)}\in\textup{Sol}_n(V_n,t^{\frac{3}{2}}, (-t^{-\frac{3}{4}})^{n-1})$ for $n\geq 1$ be the homogeneous polynomial solution qKZ of degree $\frac{1}{2}n(n-1)$ with fully nested component
\[
g^{(n)}_{L_\cap}(\mathbf{z})=\prod_{1\leq i<j\leq n}\left(t^{\frac{1}{2}}z_j-t^{-\frac{1}{2}}z_i\right)
\]
and set $g^{(0)}:=1\in \textup{Sol}_0(V_0,t^{\frac{3}{2}},t^{\frac{1}{4}}+t^{-\frac{1}{4}})$. Write
\[
\widetilde{g}^{(n)}(\mathbf{z}):=(z_1z_2\cdots z_n)^{n-1}g^{(n)}(z_1^{-1},z_2^{-1},\ldots,z_n^{-1}).
\]
Then, $\widetilde{g}^{(n)}(\mathbf{z})\in V_n[\mathbf{z}]$ is a $V_n$-valued homogeneous 
polynomial of total degree $\frac{1}{2}n(n-1)$ and 
\[
\widetilde{g}^{(n+1)}(z_1,\ldots,z_n,0)=t^{\frac{1}{4}(2n-\lfloor n/2\rfloor)}z_1\cdots z_n
\phi_n^\iota\bigl(\widetilde{g}^{(n)}(z_1,\ldots,z_n)\bigr),\qquad n\geq 0.
\]
\end{theorem}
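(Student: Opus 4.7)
The strategy is to exploit the symmetry of the entire setup under the simultaneous inversions $t^{\frac{1}{4}} \leftrightarrow t^{-\frac{1}{4}}$ and $\mathbf{z} \leftrightarrow \mathbf{z}^{-1}$. At the skein-theoretic level the inversion $t^{\frac{1}{4}} \to t^{-\frac{1}{4}}$ swaps over- and under-crossings in the Kauffman relation, so it carries $\mathcal{I}_n$ to $\mathcal{I}_n^\iota$ and $\phi_n$ to $\phi_n^\iota$. The plan is to show that $\widetilde{g}^{(n)}$ is a polynomial solution of the qKZ equations at the dual parameters $(t^{-\frac{1}{2}}, t^{-\frac{3}{2}})$ and then to invoke Theorem \ref{mainTHM} applied with $t^{\frac{1}{4}} \to t^{-\frac{1}{4}}$ to deduce the braid recursion.

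The key computational ingredient is the identity $R_i(x^{-1}; t^{-\frac{1}{2}}) = R_i(x; t^{\frac{1}{2}})$ in $\mathcal{TL}_n$, which follows directly from the Boltzmann weights \eqref{weights} being invariant under the simultaneous substitutions $t^{\frac{1}{4}} \to t^{-\frac{1}{4}}$ and $x \to x^{-1}$. Substituting $\mathbf{z} \to \mathbf{z}^{-1}$ in the qKZ equations satisfied by $g^{(n)}$ and multiplying by $(z_1 \cdots z_n)^{n-1}$ then shows that $\widetilde{g}^{(n)}$ is a twisted-symmetric rational solution of the qKZ equations at parameters $(t^{-\frac{1}{2}}, t^{-\frac{3}{2}})$, with twist parameter $c_n':= q^{n-1} c_n = (-t^{\frac{3}{4}})^{n-1}$; the factor $q^{n-1}$ arises from the homogeneity of $(z_1 \cdots z_n)^{n-1}$ under the $\rho$-shift $z_1 \mapsto q^{-1} z_1$.

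A direct computation yields the fully nested component
\[
\widetilde{g}^{(n)}_{L_\cap}(\mathbf{z}) = (z_1 \cdots z_n)^{n-1}\prod_{1\leq i<j\leq n}\left(t^{\frac{1}{2}} z_j^{-1} - t^{-\frac{1}{2}} z_i^{-1}\right) = \prod_{1 \leq i < j \leq n}\left(t^{\frac{1}{2}} z_i - t^{-\frac{1}{2}} z_j\right),
\]
a homogeneous polynomial of total degree $\frac{1}{2}n(n-1)$. Since the loop-weight condition $(t^{\frac{1}{2}}+1)(t+1) \neq 0$ is invariant under $t \mapsto t^{-1}$, Lemma \ref{unique1}(b) applied with $t^{\frac{1}{4}} \to t^{-\frac{1}{4}}$ forces every coefficient $\widetilde{g}^{(n)}_L$ to be a homogeneous polynomial of total degree $\frac{1}{2}n(n-1)$, so $\widetilde{g}^{(n)} \in \textup{Sol}_n\bigl(V_n; t^{-\frac{3}{2}}, (-t^{\frac{3}{4}})^{n-1}\bigr)$.

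Finally, uniqueness in Theorem \ref{mainTHM} applied with $t^{\frac{1}{4}} \to t^{-\frac{1}{4}}$ identifies $\widetilde{g}^{(n)}$ with a nonzero scalar multiple of the canonical polynomial solution $g^{(n)}_\iota$ of that theorem at $t^{-\frac{1}{4}}$, with the scalar $(-1)^{n(n-1)/2}$ fixed by comparing fully nested components. Translating the braid recursion \eqref{braidrecursionrelation} satisfied by $g^{(n)}_\iota$ through this identification then produces the stated dual braid recursion. The main obstacle I expect is the careful bookkeeping of the signs $(-1)^{n(n-1)/2}$ arising at the two consecutive levels $n$ and $n+1$ together with the sign implicit in $c_n'=(-t^{\frac{3}{4}})^{n-1}$, and verifying that these combine cleanly into the prefactor $t^{\frac{1}{4}(2n-\lfloor n/2 \rfloor)} z_1 \cdots z_n$ in the statement.
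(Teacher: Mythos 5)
Your overall strategy (exploit the $t^{\frac{1}{4}}\leftrightarrow t^{-\frac{1}{4}}$ symmetry by showing $\widetilde{g}^{(n)}$ solves the qKZ equations at the dual parameters) is the same as the paper's, and the preliminary steps are all correct: the identity $R_i^\iota(x)=R_i(x^{-1})$, the computation $\widetilde{g}^{(n)}_{L_\cap}(\mathbf{z})=\prod_{i<j}(t^{\frac{1}{2}}z_i-t^{-\frac{1}{2}}z_j)$, and the conclusion via Lemma~\ref{unique1} that all components of $\widetilde{g}^{(n)}$ are homogeneous polynomials of degree $\frac{1}{2}n(n-1)$. However, where the paper then proves the dual braid recursion directly—by invoking modified versions of Lemmas~\ref{liftedsol}, \ref{downlemma}, \ref{intertwinerfullynested} and \ref{unique2} to show both sides lie in the same solution space $\textup{Sol}_n^\iota\bigl(V_{n+1}^{\mathcal{I}_n^\iota};t^{\frac{3}{2}},(-t^{\frac{3}{4}})^{n+1}\bigr)$ and then comparing fully nested components—you instead route through the explicit identification $\widetilde{g}^{(n)}=(-1)^{n(n-1)/2}g^{(n)}_\iota$ and try to translate \eqref{braidrecursionrelation} applied at $t^{-\frac{1}{4}}$.

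The "sign bookkeeping" you flag but do not carry out is not a formality; it is where your route actually becomes delicate. Translating the braid recursion through the scalar identifications at levels $n$ and $n+1$ introduces a relative factor
\[
(-1)^{\frac{(n+1)n}{2}-\frac{n(n-1)}{2}}=(-1)^n,
\]
and this $(-1)^n$ must cancel against a matching $(-1)^n$ hidden inside the braid recursion factor $h^{(n)}$. It is indeed there: from Lemma~\ref{CRUClem}'s own fully nested component analysis,
\[
g^{(n+1)}_{L_\cap}(z_1,\ldots,z_n,0)=\bigl(-t^{-\frac{1}{2}}\bigr)^nz_1\cdots z_n\prod_{1\leq i<j\leq n}\left(t^{\frac{1}{2}}z_j-t^{-\frac{1}{2}}z_i\right),
\]
so the recursion factor is $h^{(n)}=\bigl(-t^{-\frac{1}{2}}\bigr)^nt^{\frac{1}{4}\lfloor n/2\rfloor}z_1\cdots z_n=(-1)^n\,t^{\frac{1}{4}(\lfloor n/2\rfloor-2n)}z_1\cdots z_n$. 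Note that the displayed formula in Lemma~\ref{CRUClem} and Theorem~\ref{mainTHM} records only $t^{\frac{1}{4}(\lfloor n/2\rfloor-2n)}z_1\cdots z_n$, silently identifying $(-t^{-\frac{1}{2}})^n$ with $t^{-n/2}$; this is only correct for $n$ even, and one can check (e.g.\ at $n=1$: $g^{(2)}_{L_\cap}(z_1,0)=-t^{-\frac{1}{2}}z_1$ while $\phi_1(g^{(1)})_{L_\cap}=1$) that the $(-1)^n$ is genuinely there, consistent with the signs $(-1)^k$ appearing in Remark~\ref{remTHM}. So if you take the braid recursion \eqref{braidrecursionrelation} at face value when applying it at $t^{-\frac{1}{4}}$, you will end up with a spurious $(-1)^n$ in the dual braid recursion. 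The paper's direct comparison of fully nested components avoids this issue entirely, because setting $z_{n+1}=0$ in $\prod_{i<j}(t^{\frac{1}{2}}z_i-t^{-\frac{1}{2}}z_j)$ picks out $(t^{\frac{1}{2}})^n$ with a plus sign, not $(-t^{-\frac{1}{2}})^n$. In short, your route is valid but makes a sign cancellation load-bearing that the paper's route never has to face, and you would need to carry that cancellation out explicitly, correcting the $(-1)^n$ in $h^{(n)}$, for the argument to go through.
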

\begin{proof}
Let $R_i^\iota(x)\in\mathcal{TL}_n$ be the $R$-operator $R_i(x)$ with $t^{\frac{1}{4}}$ replaced by 
$t^{-\frac{1}{4}}$,
\[
R_i^\iota(x)=\left(\frac{x-1}{t^{-\frac{1}{2}}-t^{\frac{1}{2}}x}\right)e_i+
\left(\frac{xt^{-\frac{1}{2}}-t^{\frac{1}{2}}}{t^{-\frac{1}{2}}-t^{\frac{1}{2}}x}\right).
\] 
Note that $R_i^\iota(x)=R_i(x^{-1})$, from which it follows that
\[
R_i^\iota(z_{i+1}/z_i)\widetilde{g}^{(n)}(\ldots,z_{i+1},z_i,\ldots)=\widetilde{g}^{(n)}(\mathbf{z})
\]
for $1\leq i<n$. Furthermore, with $q=t^{\frac{3}{2}}$,
\begin{equation*}
\begin{split}
\rho \widetilde{g}^{(n)}(z_2,\ldots,z_n,qz_1)&=q^{n-1}(z_1\cdots z_n)^{n-1}\rho g^{(n)}(z_2^{-1},
\ldots,z_n^{-1},q^{-1}z_1^{-1})\\
&=q^{n-1}(-t^{-\frac{3}{4}})^{n-1}(z_1\cdots z_n)^{n-1}g^{(n)}(z_1^{-1},\ldots,z_n^{-1})\\
&=(-t^{\frac{3}{4}})^{n-1}\widetilde{g}^{(n)}(\mathbf{z}).
\end{split}
\end{equation*}
Hence, $\widetilde{g}^{(n)}(\mathbf{z})$ is a twisted symmetric solution of the qKZ equation
with respect to the action $\nabla^\iota$ of $W_n$ obtained from $\nabla$ by inverting $t^{\frac{1}{4}}$ and setting $q=t^{\frac{3}{2}}$. 
Furthermore, the fully nested component of $\widetilde{g}^{(n)}(\mathbf{z})$ is
\[
\widetilde{g}^{(n)}_{L_\cap}(\mathbf{z})=\prod_{1\leq i<j\leq n}(t^{\frac{1}{2}}z_i-t^{-\frac{1}{2}}z_j).
\]
As in Lemma \ref{unique1}, it follows that $\widetilde{g}^{(n)}(\mathbf{z})$, as symmetric solution
of these qKZ equations, is determined by the fully nested component and that all coefficients 
$\widetilde{g}^{(n)}_L(\mathbf{z})$ are homogeneous polynomials in $z_1,\ldots, z_n$ of total
degree $\frac{1}{2}n(n-1)$. In particular,
\[
\widetilde{g}^{(n)}(\mathbf{z})\in\textup{Sol}_n^\iota\bigl(V_n;t^{\frac{3}{2}},(-t^{\frac{3}{4}})^{n-1}\bigr)
\]
with $\textup{Sol}_n^\iota\bigl(V_n;q,d_n)$ being the polynomial $V_n$-valued functions
$f(\mathbf{z})\in V_n[\mathbf{z}]$ satisfying
\begin{equation*}
\begin{split}
R_i^\iota(z_{i+1}/z_i)f(\ldots,z_{i+1},z_i,\ldots)&=f(\mathbf{z}),\qquad 1\leq i<n,\\
\rho f(z_2,\ldots,z_n,qz_1)&=d_nf(\mathbf{z}).
\end{split}
\end{equation*}
Using slightly modified versions of 
Lemmas \ref{liftedsol} and \ref{downlemma} one now shows that 
\begin{equation*}
\begin{split}
(z_1\cdots z_n)\phi_n^\iota\bigl(\widetilde{g}^{(n)}(z_1,\ldots,z_n)\bigr)&\in
\textup{Sol}_n^\iota\bigl(V_{n+1}^{\mathcal{I}_n^\iota};
t^{\frac{3}{2}}, (-t^{\frac{3}{4}})^{n+1}\bigr),\\
\widetilde{g}^{(n+1)}(z_1,\ldots,z_n,0)&\in \textup{Sol}_n^\iota\bigl(V_{n+1}^{\mathcal{I}_n^\iota};
t^{\frac{3}{2}}, (-t^{\frac{3}{4}})^{n+1}\bigr).
\end{split}
\end{equation*}
Furthermore, a modified version of Lemma \ref{intertwinerfullynested} yields
\[
\phi_n^\iota(L)=\sum_{L^\prime\in\mathcal{L}_{n+1}}d_{L^\prime,L}L^\prime
\]
with $d_{L,L_\cap^{(n+1)}}=\delta_{L,L_\cap^{(n)}}t^{\frac{1}{4}\lfloor n/2\rfloor}$. Hence
$\widetilde{g}^{(n+1)}(z_1,\ldots,z_n,0)$ and $t^{\frac{1}{4}(2n-\lfloor n/2\rfloor)}z_1\cdots z_n
\phi_n^\iota\bigl(\widetilde{g}^{(n)}(\mathbf{z})\bigr)$ have the same fully nested component.
The properly modified version of Lemma \ref{unique2} then shows that they are equal.
\end{proof}

\appendix
\section{Recursion relations for link pattern components of qKZ solutions} 
\label{uniqueproof}
In this appendix, we prove Lemmas \ref{unique1}{\bf (a)} and \ref{unique2}{\bf (b)}, i.e.,
we show that twisted symmetric solutions of qKZ equations with values in link pattern modules
are determined by the fully nested component. 
This proof is done for three different representations of $\mathcal{H}_n$; link patterns $\C[LP_{2k}]$ (when $n=2k$), punctured link patterns $\C[\mathcal{L}_{n}]\simeq V_n$ (Lemma \ref{unique1})  and the restricted modules $V^{\nu_n}_{n+1}$ (Lemma \ref{unique2}). The first two cases have been considered before in the literature, see \cite{Di-Francesco:2006aa, Kasatani:2007aa, De-Gier:2010aa}. We recall these here in detail, since the technicalities play an important role in proving the most delicate third case.

A link pattern of size $2k$ is a diagram with $2k$ equally spaced points on the boundary of the 
disc $\mathbb{D}$ that are connected by $k$ non-intersecting curves lying within the disc. 
To establish convention the points are numbered $1$ to $2k$ going counter-clockwise around the disc.
We denote the set of link patterns of size $2k$ by $LP_{2k}$.
As an example, $LP_6$ consists of the following link patterns:

$$\psset{unit=0.6}
\begin{pspicture}[shift=-1.8](-2.25,-1.8)(2.25,1.8)
    	\pscircle[fillstyle=solid, fillcolor=diskin,linewidth=1.2pt](0,0){1.5}
    	\degrees[6]
	\rput{0}(1.8;0){\tiny$1$}
	\rput{0}(1.8;1){\tiny$2$}
	\rput{0}(1.8;2){\tiny$3$}
	\rput{0}(1.8;3){\tiny$4$}
	\rput{0}(1.8;4){\tiny$5$}
	\rput{0}(1.8;5){\tiny$6$}
	\pscurve[linecolor=line,linewidth=1.2pt](1.5;5)(1.4;5)(1;5.5)(1.4;0)(1.5;0)
	\pscurve[linecolor=line,linewidth=1.2pt](1.5;4)(1.4;4)(0;0)(1.4;1)(1.5;1)
	\pscurve[linecolor=line,linewidth=1.2pt](1.5;3)(1.4;3)(1;2.5)(1.4;2)(1.5;2)
   \end{pspicture} 
 ~
   \begin{pspicture}[shift=-1.8](-2.25,-1.8)(2.25,1.8)
    	\pscircle[fillstyle=solid, fillcolor=diskin,linewidth=1.2pt](0,0){1.5}
    	\degrees[6]
	\rput{0}(1.8;0){\tiny$1$}
	\rput{0}(1.8;1){\tiny$2$}
	\rput{0}(1.8;2){\tiny$3$}
	\rput{0}(1.8;3){\tiny$4$}
	\rput{0}(1.8;4){\tiny$5$}
	\rput{0}(1.8;5){\tiny$6$}
	\pscurve[linecolor=line,linewidth=1.2pt](1.5;0)(1.4;0)(1;0.5)(1.4;1)(1.5;1)
	\pscurve[linecolor=line,linewidth=1.2pt](1.5;5)(1.4;5)(0;0)(1.4;2)(1.5;2)
	\pscurve[linecolor=line,linewidth=1.2pt](1.5;4)(1.4;4)(1;3.5)(1.4;3)(1.5;3)
   \end{pspicture}  
    ~
   \begin{pspicture}[shift=-1.8](-2.25,-1.8)(2.25,1.8)
    	\pscircle[fillstyle=solid, fillcolor=diskin,linewidth=1.2pt](0,0){1.5}
    	\degrees[6]
	\rput{0}(1.8;0){\tiny$1$}
	\rput{0}(1.8;1){\tiny$2$}
	\rput{0}(1.8;2){\tiny$3$}
	\rput{0}(1.8;3){\tiny$4$}
	\rput{0}(1.8;4){\tiny$5$}
	\rput{0}(1.8;5){\tiny$6$}
	\pscurve[linecolor=line,linewidth=1.2pt](1.5;1)(1.4;1)(1;1.5)(1.4;2)(1.5;2)
	\pscurve[linecolor=line,linewidth=1.2pt](1.5;0)(1.4;0)(0;0)(1.4;3)(1.5;3)
	\pscurve[linecolor=line,linewidth=1.2pt](1.5;5)(1.4;5)(1;4.5)(1.4;4)(1.5;4)
   \end{pspicture}  
    ~
   \begin{pspicture}[shift=-1.8](-2.25,-1.8)(2.25,1.8)
    	\pscircle[fillstyle=solid, fillcolor=diskin,linewidth=1.2pt](0,0){1.5}
    	\degrees[6]
	\rput{0}(1.8;0){\tiny$1$}
	\rput{0}(1.8;1){\tiny$2$}
	\rput{0}(1.8;2){\tiny$3$}
	\rput{0}(1.8;3){\tiny$4$}
	\rput{0}(1.8;4){\tiny$5$}
	\rput{0}(1.8;5){\tiny$6$}
	\pscurve[linecolor=line,linewidth=1.2pt](1.5;1)(1.4;1)(1;1.5)(1.4;2)(1.5;2)
	\pscurve[linecolor=line,linewidth=1.2pt](1.5;3)(1.4;3)(1;3.5)(1.4;4)(1.5;4)
	\pscurve[linecolor=line,linewidth=1.2pt](1.5;5)(1.4;5)(1;5.5)(1.4;0)(1.5;0)
   \end{pspicture}  
    ~
   \begin{pspicture}[shift=-1.8](-2.25,-1.8)(2.25,1.8)
    	\pscircle[fillstyle=solid, fillcolor=diskin,linewidth=1.2pt](0,0){1.5}
    	\degrees[6]
	\rput{0}(1.8;0){\tiny$1$}
	\rput{0}(1.8;1){\tiny$2$}
	\rput{0}(1.8;2){\tiny$3$}
	\rput{0}(1.8;3){\tiny$4$}
	\rput{0}(1.8;4){\tiny$5$}
	\rput{0}(1.8;5){\tiny$6$}
	\pscurve[linecolor=line,linewidth=1.2pt](1.5;2)(1.4;2)(1;2.5)(1.4;3)(1.5;3)
	\pscurve[linecolor=line,linewidth=1.2pt](1.5;4)(1.4;4)(1;4.5)(1.4;5)(1.5;5)
	\pscurve[linecolor=line,linewidth=1.2pt](1.5;0)(1.4;0)(1;0.5)(1.4;1)(1.5;1)
   \end{pspicture} 
 $$  
Link patterns can also be drawn by placing the endpoints on a horizontal line such that the $k$ non-intersecting curves lie above it.
To establish convention, the points are numbered in increasing order from left to right.
As an example the link patterns of  $LP_6$ can be drawn as
\begin{align*} \psset{unit=0.4}
 \begin{pspicture}(0,-0.5)(7,2)
 	\psellipticarc*[linecolor=diskin,linewidth=1.2pt](3.5,0)(3,2){0}{180}
	\psline[linewidth=1.2pt](0.5,0)(6.5,0)
	\rput{0}(1,-0.4){\tiny$1$}
	\rput{0}(2,-0.4){\tiny$2$}		
	\rput{0}(3,-0.4){\tiny$3$}
	\rput{0}(4,-0.4){\tiny$4$}
	\rput{0}(5,-0.4){\tiny$5$}
	\rput{0}(6,-0.4){\tiny$6$}
	\psellipticarc[linecolor=line,linewidth=1.2pt](3.5,0)(2.5,1.5){0}{180}
	\psellipticarc[linecolor=line,linewidth=1.2pt](3.5,0)(1.5,1){0}{180}
	\psellipticarc[linecolor=line,linewidth=1.2pt](3.5,0)(0.5,0.5){0}{180}
 \end{pspicture} 
 ~
 \begin{pspicture}(0,-0.5)(7,2)
 	\psellipticarc*[linecolor=diskin,linewidth=1.2pt](3.5,0)(3,2){0}{180}
	\psline[linewidth=1.2pt](0.5,0)(6.5,0)
	\rput{0}(1,-0.4){\tiny$1$}
	\rput{0}(2,-0.4){\tiny$2$}		
	\rput{0}(3,-0.4){\tiny$3$}
	\rput{0}(4,-0.4){\tiny$4$}
	\rput{0}(5,-0.4){\tiny$5$}
	\rput{0}(6,-0.4){\tiny$6$}
	\psellipticarc[linecolor=line,linewidth=1.2pt](1.5,0)(0.5,0.5){0}{180}
	\psellipticarc[linecolor=line,linewidth=1.2pt](4.5,0)(1.5,1){0}{180}
	\psellipticarc[linecolor=line,linewidth=1.2pt](4.5,0)(0.5,0.5){0}{180}
 \end{pspicture} 
  \begin{pspicture}(0,-0.5)(7,2)
 	\psellipticarc*[linecolor=diskin,linewidth=1.2pt](3.5,0)(3,2){0}{180}
	\psline[linewidth=1.2pt](0.5,0)(6.5,0)
	\rput{0}(1,-0.4){\tiny$1$}
	\rput{0}(2,-0.4){\tiny$2$}		
	\rput{0}(3,-0.4){\tiny$3$}
	\rput{0}(4,-0.4){\tiny$4$}
	\rput{0}(5,-0.4){\tiny$5$}
	\rput{0}(6,-0.4){\tiny$6$}
	\psellipticarc[linecolor=line,linewidth=1.2pt](5.5,0)(0.5,0.5){0}{180}
	\psellipticarc[linecolor=line,linewidth=1.2pt](2.5,0)(1.5,1){0}{180}
	\psellipticarc[linecolor=line,linewidth=1.2pt](2.5,0)(0.5,0.5){0}{180}
 \end{pspicture} 
 \begin{pspicture}(0,-0.5)(7,2)
 	\psellipticarc*[linecolor=diskin,linewidth=1.2pt](3.5,0)(3,2){0}{180}
	\psline[linewidth=1.2pt](0.5,0)(6.5,0)
	\rput{0}(1,-0.4){\tiny$1$}
	\rput{0}(2,-0.4){\tiny$2$}		
	\rput{0}(3,-0.4){\tiny$3$}
	\rput{0}(4,-0.4){\tiny$4$}
	\rput{0}(5,-0.4){\tiny$5$}
	\rput{0}(6,-0.4){\tiny$6$}
	\psellipticarc[linecolor=line,linewidth=1.2pt](2.5,0)(0.5,0.5){0}{180}
	\psellipticarc[linecolor=line,linewidth=1.2pt](4.5,0)(0.5,0.5){0}{180}
	\psellipticarc[linecolor=line,linewidth=1.2pt](3.5,0)(2.5,1.5){0}{180}
 \end{pspicture} 
 ~
  \begin{pspicture}(0,-0.5)(7,2)
 	\psellipticarc*[linecolor=diskin,linewidth=1.2pt](3.5,0)(3,2){0}{180}
	\psline[linewidth=1.2pt](0.5,0)(6.5,0)
	\rput{0}(1,-0.4){\tiny$1$}
	\rput{0}(2,-0.4){\tiny$2$}		
	\rput{0}(3,-0.4){\tiny$3$}
	\rput{0}(4,-0.4){\tiny$4$}
	\rput{0}(5,-0.4){\tiny$5$}
	\rput{0}(6,-0.4){\tiny$6$}
	\psellipticarc[linecolor=line,linewidth=1.2pt](1.5,0)(0.5,0.5){0}{180}
	\psellipticarc[linecolor=line,linewidth=1.2pt](3.5,0)(0.5,0.5){0}{180}
	\psellipticarc[linecolor=line,linewidth=1.2pt](5.5,0)(0.5,0.5){0}{180}
 \end{pspicture} 
\end{align*}
 respectively.
Due to this form the curves are sometime referred to as arches and a \emph{little arch} is one that connects two consecutive points.  

A Dyck path of length $2k$ is a lattice path from $(0,0)$ to $(2k,0)$ with steps $(1,1)$ called a \emph{rise} and $(1,-1)$ called a \emph{fall}, which never falls below the $x$-axis. 
We denote the set of Dyck paths of length $2k$ by $DP_{2k}$.
As an example $DP_{6}$ consists of the following Dyck paths:
\begin{align*}
 \psset{unit=0.4}
 \begin{pspicture}(-0.5,-0.5)(6.5,3)
\psline*[linecolor=diskin,linewidth=1pt](0,0)(3,3)(6,0)
\psline[linewidth=1pt,linestyle=dotted,linecolor=white](2,0)(1,1)
\psline[linewidth=1pt,linestyle=dotted,linecolor=white](4,0)(2,2)
\psline[linewidth=1pt,linestyle=dotted,linecolor=white](2,0)(4,2)
\psline[linewidth=1pt,linestyle=dotted,linecolor=white](4,0)(5,1)
\psline[linewidth=1.2pt](0,0)(6,0)
\psline[linewidth=1.2pt,linecolor=line](0,0)(1,1)(2,2)(3,3)(4,2)(5,1)(6,0)
	\rput{0}(0.5,-0.4){\tiny$1$}
	\rput{0}(1.5,-0.4){\tiny$2$}		
	\rput{0}(2.5,-0.4){\tiny$3$}
	\rput{0}(3.5,-0.4){\tiny$4$}
	\rput{0}(4.5,-0.4){\tiny$5$}
	\rput{0}(5.5,-0.4){\tiny$6$}
 \end{pspicture} 
 ~
 \begin{pspicture}(-0.5,-0.5)(6.5,3)
\psline*[linecolor=diskin,linewidth=1pt](0,0)(3,3)(6,0)
\psline[linewidth=1pt,linestyle=dotted,linecolor=white](2,0)(1,1)
\psline[linewidth=1pt,linestyle=dotted,linecolor=white](4,0)(2,2)
\psline[linewidth=1pt,linestyle=dotted,linecolor=white](2,0)(4,2)
\psline[linewidth=1pt,linestyle=dotted,linecolor=white](4,0)(5,1)
\psline[linewidth=1.2pt](0,0)(6,0)
\psline[linewidth=1.2pt,linecolor=line](0,0)(1,1)(2,0)(3,1)(4,2)(5,1)(6,0)
	\rput{0}(0.5,-0.4){\tiny$1$}
	\rput{0}(1.5,-0.4){\tiny$2$}		
	\rput{0}(2.5,-0.4){\tiny$3$}
	\rput{0}(3.5,-0.4){\tiny$4$}
	\rput{0}(4.5,-0.4){\tiny$5$}
	\rput{0}(5.5,-0.4){\tiny$6$}
 \end{pspicture} 
 ~
 \begin{pspicture}(-0.5,-0.5)(6.5,3)
\psline*[linecolor=diskin,linewidth=1pt](0,0)(3,3)(6,0)
\psline[linewidth=1pt,linestyle=dotted,linecolor=white](2,0)(1,1)
\psline[linewidth=1pt,linestyle=dotted,linecolor=white](4,0)(2,2)
\psline[linewidth=1pt,linestyle=dotted,linecolor=white](2,0)(4,2)
\psline[linewidth=1pt,linestyle=dotted,linecolor=white](4,0)(5,1)
\psline[linewidth=1.2pt](0,0)(6,0)
\psline[linewidth=1.2pt,linecolor=line](0,0)(1,1)(2,2)(3,1)(4,0)(5,1)(6,0)
	\rput{0}(0.5,-0.4){\tiny$1$}
	\rput{0}(1.5,-0.4){\tiny$2$}		
	\rput{0}(2.5,-0.4){\tiny$3$}
	\rput{0}(3.5,-0.4){\tiny$4$}
	\rput{0}(4.5,-0.4){\tiny$5$}
	\rput{0}(5.5,-0.4){\tiny$6$}
 \end{pspicture} 
  ~
\begin{pspicture}(-0.5,-0.5)(6.5,3)
\psline*[linecolor=diskin,linewidth=1pt](0,0)(3,3)(6,0)
\psline[linewidth=1pt,linestyle=dotted,linecolor=white](2,0)(1,1)
\psline[linewidth=1pt,linestyle=dotted,linecolor=white](4,0)(2,2)
\psline[linewidth=1pt,linestyle=dotted,linecolor=white](2,0)(4,2)
\psline[linewidth=1pt,linestyle=dotted,linecolor=white](4,0)(5,1)
\psline[linewidth=1.2pt](0,0)(6,0)
\psline[linewidth=1.2pt,linecolor=line](0,0)(1,1)(2,2)(3,1)(4,2)(5,1)(6,0)
	\rput{0}(0.5,-0.4){\tiny$1$}
	\rput{0}(1.5,-0.4){\tiny$2$}		
	\rput{0}(2.5,-0.4){\tiny$3$}
	\rput{0}(3.5,-0.4){\tiny$4$}
	\rput{0}(4.5,-0.4){\tiny$5$}
	\rput{0}(5.5,-0.4){\tiny$6$}
 \end{pspicture} 
 ~
 \begin{pspicture}(-0.5,-0.5)(6.5,3)
\psline*[linecolor=diskin,linewidth=1pt](0,0)(3,3)(6,0)
\psline[linewidth=1pt,linestyle=dotted,linecolor=white](2,0)(1,1)
\psline[linewidth=1pt,linestyle=dotted,linecolor=white](4,0)(2,2)
\psline[linewidth=1pt,linestyle=dotted,linecolor=white](2,0)(4,2)
\psline[linewidth=1pt,linestyle=dotted,linecolor=white](4,0)(5,1)
\psline[linewidth=1.2pt](0,0)(6,0)
\psline[linewidth=1.2pt,linecolor=line](0,0)(1,1)(2,0)(3,1)(4,0)(5,1)(6,0)
	\rput{0}(0.5,-0.4){\tiny$1$}
	\rput{0}(1.5,-0.4){\tiny$2$}		
	\rput{0}(2.5,-0.4){\tiny$3$}
	\rput{0}(3.5,-0.4){\tiny$4$}
	\rput{0}(4.5,-0.4){\tiny$5$}
	\rput{0}(5.5,-0.4){\tiny$6$}
 \end{pspicture} 
\end{align*}
A Dyck path can also be encoded by a string of $2k$ numbers $(a_1,\ldots, a_{2k})$ where $a_j$ for $1\leq j \leq 2k$ is the height of the path after step $j$.
Furthermore, for a Dyck path $L$ we define its content $|L|$ to be the number of boxes within the gray triangle that lie above the path. 
For example, for the last Dyck path $L \in DP_{6}$ in the example above, $|L|=3$.

There exists a bijection between $LP_{2k}$ and $DP_{2k}$.
To go from link patterns to Dyck paths, consider the link pattern drawn on a horizontal line and traverse along the line from left to right.
Each point $i$ that is the beginning/end of an arch corresponds to a rise/fall at step $i$ in the Dyck path.
To go from Dyck paths to link patterns, for each rise draw the start of an arch and for each fall an end, and then complete the diagram by connecting a start with an end such that the arches do not intersect.
For example, the ordered lists of elements in $LP_6$ and $DP_6$ are the same under the bijective correspondence.

The bijection allows us to establish a containment ordering on link patterns.
For two link patterns $L,L' \in LP_{2k}$, we say that $L$ contains $L'$ if the entire corresponding Dyck path of $L'$ can be drawn along or below the Dyck path of $L$.
More formally, let $L$ and $L'$ correspond to the Dyck paths $(a_1,\ldots a_{2k})$ and $(b_1,\ldots b_{2k})$, respectively.
Then, $L$ contains $L'$ if $a_j \geq b_j$ for all $1\leq j \leq 2k$.
As an example, in the list of Dyck paths in $DP_6$ the first path contains all  other paths.
Furthermore, note that if $L$ contains $L'$ then $|L| \leq |L'|$.

Using the disc diagrams, one can define the action of $\mathcal{TL}_{2k}$ on $\C[LP_{2k}]$ similarly to that on $\C[\mathcal{L}_{2k}]$, one just ignores the puncture so we do not have the loop removal rule for non-contractible loops.
Using the horizontal line diagrams is more suitable when discussing the action of the finite Temperley-Lieb algebra $\mathcal{TL}^f_{2k}$.
The induced action of $\mathcal{TL}^f_{2k}$ on $\C[DP_{2k}]$ can then be described as follows.

At step $i$ for  $1 \leq i \leq 2k-1$, a Dyck path can have one of three different local situations;
\begin{enumerate}
\item Steps $i,i+1$ form a local maximum, i.e. a rise followed by a fall;
\item Steps $i,i+1$ form a local minimum, i.e. a fall followed by a rise;
\item Steps $i,i+1$ form a slope, i.e. two consecutive rises or falls.
\end{enumerate}
If steps $i,i+1$ form a local maximum, then the action of $e_i$ acts as a scalar, leaving the path unchanged and multiplying by a factor $-(t^{\frac{1}{2}}+t^{-\frac{1}{2}})$ (line 1, Figure \ref{dyckaction}).
If steps $i,i+1$ form a local minimum, then $e_i$ changes it into a local maximum (line 2, Figure \ref{dyckaction}).
For a slope, if it is two consecutive rises, say with heights $a_i=m$ and $a_{i+1}=m+1$, then let $j>i+1$ be the first step that is a fall with $a_j=m$.
The action of $e_i$ then changes step $i+1$ into fall and $j$ into a rise, creating a local maximum at $i,i+1$ and decreasing the height of the path between $i$ and $j$ by two (line 3, Figure \ref{dyckaction}). 
This decrease in height shifts the internal path down, and we refer to it as a \emph{collapse}.
If the slope is downwards with height $a_i=m,a_{i+1}=m-1$,  let $j<i$ be the last rise with $a_j=m+1$.
Then, the action of $e_i$ changes step $i$ and $j$ to a rise and fall, respectively.
This creates a local maximum at $i,i+1$ and causes a collapse decreasing $a_l$ by two for $j\leq l<i$. 
Note that a collapse leads to a smaller Dyck path in the inclusion order.

Figure \ref{dyckaction} gives a diagrammatic definition of the action on Dyck paths.
The dotted frame indicates the section of the paths where they differ and the dotted line in the third mapping represents a Dyck path of length $j-i-2$. 
The case for two consecutive falls is the same as the third line but with the diagrams reflected across a vertical line in the middle of the diagrams.

\begin{figure}[t]
    \centering
    \begin{align*}
\psset{unit=0.7}
\begin{pspicture}[shift=-1.5](-0.5,-1.5)(6.5,1.5)
\psline*[linecolor=diskin,linewidth=1pt](0,-1.5)(0,1.5)(6,1.5)(6,-1.5)
\psline[linestyle=dotted,linewidth=1pt](0,-1.5)(0,1.5)(6,1.5)(6,-1.5)(0,-1.5)
\psline[linewidth=1.2pt,linecolor=line,linestyle=dashed](0,0)(2,0)
\psline[linewidth=1.2pt,linecolor=line,linestyle=dashed](4,0)(6,0)
\psline[linewidth=1.2pt,linecolor=line](2,0)(3,1)(4,0)
\psline[linewidth=1pt,linestyle=dotted,linecolor=white](2,0)(3,-1)(4,0)
	\rput{0}(2.5,0){\tiny$i$}
	\rput{0}(3.4,0){\tiny$i\!+\!1$}		
 \end{pspicture} 
 &\overset{e_i}{\longmapsto }&
 -(t^{\frac{1}{2}}+t^{-\frac{1}{2}}) 
& \psset{unit=0.7}\begin{pspicture}[shift=-1.5](-0.5,-1.5)(6.5,1.5)
\psline*[linecolor=diskin,linewidth=1pt](0,-1.5)(0,1.5)(6,1.5)(6,-1.5)
\psline[linestyle=dotted,linewidth=1pt](0,-1.5)(0,1.5)(6,1.5)(6,-1.5)(0,-1.5)
\psline[linewidth=1.2pt,linecolor=line,linestyle=dashed](0,0)(2,0)
\psline[linewidth=1.2pt,linecolor=line,linestyle=dashed](4,0)(6,0)
\psline[linewidth=1.2pt,linecolor=line](2,0)(3,1)(4,0)
\psline[linewidth=1pt,linestyle=dotted,linecolor=white](2,0)(3,-1)(4,0)
	\rput{0}(2.5,0){\tiny$i$}
	\rput{0}(3.4,0){\tiny$i\!+\!1$}		
 \end{pspicture}  
 \\ 
 \psset{unit=0.7}
\begin{pspicture}[shift=-1.5](-0.5,-1.5)(6.5,1.5)
\psline*[linecolor=diskin,linewidth=1pt](0,-1.5)(0,1.5)(6,1.5)(6,-1.5)
\psline[linestyle=dotted,linewidth=1pt](0,-1.5)(0,1.5)(6,1.5)(6,-1.5)(0,-1.5)
\psline[linewidth=1.2pt,linecolor=line,linestyle=dashed](0,0)(2,0)
\psline[linewidth=1.2pt,linecolor=line,linestyle=dashed](4,0)(6,0)
\psline[linewidth=1.2pt,linecolor=line](2,0)(3,-1)(4,0)
\psline[linewidth=1pt,linestyle=dotted,linecolor=white](2,0)(3,1)(4,0)
	\rput{0}(2.5,0){\tiny$i$}
	\rput{0}(3.4,0){\tiny$i\!+\!1$}
 \end{pspicture} 
&\overset{e_i}{\longmapsto }
  &&\psset{unit=0.7}
\begin{pspicture}[shift=-1.5](-0.5,-1.5)(6.5,1.5)
\psline*[linecolor=diskin,linewidth=1pt](0,-1.5)(0,1.5)(6,1.5)(6,-1.5)
\psline[linestyle=dotted,linewidth=1pt](0,-1.5)(0,1.5)(6,1.5)(6,-1.5)(0,-1.5)
\psline[linewidth=1.2pt,linecolor=line,linestyle=dashed](0,0)(2,0)
\psline[linewidth=1.2pt,linecolor=line,linestyle=dashed](4,0)(6,0)
\psline[linewidth=1.2pt,linecolor=line](2,0)(3,1)(4,0)
\psline[linewidth=1pt,linestyle=dotted,linecolor=white](2,0)(3,-1)(4,0)
	\rput{0}(2.5,0){\tiny$i$}
	\rput{0}(3.4,0){\tiny$i\!+\!1$}		
 \end{pspicture}  
 \\ 
 \psset{unit=0.7}
\begin{pspicture}[shift=-1.5](-0.5,-1.5)(7.5,1.5)
\psline*[linecolor=diskin,linewidth=1pt](0,-1.5)(0,1.5)(7,1.5)(7,-1.5)
\psline[linestyle=dotted,linewidth=1pt](0,-1.5)(0,1.5)(7,1.5)(7,-1.5)(0,-1.5)
\psline[linewidth=1.2pt,linecolor=line,linestyle=dashed](0,-1)(1,-1)
\psline[linewidth=1.2pt,linecolor=line,linestyle=dashed](6,0)(7,0)
\psline[linewidth=1.2pt,linecolor=line](1,-1)(2,0)(3,1)
\psline[linewidth=1.2pt,linecolor=line](5,1)(6,0)
\psline[linewidth=1.2pt,linecolor=line,linestyle=dotted](3,1)(5,1)
\psline[linewidth=1pt,linestyle=dotted,linecolor=white](2,0)(3,-1)
\psline[linewidth=1pt,linestyle=dotted,linecolor=white](5,-1)(6,0)
	\rput{0}(1.5,0){\tiny$i$}
	\rput{0}(2.6,0){\tiny$i\!+\!1$}
	\rput{0}(5.5,0){\tiny$j$}			
 \end{pspicture}  
&\overset{e_i}{\longmapsto }
  &&\psset{unit=0.7}
  \begin{pspicture}[shift=-1.5](-0.5,-1.5)(7.5,1.5)
\psline*[linecolor=diskin,linewidth=1pt](0,-1.5)(0,1.5)(7,1.5)(7,-1.5)
\psline[linestyle=dotted,linewidth=1pt](0,-1.5)(0,1.5)(7,1.5)(7,-1.5)(0,-1.5)
\psline[linewidth=1.2pt,linecolor=line,linestyle=dashed](0,-1)(1,-1)
\psline[linewidth=1.2pt,linecolor=line,linestyle=dashed](6,0)(7,0)
\psline[linewidth=1.2pt,linecolor=line](1,-1)(2,0)(3,-1)
\psline[linewidth=1.2pt,linecolor=line](5,-1)(6,0)
\psline[linewidth=1.2pt,linecolor=line,linestyle=dotted](3,-1)(5,-1)
\psline[linewidth=1pt,linestyle=dotted,linecolor=white](2,0)(3,1)
\psline[linewidth=1pt,linestyle=dotted,linecolor=white](5,1)(6,0)
	\rput{0}(1.5,0){\tiny$i$}
	\rput{0}(2.6,0){\tiny$i\!+\!1$}
	\rput{0}(5.5,0){\tiny$j$}		
 \end{pspicture}  
\end{align*}
    \caption{The action of $e_i$ on Dyck paths}\label{dyckaction}
\end{figure}
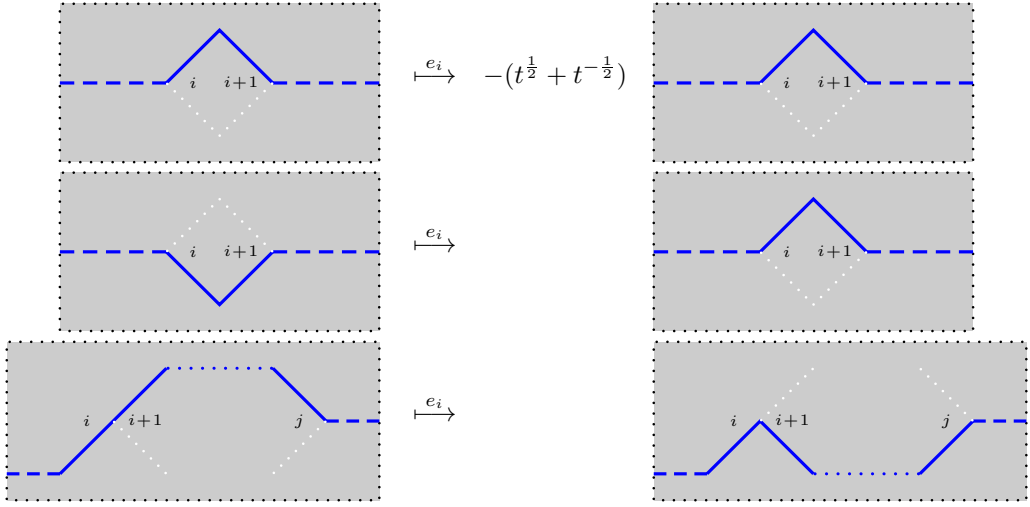
 
\subsection{Link patterns}
Let $L_{0}\in DP_{2k}$ denote the Dyck path with $k$ rises followed by $k$ falls i.e. $(1,2,\ldots,k,k-1,\ldots,0)$.
Note that $L_{0}$ contains all Dyck paths in $DP_{2k}$ and $|L_0|=0$. We show that 
the solution $g^{(2k)}(\mathbf{z}) \in \textup{Sol}_{2k}(\C[LP_{2k}];q,c_{2k})$ is determined by its base component $g^{(2k)}_{L_{0}}(\mathbf{z})$.

We prove this by showing that if $g^{(2k)}_{L_{0}}(\mathbf{z}) \equiv 0 $, then we have $  g^{(2k)}_L(\mathbf{z}) \equiv 0$ for all $L \in DP_{2k}$.
This is done using the qKZ equations \eqref{recursion}, which give for $1\leq j <2k$, 
\begin{align}\label{dpcomponents}
 	g_L^{(n)}(\mathbf{z})&-b(z_{j+1}/z_j)g_L^{(n)}(s_j\mathbf{z})=\sum_{L^\prime\in LP_{2k}:\, e_jL^\prime\sim L}\gamma_{L^\prime,L}^{(j)}a(z_{j+1}/z_j)g_{L^\prime}^{(n)}(s_j\mathbf{z}),
\end{align}
where $e_jL^\prime\sim L$ means that $L$ is obtained from $e_jL^\prime$ by removing the loops in $e_jL^\prime$ (there is in fact at most one loop). The coefficient $\gamma_{L^\prime,L}^{(j)}$ is defined by
$$  \gamma^{(j)}_{L^\prime,L} = \begin{cases} 
						-(t^{\frac{1}{2}} +t^{-\frac{1}{2}})  &\text{ if } e_jL^\prime \text{ has a null-homotopic loop,}  \\
						1& \text{ otherwise}.
						\end{cases}
$$

We begin with the inductive hypothesis,
$$  g^{(2k)}_L(\mathbf{z}) \equiv 0 \text{ if } |L|\leq m,$$
where $m \in \Z_{\geq 0}$.
Now consider a Dyck path $L$ such that $|L| =m$ with a local maximum at, say, step $i$ with $a_i>1$
(if such a local maximum does not exist, then $L$ is the unique Dykh path with maximal 
content $|L|$, and hence there is nothing to prove).
We use equation \eqref{dpcomponents} for $j=i$ and examine the pre-images  $L'$ in the sum on the right-hand side.
We find that besides $L$ there is only one pre-image that is contained by $L$.
This is the pre-image that has a local minimum turned into a local maximum by the action of $e_i$.
Let us denote this particular Dyck path by $N$.
Switching a local minimum to a local maximum is equivalent to removing a box, so we have $|N| = m+1$.
Furthermore, all other pre-images $L' \neq N$ contain $L$ so $|L'| \leq m$ and $g^{(2k)}_{L'}(\mathbf{z}) \equiv 0$.
Thus, $g^{(2k)}_{N}(\mathbf{z}) \equiv 0$ with $|N|=m+1$ by \eqref{dpcomponents}.
Since $|L_{0}|=0$, it provides the base case of the induction and determines all other components.

\begin{remark} \label{collapsing}
The algorithm of this proof can be viewed as collapsing the local maxima till we end up at the last component, which has $k$ local maxima with height $1$. 
Since a Dyck path cannot fall below the $x$-axis we cannot collapse a local maximum with a height $1$.
Thus, the algorithm never uses \eqref{dpcomponents} at $j$ if the height $a_j$ is one.
This is an important remark for the proofs that follow. 
\end{remark}

\begin{remark}
The Dyck path $L_{0}$ corresponds to the link pattern that connects point $i$ with $2k-i+1$.
The same arguments used in this proof can be found in \cite{Di-Francesco:2007aa} where they prove the unique solution for the model with reflecting boundaries. 
\end{remark}

\subsection{Punctured link patterns} \label{plp}
Here we present the proof to Lemma \ref{unique1}.
 Let $L_\cap$  denote the link pattern 
$$
  \psset{unit=0.8cm}
  \begin{pspicture}[shift=-1.8](-2.25,-1.8)(2.25,1.8)
    	\pscircle[fillstyle=solid, fillcolor=diskin,linewidth=1.2pt](0,0){1.5}
    	\degrees[360]
	\psdot[dotstyle=asterisk,dotscale=1.5](0,0)
	\rput{0}(1.8;0){\tiny$1$}
	\rput{0}(1.8;-20){\tiny$2k$}
	\rput{0}(1.8;160){\tiny$k$}
	\rput{0}(2;180){\tiny$k\!+\!1$}
	\pscurve[linecolor=line,linewidth=1.2pt](1.5;0)(1.4;0)(1.2;-10)(1.4;-20)(1.5;-20)
	\pscurve[linecolor=line,linewidth=1.2pt](1.5;160)(1.4;160)(-0.4;170)(1.4;180)(1.5;180)
	\pscurve[linecolor=line,linewidth=1.2pt](1.5;140)(1.4;140)(-0.7;170)(1.4;200)(1.5;200)
	\psarc[linestyle=dotted,linecolor=line,linewidth=1.2pt](0,0){1}{10}{120}
	\psarc[linestyle=dotted,linecolor=line,linewidth=1.2pt](0,0){1}{220}{-30}
   \end{pspicture}   
   \hspace{0.5cm} \text{ and }\hspace{0.5cm} 
  \begin{pspicture}[shift=-1.8](-2.25,-1.8)(2.25,1.8)
    	\pscircle[fillstyle=solid, fillcolor=diskin,linewidth=1.2pt](0,0){1.5}
    	\degrees[360]
	\rput{0}(1.8;0){\tiny$1$}
	\rput{0}(2;-20){\tiny$2k\!+\!1$}
	\rput{0}(2;150){\tiny$k$}
	\rput{0}(2;170){\tiny$k\!+\!1$}
	\rput{0}(2;190){\tiny$k\!+\!2$}
	\pscurve[linecolor=line,linewidth=1.2pt](1.5;0)(1.4;0)(1.2;-10)(1.4;-20)(1.5;-20)
	\pscurve[linecolor=line,linewidth=1.2pt](1.5;150)(1.4;150)(-0.5;170)(1.4;190)(1.5;190)
	\psline[linecolor=line,linewidth=1.2pt](1.5;170)(0;0)
	\psarc[linestyle=dotted,linecolor=line,linewidth=1.2pt](0,0){1}{10}{120}
	\psarc[linestyle=dotted,linecolor=line,linewidth=1.2pt](0,0){1}{220}{-30}
	\psdot[dotstyle=asterisk,dotscale=1.5](0,0)
   \end{pspicture}   
   $$
in $\mathcal{L}_{2k}$ and $\mathcal{L}_{2k+1}$, respectively. 
A little arch in a punctured link patterns is an arch connecting points $j,j+1$ that does not contain the puncture.
For example $L_\cap$ only has one little arch, it connects points $2k+1 $ to 1.

We show that the solution $g^{(n)}(\mathbf{z}) \in \textup{Sol}_{n}(\C[\mathcal{L}_{n}];q,c_n)$ is determined by its base component $g^{(n)}_{L_\cap}(\mathbf{z})$.
The qKZ equations written component-wise are 
\begin{align}
g_L^{(n)}(\mathbf{z})&-b(z_{i+1}/z_i)g_L^{(n)}(s_i\mathbf{z})=\sum_{L^\prime\in\mathcal{L}_n:\, e_iL^\prime\sim L}\gamma_{L^\prime,L}^{(i)}a(z_{i+1}/z_i)g_{L^\prime}^{(n)}(s_i\mathbf{z}),\label{qkzp1}\\
g_{L}^{(n)}(\mathbf{z})&=c^{-1}g_{\rho^{-1}L}^{(n)}(z_2,\ldots,z_n,q^{-1}z_1),\label{qkzp2}
\end{align}
where $e_iL^\prime\sim L$ means that $L$ is obtained from $e_iL^\prime$ by removing the loops in $e_iL^\prime$ (there is in fact at most one loop), see \eqref{recursion}. The coefficient $\gamma_{L^\prime,L}^{(i)}$ is 
$$  \gamma^{(i)}_{L^\prime,L} = \begin{cases} 
						-\bigl(t^{\frac{1}{2}} +t^{-\frac{1}{2}}\bigr)  &\text{ if } e_iL^\prime \text{ has a null-homotopic loop,}  \\
						t^{\frac{1}{4}} +t^{-\frac{1}{4}}  &\text{ if } e_iL^\prime \text{ has a non null-homotopic loop,} \\
						1& \text{ otherwise}.
						\end{cases}
$$
We treat the even and odd case separately.

\subsubsection{The case $n=2k$}

Let $LP^{(*,j)}_{2k}$ denote the set of punctured link patterns in $\mathcal{L}_{2k}$ such that the puncture could be connected to a point on the boundary between points $j$ and $j+1$ (modulo $2k$) without crossing a line. 
Then, $\mathcal{L}_{2k} = \bigcup_{j=1}^{2k} LP^{(*,j)}_{2k}$ (not necessarily disjoint) and  $\rho : LP^{(*,j)}_{2k} \rightarrow  LP^{(*,j+1)}_{2k}$.
Note that $L_\cap$ is in $LP^{(*,k)}_{2k}$ and if we define $L_{2k}:= \rho^k \cdot L_\cap$ then $L_{2k} \in  LP^{(*,2k)}_{2k}$.
Define a bijection from $LP^{(*,2k)}_{2k}$ to $LP_{2k}$ by simply removing the puncture.
This mapping preserves the action of $\mathcal{TL}^f_{2k}$.
Furthermore, it maps $L_{2k} \in LP^{(*,2k)}_{2k}$ to $L_0 \in LP_{2k}$.

To prove that $g^{(2k)}_{L_\cap}(\mathbf{z})$ determines the solution we have the following steps.
First, if $g^{(2k)}_{L_\cap}(\mathbf{z}) \equiv 0$ then by using equation \eqref{qkzp2} $k$ times we have $g^{(2k)}_{L_{2k}}(\mathbf{z}) \equiv 0$.
Second, by the mapping from $LP^{(*,2k)}_{2k}$ to $LP_{2k}$ and the proof on $LP_{2k}$ we have $g^{(2k)}_{L}(\mathbf{z}) \equiv 0$ for all $L \in LP^{(*,2k)}_{2k}$.
Last, we use equation \eqref{qkzp2} to show that if $g^{(2k)}_{L}(\mathbf{z}) \equiv 0$ for all $L \in LP^{(*,i)}_{2k}$ then $g^{(2k)}_{L'}(\mathbf{z}) \equiv 0$ for all $L' \in LP^{(*,i+1)}_{2k}$.

There is one key subtlety that we have ignored, which we point out and address.
There is a difference between equations \eqref{dpcomponents} and \eqref{qkzp1}; in the latter equation the pre-images are in $\mathcal{L}_{2k}$ and not just $LP_{2k}$. 
So when determining all the components for $L \in LP^{(*,2k)}_{2k}$ we must check that all the pre-images $L'$ are also in $LP^{(*,2k)}_{2k}$.
Recall Remark \ref{collapsing};
each step of the algorithm is on a local maximum, which corresponds to a little arch in a link pattern.
For the equation $e_i\cdot L'=L$, the only case where we have $L' \not \in LP^{(*,2k)}_{2k}$ is if the little arch in $L$ is on the boundary of the domain that contains the puncture. 
Such a little arch corresponds to a local maximum of height 1.
Recalling again Remark \ref{collapsing}, we do not collapse such local maxima, and therefore we do not have this case, and can conclude all pre-images are in $LP^{(*,2k)}_{2k}$.
If  $L \in LP^{(*,2k)}_{2k}$ has a little arch $(i,i+1)$ on the boundary of the domain containing the puncture, then $L' \not \in LP^{(*,2k)}_{2k}$  is the link pattern identical to $L$ but with the puncture inside the little arch.
\subsubsection{The case $n=2k+1$}
Let $LP^{(*,j)}_{2k+1}$ denote the set of punctured link patterns in $\mathcal{L}_{2k+1}$ such that the defect line connects the puncture  to point $j$ on the boundary. 
Then, we have $\mathcal{L}_{2k+1} = \bigsqcup_{j=1}^{2k+1} LP^{(*,j)}_{2k+1}$ and  $\rho : LP^{(*,j)}_{2k+1} \rightarrow  LP^{(*,j+1)}_{2k+1}$.
Note that $L_\cap$ is in $LP^{(*,k+1)}_{2k}$ and if we define $L_{2k+1}:= \rho^k \cdot L_\cap$ then $L_{2k+1} \in  LP^{(*,2k+1)}_{2k+1}$.
Define a bijection from $LP^{(*,2k+1)}_{2k+1}$ to $LP_{2k}$ by simply removing the defect line, puncture and boundary point $2k+1$.
This mapping preserves the action of $\mathcal{TL}^f_{2k}$.
Furthermore, it maps $L_{2k+1} \in LP^{(*,2k+1)}_{2k+1}$ to $L_0 \in LP_{2k}$.

Now to prove that $g^{(2k+1)}_{L_\cap}(\mathbf{z})$ determines the solution we have the following steps.
First, if $g^{(2k+1)}_{L_\cap}(\mathbf{z}) \equiv 0$ then by using equation \eqref{qkzp2} $k$ times we have $g^{(2k+1)}_{L_{2k+1}}(\mathbf{z}) \equiv 0$.
Second, by the mapping from $LP^{(*,2k+1)}_{2k+1}$ to $LP_{2k}$  and the proof on $LP_{2k}$ we have $g^{(2k)}_{L}(\mathbf{z}) \equiv 0$ for all $L \in LP^{(*,2k+1)}_{2k+1}$.
Last, by equation \eqref{qkzp2},  if $g^{(2k+1)}_{L}(\mathbf{z}) \equiv 0$ for all $L \in LP^{(*,i)}_{2k+1}$ then $g^{(2k+1)}_{L'}(\mathbf{z}) \equiv 0$ for all $L' \in LP^{(*,i+1)}_{2k+1}$.

The same subtle issue occurs in this case, and the argument is identical.
The only case a pre-image $L'$ is not in $LP^{(*,2k+1)}_{2k+1}$ is when the little arch in $L$ is on the boundary of the domain that contains the puncture.
Such a little arch corresponds to a local maximum of height 1.
Again, recalling Remark \ref{collapsing}, we do not collapse such local maxima, therefore we do not have this case, and can conclude all pre-images are in $LP^{(*,2k+1)}_{2k+1}$.
If  $L \in LP^{(*,2k+1)}_{2k+1}$ has a little arch $(i,i+1)$ on the boundary of the domain containing the puncture, then $L' \not \in LP^{(*,2k+1)}_{2k+1}$  is the link pattern with either point $i$ or $i+1$ connected to the puncture and the other to the point $2k+1$.

\subsection{The restricted module $V^{\nu_n}_{n+1}$}
Here we present the proof to Lemma \ref{unique2}.
Let $g^{(n)}(\mathbf{z}) \in \textup{Sol}_n(V^{\nu_n}_{n+1};q,c_n)$.
The qKZ equations associated with the representation $V^{\nu_n}_{n+1}$ written component-wise are,
\begin{align}
g_L^{(n)}(\mathbf{z})-b(z_{j+1}/z_j)g_L^{(n)}(s_j\mathbf{z})=\sum_{L^\prime\in\mathcal{L}_{n+1}:\, e_jL^\prime\sim L}\gamma_{L^\prime,L}^{(j)}a(z_{j+1}/z_j)g_{L^\prime}^{(n)}(s_j\mathbf{z}), \label{rqkz1}\\
 t^{-\frac{1}{4}} \sum_{L^\prime\in\mathcal{L}_{n+1}:\, e_nL^\prime\sim L} 
 \gamma_{L^\prime,L}^{(n)}g^{(n)}_{L'}(z_2 ,\ldots,z_{n},q^{-1} z_1) + t^{\frac{1}{4}} g^{(n)}_L(z_2 ,\ldots,z_{n},q^{-1} z_1) =c_n^{-1} g_{\rho \cdot L}^{(n)}(\mathbf{z})
\label{rqkz2}
\end{align}
for $1\leq j<n$.
It is important to note that the link patterns are in $\mathcal{L}_{n+1}$ but the first equation is only for $1\leq j < n$; there is one equation less than in the previous cases. Note furthermore that \eqref{rqkz2} follows from the fact that $\mathcal{I}_n(\rho)=\rho(t^{-\frac{1}{4}}e_n+t^{\frac{1}{4}})$.

The proof for the even and odd case are treated separately.

\subsubsection{The case $n=2k$}
Note that for the case $n=2k$ the link patterns are in $\mathcal{L}_{2k+1}$.
Recall from subsection \ref{plp} the definitions for $L_\cap,L_{2k+1} \in \mathcal{L}_{2k+1}$ and $LP^{(*,j)}_{2k+1}$.
We show the solution $g^{(2k)}(\mathbf{z})$ is determined by its base component $g_{L_\cap}^{(2k)}(\mathbf{z})$ in three steps.

First, consider equation $\eqref{rqkz2}$ for $L=L_\cap$.
Since $L_\cap$ does not have a little arch connecting $(2k,2k+1)$ there is no pre-image $L'$ such that $e_{2k}\cdot L'\sim L_{\cap}$.
Therefore, there are no terms in the sum (over $L'$)  on the left-hand side of the equation, and if $g_{L_\cap}^{(2k)}(\mathbf{z}) \equiv 0$ then $g_{\rho \cdot L_\cap}^{(2k)}(\mathbf{z}) \equiv 0$.
Continuing this way, we conclude that $g_{\rho^iL_\cap}^{(2k)}(\mathbf{z})\equiv 0$ for all $0\leq i\leq 2k$. But then we have covered all the possible rotations of $L_\cap$, so $g_{\rho^iL_\cap}^{(2k)}(\mathbf{z})\equiv 0$ for all $i\in\mathbb{Z}$. In particular,
$g_{L_{2k+1}^{(2k)}}(\mathbf{z}) \equiv 0$.

The second step is identical to subsection \ref{plp}. 
By the mapping from $LP^{(*,2k+1)}_{2k+1}$ to $LP_{2k}$  and the proof on $LP_{2k}$ we have $g^{(2k)}_{L}(\mathbf{z}) \equiv 0$ for all $L \in LP^{(*,2k+1)}_{2k+1}$. 
The fact that we have one equation less does not play a role here as the mapping from $LP^{(*,2k+1)}_{2k+1}$ to $LP_{2k}$ decreases the size of the link patterns by one. 

The last step is to use equation \eqref{rqkz2}.
However, this is not as simple as subsection \ref{plp} because equation  \eqref{rqkz2}  has an extra term on the left-hand side.
It is a sum over pre-images for the action of $e_n$. We will refer to it as the \emph{pre-image sum}.
Consider equation \eqref{rqkz2} for $L\in LP^{(*,2k)}_{2k+1}$.
Since $L$ has the defect line connected to point $2k$ there are no pre-images $L'$ such that $e_{2k}\cdot L'\sim L$.
Therefore the pre-image-sum of $\eqref{rqkz2}$ does not give a contribution and $g^{(2k)}_{L} (\mathbf{z})  \equiv 0$ because $g^{(2k)}_{\rho \cdot L}(\mathbf{z})  \equiv 0$ as $\rho \cdot L \in LP^{(*,2k+1)}_{2k+1}$.
Now consider \eqref{rqkz2}  for $ L \in LP^{(*,2k+1)}_{2k+1}$.
 By the same argument, the pre-image-sum gives no contribution and $g^{(2k)}_{\rho \cdot L} (\mathbf{z})  \equiv 0$.
 Since each $L' \in LP^{(*,1)}_{2k+1}$ is of the form $\rho \cdot L$ for some $L \in LP^{(*,2k+1)}_{2k+1}$ we have $g^{(2k)}_{L} (\mathbf{z}) \equiv 0$ for all $L \in LP^{(*,1)}_{2k+1}$.

Having shown that $g^{(2k)}_{L}(\mathbf{z})  \equiv 0$ for $L \in LP^{(*,1)}_{2k+1} \sqcup LP^{(*,2k)}_{2k+1}\sqcup LP^{(*,2k+1)}_{2k+1}$ we now use an inductive argument to complete the proof.
The induction hypothesis is 
$$ g^{(2k)}_{L} (\mathbf{z})  \equiv 0 \text{ if } L \in  LP^{(*,2k)}_{2k+1}\sqcup LP^{(*,2k+1)}_{2k+1} \bigsqcup_{1\leq j \leq i} LP^{(*,j)}_{2k+1}$$
for some $1 \leq i <2k$.
Consider equation \eqref{rqkz2} for $L \in LP^{(*,i)}_{2k+1}$.
If $L$ does not have a little arch connecting $(2k,2k+1)$ then we have the same argument used before: there are no pre-images and the pre-image sum does not give a contribution, hence $g^{(2k)}_{L}(\mathbf{z})  \equiv 0$ for $L \in LP^{(*,i+1)}_{2k+1}$.
If $L$ does have a little arch connecting $(2k,2k+1)$ then the pre-image $L'$ must have the defect line connected to points $i$, $2k$ or $2k+1$.
The case that the link pattern $L'$ in the pre-image lies in  $LP^{(*,i)}_{2k+1}$ is obvious. For the other two cases the parts of the pre-images connected to $i, 2k, 2k+1$ are drawn in Figure \ref{fig-preimage2kp1}. It follows that the pre-image $L'$ is in $ LP^{(*,2k)}_{2k+1}\sqcup LP^{(*,2k+1)}_{2k+1} \sqcup  LP^{(*,i)}_{2k+1}$ and the pre-image-sum is equivalently zero.
Hence, in both cases $g^{(2k)}_{\rho\cdot L}(\mathbf{z}) \equiv 0 $ and since $\rho: LP^{(*,i)}_{2k+1} \rightarrow LP^{(*,i+1)}_{2k+1}$ we have $g^{(2k)}_{ L}(\mathbf{z}) \equiv 0$ for all $L \in LP^{(*,i+1)}_{2k+1}$.
This completes the induction and the base case is $i=1$ which was discussed in the previous paragraph.

\begin{figure}[ht]
	\begin{align*}
	 \psset{unit=0.8cm}
  \begin{pspicture}[shift=-1.8](-2.25,-1.8)(2.25,1.8)
    	\pscircle[fillstyle=solid, fillcolor=diskin,linewidth=1.2pt](0,0){1.5}
    	\degrees[360]
	\psdot[dotstyle=asterisk,dotscale=1.5](0,0)
	\rput{0}(1.9;-20){\tiny$2k\!+\!1$}
	\rput{0}(1.9;-40){\tiny$2k$}
	\rput{0}(1.8;70){\tiny$i$}
	\pscurve[linecolor=line,linewidth=1.2pt](1.5;-40)(1.4;-40)(0.3;200)(1.4;70)(1.5;70)
	\psline[linecolor=line,linewidth=1.2pt](0;0)(1.5;-20)
   \end{pspicture}  
    & \in LP^{(*,2k+1)}_{2k+1}, \\
   	 \psset{unit=0.8cm}
  \begin{pspicture}[shift=-1.8](-2.25,-1.8)(2.25,1.8)
    	\pscircle[fillstyle=solid, fillcolor=diskin,linewidth=1.2pt](0,0){1.5}
    	\degrees[360]
	\psdot[dotstyle=asterisk,dotscale=1.5](0,0)
	\rput{0}(1.9;-20){\tiny$2k\!+\!1$}
	\rput{0}(1.9;-40){\tiny$2k$}
	\rput{0}(1.8;70){\tiny$i$}			
	\pscurve[linecolor=line,linewidth=1.2pt](1.5;-20)(1.4;-20)(0.4;45)(1.4;70)(1.5;70)
	\psline[linecolor=line,linewidth=1.2pt](0;0)(1.5;-40)	
   \end{pspicture}  
    & \in LP^{(*,2k)}_{2k+1}
	\end{align*}
	    \caption{Pre-images of $L\in LP^{(*,i)}_{2k+1}$ for the action of $e_{2k}$.}\label{fig-preimage2kp1}
\end{figure}
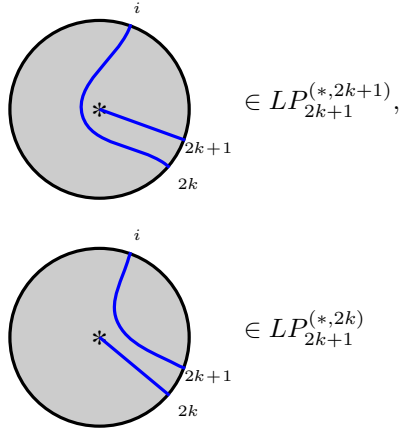

\subsubsection{The case $n=2k-1$ ($k\geq 1$)}
Note that for the case $n=2k-1$ the link patterns are in $\mathcal{L}_{2k}$
Recall from subsection \ref{plp} the definitions for $L_\cap,L_{2k} \in \mathcal{L}_{2k}$ and $LP^{(*,j)}_{2k}$.
We show the solution $g^{(2k-1)}(\mathbf{z})$ is determined by its base component $g_{L_\cap}^{(2k-1)}(\mathbf{z})$ in three steps.

The first step is identical to the case $n=2k$.
The link pattern $L_\cap$ does not have a little arch connecting $(2k-1,2k)$ and neither do the link patterns $\rho^j \cdot L_\cap$ for $1\leq j \leq k$.
So if $g_{L_\cap}^{(2k-1)}(\mathbf{z}) \equiv 0$ then $g_{L_{2k}}^{(2k-1)}(\mathbf{z}) \equiv 0$.

The second step is similar to subsection \ref{plp}; however, there is a new subtle issue to note; we have one equation less.
By the mapping from $LP^{(*,2k)}_{2k}$ to $LP_{2k}$  and the proof on $LP_{2k}$, we have $g^{(2k-1)}_{L}(\mathbf{z}) \equiv 0$ for all $L \in LP^{(*,2k)}_{2k}$.  Indeed, the equations \eqref{rqkz1} with $1\leq j<2k-1$ suffice since a local maximum at step $2k-1$ must have height one,
cf. Remark \ref{collapsing}.

For the final step, consider equation \eqref{rqkz2} for $L\in LP^{(*,2k-1)}_{2k}$.
Since $L$ does not have a little arch connecting points $(2k-1,2k)$, there is no pre-image $L'$ and hence the pre-image sum does not give a contribution.
Therefore, we have $g^{(2k-1)}_{L}(\mathbf{z}) \equiv 0$ for all $L \in LP^{(*,2k-1)}_{2k}$ since $\rho \cdot L \in LP^{(*,2k)}_{2k}$.
Next consider the equation for $L\in LP^{(*,2k-2)}_{2k}$.
If $L$ has a little arch connecting points $(2k-1,2k)$, then $L$ is also in $LP^{(*,2k)}_{2k}$ so $g^{(2k-1)}_L \equiv 0$.
If $L$ does not have a little arch connecting points $(2k-1,2k)$, then there are no pre-images $L'$ and the pre-image sum does not give a contribution, so $g^{(2k-2)}_L (\mathbf{z})\equiv 0$ because $\rho \cdot L \in LP^{(*,2k-1)}_{2k}$.

Now consider equation \eqref{rqkz2} for $L \in LP^{(*,2k)}_{2k}$. 
All the possible pre-images  of $L$ with respect to the action of $e_{2k-1}$ are in $LP^{(*,j)}_{2k}$ for $j=2k-2,2k-1,2k$. 
Therefore, the pre-image sum gives no contribution and $g^{(2k-1)}_L(\mathbf{z}) \equiv 0$ for $L \in LP^{(*,1)}_{2k}$.
We now use an induction argument. 
The hypothesis is 
$$ g^{(2k-1)}_L(\mathbf{z}) \equiv 0 \text{ if } L \in LP^{(*,j)}_{2k} \text{ for } j = 2k-2,2k-1,2k,1,\ldots ,i
$$
for some $1\leq i <2k-2$.
Consider equation \eqref{rqkz2} for $L \in LP^{(*,i)}_{2k}$.
If $L$ does not have a little arch connecting points $(2k-1,2k)$,  then we have the same argument as before: There are no pre-images and the pre-image sum gives no contribution, hence $g^{(2k-1)}_{\rho L} \equiv 0$ and $g_{L^{\prime\prime}}^{(2k-1)}(\mathbf{z})\equiv 0$ for all 
$L^{\prime\prime}\in LP_{2k}^{(*,i+1)}$.
If $L$ does have a little arch connecting points $(2k-1,2k)$ then we examine the pre-images of $L$ with respect to action of $e_{2k-1}$.
We find that 
\[
\{ L' \in \mathcal{L}_n | e_{2k-1}L'\sim L\} \subseteq  LP^{(*,2k-2)}_{2k} \cup LP^{(*,2k-1)}_{2k} \cup LP^{(*,2k)}_{2k} \cup LP^{(*,i)}_{2k}
\] 
(see  Figure \ref{fig-preimage2k}). 
Therefore, the pre-image sum is equivalently zero and since $g^{(2k-1)}_L(\mathbf{z}) \equiv 0$ and $\rho \cdot L \in LP^{(*,i+1)}_{2k}$, we have $g^{(2k-1)}_L(\mathbf{z}) \equiv 0$ for $L \in LP^{(*,i+1)}_{2k}$.
This completes the inductive step. The base case is $i=1$ which was discussed in the beginning of this paragraph.
\begin{figure}[t]
	\begin{align*}
	 \psset{unit=0.8cm}
  \begin{pspicture}[shift=-1.8](-2.25,-1.8)(2.25,1.8)
    	\pscircle[fillstyle=solid, fillcolor=diskin,linewidth=1.2pt](0,0){1.5}
    	\degrees[360]
	\psdot[dotstyle=asterisk,dotscale=1.5](0,0)
	\rput{0}(1.8;-20){\tiny$2k$}
	\rput{0}(1.9;-40){\tiny$2k\!-\!1$}
	\rput{0}(1.8;40){\tiny$j$}	
	\rput{0}(1.8;70){\tiny$i$}
	\rput{0}(1.8;90){\tiny$i\!+\!1$}	
	\rput{0}(1.8;140){\tiny$p$}
	\rput{0}(1.8;180){\tiny$l$}			
	\rput{0}(2;200){\tiny$m$}
	\pscurve[linecolor=line,linewidth=1.2pt](1.5;-40)(1.4;-40)(0.3;80)(1.4;200)(1.5;200)
	\pscurve[linecolor=line,linewidth=1.2pt](1.5;-20)(1.4;-20)(0.5;80)(1.4;180)(1.5;180)
	\pscurve[linecolor=line,linewidth=1.2pt](1.5;70)(1.4;70)(1.2;55)(1.4;40)(1.5;40)
	\pscurve[linecolor=line,linewidth=1.2pt](1.5;90)(1.4;90)(1.2;115)(1.4;140)(1.5;140)	
   \end{pspicture}  
   \text{ or } 
    \psset{unit=0.8cm}
  \begin{pspicture}[shift=-1.8](-2.25,-1.8)(2.25,1.8)
    	\pscircle[fillstyle=solid, fillcolor=diskin,linewidth=1.2pt](0,0){1.5}
    	\degrees[360]
	\psdot[dotstyle=asterisk,dotscale=1.5](0,0)
	\rput{0}(1.8;-20){\tiny$2k$}
	\rput{0}(1.9;-40){\tiny$2k\!-\!1$}
	\rput{0}(1.8;40){\tiny$j$}	
	\rput{0}(1.8;70){\tiny$i$}
	\rput{0}(1.8;90){\tiny$i\!+\!1$}			
	\rput{0}(2;200){\tiny$m$}
	\pscurve[linecolor=line,linewidth=1.2pt](1.5;-40)(1.4;-40)(0.3;80)(1.4;200)(1.5;200)
	\pscurve[linecolor=line,linewidth=1.2pt](1.5;-20)(1.4;-20)(0.8;35)(1.4;90)(1.5;90)
	\pscurve[linecolor=line,linewidth=1.2pt](1.5;70)(1.4;70)(1.2;55)(1.4;40)(1.5;40)	
   \end{pspicture}   
   & & \in LP^{(*,2k-2)}_{2k}, \\
   	 \psset{unit=0.8cm}
  \begin{pspicture}[shift=-1.8](-2.25,-1.8)(2.25,1.8)
    	\pscircle[fillstyle=solid, fillcolor=diskin,linewidth=1.2pt](0,0){1.5}
    	\degrees[360]
	\psdot[dotstyle=asterisk,dotscale=1.5](0,0)
	\rput{0}(1.8;-20){\tiny$2k$}
	\rput{0}(1.9;-40){\tiny$2k\!-\!1$}
	\rput{0}(1.8;40){\tiny$j$}	
	\rput{0}(1.8;70){\tiny$i$}
	\rput{0}(1.8;90){\tiny$i\!+\!1$}	
	\rput{0}(1.8;140){\tiny$p$}			
	\pscurve[linecolor=line,linewidth=1.2pt](1.5;-40)(1.4;-40)(-0.3;-30)(1.4;-20)(1.5;-20)
	\pscurve[linecolor=line,linewidth=1.2pt](1.5;70)(1.4;70)(1.2;55)(1.4;40)(1.5;40)
	\pscurve[linecolor=line,linewidth=1.2pt](1.5;90)(1.4;90)(1.2;115)(1.4;140)(1.5;140)	
   \end{pspicture}  
   & & \in LP^{(*,2k-1)}_{2k}, \\
    \psset{unit=0.8cm}
  \begin{pspicture}[shift=-1.8](-2.25,-1.8)(2.25,1.8)
    	\pscircle[fillstyle=solid, fillcolor=diskin,linewidth=1.2pt](0,0){1.5}
    	\degrees[360]
	\psdot[dotstyle=asterisk,dotscale=1.5](0,0)
	\rput{0}(1.8;-20){\tiny$2k$}
	\rput{0}(1.9;-40){\tiny$2k\!-\!1$}
	\rput{0}(1.8;20){\tiny$j$}	
	\rput{0}(1.8;70){\tiny$i$}
	\rput{0}(1.8;90){\tiny$i\!+\!1$}	
	\rput{0}(1.8;40){\tiny$p$}
	\rput{0}(1.8;60){\tiny$l$}			
	\rput{0}(2;200){\tiny$m$}
	\pscurve[linecolor=line,linewidth=1.2pt](1.5;-40)(1.4;-40)(-0.5;0)(1.4;40)(1.5;40)
	\pscurve[linecolor=line,linewidth=1.2pt](1.5;-20)(1.4;-20)(-0.3;0)(1.4;20)(1.5;20)
	\pscurve[linecolor=line,linewidth=1.2pt](1.5;70)(1.4;70)(1.2;65)(1.4;60)(1.5;60)
	\pscurve[linecolor=line,linewidth=1.2pt](1.5;90)(1.4;90)(1.2;145)(1.4;200)(1.5;200)	
   \end{pspicture}  
   \text{ or } 
     \begin{pspicture}[shift=-1.8](-2.25,-1.8)(2.25,1.8)
    	\pscircle[fillstyle=solid, fillcolor=diskin,linewidth=1.2pt](0,0){1.5}
    	\degrees[360]
	\psdot[dotstyle=asterisk,dotscale=1.5](0,0)
	\rput{0}(1.8;-20){\tiny$2k$}
	\rput{0}(1.9;-40){\tiny$2k\!-\!1$}
	\rput{0}(1.8;20){\tiny$j$}	
	\rput{0}(1.8;70){\tiny$i$}
	\rput{0}(1.8;90){\tiny$i\!+\!1$}			
	\rput{0}(2;200){\tiny$m$}
	\pscurve[linecolor=line,linewidth=1.2pt](1.5;-40)(1.4;-40)(-0.5;15)(1.4;70)(1.5;70)
	\pscurve[linecolor=line,linewidth=1.2pt](1.5;-20)(1.4;-20)(-0.3;0)(1.4;20)(1.5;20)
	\pscurve[linecolor=line,linewidth=1.2pt](1.5;90)(1.4;90)(1.2;145)(1.4;200)(1.5;200)	
   \end{pspicture}  
   & & \in LP^{(*,2k)}_{2k}.
	\end{align*}
	    \caption{Pre-images of $L\in LP^{(*,i)}_{2k}$ for the action of $e_{2k-2}$.}\label{fig-preimage2k}
\end{figure}
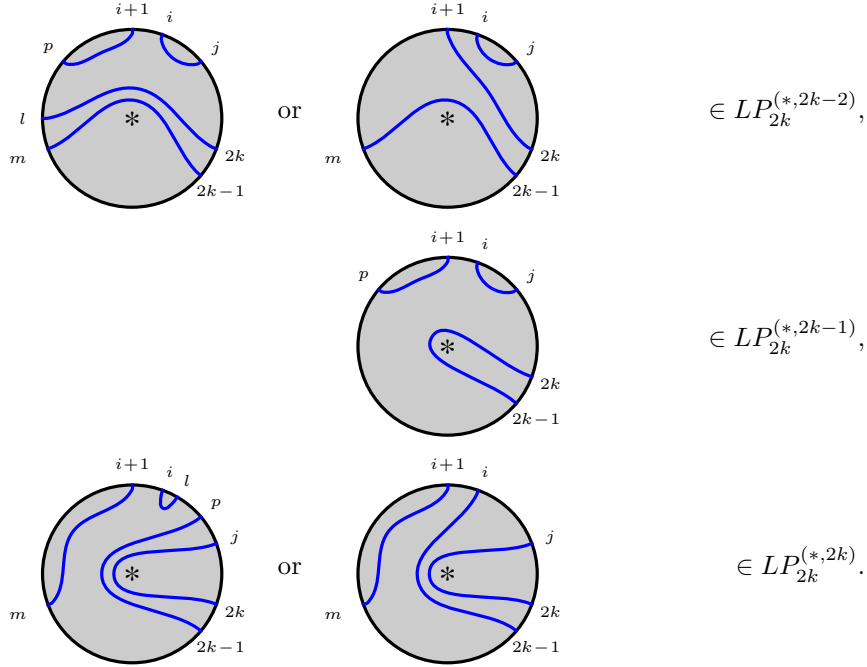

\newpage
\begin{bibdiv}
\begin{biblist}

\bib{Al-Harbat:2015aa}{article}{
      author={Al~Harbat, S.},
       title={Markov trace on a tower of affine {T}emperley-{L}ieb algebras of
  type {$\tilde{A}$}},
        date={2015},
        ISSN={0218-2165},
     journal={J. Knot Theory Ramifications},
      volume={24},
      number={9},
       pages={1550049, 28},
         url={http://dx.doi.org/10.1142/S0218216515500492},
      review={\MR{3403211}},
}

\bib{Al-Qasimi:2019aa}{unpublished}{
      author={Al~Qasimi, K.},
      author={Jacobsen, J.~L.},
      author={Nienhuis, B.},
       title={Observables of the {TL} {${\rm O}(1)$} model on the infinite
  cylinder: Current and nesting number},
        note={In preparation},
}

\bib{Al-Qasimi:2017aa}{article}{
      author={Al~Qasimi, K.},
      author={Stokman, J.~V.},
       title={The skein category of the annulus},
        date={2017-10},
     journal={ArXiv e-prints},
      eprint={arXiv:1710.04058},
}

\bib{Baxter:1982aa}{book}{
      author={Baxter, R.~J.},
       title={Exactly solved models in statistical mechanics},
   publisher={Academic Press, Inc. [Harcourt Brace Jovanovich, Publishers],
  London},
        date={1982},
        ISBN={0-12-083180-5},
      review={\MR{690578}},
}

\bib{Cherednik:1992aa}{article}{
      author={Cherednik, I.},
       title={Quantum {K}nizhnik-{Z}amolodchikov equations and affine root
  systems},
        date={1992},
        ISSN={0010-3616},
     journal={Comm. Math. Phys.},
      volume={150},
      number={1},
       pages={109\ndash 136},
         url={http://projecteuclid.org/euclid.cmp/1104251785},
      review={\MR{1188499}},
}

\bib{Cherednik:1994aa}{article}{
      author={Cherednik, I.},
       title={Integration of quantum many-body problems by affine
  {K}nizhnik-{Z}amolodchikov equations},
        date={1994},
        ISSN={0001-8708},
     journal={Adv. Math.},
      volume={106},
      number={1},
       pages={65\ndash 95},
         url={https://doi.org/10.1006/aima.1994.1049},
      review={\MR{1275866}},
}

\bib{Cherednik:2005aa}{book}{
      author={Cherednik, I.},
       title={Double affine {H}ecke algebras},
      series={London Mathematical Society Lecture Note Series},
   publisher={Cambridge University Press, Cambridge},
        date={2005},
      volume={319},
        ISBN={0-521-60918-6},
         url={https://doi.org/10.1017/CBO9780511546501},
      review={\MR{2133033}},
}

\bib{De-Gier:2010aa}{article}{
      author={De~Gier, J.},
      author={Pyatov, P.},
       title={Factorized solutions of {T}emperley-{L}ieb {$q$}{KZ} equations on
  a segment},
        date={2010},
        ISSN={1095-0761},
     journal={Adv. Theor. Math. Phys.},
      volume={14},
      number={3},
       pages={795\ndash 877},
         url={http://projecteuclid.org/euclid.atmp/1309526467},
      review={\MR{2801411}},
}

\bib{Di-Francesco:2005aa}{article}{
      author={Di~Francesco, P.},
      author={Zinn-Justin, P.},
       title={Around the {R}azumov-{S}troganov conjecture: proof of a
  multi-parameter sum rule},
        date={2005},
        ISSN={1077-8926},
     journal={Electron. J. Combin.},
      volume={12},
       pages={Research Paper 6, 27},
         url={http://www.combinatorics.org/Volume_12/Abstracts/v12i1r6.html},
      review={\MR{2134169}},
}

\bib{Di-Francesco:2007aa}{article}{
      author={Di~Francesco, P.},
      author={Zinn-Justin, P.},
       title={Quantum {K}nizhnik-{Z}amolodchikov equation: reflecting boundary
  conditions and combinatorics},
        date={2007},
        ISSN={1742-5468},
     journal={J. Stat. Mech. Theory Exp.},
      number={12},
       pages={P12009, 30},
      review={\MR{2367185}},
}

\bib{Di-Francesco:2006aa}{article}{
      author={Di~Francesco, P.},
      author={Zinn-Justin, P.},
      author={Zuber, J.-B.},
       title={Sum rules for the ground states of the {${\rm O} (1)$} loop model
  on a cylinder and the xxz spin chain},
        date={2006},
     journal={J. Stat. Mech. Theory Exp.},
      number={8},
       pages={P08011},
}

\bib{Feigin:2003aa}{article}{
      author={Feigin, B.},
      author={Jimbo, M.},
      author={Miwa, T.},
      author={Mukhin, E.},
       title={Symmetric polynomials vanishing on the shifted diagonals and
  {M}acdonald polynomials},
        date={2003},
        ISSN={1073-7928},
     journal={Int. Math. Res. Not.},
      number={18},
       pages={1015\ndash 1034},
         url={https://doi.org/10.1155/S1073792803209119},
      review={\MR{1962014}},
}

\bib{b:2006aa}{book}{
      editor={Fran\c{c}oise, Jean-Pierre},
      editor={Naber, Gregory~L.},
      editor={Tsun, Tsou~Sheung},
       title={Encyclopedia of mathematical physics. {V}ol. 1, 2, 3, 4, 5},
   publisher={Academic Press/Elsevier Science, Oxford},
        date={2006},
        ISBN={978-0-1251-2660-1; 0-12-512660-3},
      review={\MR{2238867}},
}

\bib{Frenkel:1992aa}{article}{
      author={Frenkel, I.~B.},
      author={Reshetikhin, N.~Yu.},
       title={Quantum affine algebras and holonomic difference equations},
        date={1992},
        ISSN={0010-3616},
     journal={Comm. Math. Phys.},
      volume={146},
      number={1},
       pages={1\ndash 60},
         url={http://projecteuclid.org/euclid.cmp/1104249974},
      review={\MR{1163666}},
}

\bib{Gainutdinov:2016aa}{article}{
      author={Gainutdinov, A.~M.},
      author={Saleur, H.},
       title={Fusion and braiding in finite and affine {T}emperley-{L}ieb
  categories},
        date={2016-06},
     journal={ArXiv e-prints},
      eprint={arXiv:1606.04530},
}

\bib{Goodman:2006aa}{article}{
      author={Goodman, F.~M.},
      author={Hauschild, H.},
       title={Affine {B}irman-{W}enzl-{M}urakami algebras and tangles in the
  solid torus},
        date={2006},
        ISSN={0016-2736},
     journal={Fund. Math.},
      volume={190},
       pages={77\ndash 137},
         url={https://doi.org/10.4064/fm190-0-4},
      review={\MR{2232856}},
}

\bib{Green:1998aa}{incollection}{
      author={Green, R.~M.},
       title={On representations of affine {T}emperley-{L}ieb algebras},
        date={1998},
   booktitle={Algebras and modules, {II} ({G}eiranger, 1996)},
      series={CMS Conf. Proc.},
      volume={24},
   publisher={Amer. Math. Soc., Providence, RI},
       pages={245\ndash 261},
      review={\MR{1648630}},
}

\bib{Haines:2002aa}{article}{
      author={Haines, T.~J.},
      author={Pettet, A.},
       title={Formulae relating the {B}ernstein and {I}wahori-{M}atsumoto
  presentations of an affine {H}ecke algebra},
        date={2002},
        ISSN={0021-8693},
     journal={J. Algebra},
      volume={252},
      number={1},
       pages={127\ndash 149},
         url={https://doi.org/10.1016/S0021-8693(02)00023-6},
      review={\MR{1922389}},
}

\bib{Jimbo:1995aa}{book}{
      author={Jimbo, M.},
      author={Miwa, T.},
       title={Algebraic analysis of solvable lattice models},
      series={CBMS Regional Conference Series in Mathematics},
   publisher={Published for the Conference Board of the Mathematical Sciences,
  Washington, DC; by the American Mathematical Society, Providence, RI},
        date={1995},
      volume={85},
        ISBN={0-8218-0320-4},
      review={\MR{1308712}},
}

\bib{Kasatani:2005aa}{article}{
      author={Kasatani, M.},
       title={Subrepresentations in the polynomial representation of the double
  affine {H}ecke algebra of type {${\rm GL}_n$} at {$t^{k+1}q^{r-1}=1$}},
        date={2005},
        ISSN={1073-7928},
     journal={Int. Math. Res. Not.},
      number={28},
       pages={1717\ndash 1742},
         url={http://dx.doi.org/10.1155/IMRN.2005.1717},
      review={\MR{2172339}},
}

\bib{Kasatani:2007aa}{article}{
      author={Kasatani, M.},
      author={Pasquier, V.},
       title={On polynomials interpolating between the stationary state of a
  {${\rm O}(n)$} model and a {Q}.{H}.{E}. ground state},
        date={2007},
        ISSN={0010-3616},
     journal={Comm. Math. Phys.},
      volume={276},
      number={2},
       pages={397\ndash 435},
         url={http://dx.doi.org/10.1007/s00220-007-0341-0},
      review={\MR{2346395}},
}

\bib{Kasatani:2007ab}{article}{
      author={Kasatani, M.},
      author={Takeyama, Y.},
       title={The quantum {K}nizhnik-{Z}amolodchikov equation and non-symmetric
  {M}acdonald polynomials},
        date={2007},
        ISSN={0532-8721},
     journal={Funkcial. Ekvac.},
      volume={50},
      number={3},
       pages={491\ndash 509},
         url={https://doi.org/10.1619/fesi.50.491},
      review={\MR{2381328}},
}

\bib{Lusztig:1989aa}{article}{
      author={Lusztig, G.},
       title={Affine {H}ecke algebras and their graded version},
        date={1989},
        ISSN={0894-0347},
     journal={J. Amer. Math. Soc.},
      volume={2},
      number={3},
       pages={599\ndash 635},
         url={https://doi.org/10.2307/1990945},
      review={\MR{991016}},
}

\bib{Pasquier:2006aa}{article}{
      author={Pasquier, V.},
       title={Quantum incompressibility and {R}azumov {S}troganov type
  conjectures},
        date={2006},
        ISSN={1424-0637},
     journal={Ann. Henri Poincar{\'e}},
      volume={7},
      number={3},
       pages={397\ndash 421},
         url={http://dx.doi.org/10.1007/s00023-005-0254-4},
      review={\MR{2226742}},
}

\bib{Roger:2014aa}{article}{
      author={Roger, J.},
      author={Yang, T.},
       title={The skein algebra of arcs and links and the decorated
  {T}eichm\"{u}ller space},
        date={2014},
        ISSN={0022-040X},
     journal={J. Differential Geom.},
      volume={96},
      number={1},
       pages={95\ndash 140},
         url={http://projecteuclid.org/euclid.jdg/1391192694},
      review={\MR{3161387}},
}

\bib{Smirnov:1992aa}{book}{
      author={Smirnov, F.~A.},
       title={Form factors in completely integrable models of quantum field
  theory},
      series={Advanced Series in Mathematical Physics},
   publisher={World Scientific Publishing Co., Inc., River Edge, NJ},
        date={1992},
      volume={14},
        ISBN={981-02-0244-X; 981-02-0245-8},
         url={https://doi.org/10.1142/1115},
      review={\MR{1253319}},
}

\bib{Stokman:2011aa}{article}{
      author={Stokman, J.~V.},
       title={Quantum affine {K}nizhnik-{Z}amolodchikov equations and quantum
  spherical functions, {I}},
        date={2011},
        ISSN={1073-7928},
     journal={Int. Math. Res. Not. IMRN},
      number={5},
       pages={1023\ndash 1090},
         url={http://dx.doi.org/10.1093/imrn/rnq094},
      review={\MR{2775875}},
}

\end{biblist}
\end{bibdiv}

\end{document}